\let\coloneqq\relax
\newcolumntype{x}[1]{>{\centering\arraybackslash}p{#1}}
\newtheorem{thm}{Theorem}
\newtheorem*{thm*}{Theorem}
\newtheorem*{prop*}{Proposition}
\newtheorem{lemma}[thm]{Lemma}
\newtheorem*{lemma*}{Lemma}
\newtheorem*{cor*}{Corollary}
\newtheorem*{cj*}{Conjecture}
\newtheorem{Def}[thm]{Definition}
\newtheorem*{Def*}{Definition}
\newtheorem{remark}{Remark}
\def\thmhead@plain#1#2#3{%
  \thmname{#1}\thmnumber{\@ifnotempty{#1}{ }\@upn{#2}}%
  \thmnote{ {\the\thm@notefont#3}}}
\let\thmhead\thmhead@plain
\theoremstyle{definition}
\newcommand{\bb}{\begin{equation}\begin{aligned}\hspace{0pt}}
\newcommand{\bbb}{\begin{equation*}\begin{aligned}}
\newcommand{\ee}{\end{aligned}\end{equation}}
\newcommand{\eee}{\end{aligned}\end{equation*}}
\newcommand*{\coloneqq}{\mathrel{\vcenter{\baselineskip0.5ex \lineskiplimit0pt \hbox{\scriptsize.}\hbox{\scriptsize.}}} =}
\newcommand{\ketbra}[1]{\ket{#1}\!\!\bra{#1}}
\newcommand{\ketbraa}[2]{\ket{#1}\!\!\bra{#2}}
\newcommand{\N}{\mathds{N}}
\DeclareMathOperator{\Tr}{Tr}
\DeclareMathAlphabet{\pazocal}{OMS}{zplm}{m}{n}
\DeclareMathOperator{\Id}{Id}
\newcommand{\HH}{\pazocal{H}}
\newcommand{\lsmatrix}{\left(\begin{smallmatrix}}
\newcommand{\rsmatrix}{\end{smallmatrix}\right)}
\newcommand*\rel@kern[1]{\kern#1\dimexpr\macc@kerna}
\newcommand*\widebar[1]{%
  \begingroup
  \def\mathaccent##1##2{%
    \rel@kern{0.8}%
    \overline{\rel@kern{-0.8}\macc@nucleus\rel@kern{0.2}}%
    \rel@kern{-0.2}%
  }%
  \macc@depth\@ne
  \let\math@bgroup\@empty \let\math@egroup\macc@set@skewchar
  \mathsurround\z@ \frozen@everymath{\mathgroup\macc@group\relax}%
  \macc@set@skewchar\relax
  \let\mathaccentV\macc@nested@a
  \macc@nested@a\relax111{#1}%
  \endgroup
}
\tikzset{meter/.append style={draw, inner sep=10, rectangle, font=\vphantom{A}, minimum width=30, line width=.8, path picture={\draw[black] ([shift={(.1,.3)}]path picture bounding box.south west) to[bend left=50] ([shift={(-.1,.3)}]path picture bounding box.south east);\draw[black,-latex] ([shift={(0,.1)}]path picture bounding box.south) -- ([shift={(.3,-.1)}]path picture bounding box.north);}}}
\tikzset{roundnode/.append style={circle, draw=black, fill=gray!20, thick, minimum size=10mm}}
\tikzset{squarenode/.style={rectangle, draw=black, fill=none, thick, minimum size=10mm}}
\definecolor{Blues5seq1}{RGB}{239,243,255}
\definecolor{Blues5seq2}{RGB}{189,215,231}
\definecolor{Blues5seq3}{RGB}{107,174,214}
\definecolor{Blues5seq4}{RGB}{49,130,189}
\definecolor{Blues5seq5}{RGB}{8,81,156}
\definecolor{Greens5seq1}{RGB}{237,248,233}
\definecolor{Greens5seq2}{RGB}{186,228,179}
\definecolor{Greens5seq3}{RGB}{116,196,118}
\definecolor{Greens5seq4}{RGB}{49,163,84}
\definecolor{Greens5seq5}{RGB}{0,109,44}
\definecolor{Reds5seq1}{RGB}{254,229,217}
\definecolor{Reds5seq2}{RGB}{252,174,145}
\definecolor{Reds5seq3}{RGB}{251,106,74}
\definecolor{Reds5seq4}{RGB}{222,45,38}
\definecolor{Reds5seq5}{RGB}{165,15,21}
\newtheorem{definition}{Definition}
\DeclareMathOperator{\arccosh}{arccosh}
\pgfplotsset{width=10cm,compat=1.9}
\begin{document}

\title{Maximum tolerable excess noise in CV-QKD and improved lower bound on two-way capacities}

\author{Francesco Anna Mele}
\email{francesco.mele@sns.it}
\affiliation{NEST, Scuola Normale Superiore and Istituto Nanoscienze, Consiglio Nazionale delle Ricerche, Piazza dei Cavalieri 7, IT-56126 Pisa, Italy}

\author{Ludovico Lami}
\email{ludovico.lami@gmail.com}
\affiliation{QuSoft, Science Park 123, 1098 XG Amsterdam, the Netherlands}
\affiliation{Korteweg-de Vries Institute for Mathematics, University of Amsterdam, Science Park 105-107, 1098 XG Amsterdam, the Netherlands}
\affiliation{Institute for Theoretical Physics, University of Amsterdam, Science Park 904, 1098 XH Amsterdam, the Netherlands}
\affiliation{Institut f\"{u}r Theoretische Physik und IQST, Universit\"{a}t Ulm, Albert-Einstein-Allee 11, D-89069 Ulm, Germany}

\author{Vittorio Giovannetti}
\email{vittorio.giovannetti@sns.it}
\affiliation{NEST, Scuola Normale Superiore and Istituto Nanoscienze, Consiglio Nazionale delle Ricerche, Piazza dei Cavalieri 7, IT-56126 Pisa, Italy}

\maketitle
\textbf{The two-way capacities of quantum channels determine the ultimate entanglement and secret-key distribution rates achievable by two distant parties that are connected by a noisy transmission line, in absence of quantum repeaters. Since repeaters will likely be expensive to build and maintain, a central open problem of quantum communication is to understand what performances are achievable without them. In this paper, we find a new lower bound on the energy-constrained and unconstrained two-way quantum and secret-key capacities of all phase-insensitive bosonic Gaussian channels, namely thermal attenuator, thermal amplifier, and additive Gaussian noise, which are realistic models for the noise affecting optical fibres or free-space links. Ours is the first nonzero lower bound on the two-way quantum capacity in the parameter range where the (reverse) coherent information becomes negative, and it shows explicitly that entanglement distribution is always possible when the channel is not entanglement breaking. This completely solves a crucial open problem of the field, namely, establishing the maximum excess noise which is tolerable in continuous-variable quantum key distribution. In addition, our construction is fully explicit, i.e.~we devise and optimise a concrete entanglement distribution and distillation protocol that works by combining recurrence and hashing protocols}.

\bigskip

Quantum key distribution (QKD) stands as the gold standard for unconditionally secure communication. Since its inception in 1984~\cite{bennett1984quantum}, QKD has transitioned from a theoretical concept to a commercially viable technology. Continuous-variable systems~\cite{BUCCO}, such as electromagnetic modes, offer 
a powerful approach to QKD, known as CV-QKD~\cite{CV_qkd,Pirandola20,Laudenbach_2018}. Unlike its discrete-variable counterpart~\cite{Pirandola20}, CV-QKD is expected to be seamlessly integrated into existing optical fibre-based Internet infrastructure in the near future~\cite{Pirandola20,Record1,Record2,Record3,Record4,Record5}, making it highly practical for real-world applications. 
A significant challenge in this field, essential for the development of a large-scale quantum internet~\cite{quantum_internet_Wehner,Pirandola20}, is extending CV-QKD over long distances without intermediate nodes~\cite{Pirandola20}. In recent years, numerous experiments have been setting new distance records, achieving CV-QKD across optical fibres exceeding one hundred kilometers in length~\cite{Record1,Record2,Record3,Record4,Record5}. This progress prompts a critical question: ``What is the maximum achievable distance of CV-QKD according to the laws of quantum physics?'' Equivalently, ``What is the maximum tolerable excess noise in CV-QKD?'' This has been identified as a crucial open problem in the field~\cite[Section~7]{Pirandola18}.

In quantum Shannon theory~\cite{MARK,Sumeet_book}, the capacities of quantum channels determine the ultimate limits of quantum communication which are achievable across the channel. Specifically, the \emph{secret-key capacity} $K(\Phi)$ and the \emph{two-way quantum capacity} $Q_2(\Phi)$ of a quantum channel $\Phi$, collectively called the \emph{two-way capacities}, are the maximum rates of secret-key bits and entanglement bits (or \emph{ebits}), respectively, achievable across $\Phi$ when the sender (Alice) and the receiver (Bob) are assisted by two-way classical communication~\cite{MARK,Sumeet_book}. These capacities quantify the optimal performance for QKD and entanglement distribution. Importantly, QKD across $\Phi$ is achievable if and only if $K(\Phi)>0$, while entanglement distribution across $\Phi$ is achievable if and only if $Q_2(\Phi)>0$. Since in practice only a finite amount of energy can be utilised in a communication protocol, it is common to consider the \emph{energy-constrained} two-way capacities~\cite{Davis2018}. These are denoted as $K(\Phi, N_s)$ and $Q_2(\Phi, N_s)$, where $N_s$ represents the maximum average photon number per input signal to the channel $\Phi$. Since an ebit can be converted into a secret-key bit~\cite{Ekert91}, these capacities satisfy $ K(\Phi,N_s)\ge Q_2(\Phi,N_s)$. 

Within the framework of quantum optical communication, a realistic model to describe optical fibres is the \emph{thermal attenuator} $\mathcal{E}_{\lambda,\nu}$~\cite{BUCCO,PLOB,Pirandola18}. It is a single-mode quantum channel characterised by two parameters: the transmissivity $\lambda\in [0,1]$ of the fibre and the added thermal noise $\nu\in [0,\infty)$~\cite{BUCCO}. 
We provide a detailed definition of the thermal attenuator in the Methods. Computing the two-way capacities of the thermal attenuator is essential to determine the ultimate performances of QKD or entanglement distribution protocols that can be achieved on optical networks without the use of intermediate nodes or quantum repeaters~\cite{repeaters, Munro2015,Pirandola20}. The two-way capacities of the thermal attenuator have been determined for all $\lambda$ only when $\nu=0$ and there is no energy constraint~\cite{PLOB}, assumptions that may not be entirely physically realistic, depending on the setting. Except for this very special case, the two-way capacities of the thermal attenuator are still unknown, despite the many upper~\cite{PLOB, Davis2018, Goodenough16, TGW, holwer, MMMM, squashed_channel} and lower bounds~\cite{holwer, Pirandola2009, Noh2020, Ottaviani_new_lower, Pirandola18} that have been established. In particular, in a large parameter region all lower bounds (prior to our work) vanish, while the upper bounds do not. This entails that the precise noise threshold above which entanglement distribution or key distribution become impossible had not been determined in the prior literature. This leads to the mathematical formulation of the aforementioned open problem~\cite[Section~7]{Pirandola18}: \emph{for a given $\nu$, determine the minimum transmissivity $\lambda$ for which the secret-key capacity of the thermal attenuator is strictly positive, i.e.~for which CV-QKD is achievable.}

\section*{Results}
\subsection{Maximum tolerable excess noise in CV-QKD}\label{subsec_b} 

In this section, we solve the above problem~\cite[Section~7]{Pirandola18}, establishing the following simple formula for the minimum admissible transmissivity $\lambda_{\mathrm{min}}$ in CV-QKD:
\bb\label{lamba_min}
    \lambda_{\mathrm{min}}=\frac{\nu}{\nu+1}\,. 
\ee
Note that the transmissivity $\lambda$ of an optical fibre decreases exponentially with its length $L$ as~\cite{Tamura2018, Li2020}
\begin{equation}
    \lambda=10^{-\gamma \frac{L}{\SI{10}{\km}}}\,,
    \label{lambda_decay}
\end{equation}
where the attenuation coefficient $\gamma$ typically satisfies $\gamma\simeq 0.2$, with the lowest recorded value being $\gamma\simeq 0.14$~\cite{Tamura2018, Li2020}. We can thus determine the maximum achievable distance $L_{\mathrm{max}}$ in CV-QKD:
\bb
    L_{\mathrm{max}}=\frac{10\,\mathrm{km}}{\gamma}\log_{10}\!\left(1+\nu^{-1}\right)\,.
\ee
The parameter $\nu$ is related to the wavelength $\lambda_{\mathrm{wave}}$ employed for communication through the Bose--Einstein distribution:
\bb
    \nu=\frac{1}{\exp\!\left({\frac{hc}{\lambda_{\mathrm{wave}} k_B T}}\right)-1}\,,
\ee
where $T$ is the room temperature, $h$ is the Planck constant, $c$ is the speed of light, and $k_B$ is the Boltzmann constant. Hence, the maximum achievable distance in CV-QKD can be expressed as a function of $\gamma$, $T$, and $\lambda_{\mathrm{wave}}$ as follows:
\bb\label{L_max}
    L_{\mathrm{max}}&= \frac{10\log_{10}e}{\gamma}\left(\frac{hc}{\lambda_{\mathrm{wave}}k_BT}\right)\, \\
    & \simeq  \left(\frac{298\, \mathrm{K}}{T}\right) \left(\frac{1.5\,\mathrm{\mu m}}{\lambda_{\mathrm{wave}}}\right)\frac{\SI{139}{km}}{\gamma}\,.
\ee
This serves as the ultimate benchmark for CV-QKD: if the fibre length exceeds $L_{\mathrm{max}}$, quantum mechanics rules out the possibility of achieving CV-QKD without intermediate nodes; conversely, if the length is smaller than $L_{\mathrm{max}}$, CV-QKD without intermediate nodes is achievable.

To illustrate the significance of \eqref{L_max}, 
consider its application to the current Internet infrastructure. Set the attenuation coefficient $\gamma$ to the lowest recorded value of $\gamma \simeq 0.14$, the temperature $T$ to the standard room temperature of $T = 298\,\mathrm{K}$, and the wavelength $\lambda_{\mathrm{wave}}$ to the telecom wavelength of $\lambda_{\mathrm{wave}} \simeq 1.5\,\mathrm{\mu m}$, which is used in the current Internet infrastructure as well as in state-of-the-art CV-QKD experiments~\cite{Record1,Record2,Record3,Record4,Record5}. By using \eqref{L_max}, the maximum achievable distance is estimated as
\bb
    L_{\mathrm{max}}\simeq \SI{990}{km}\,.
\ee
This result is informative: it indicates that achieving CV-QKD over optical fibres longer than approximately $\SI{990}{km}$ necessitates the use of intermediate (trusted, and thus more costly) nodes. Conversely, it also establishes that there exists a protocol enabling point-to-point CV-QKD over distances up to approximately $\SI{990}{km}$. Strikingly, this estimated maximum distance of approximately $\SI{990}{km}$ is not far from the distances achieved by state-of-the-art CV-QKD experiments, which are around $200\,\mathrm{km}$~\cite{Record1,Record2,Record3,Record4,Record5}. This ultimately establishes that to go much beyond the current distance records, the use of intermediate nodes is essential.

The above result can be generalised to encompass all Gaussian channels~\cite{BUCCO}, including, besides the thermal attenuator $\mathcal{E}_{\lambda,\nu}$, also the \emph{thermal amplifier} $\Phi_{g,\nu}$, the \emph{additive Gaussian noise} $\Lambda_\xi$, and many others. The channels $\mathcal{E}_{\lambda,\nu}$, $\Phi_{g,\nu}$, and $\Lambda_\xi$ are collectively called ``phase-insensitive bosonic Gaussian channels'' (piBGCs). The thermal amplifier $\Phi_{g,\nu}$ is characterised by two parameters, the gain $g\in [0,\infty)$ and the added thermal noise $\nu$. The additive Gaussian noise $\Lambda_\xi$ is characterised by a single parameter $\xi\in[0,\infty)$, representing the added \emph{classical} noise~\cite{BUCCO}. Detailed definitions of these channels are provided in the Methods. In the forthcoming Theorem~\ref{th1_main} we characterise the parameter region where the two-way capacities of the piBGCs are strictly positive, i.e.~where they can be used for entanglement distribution and QKD. Specifically, we show that this is possible if and only if the channel is not \emph{entanglement breaking}~\cite{Sumeet_book}. This provides a simple and concise solution to the problem posed in~\cite[Section~7]{Pirandola18}; remarkably, our proof is based solely on Gaussian quantum information techniques~\cite{BUCCO}.
\begin{thm}\label{th1_main}
    Let $\Phi$ be a single-mode Gaussian channel and let $N_s>0$. The energy-constrained two-way quantum capacity $Q_2(\Phi,N_s)$ and secret-key capacity $K(\Phi,N_s)$ are strictly positive if and only if $\Phi$ is not entanglement breaking.

    In particular, for all $\nu\ge0$, $\lambda\in[0,1]$, $g\ge 1$, and $\xi\ge0$, the energy-constrained two-way quantum capacity of the thermal attenuator $\mathcal{E}_{\lambda,\nu}$, thermal amplifier $\Phi_{g,\nu}$, and additive Gaussian noise $\Lambda_{\xi}$ satisfy:
    \bb\label{param_reg}
    Q_2(\mathcal{E}_{\lambda,\nu},N_s)>0 &\text{ $\Leftrightarrow$ } \lambda\in\left(\frac{\nu}{\nu+1},1\right]\,,\\
    Q_2(\Phi_{g,\nu},N_s)>0 &\text{ $\Leftrightarrow$ } g\in\left[1,\frac{\nu+1}{\nu}\right)\,,\\
    Q_2(\Lambda_{\xi},N_s)>0 &\text{ $\Leftrightarrow$ } \xi\in[0,1)\,.
    \ee
    The same holds for the secret-key capacity $K(\cdot, N_s)$ as well as for the unconstrained capacities. 
\end{thm}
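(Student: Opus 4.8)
The plan is to prove the two directions of each equivalence in \eqref{param_reg} separately: that an entanglement-breaking (EB) channel has vanishing capacities, and that a non-EB channel has strictly positive $Q_2(\cdot,N_s)$ for every $N_s>0$. Because $Q_2\le K$ and the energy-constrained capacities are dominated by their unconstrained counterparts, positivity of $Q_2(\cdot,N_s)$ immediately yields positivity of $K$ and of both unconstrained capacities, while vanishing of the unconstrained $K$ forces all four quantities to zero. Hence it suffices to control $Q_2$ in the two regimes.

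For the forward implication ($Q_2>0\Rightarrow$ non-EB) I would invoke the PLOB upper bound~\cite{PLOB} in~\eqref{PLOB_Q2_main}, which dominates $K(\Phi)\ge Q_2(\Phi)\ge Q_2(\Phi,N_s)$. For a piBGC this bound is the relative entropy of entanglement of a teleportation-simulating state, which is separable precisely when the channel is EB; since the EB thresholds of piBGCs are known~\cite{Holevo-EB,Ent_breaking_Gaussian}, the bound vanishes exactly for $\lambda\le\frac{\nu}{\nu+1}$, $g\ge 1+\frac1\nu$, and $\xi\ge 1$, pinning all capacities to zero there.

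The substance is the converse. I would run the protocol S1--S6 and reduce its positivity to the heralded two-qubit state produced after S1--S3. By phase covariance of $\Phi$, the state Bob obtains after projecting onto $\Pi_M$ and relabelling is an X-state whose only coherence links the $\ket{00}$ and $\ket{11}$ corners (its phase being removable by a local rotation). Setting $\gamma_M\coloneqq\bra{0}\Phi(\ketbraa{0}{M})\ket{M}$, $a_{0M}\coloneqq\bra{M}\Phi(\ketbra{0})\ket{M}$, and $a_{M0}\coloneqq\bra{0}\Phi(\ketbra{M})\ket{0}$, a direct computation shows that, after the twirl S3 and an optimisation over the amplitude $c$, the resulting Bell-diagonal state is entangled if and only if $|\gamma_M|^2>a_{0M}\,a_{M0}$. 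Since every entangled Bell-diagonal two-qubit state is driven to a positive rate by the recurrence protocol P1-or-P2 followed by the improved hashing~\cite{p1orp2,Improvement-Hashing}, it is enough to exhibit a single $M$ meeting this inequality, and the balanced choice of $c$ then costs input energy $(1-c^2)M\le\big(a_{0M}/|\gamma_M|\big)\,M$.

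Finally I would verify $|\gamma_M|^2>a_{0M}a_{M0}$ from the Fock-basis matrix elements. For the additive-noise channel this is transparent: one finds $|\gamma_M|=(1+\xi)^{-(M+1)}$ and $a_{0M}=a_{M0}=\xi^{M}(1+\xi)^{-(M+1)}$, so the condition reads $\xi^{2M}<1$, i.e. $\xi<1$, for every $M\ge 1$; moreover $a_{0M}/|\gamma_M|=\xi^{M}$, so the energy cost $\xi^{M}M\to 0$ and any $N_s>0$ is met by taking $M$ large. The attenuator and amplifier need the analogous but longer evaluation of their Fock elements, and I expect this to be the main obstacle: one must confirm that the threshold of $|\gamma_M|^2>a_{0M}a_{M0}$ matches the EB thresholds $\lambda=\frac{\nu}{\nu+1}$ and $g=1+\frac1\nu$ uniformly in $M$, and that the energy $\big(a_{0M}/|\gamma_M|\big)M$ can still be sent below any prescribed $N_s$. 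It may be cleaner to reduce these two cases to the additive-noise computation through the standard composition and limit relations among piBGCs, rather than computing each set of matrix elements from scratch.
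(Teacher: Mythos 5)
Your overall strategy is the same as the paper's first proof: run S1--S2 to herald a two-qubit X-state, and show that its non-PPT-ness (equivalently its distillability, by the two-qubit criterion) is governed by exactly the inequality you identify, $|\gamma_M|^2>a_{0M}\,a_{M0}$, which is independent of $c$ and coincides with the non-entanglement-breaking condition. The forward direction via PLOB is also fine. However, there is a genuine gap: the verification of $|\gamma_M|^2>a_{0M}a_{M0}$ for the thermal attenuator and thermal amplifier --- which you yourself flag as ``the main obstacle'' --- is the entire substance of the converse, and your proposed way to fill it (reducing these cases ``to the additive-noise computation'') goes in the wrong direction. The additive Gaussian noise channel is a strong \emph{limit} of $\mathcal{E}_{1-\xi/\nu,\nu}$ and $\Phi_{1+\xi/\nu,\nu}$ as $\nu\to\infty$, not a building block for them, so no composition identity expresses the attenuator or amplifier in terms of $\Lambda_\xi$. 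The reduction that actually works is the opposite one: every piBGC decomposes as a pure-loss channel followed by a pure amplifier, $\mathcal{E}_{\lambda,\nu}=\mathcal{N}_{1+(1-\lambda)\nu,\,\lambda/(1+(1-\lambda)\nu)}$ with $\mathcal{N}_{g,\lambda}\coloneqq\Phi_{g,0}\circ\mathcal{E}_{\lambda,0}$ (and similarly for the amplifier and the additive noise), after which a single Kraus-operator computation gives $a_{0M}=(g-1)^M/g^{1+M}$, $a_{M0}=(1-\lambda)^M/g$, $\gamma_M=\lambda^{M/2}/g^{1+M/2}$, and the condition $|\gamma_M|^2>a_{0M}a_{M0}$ collapses to $(1-\lambda)g<1$ for every $M$, which is precisely the non-EB threshold in all three cases. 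Without this (or an equivalent direct evaluation of the attenuator/amplifier matrix elements) the theorem is only proved for $\Lambda_\xi$.

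A second, smaller issue concerns the energy constraint. You route the argument through the twirl S3, after which the entanglement of the Bell-diagonal state \emph{does} depend on $c$ (the condition becomes $c^2a_{0M}+(1-c^2)a_{M0}<2c\sqrt{1-c^2}\,|\gamma_M|$, which fails as $c\to1$), so meeting a small $N_s$ forces a delicate trade-off between $c$ and $M$; your stated energy bound $(1-c^2)\le a_{0M}/|\gamma_M|$ is not what the balanced choice of $c$ gives and would need to be rechecked for the attenuator and amplifier. The cleaner route --- the one the paper takes --- is to stop \emph{before} the twirl: the heralded state is NPT for every $c\in(0,1)$, so one simply takes $c$ close enough to $1$ that $(1-c^2)M\le N_s$ and invokes the general fact that NPT two-qubit states are distillable, with no need to argue that the explicit P1-or-P2-plus-hashing pipeline attains a positive rate on every entangled Bell-diagonal state.
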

\begin{proof}[Proof] 
    Since any entanglement-breaking channel has vanishing two-way capacities~\cite{Sumeet_book}, it suffices to consider the case where $\Phi$ is not entanglement breaking. Assume that Alice prepares many copies of the two-mode squeezed vacuum state $\ket{\Psi_{N_s}}_{AA'}$ with mean local photon number $N_s$, i.e.
    \bb\label{TMSV_def_main}
     \ket{\Psi_{N_s}}\coloneqq \frac{1}{\sqrt{N_s+1}}\sum_{n=0}^\infty \left(\frac{N_s}{N_s+1}\right)^{n/2}\ket{n}_{A}\otimes \ket{n}_{A'}\,,
    \ee 
    and sends the systems $A'$ through the Gaussian channel $\Phi_{A'\to B}$. Now Alice and Bob share many copies of the two-mode Gaussian state
    \bb\label{def_C_N_s}
        C_{N_s}\coloneqq \big(\Id_{A}\otimes\,\Phi_{A'\to B}\big)(\ketbra{\Psi_{N_s}}_{AA'})\,,
    \ee
    which is a \emph{generalised Choi state} of $\Phi$. By definition, a generalised Choi state of a quantum channel $\mathcal{N}_{A'\to B}$ is a bipartite state of the form $\big(\Id_{A}\otimes\,\mathcal{N}_{A'\to B}\big)(\ketbra{\Psi}_{AA'})$, where the input state $\ket{\Psi}_{AA'}$ is such that its reduced state $\Tr_{A}[\ketbra{\Psi}_{AA'}]$ is invertible. In Lemma~S9 of the Supplementary Information we prove that all the generalised Choi states of a non-entanglement-breaking channel are entangled. In particular, we deduce that $C_{N_s}$ is entangled. Additionally, by exploiting the fact that a two-mode Gaussian state is entangled if and only if it is not \emph{PPT}~\cite{PeresPPT,Simon00, BUCCO}, it thus follows that $C_{N_s}$ is not PPT. Finally, since any two-mode Gaussian state that it is not PPT is also \emph{distillable}~\cite{Giedke01} --- i.e.~it can be converted into ebits with a strictly positive rate --- we conclude that $K(\Phi,N_s)\ge Q_2(\Phi,N_s)>0$.

If $\Phi$ is a piBGC, an alternative, more explicit proof of the fact that $C_{N_s}$ is entangled uses the following entanglement criterion~\cite{Simon00, BUCCO}: a two-mode Gaussian state is entangled if and only if its covariance matrix $V\coloneqq \left(\begin{matrix} V_{A} & V_{AB} \\ V_{AB}^{\intercal} & V_{B}\end{matrix}\right)$ satisfies the condition
\bb\label{condition_ENT}
    1+\det V+2\det V_{AB}< \det V_{A}+\det V_{B}\,.
\ee
In Theorem~S12 of the Supplementary Information we show that the covariance matrix of $C_{N_s}$ satisfies such a condition.

The parameter regions in \eqref{param_reg} are precisely those where $\mathcal{E}_{\lambda,\nu}$, $\Phi_{g,\nu}$, and $\Lambda_{\xi}$ are not entanglement breaking~\cite{PLOB,Ent_breaking_Gaussian, Holevo-EB}.

\end{proof}

We remark that Theorem~\ref{th1_main} establishes the maximum tolerable noise not only in key distribution but also in entanglement distribution. Thus, the condition $\lambda>\frac{\nu}{\nu+1}$ serves as a necessary and sufficient condition not only for achieving CV-QKD without relying on intermediate nodes but also for achieving entanglement distribution without the use of (possibly expensive) quantum repeaters~\cite{repeaters, Munro2015}.

As another example of application of Eq.~\eqref{lamba_min}, using~\cite[Eq.~(1)--(2)]{Pirandola2021} and the values in~\cite[Table~I]{Pirandola2021}, one can see that entanglement distribution between Earth and the Moon, with lenses of aperture $a_R = w_0 = \SI{5}{cm}$, are only possible at wavelengths below $\SI{0.17}{mm}$, due to the cosmic microwave background at $\SI{2.725}{K}$.

\subsection{Improved lower bound on two-way capacities}~\label{subsec_a}
While~\eqref{param_reg} establishes the parameter ranges for which the piBGCs have positive capacities, it does not give us any explicit estimate on those capacities. Here we do precisely that, finding a new lower bound on the two-way capacities of the piBGCs that constitutes a significant improvement upon the state-of-the-art lower bounds~\cite{Ottaviani_new_lower, Pirandola2009, Pirandola18, Noh2020, Wang_Q2_amplifier, holwer} in a large parameter region. Our result is fully constructive, as it is proved by designing and analysing a concrete entanglement distribution protocol.  

The best known lower bounds on the two-way capacities of the piBGCs, prior to our work, are the (reverse) \emph{coherent information} lower bounds~\cite{Pirandola2009,holwer}. These bounds are derived by evaluating the ebit rate of a two-step entanglement distribution protocol: first, Alice sends halves of the two-mode squeezed vacuum state through the channel; second, Alice and Bob apply the hashing protocol~\cite{devetak2005,reviewEDP_dur} to distil entanglement. Our protocol improves on both steps. It draws inspiration from techniques used for distilling entanglement from two-qubit Werner states~\cite{Bennett-error-correction,Bennett-distillation-mixed}, where the ebit rate can be increased by introducing a \emph{recurrence stage}~\cite{Bennett-error-correction,Bennett-distillation-mixed,reviewEDP_dur} before the hashing protocol. In essence, our protocol involves sending halves of a suitably encoded finite-dimensional entangled state into the channel; projecting the channel's output into an appropriate finite-dimensional subspace; applying recurrence protocols~\cite{Bennett-error-correction,Bennett-distillation-mixed,reviewEDP_dur,p1orp2}; and finally executing an improved version of the hashing protocol~\cite{Improvement-Hashing}. While some steps of our protocol are obtained by suitably combining existing finite-dimensional protocols, the conceptual and technical novelty of our approach is to find a way to effectively apply them to the continuous-variable setting at hand. 

Let us present our entanglement distribution protocol across a piBGC $\Phi$. It is composed of six steps named S1--S6, and it depends on three parameters over which we will optimise numerically: $M\in\N^+$, $c\in(0,1)$, $k\in\N$.  These parameters have the following intuitive interpretations: $M$ represents the maximum number of photons in the input state of the protocol, $c$ indicates the level of coherence in that state, and $k$ corresponds to the number of iterations of a specific subroutine within the protocol.

\bigskip 
\noindent \textbf{\emph{Entanglement distribution protocol}}: 
\vspace{0ex}
\begin{enumerate}[leftmargin=3.76ex]
\item[\textbf{S1}:] Alice prepares many copies of the state 
\bb\label{state_our_protocol}
    \hspace{3.5ex} \ket{\Psi_{M,c}}_{\!AA'}\coloneqq c\ket{e_0}_{\!A}\!\otimes\!\ket{0}_{\!A'}+\sqrt{1\!-\!c^2}\ket{e_1}_{\!A}\!\otimes\!\ket{M}_{\!A'}\,,
\ee
and sends the subsystem $A'$ to Bob through the channel $\Phi$. Here, $\ket{0}_{A'}$ and $\ket{M}_{A'}$ denote the vacuum and the $M$th Fock state, while $\ket{e_0}_{\!A}$ and $\ket{e_1}_{\!A}$ represent orthogonal states of Alice's register $A$, which may be any quantum system (optical or not). As detailed in Section III.C of the Supplementary Information, an experimental realisation of the state in \eqref{state_our_protocol} may involve \emph{NOON states}~\cite{Sanders1989}. 

Now Alice and Bob share many pairs of the state $\big(\Id_{A}\otimes\,\Phi\big)(\ketbra{\Psi_{M,c}})$. If there is an energy constraint $N_s$, the parameters $c$ and $M$ have to satisfy $(1-c^2)M\le N_s$.

\item[\textbf{S2}:] Bob performs the local POVM $\{\Pi_M, \mathds{1} -\Pi_M\}$ on each pair, where $\Pi_M\coloneqq \ketbra{0}+\ketbra{M}$. If Bob finds the outcome associated with $\Pi_M$, then the pair is kept, otherwise it is discarded. Hence, by re-mapping $\ket{0}\to \ket{e_0}$ and $\ket{M}\to \ket{e_1}$, each pair is now in an effective two-qubit state.

\item[\textbf{S3}:] Alice and Bob apply the Pauli-based twirling, reported in~\eqref{def_twirling_map} in the Methods, in order to transform each of the remaining pairs in a Bell-diagonal state.

\item[\textbf{S4}:] Alice and Bob run $k$ times the following sub-routine, 
dubbed \emph{P1-or-P2}~\cite{p1orp2}.  
 
\medskip
\noindent \textbf{\emph{P1-or-P2 sub-routine}}:  
     \begin{itemize}
         \item \textbf{Step 4.1}: At this point of the protocol, the two-qubit state $\rho_{AB}$ of each pair is of the form
         \bb\label{bell_diagonal_state_step4}
            \qquad\quad  \rho_{AB}=\sum_{i,j=0}^{1}p_{ij}\,\mathds{1}_A\otimes  X^j Z^i \ketbra{\psi_{00}}(\mathds{1}_A\otimes   X^j Z^i)^\dagger\,,
         \ee
        where $\ket{\psi_{00}}$ denotes the ebit state (see~\eqref{Bell_states_main} of the Methods), $X,Z$ denote the well-known Pauli operators, and $\{p_{ij}\}_{i,j\in\{0,1\}}$ is a probability distribution.
        
        Alice and Bob collect all pairs in groups of two.  
        For a given group, call $A_1B_1$ and $A_2B_2$ the four qubits involved.
        If $p_{10}<p_{01}$, they apply the CNOT gate on $A_1A_2$ and $B_1B_2$, respectively, where $A_1,B_1$ are the control qubits and $A_2,B_2$ are the target qubits.
        Otherwise, they apply first the Hadamard gate on each qubit, then the CNOT gate as in the above case, and finally the Hadamard gate on $A_1$ and $B_1$.
        
        \item \textbf{Step 4.2}: For each 
        group of two pairs, Alice and Bob perform a projective measurement in the computational basis of $A_2$ and $B_2$, thereby discarding these systems.
        If the outcomes are different, they discard also the pair $A_1B_1$.
     \end{itemize}

The sub-routine tends to increase the value of $p_{00}$, bringing the state $\rho_{AB}$ closer to the ebit state $\ket{\psi_{00}}$. The condition $p_{10}<p_{01}$ in Step~4.1 means that the $X$ error in~\eqref{bell_diagonal_state_step4} is more prominent than the $Z$ error. If that is the case, the sub-routine reduces the $X$ error at the expense of the $Z$ error. However, by selectively applying the two procedures both errors end up being corrected~\cite{p1orp2}.

\item[\textbf{S5}:] Depending on the state $\rho$ of the remaining pairs (see Section III.A of the Supplementary Information for details), both Alice and Bob apply on each qubit one of the following unitaries: the qubit rotation around the $y$-axis of an angle $\pi/2$, the Hadamard gate, or the identity.
 
\item[\textbf{S6}:] In the end, Alice and Bob run the improved version of the hashing protocol introduced in~\cite{Improvement-Hashing} in order to generate ebits.
\end{enumerate}

Our lower bound on the two-way quantum capacity $Q_2(\Phi)$, calculated in Theorem S14 of the Supplementary Information, is the supremum over $M\in\N^+$, $c\in(0,1)$, and $k\in\N$ of the ebit rate of the above protocol. Since the secret-key capacity $K$ is always larger than $Q_2$, the resulting expression is also a lower bound on $K(\Phi)$. Our lower bound on the energy-constrained two-way capacities with energy constraint $N_s$ is obtained by optimising with the additional condition $(1-c^2)M\le N_s$.

We plot our bounds on the two-way capacities of the thermal attenuator $\mathcal{E}_{\lambda,\nu}$ in Fig.~1(a), of the thermal amplifier $\Phi_{g,\nu}$ in Fig.~1(b), and of the additive Gaussian noise $\Lambda_\xi$ in Fig.~1(c). These plots demonstrate that our bound is strictly tighter than all known lower bounds on both the two-way quantum and the secret-key capacity~\cite{Ottaviani_new_lower,Pirandola2009,Pirandola18,Wang_Q2_amplifier,holwer} in a large parameter region. Additionally, as shown in Section V of the Supplementary Information, the energy-constrained version of our bound outperforms the tightest known lower bound on the energy-constrained two-way capacities~\cite{Noh2020} in a substantial parameter region. Indeed, our numerical analysis reveals that the optimal value of $M$ in~\eqref{state_our_protocol} is never greater than three. This implies that our protocol is  highly energy efficient, using only states with up to three photons to distribute entanglement.

Notably, unlike all known lower bounds~\cite{Ottaviani_new_lower,Pirandola2009,Pirandola18,Wang_Q2_amplifier,holwer,Noh2020}, which vanish in a large parameter region, we observe numerically that our lower bound is \emph{faithful}: it remains strictly positive if and only if the tightest known upper bound~\cite{PLOB} is also strictly positive. Note that the proof of our Theorem~\ref{th1_main} provides an alternative entanglement distribution protocol which is mathematically guaranteed to be faithful. However, its ebit rate is much lower than that of the protocol presented in this section.

In Section IV of the Supplementary Information, we introduce an additional entanglement distribution protocol that combines and optimises the multi-rail protocol from~\cite{Winnel} and the qudit P1-or-P2 protocol from~\cite{p1orp2}. The ebit rate of this protocol is calculated in Theorem~S14 of the Supplementary Information and constitutes an additional lower bound on the two-way capacities of the piBGCs. Importantly, this additional lower bound turns out to be tighter than our previously discussed lower bound on $Q_2(\mathcal{E}_{\lambda,\nu})$ in the low excess noise regime $\nu\lesssim1$, as shown in Fig.~6 of the Supplementary Information. Note that in Fig.~1 we considered a higher excess noise regime where $\nu =10$.


Finally, let us briefly discuss the experimental feasibility of the entanglement distribution protocol introduced in this section. This protocol encounters the same experimental challenges as all known entanglement distillation protocols, as it primarily consists of the fundamental building blocks of any entanglement distillation protocol, namely recurrence and hashing protocols~\cite{Bennett-error-correction,Bennett-distillation-mixed,reviewEDP_dur,p1orp2,Improvement-Hashing}. Significant experimental progress has been made in recent years regarding entanglement distillation~\cite{Kalb_2017,Hu_2021,Ecker_2021}, leaving hope that these experimental challenges will be mitigated in the near future. See Section III.C of the Supplementary Information for a detailed discussion of the experimental feasibility of our protocol.

    \begin{figure}
\begin{tabular}{c}
\includegraphics[width=1.0\linewidth]{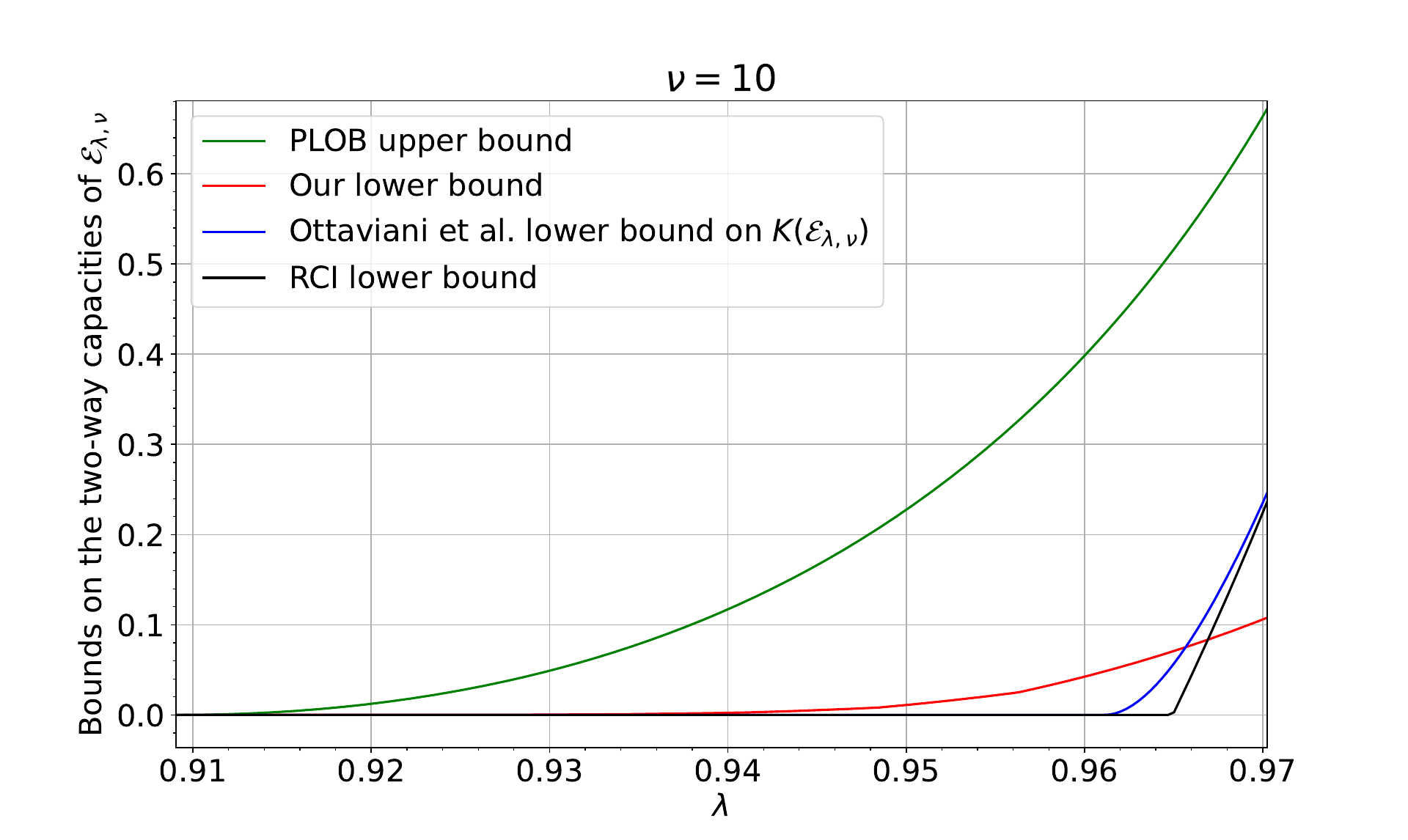} \\  
(a)\\
 \includegraphics[width=1.0\linewidth]{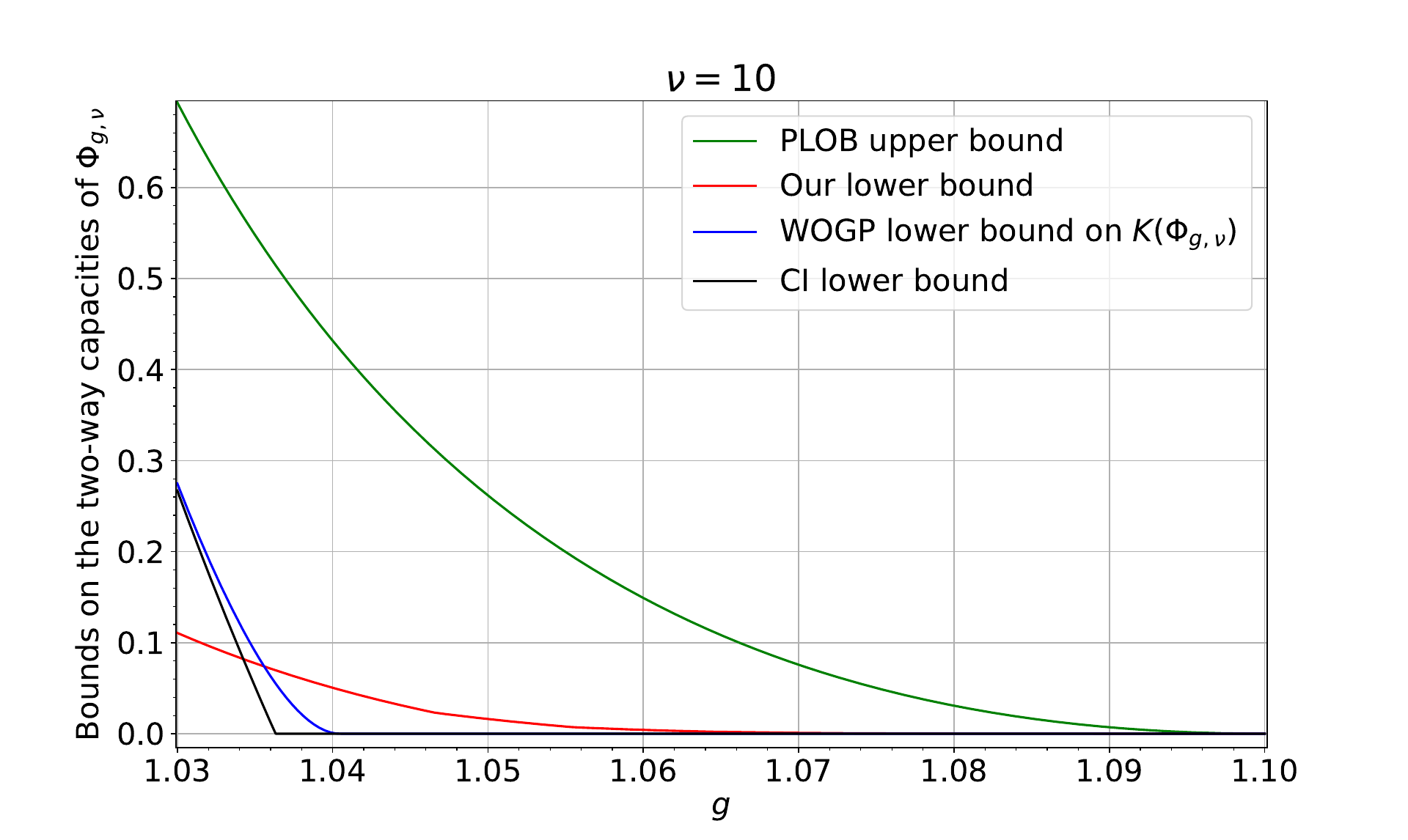} \\ 
(b) \\
 \includegraphics[width=1.0\linewidth]{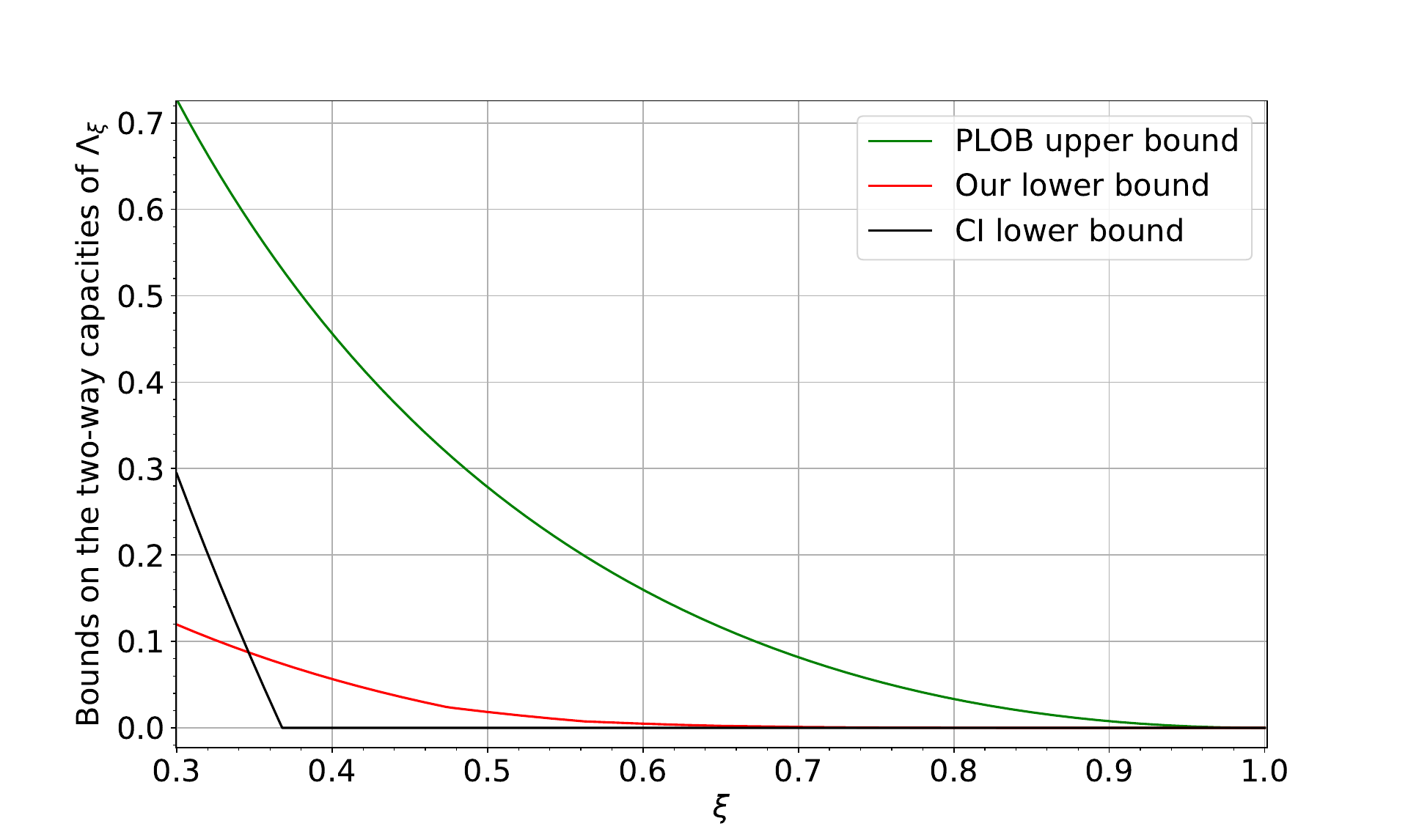} \\ 
(c)
\end{tabular}
\caption{Bounds on the two-way capacities of the thermal attenuator, thermal amplifier, and additive Gaussian noise.  \textbf{(a).}~Bounds on the two-way quantum capacity $Q_2$ and secret-key capacity $K$ of the thermal attenuator $\mathcal{E}_{\lambda,\nu}$ plotted with respect to $\lambda$ for $\nu=10$. The red line is our lower bound; the black line is the reverse coherent information lower bound reported in~\eqref{lowQ2_main_att}; the blue line is the lower bound on $K(\mathcal{E}_{\lambda,\nu})$ discovered by Ottaviani et al.~\cite{Ottaviani_new_lower} (this is not a lower bound on $Q_2$, since in general it only holds that $Q_2\le K$); and the green line is the upper bound discovered by Pirandola et al.~\cite{PLOB} reported in~\eqref{PLOB_Q2_main}. \textbf{ (b).}~Bounds on the two-way quantum capacity $Q_2$ and secret-key capacity $K$ of the thermal amplifier $\Phi_{g,\nu}$ plotted with respect to $g$ for $\nu=10$. The red line is our lower bound; the black line is the coherent information lower bound reported in~\eqref{lowQ2_main_amp}; the blue line is the lower bound on $K(\Phi_{g,\nu})$ discovered by Wang et al.~\cite{Wang_Q2_amplifier}; and the green line is the upper bound discovered by Pirandola et al.~\cite{PLOB} reported in~\eqref{PLOB_Q2_main}.  \textbf{ (c).}~Bounds on the two-way quantum capacity $Q_2$ and secret-key capacity $K$ of the additive Gaussian noise $\Lambda_\xi$ plotted with respect to $\xi$. The red line is our lower bound; the black line is the coherent information lower bound reported in~\eqref{lowQ2_main_noise}; and the green line is the upper bound discovered by Pirandola et al.~\cite{PLOB} reported in~\eqref{PLOB_Q2_main}. }
\label{fig_main}
\end{figure}

\section*{Discussion}
In this work, we have determined a simple necessary and sufficient condition for achieving continuous-variable quantum-key distribution without intermediate nodes~\cite{Pirandola20} and entanglement distribution without quantum repeaters~\cite{repeaters, Munro2015}: point-to-point key distribution and entanglement distribution across an optical link of transmissivity $\lambda$ and added thermal noise $\nu$ are achievable if and only if $\lambda>\frac{\nu}{\nu+1}$. This solves completely the problem of determining the ultimate limitations to CV-QKD imposed by the laws of quantum physics~\cite{Pirandola18}. By leveraging this result, we have established the maximum achievable distance of CV-QKD, demonstrating that the current Internet infrastructure can in principle support CV-QKD only up to distances of approximately $\SI{990}{km}$, which is not so far from the distance records of approximately $200\,\mathrm{km}$ achieved by state-of-the-art 
experiments~\cite{Record1, Record2, Record3, Record4, Record5}. Consequently, surpassing these distance records by an order of magnitude at standard telecom wavelengths will necessarily require the use of intermediate nodes.

Moreover, we have derived the best lower bound to date on the two-way quantum and secret-key capacities of all phase-insensitive bosonic Gaussian channels in the regime of large noise --- in a large parameter region, ours is in fact the \emph{only} non-zero lower bound. Our new bound constitutes a significant improvement upon state-of-the-art lower bounds~\cite{Ottaviani_new_lower,Pirandola2009,Pirandola18,Wang_Q2_amplifier,Noh2020}. We remark that the last improvement on the (unconstrained) two-way quantum capacity prior to our work dates back to 2009~\cite{Pirandola2009}. It was an open question whether the latter could equal the true two-way quantum capacities: our work provides a negative answer to this question, showing that entanglement distribution is possible in a much broader parameter region than previously known. Our results are fully explicit: to prove our lower bound, we constructed a general yet relatively simple entanglement distribution (and hence also key distribution) protocol that works whenever the underlying channel is not entanglement breaking.

In conclusion, we have discovered new protocols and fundamental limitations to quantum communication across optical links, establishing the ultimate noise threshold at which entanglement and secret keys can be distributed. This is likely to bear a significant impact on the design of practical QKD and entanglement distribution protocols on optical networks.

\medskip
\noindent \textbf{Acknowledgements} --- 
FAM and VG acknowledge financial support by MUR (Ministero dell'Istruzione, dell'Universit\`a e della Ricerca) through the following projects: PNRR MUR project PE0000023-NQSTI, PRIN 2017 Taming complexity via Quantum Strategies: a Hybrid Integrated Photonic approach (QUSHIP) Id. 2017SRN-BRK, and project PRO3 Quantum Pathfinder. LL was partially supported by the Alexander von Humboldt Foundation. FAM and LL thank the Freie Universit\"{a}t Berlin for hospitality. FAM, LL, and VG thank  Daniel Miller, Matteo Rosati, Michele Notarnicola, Marco Avesani, Mateusz Mazelanik, and Marco Barbieri for useful discussions.

\medskip
\noindent\textbf{Author contributions} --- The entanglement distribution protocol was designed and optimised by FAM. The proof of Theorem~\ref{th1_main} was found in a blackboard discussion between the three authors. FAM wrote a first complete draft of the paper, which was subsequently improved by LL and VG. 

\medskip
\noindent\textbf{Supplementary Information} is available for this paper.

\medskip
\noindent\textbf{Competing interest} --- The authors declare no competing interests.

\medskip
\noindent\textbf{Data availability} --- No data sets were generated during this study.

\section*{Methods}\label{sec_methods}
Let $\mathfrak{S}(\HH)$ denote the space of density operators on a Hilbert space $\HH$. Let $\HH_2$ be a single-qubit Hilbert space with orthonormal basis $\{\ket{0}, \ket{1}\}$. For all $i,j\in\{0,1\}$, the state $\ket{\psi_{ij}}_{AB}\in\HH_2^{(A)}\otimes\HH_2^{(B)}$ defined as 
\bb\label{Bell_states_main}
\ket{\psi_{ij}}_{AB}\coloneqq \frac{1}{\sqrt{2}}\sum_{m=0}^1 (-1)^{im}\ket{m}_A\otimes\ket{m\oplus j}_B\, ,
\ee
where $\oplus$ denotes the modulo $2$ addition, is called a Bell state (or maximally entangled state). We will also refer to $\ket{\psi_{00}}$ as an entanglement bit, or \emph{ebit}. Any Bell state can be written in terms of the ebit as
\bb
    \ket{\psi_{ij}}_{AB}=\mathds{1}_A\otimes X^j Z^i \ket{\psi_{00}}_{AB}\,,
\ee
where $\mathds{1}_A$ denotes the identity operator on $\HH_2^{(A)}$ and $X,Z$ denote the well-known Pauli operators on $\HH_2^{(B)}$.  The Pauli-based twirling 
\bb
\mathcal{T}:\mathfrak{S}( \HH_2^{(A)}\otimes\HH_2^{(B)} )\to \mathfrak{S}( \HH_2^{(A)}\otimes\HH_2^{(B)} )
\ee
is defined as 
\bb\label{def_twirling_map}
    \mathcal{T}(\rho_{AB})=\frac{1}{4}\sum_{i,j=0}^{1}X^j_A Z^i_A\otimes  X^j_B Z^i_B\,\rho_{AB}\,(X^j_A Z^i_A\otimes  X^j_B Z^i_B)^\dagger
\ee
for all $\rho_{AB}\in\mathfrak{S}( \HH_2^{(A)}\otimes\HH_2^{(B)} )$ and it maps any input state in a Bell-diagonal state: 
\bb
    \mathcal{T}(\rho_{AB})=\sum_{i,j=0}^{1} \bra{\psi_{ij}}\rho_{AB}\ket{\psi_{ij}}\, \ketbra{\psi_{ij}}_{AB}.
\ee
Physically realisable transformations between two quantum systems with Hilbert spaces $\HH$ (input) and $\HH'$ (output) are modelled by quantum channels, i.e.\ completely positive and trace preserving maps $\Phi:\mathfrak{S}(\HH)\to \mathfrak{S}(\HH')$. The two-way quantum capacity $Q_2(\Phi)$ and secret-key capacity $K(\Phi)$ of a quantum channel $\Phi$ is the maximum achievable rate of qubits and secret-key bits, respectively, that can be reliably transmitted through $\Phi$ by assuming that the sender Alice and the receiver Bob have free access to a public, noiseless, two-way classical communication line. The rate of qubits (resp.\ secret-key bits) is defined as the ratio between the number of reliably transmitted qubits (resp.\ secret-key bits) and the number of uses of $\Phi$. A rigorous definition of the two-way capacities can be found in~\cite[Chapters 14 and 15]{Sumeet_book}. For any $\Phi$, the two-way capacities satisfy
\bb\label{relation_2waycap_main}
Q_2(\Phi)\le K(\Phi)\,,
\ee
since an ebit can generate a secret-key bit, thanks to the `E91' protocol~\cite{Ekert91}.

In practice, Alice has access to a limited budget $N_s$ of energy to produce each input signal, as measured by a Hamiltonian $H$ on $\HH$. Fixed $N_s>0$, the energy-constrained  two-way capacities $Q_2(\Phi, N_s)$ and $K(\Phi,N_s)$ are defined in the same way as the two-way capacities defined above, apart from the fact that the maximisation of the rate is restricted to the strategies such that the average expected value of $H$ on all input signals is required to be at most $N_s$.  
In addition note that the generalisation of~\eqref{relation_2waycap_main} to the energy-constrained case holds, i.e.
\bb\label{relation_2waycapEC_main}
Q_2(\Phi,N_s)\le K(\Phi,N_s)\,,
\ee
that any energy-constrained capacity is upper bounded by the corresponding unconstrained capacity, and {that it} tends to it in the limit $N_s\rightarrow\infty$.  

The goal of an entanglement distillation protocol is to turn a large number $n$ of copies of a bipartite entangled state $\rho_{AB} $ shared between Alice and Bob into a number $m$ of ebits by local operations and classical communication. The yield 
is defined by the ratio $m/n$. The distillable entanglement $E_d(\rho_{AB})$ of $\rho_{AB}$ is defined as the maximum yields over all the possible entanglement distillation protocols~\cite{reviewEDP_dur}~\cite[Chapter 8]{Sumeet_book}. The state $\rho_{AB}$ is said to be \emph{distillable} if $E_d(\rho_{AB})>0$. The coherent information (resp.~reverse coherent information) of $\rho_{AB}$ is defined by \bb
    I_{\text{c}}(\rho_{AB})\coloneqq S(\rho_B)-S(\rho_{AB})
\ee
(resp.\ $I_{\text{rc}}(\rho_{AB})\coloneqq S(\rho_A)-S(\rho_{AB})$), where $\rho_B\coloneqq \Tr_A\rho_{AB}$ and analogously for $\rho_A$, and moreover 
\bb
S(\sigma)\coloneqq -\Tr[ \sigma \log_2 \sigma]
\ee
is the von Neumann entropy. The yield $I_{\text{c}}(\rho_{AB})$ (resp.\ $I_{\text{rc}}(\rho_{AB})$) is achievable by an entanglement distillation protocol~\cite{devetak2005} that only exploits one-way forward (resp.~backward) classical communication. In particular, the following inequality, known as \emph{hashing inequality}, holds:
\begin{equation}\label{hashing_ineq_main}
    E_d(\rho_{AB})\ge \max\{I_{\text{c}}(\rho_{AB})\,, I_{\text{rc}}(\rho_{AB})\}\,.
\end{equation}
Let us briefly link the notions of distillable entanglement $E_d$ and two-way quantum capacity $Q_2(\Phi ,N_s)$. Suppose that Alice produces $n$ copies of a state $\rho_{AA'}$ such that $\Tr \rho_{A'} H_{A'}\leq N_s$. Then, she can use the channel $n$ times to send all subsystems $A'$ to Bob. Then, Alice and Bob share $n$ copies of $\Id_{A}\otimes\Phi(\rho_{AA'})$, which can now be used to generate $\simeq n\, E_d\left(\Id_{A}\otimes\Phi(\rho_{AA'}) \right) $ ebits by means of a suitable entanglement distillation protocol. Consequently, it holds that
\bb\label{link_D2_D_main}
    Q_2(\Phi,N_s)\ge E_d\left(\Id_{A}\otimes\Phi(\rho_{AA'}) \right)\,.
\ee

In the context of entanglement distillation, the goal of a \emph{recurrence protocol} is to transform a certain number of copies of the state $\rho_{AB}$ into fewer copies of another state $\rho'_{AB}$ such that $\bra{\psi_{00}}\rho'_{AB}\ket{\psi_{00}}>\bra{\psi_{00}}\rho_{AB}\ket{\psi_{00}}$~\cite{Bennett-error-correction,Bennett-distillation-mixed,reviewEDP_dur}. Examples of recurrence protocols for qubits can be found in~\cite{Bennett-distillation-mixed,DEJMPS,DNMV}, and their generalisations to the case of qudits in~\cite{Horodecki1999,Alber_2001,Dist-Number-Theory}. In the present paper we will exploit the recently introduced P1-or-P2 recurrence protocol~\cite{p1orp2}.   Since an infinite number of iterations of a recurrence protocol is generally needed to generate a Bell state $\ket{\psi_{00}}$, the yield of a recurrence protocol is zero. To achieve a nonzero yield, one may adopt a suitable number of iterations of a recurrence protocol and then apply the hashing or breeding protocol~\cite{Bennett-error-correction,Bennett-distillation-mixed}.
The latter protocols, which exploit only one-way classical communication, achieves the yield of the hashing inequality in~\eqref{hashing_ineq_main}.
Improvements of the hashing and breeding protocols, which exploit two-way classical communication and work on bipartite-qubit systems that are diagonal in the Bell basis, have been provided in~\cite{Improvement-Hashing}.

Let $\HH_S,\HH_E\coloneqq L^2(\mathbb{R})$ be single modes of electromagnetic radiation with definite frequency and polarisation, and let $a$ and $b$ be the corresponding annihilation operators. 
Now, let us define the piBGCs. For all $\lambda\in[0,1]$, $g\ge1$, $\nu\ge0$, $\xi\ge0$, the thermal attenuator $\mathcal{E}_{\lambda,\nu}$, the thermal amplifier $\Phi_{g,\nu}$, and the additive Gaussian noise $\Lambda_\xi$ are quantum channels on $\HH_S$ defined by 
\bb
    \mathcal{E}_{\lambda,\nu}(\rho)&\coloneqq\Tr_E\left[U_\lambda^{SE} \big(\rho^S \otimes\tau_\nu^E \big) {U_\lambda^{SE}}^\dagger\right]\,,\\
    \Phi_{g,\nu}(\rho)&\coloneqq\Tr_E\left[U_g^{SE} \big(\rho^S\otimes\tau_\nu^E\big) {U_g^{SE}}^\dagger\right]\,,\\
    \Lambda_\xi(\rho)&\coloneqq\frac{1}{\pi\xi}\int_{\mathbb C} \mathrm{d}^2 {z}\, e^{-\frac{|z|^2}{\xi}}  D(z)\,\rho\,  D(z)^\dagger\,,
\ee
where $\tau^E_\nu\coloneqq\frac{1}{\nu+1}\sum_{n=0}^\infty \left(\frac{\nu}{\nu+1}\right)^{n}\ketbra{n}_E$ is the thermal state with $\{\ket{n}_E\}_{n\in\N}$ being the Fock states on $\HH_E$, $U_\lambda^{SE}$ is the beam splitter unitary of transmissivity $\lambda$, $U_g^{SE}$ is the two-mode squeezing unitary of gain $g$, and $D(z)$ is the displacement operator:
\bb
    U_{\lambda}^{S E}&\coloneqq\exp\left[\arccos\sqrt{\lambda}\left(a^\dagger b-a\, b^\dagger\right)\right]\,,\\
    U_{g}^{S E}&\coloneqq\exp\left[\arccosh\sqrt{g}\left(a^\dagger b^\dagger-a\, b\right)\right]\,,\\
    D(z)&\coloneqq\exp{\left[z a^\dagger-z^\ast a\right] }\,.
\ee
In a communication scenario, the piBGCs are understood to map Alice's single-mode systems $A'$ to Bob's single-mode systems $B$.
The Hamiltonian on $\HH_S$ is the photon number operator $a^\dagger a$ and, by definition, the energy of an input signal initialised in a state $\rho$ is equal to its mean photon number $\Tr[\rho\, a^\dagger a]$.
The tightest known upper bounds on the two-way capacities of these channels, shown by Pirandola et al.~\cite{PLOB}, are
\bb\label{PLOB_Q2_main}
    K(\mathcal{E}_{\lambda,\nu})&\le \begin{cases}
-h(\nu)-\log_2[(1-\lambda)\lambda^\nu], & \text{if $\lambda\in(\frac{\nu}{\nu+1}, 1]$,} \\
0, & \text{otherwise}
\end{cases}\\
    K(\Phi_{g,\nu})&\le \begin{cases}
-h(\nu)+\log_2\left(\frac{g^{\nu+1}}{g-1}\right), & \text{if $g\in[1, 1+\frac{1}{\nu})$,} \\
0, & \text{otherwise}
\end{cases}\\
K(\Lambda_\xi)&\le  \begin{cases}
\frac{\xi-1}{\ln2}-\log_2(\xi), & \text{if $\xi\in[0,1)$,} \\
0, & \text{otherwise}
\end{cases}
\ee
where $h(\nu)\coloneqq(\nu+1)\log_2(\nu+1)-\nu\log_2\nu$ (see~\cite{MMMM} for a strong-converse extension of the formulas above). These upper bounds vanish if and only if the piBGCs are entanglement breaking~\cite{PLOB,Ent_breaking_Gaussian, Holevo-EB}. 
The tightest known lower bounds (before our work) on $Q_2$ are~\cite{Pirandola2009,holwer}
\begin{align}
    Q_2(\mathcal{E}_{\lambda,\nu}) &\ge\max\{0,-h(\nu)-\log_2(1-\lambda)\}\,, \label{lowQ2_main_att} \\
    Q_2(\Phi_{g,\nu})&\ge\max\left\{0,-h(\nu)+\log_2\left(\frac{g}{g-1}\right)\right\}\,, \label{lowQ2_main_amp} \\
    Q_2(\Lambda_\xi)&\ge \max\{0,-\log_2(e\,\xi)\}\,. \label{lowQ2_main_noise}
\end{align}
These lower bounds can be proved first by applying~\eqref{link_D2_D_main} with the choice $\rho_{AA'}=\ketbra{\Psi_{N_s}}_{AA'}$, where
\bb\label{TMSV_def}
\ket{\Psi_{N_s}}\coloneqq \frac{1}{\sqrt{N_s+1}}\sum_{n=0}^\infty \left(\frac{N_s}{N_s+1}\right)^{n/2}\ket{n}\otimes \ket{n}
\ee
is the two-mode squeezed vacuum state with local mean photon number equal to $N_s$, second by applying the hashing inequality in~\eqref{hashing_ineq_main}, and finally by taking the limit $N_s\rightarrow\infty$. Specifically, the lower bound in~\eqref{lowQ2_main_att} is achieved by the reverse coherent information, while that in~\eqref{lowQ2_main_amp}--\eqref{lowQ2_main_noise} is achieved by the coherent information.
Although the right-hand sides of~\eqref{lowQ2_main_att}--\eqref{lowQ2_main_noise} also lower bound the secret-key capacity $K$, improved estimates of $K(\mathcal{E}_{\lambda,\nu})$ and $K(\Phi_{g,\nu})$ have been put forth by Ottaviani et al.~\cite{Ottaviani_new_lower} (see also~\cite[Sec.~VII]{Pirandola18}) and by Wong et al.~\cite{Wang_Q2_amplifier}, respectively.

Lower bounds on the energy-constrained two-way capacities with energy constraint $N_s$ of piBGCs are the coherent information and the reverse coherent information evaluated on the state obtained by sending the subsystem $A'$ of $\ket{\Psi_{N_s}}_{AA'}$ through the channel. For sufficiently small values of $N_s$, improved lower bounds have been found by Noh et al.~\cite{Noh2020}. The best known upper bound on the energy-constrained two-way capacity of the thermal attenuator is --- depending on the parameters $\lambda$, $\nu$, and $N_s$ --- the unconstrained upper bound discovered by Pirandola et al.~\cite{PLOB} reported in~\eqref{PLOB_Q2_main} or the bound found by Davis et al.~\cite{Davis2018} (which is equal to the bound found in~\cite{TGW} for $\nu=0$). Upper bounds on the energy-constrained two-way capacities of the thermal amplifier and additive Gaussian noise are the unconstrained upper bound discovered by Pirandola et al.~\cite{PLOB} reported in~\eqref{PLOB_Q2_main} and the bounds which can be obtained by exploiting the results of~\cite{Goodenough16, Davis2018}.  

\bibliographystyle{unsrt}
\bibliography{biblio}

\begin{thebibliography}{10}

\bibitem{bennett1984quantum}
C.~H. Bennett.
\newblock Quantum cryptography: public key distribution and coin tossing.
\newblock In {\em Proc. IEEE International Conference on Computers, Systems and
  Signal Processing, Bangalore, India}, pages 175--179, 1984.

\bibitem{BUCCO}
A.~Serafini.
\newblock {\em Quantum Continuous Variables: A Primer of Theoretical Methods}.
\newblock CRC Press, Taylor \& Francis Group, Boca Raton, USA, 2017.

\bibitem{CV_qkd}
F.~Grosshans and P.~Grangier.
\newblock Continuous variable quantum cryptography using coherent states.
\newblock {\em Phys. Rev. Lett.}, 88(5), 2002.

\bibitem{Pirandola20}
S.~Pirandola et~al.
\newblock Advances in quantum cryptography.
\newblock {\em Advances in Optics and Photonics}, 12(4):1012--1236, 2020.

\bibitem{Laudenbach_2018}
F.~Laudenbach, C.~Pacher, C.-H.~F. Fung, A.~Poppe, M.~Peev, B.~Schrenk,
  M.~Hentschel, P.~Walther, and H.~H\"{u}bel.
\newblock Continuous-variable quantum key distribution with gaussian
  modulation-the theory of practical implementations.
\newblock {\em Advanced Quantum Technologies}, 1(1), 2018.

\bibitem{Record1}
Y.~Zhang, Z.~Chen, S.~Pirandola, X.~Wang, C.~Zhou, B.~Chu, Y.~Zhao, B.~Xu,
  S.~Yu, and H.~Guo.
\newblock Long-distance continuous-variable quantum key distribution over
  202.81 km of fiber.
\newblock {\em Phys. Rev. Lett.}, 125:010502, 2020.

\bibitem{Record2}
A.~A.~E. Hajomer, I.~Derkach, N.~Jain, H.-M. Chin, U.~L. Andersen, and
  T.~Gehring.
\newblock Long-distance continuous-variable quantum key distribution over
  100-km fiber with local local oscillator.
\newblock {\em Science Advances}, 10(1):eadi9474, 2024.

\bibitem{Record3}
Y.~Zhang, Z.~Li, Z.~Chen, C.~Weedbrook, Y.~Zhao, X.~Wang, Y.~Huang, C.~Xu,
  X.~Zhang, Z.~Wang, M.~Li, X.~Zhang, Z.~Zheng, B.~Chu, X.~Gao, N.~Meng,
  W.~Cai, Z.~Wang, G.~Wang, S.~Yu, and H.~Guo.
\newblock Continuous-variable {QKD} over 50 km commercial fiber.
\newblock {\em Quantum Science and Technology}, 4(3):035006, 2019.

\bibitem{Record4}
Y.~Pi, H.~Wang, Y.~Pan, Y.~Shao, Y.~Li, J.~Yang, Y.~Zhang, W.~Huang, and B.~Xu.
\newblock Sub-mbps key-rate continuous-variable quantum key distribution with
  local local oscillator over 100-km fiber.
\newblock {\em Optics Letters}, 48(7):1766, 2023.

\bibitem{Record5}
D.~Huang, P.~Huang, D.~Lin, and G.~Zeng.
\newblock Long-distance continuous-variable quantum key distribution by
  controlling excess noise.
\newblock {\em Scientific Reports}, 6:19201, 01 2016.

\bibitem{quantum_internet_Wehner}
S.~Wehner, D.~Elkouss, and R.~Hanson.
\newblock Quantum internet: A vision for the road ahead.
\newblock {\em Science}, 362:eaam9288, 10 2018.

\bibitem{Pirandola18}
S.~Pirandola, S.~L. Braunstein, R.~Laurenza, C.~Ottaviani, T.~P.~W. Cope,
  G.~Spedalieri, and L.~Banchi.
\newblock Theory of channel simulation and bounds for private communication.
\newblock {\em Quantum Science and Technology}, 3(3):035009, 2018.

\bibitem{MARK}
M.~M. Wilde.
\newblock {\em Quantum Information Theory}.
\newblock Cambridge University Press, 2nd edition, 2017.

\bibitem{Sumeet_book}
S.~Khatri and M.~M. Wilde.
\newblock Principles of quantum communication theory: A modern approach, 2020.

\bibitem{Davis2018}
N.~Davis, M.~E. Shirokov, and M.~M. Wilde.
\newblock Energy-constrained two-way assisted private and quantum capacities of
  quantum channels.
\newblock {\em Phys. Rev. A}, 97:062310, 2018.

\bibitem{Ekert91}
A.~K. Ekert.
\newblock Quantum cryptography based on {B}ell's theorem.
\newblock {\em Phys. Rev. Lett.}, 67:661--663, 1991.

\bibitem{PLOB}
S.~Pirandola, R.~Laurenza, C.~Ottaviani, and L.~Banchi.
\newblock Fundamental limits of repeaterless quantum communications.
\newblock {\em Nat. Commun.}, 8(1):15043, 2017.

\bibitem{repeaters}
H.-J. Briegel, W.~D\"ur, J.~I. Cirac, and P.~Zoller.
\newblock Quantum repeaters: The role of imperfect local operations in quantum
  communication.
\newblock {\em Phys. Rev. Lett.}, 81:5932--5935, 1998.

\bibitem{Munro2015}
W.~J. Munro, K.~Azuma, K.~Tamaki, and K.~Nemoto.
\newblock Inside quantum repeaters.
\newblock {\em IEEE Journal of Selected Topics in Quantum Electronics},
  21(3):78--90, 2015.

\bibitem{Goodenough16}
K.~Goodenough, D.~Elkouss, and S.~Wehner.
\newblock Assessing the performance of quantum repeaters for all
  phase-insensitive {G}aussian bosonic channels.
\newblock {\em New Journal of Physics}, 18(6):063005, 2016.

\bibitem{TGW}
M.~Takeoka, S.~Guha, and M.~M. Wilde.
\newblock Fundamental rate-loss tradeoff for optical quantum key distribution.
\newblock {\em Nat. Commun.}, 5(1):5235, 2014.

\bibitem{holwer}
A.~S. Holevo and R.~F. Werner.
\newblock Evaluating capacities of bosonic {G}aussian channels.
\newblock {\em Phys. Rev. A}, 63:032312, 2001.

\bibitem{MMMM}
M.~M. Wilde, M.~Tomamichel, and M.~Berta.
\newblock Converse bounds for private communication over quantum channels.
\newblock {\em IEEE Transactions on Information Theory}, 63(3):1792--1817,
  2017.

\bibitem{squashed_channel}
M.~Takeoka, S.~Guha, and M.~M. Wilde.
\newblock The squashed entanglement of a quantum channel.
\newblock {\em IEEE Transactions on Information Theory}, 60(8):4987--4998,
  2014.

\bibitem{Pirandola2009}
S.~Pirandola, R.~Garc\'{\i}a-Patr\'on, S.~L. Braunstein, and S.~Lloyd.
\newblock Direct and reverse secret-key capacities of a quantum channel.
\newblock {\em Phys. Rev. Lett.}, 102:050503, 2009.

\bibitem{Noh2020}
K.~Noh, S.~Pirandola, and L.~Jiang.
\newblock Enhanced energy-constrained quantum communication over bosonic
  {G}aussian channels.
\newblock {\em Nat. Commun.}, 11(1):457, 2020.

\bibitem{Ottaviani_new_lower}
C.~Ottaviani, R.~Laurenza, T.~P.~W. Cope, G.~Spedalieri, S.~L. Braunstein, and
  S.~Pirandola.
\newblock {Secret key capacity of the thermal-loss channel: improving the lower
  bound}.
\newblock In M.~T. Gruneisen, M.~Dusek, and J.~G. Rarity, editors, {\em Quantum
  Information Science and Technology II}, volume 9996, page 999609.
  International Society for Optics and Photonics, SPIE, 2016.

\bibitem{Tamura2018}
Y.~Tamura, H.~Sakuma, K.~Morita, M.~Suzuki, Y.~Yamamoto, K.~Shimada, Y.~Honma,
  K.~Sohma, T.~Fujii, and T.~Hasegawa.
\newblock The first 0.14-db/km loss optical fiber and its impact on submarine
  transmission.
\newblock {\em J. Lightwave Technol.}, 36(1):44--49, 2018.

\bibitem{Li2020}
M.-J. Li and T.~Hayashi.
\newblock Chapter 1 -- {A}dvances in low-loss, large-area, and multicore
  fibers.
\newblock In A.~E. Willner, editor, {\em Optical Fiber Telecommunications VII},
  pages 3--50. Academic Press, 2020.

\bibitem{PeresPPT}
A.~Peres.
\newblock {Separability criterion for density matrices}.
\newblock {\em Phys. Rev. Lett.}, 77:1413--1415, 1996.

\bibitem{Simon00}
R.~Simon.
\newblock {Peres--Horodecki} separability criterion for continuous variable
  systems.
\newblock {\em Phys. Rev. Lett.}, 84:2726--2729, 2000.

\bibitem{Giedke01}
G.~Giedke, B.~Kraus, M.~Lewenstein, and J.~I. Cirac.
\newblock Entanglement criteria for all bipartite {G}aussian states.
\newblock {\em Phys. Rev. Lett.}, 87:167904, 2001.

\bibitem{Ent_breaking_Gaussian}
A.~S. Holevo and V.~Giovannetti.
\newblock Quantum channels and their entropic characteristics.
\newblock {\em Reports on Progress in Physics}, 75(4):046001, 2012.

\bibitem{Holevo-EB}
A.~S. Holevo.
\newblock Entanglement-breaking channels in infinite dimensions.
\newblock {\em Probl. Pered. Inform.}, 44(3):3--18, 2008.
\newblock (English translation: Probl. Inf. Transm. 44(3):171--184, 2008).

\bibitem{Pirandola2021}
S.~Pirandola.
\newblock Limits and security of free-space quantum communications.
\newblock {\em Phys. Rev. Research}, 3:013279, 2021.

\bibitem{Wang_Q2_amplifier}
G.~Wang, C.~Ottaviani, H.~Guo, and S.~Pirandola.
\newblock Improving the lower bound to the secret-key capacity of the thermal
  amplifier channel.
\newblock {\em The European Physical Journal D}, 73(1):17, 2019.

\bibitem{devetak2005}
I.~Devetak and A.~Winter.
\newblock Distillation of secret key and entanglement from quantum states.
\newblock {\em Proc. Royal Soc. A}, 461(2053):207--235, 2005.

\bibitem{reviewEDP_dur}
H.~D\"{u}r and Briegel~H. J.
\newblock Entanglement purification and quantum error correction.
\newblock {\em Reports on Progress in Physics}, 70(8):1381--1424, 2007.

\bibitem{Bennett-error-correction}
C.~H. Bennett, D.~P. DiVincenzo, J.~A. Smolin, and W.~K. Wootters.
\newblock Mixed-state entanglement and quantum error correction.
\newblock {\em Phys. Rev. A}, 54:3824--3851, 1996.

\bibitem{Bennett-distillation-mixed}
C.~H. Bennett, G.~Brassard, S.~Popescu, B.~Schumacher, J.~A. Smolin, and W.~K.
  Wootters.
\newblock Purification of noisy entanglement and faithful teleportation via
  noisy channels.
\newblock {\em Phys. Rev. Lett.}, 76:722--725, 1996.

\bibitem{p1orp2}
J.~Miguel-Ramiro and W.~D\"ur.
\newblock Efficient entanglement purification protocols for $d$-level systems.
\newblock {\em Phys. Rev. A}, 98:042309, 2018.

\bibitem{Improvement-Hashing}
K.~G.~H. Vollbrecht and F.~Verstraete.
\newblock Interpolation of recurrence and hashing entanglement distillation
  protocols.
\newblock {\em Phys. Rev. A}, 71:062325, 2005.

\bibitem{Sanders1989}
B.~C. Sanders.
\newblock Quantum dynamics of the nonlinear rotator and the effects of
  continual spin measurement.
\newblock {\em Phys. Rev. A}, 40:2417--2427, 1989.

\bibitem{Winnel}
M.~S. Winnel, J.~J. Guanzon, N.~Hosseinidehaj, and T.~C. Ralph.
\newblock Achieving the ultimate end-to-end rates of lossy quantum
  communication networks.
\newblock {\em arXiv:2203.13924}, 2022.

\bibitem{Kalb_2017}
N.~N.~Kalb, A.~A. A.~A.~Reiserer, P.~C. Humphreys, J.~J.~W. Bakermans, S.~J.
  Kamerling, N.~H. Nickerson, S.~C. Benjamin, D.~J. Twitchen, M.~Markham, and
  R.~Hanson.
\newblock Entanglement distillation between solid-state quantum network nodes.
\newblock {\em Science}, 356(6341):928--932, 2017.

\bibitem{Hu_2021}
X.-M. Hu, C.-X. Huang, Y.-B. Sheng, L.~Zhou, B.-H. Liu, Y.~Guo, C.~Zhang, W.-B.
  Xing, Y.-F. Huang, C.-F. Li, and G.-C. Guo.
\newblock Long-distance entanglement purification for quantum communication.
\newblock {\em Phys. Rev. Lett.}, 126(1), 2021.

\bibitem{Ecker_2021}
S.~Ecker, P.~Sohr, L.~Bulla, M.~Huber, M.~Bohmann, and R.~Ursin.
\newblock Experimental single-copy entanglement distillation.
\newblock {\em Phys. Rev. Lett.}, 127(4), 2021.

\bibitem{DEJMPS}
D.~Deutsch, A.~Ekert, R.~Jozsa, C.~Macchiavello, S.~Popescu, and A.~Sanpera.
\newblock Quantum privacy amplification and the security of quantum
  cryptography over noisy channels.
\newblock {\em Phys. Rev. Lett.}, 77:2818--2821, 1996.

\bibitem{DNMV}
J.~Dehaene, M.~Van~den Nest, B.~De~Moor, and F.~Verstraete.
\newblock Local permutations of products of bell states and entanglement
  distillation.
\newblock {\em Phys. Rev. A}, 67:022310, 2003.

\bibitem{Horodecki1999}
M.~Horodecki and P.~Horodecki.
\newblock Reduction criterion of separability and limits for a class of
  distillation protocols.
\newblock {\em Phys. Rev. A}, 59:4206--4216, 1999.

\bibitem{Alber_2001}
G.~Alber, A.~Delgado, N.~Gisin, and I.~Jex.
\newblock Efficient bipartite quantum state purification in arbitrary
  dimensional hilbert spaces.
\newblock {\em Journal of Physics A: Mathematical and General},
  34(42):8821--8833, 2001.

\bibitem{Dist-Number-Theory}
H.~Bombin and M.~A. Martin-Delgado.
\newblock Entanglement distillation protocols and number theory.
\newblock {\em Phys. Rev. A}, 72:032313, 2005.

\bibitem{Numerical-Improvement-Hashing}
E.~Hostens, J.~Dehaene, and B.~De~Moor.
\newblock Asymptotic adaptive bipartite entanglement-distillation protocol.
\newblock {\em Phys. Rev. A}, 73:062337, 2006.

\bibitem{BARNETT-RADMORE}
S.~Barnett and P.~M. Radmore.
\newblock {\em Methods in Theoretical Quantum Optics}.
\newblock Oxford Series in Optical and Imaging Sciences. Clarendon Press, 2002.

\bibitem{Cushen1971}
C.~D. Cushen and R.~L. Hudson.
\newblock A quantum-mechanical central limit theorem.
\newblock {\em Journal of Applied Probability}, 8(3):454--469, 1971.

\bibitem{Die-Hard-2-PRA}
F.~A. Mele, L.~Lami, and V.~Giovannetti.
\newblock Quantum optical communication in the presence of strong attenuation
  noise.
\newblock {\em Phys. Rev. A}, 106:042437, 2022.

\bibitem{Holevo-CJ}
A.~S. Holevo.
\newblock The {Choi--Jamiolkowski} forms of quantum {G}aussian channels.
\newblock {\em J. Math. Phys.}, 52(4):042202, 2011.

\bibitem{Holevo-CJ-arXiv}
A.~S. Holevo.
\newblock On the {Choi--Jamiolkowski} correspondence in infinite dimensions.
\newblock {\em Preprint arXiv:1004.0196}, 2010.

\bibitem{2-qubit-distillation}
M.~Horodecki, P.~Horodecki, and R.~Horodecki.
\newblock Inseparable two spin-$\frac{1}{2}$ density matrices can be distilled
  to a singlet form.
\newblock {\em Phys. Rev. Lett.}, 78:574--577, 1997.

\bibitem{q_memory1}
E.~Bersin, M.~Sutula, Y.~Q. Huan, A.~Suleymanzade, D.~R. Assumpcao, Y.-C. Wei,
  P.-J. Stas, C.~M. Knaut, E.~N. Knall, C.~Langrock, N.~Sinclair, R.~Murphy,
  R.~Riedinger, M.~Yeh, C.~J. Xin, S.~Bandyopadhyay, D.~D. Sukachev,
  B.~Machielse, D.~S. Levonian, M.~K. Bhaskar, S.~Hamilton, H.~Park,
  M.~Lon\ifmmode~\check{c}\else \v{c}\fi{}ar, M.~M. Fejer, P.~B. Dixon, D.~R.
  Englund, and M.~D. Lukin.
\newblock Telecom networking with a diamond quantum memory.
\newblock {\em PRX Quantum}, 5:010303, 2024.

\bibitem{q_memory2}
A.~Wallucks, I.~Marinkovic, B.~Hensen, R.~Stockill, and S.~Groblacher.
\newblock A quantum memory at telecom wavelengths.
\newblock {\em Nat. Phys.}, 16(7):772--777, 2020.

\bibitem{exp_noon1}
G.~J. Pryde and A.~G. White.
\newblock Creation of maximally entangled photon-number states using optical
  fiber multiports.
\newblock {\em Phys. Rev. A}, 68(5), 2003.

\bibitem{exp_noon2}
I.~Afek, O.~Ambar, and Y.~Silberberg.
\newblock High-{NOON} states by mixing quantum and classical light.
\newblock {\em Science}, 328(5980):879--881, 2010.

\bibitem{measurement1}
L.~Liang, G.~W. Lin, Y.~M. Hao, Y.~P. Niu, and S.~Q. Gong.
\newblock Quantum nondemolition measurement of small photon numbers using
  stored light.
\newblock {\em Phys. Rev. A}, 90:055801, Nov 2014.

\bibitem{measurement2}
A.~Cabello and F.~Sciarrino.
\newblock Loophole-free {B}ell test based on local precertification of photon's
  presence.
\newblock {\em Phys. Rev. X}, 2:021010, 2012.

\bibitem{rydberg1}
A.~V. Gorshkov, R.~Nath, and T.~Pohl.
\newblock Dissipative many-body quantum optics in {R}ydberg media.
\newblock {\em Phys. Rev. Lett.}, 110:153601, 2013.

\bibitem{rydberg2}
G.-S. Ye, B.~Xu, Y.~Chang, S.~Shi, T.~Shi, and L.~Li.
\newblock A photonic entanglement filter with {R}ydberg atoms.
\newblock {\em Nat. Photonics}, 17(6):538--543, 2023.

\bibitem{rydberg3}
J.~Honer, R.~L\"ow, H.~Weimer, T.~Pfau, and H.~P. B\"uchler.
\newblock Artificial atoms can do more than atoms: Deterministic single photon
  subtraction from arbitrary light fields.
\newblock {\em Phys. Rev. Lett.}, 107:093601, 2011.

\bibitem{rydberg4}
N.~Stiesdal, H.~Busche, K.~Kleinbeck, J.~Kumlin, M.~G. Hansen, H.~P.
  B\"{u}chler, and S.~Hofferberth.
\newblock Controlled multi-photon subtraction with cascaded {R}ydberg
  superatoms as single-photon absorbers.
\newblock {\em Nat. Commun.}, 12(1), 2021.

\bibitem{substraction1}
M.~M\"{u}cke, E.~Figueroa, J.~Bochmann, C.~Hahn, K.~Murr, S.~Ritter,
  C.~Villas-Boas, and G.~Rempe.
\newblock Electromagnetically induced transparency with single atoms in a
  cavity.
\newblock {\em Nature}, 465:755--8, 06 2010.

\bibitem{substraction2}
S.~Rosenblum, O.~Bechler, I.~Shomroni, Y.~Lovsky, G.~Guendelman, and B.~Dayan.
\newblock Extraction of a single photon from an optical pulse.
\newblock {\em Nat. Photonics}, 10(1):19--22, 2015.

\bibitem{transf1}
A.~Kumar, A.~Suleymanzade, M.~Stone, L.~Taneja, A.~Anferov, D.~I. Schuster, and
  J.~Simon.
\newblock {Quantum-enabled millimetre wave to optical transduction using
  neutral atoms}.
\newblock {\em Nature}, 615(7953):614--619, 2023.

\bibitem{transf2}
J.~Rochman, T.~Xie, J.~G. Bartholomew, K.~C. Schwab, and A.~Faraon.
\newblock {Microwave-to-optical transduction with erbium ions coupled to planar
  photonic and superconducting resonators}.
\newblock {\em Nature Commun.}, 14(1):1153, 2023.

\bibitem{transf3}
Y.~Xu, A.~Al~Sayem, L.~Fan, C.-L. Zou, S.~Wang, R.~Cheng, W.~Fu, L.~Yang,
  M.~Xu, and H.~X. Tang.
\newblock Bidirectional interconversion of microwave and light with thin-film
  lithium niobate.
\newblock {\em Nat. Commun.}, 12, 2021.

\end{thebibliography}

\newpage
\clearpage

\onecolumngrid
\begin{center}
\vspace*{\baselineskip}
{\textbf{\large Supplemental material:\\ Maximum tolerable excess noise in CV-QKD and improved lower bound on two-way capacities}}\\
\end{center}

\renewcommand{\theequation}{S\arabic{equation}}
\renewcommand{\thethm}{S\arabic{thm}}
\setcounter{equation}{0}
\setcounter{thm}{0}
\setcounter{figure}{1}
\setcounter{table}{0}
\setcounter{section}{0}
\setcounter{page}{1}
\makeatletter

\setcounter{secnumdepth}{2}

\section{Notation and preliminaries}
Let $\mathfrak{S}(\HH)$ be the set of quantum states on a Hilbert space $\HH$. The trace norm of a bounded linear operator $\Theta$ is defined by $\|\Theta\|_1\coloneqq \Tr\sqrt{\Theta^\dagger\Theta}\,.$ The von Neumann entropy of a quantum state $\rho$ is denoted by $S(\rho)\coloneqq -\Tr\left[\rho\log_2\rho\right]\,$. Let $\HH_2$ be a bi-dimensional Hilbert space and let $\{\ket{0}, \ket{1}\}$ be an orthonormal basis. For all $i,j\in\{0,1\}$, the state $\ket{\psi_{ij}}_{AB}\in\HH_2^{(A)}\otimes\HH_2^{(B)}$ is defined as 
\bb \label{Bell_states}
\ket{\psi_{ij}}_{AB}\coloneqq \frac{1}{\sqrt{2}}\sum_{m=0}^1 (-1)^{im}\ket{m}_A\otimes\ket{m\oplus j}_B\,,
\ee
and is called a Bell state (or maximally entangled state), where $\oplus$ denotes the modulo $2$ addition.

\subsection{Gaussian quantum information}
Let us briefly review the formalism of Gaussian quantum information~\cite{BUCCO}. We consider $m$-modes of harmonic oscillators $S_1$, $S_2$, $\ldots$, $S_m$, which are associated with the Hilbert space $L^2(\mathbb R^m)$ of square integrable functions. Each of these modes represents a single-mode of electromagnetic radiation with definite frequency and polarisation.
For all $j=1,2,\ldots,m$ the annihilation operator $a_j$ of the mode $S_i$ is defined as $a_j\coloneqq \frac{\hat{x}_j+i\hat{p}_j}{\sqrt{2}}$, where $\hat{x}_j$ and $\hat{p}_j$ are the well-known position and momentum operators of $S_j$. The operator $a_j^\dagger a_j$ is called the photon number of the mode $S_j$. The $n$th Fock state of the mode $S_j$ is denoted by $\ket{n}_{S_j}$. By defining the so-called quadrature vector $\mathbf{\hat{R}}\coloneqq (\hat{x}_1,\hat{p}_1,...,\hat{x}_m,\hat{p}_m)^{\intercal}$, one can write the canonical commutation relations as $[\mathbf{\hat{R}},\mathbf{\hat{R}}^{\intercal}]=i\,\Omega_m$, where $\Omega_m\coloneqq\mathbb{1}_{m}\otimes\left(\begin{matrix}0&1\\-1&0\end{matrix}\right)$ and $\mathbb{1}_{m}$ is the $m\times m$ identity matrix. The characteristic function $\chi_\rho: \mathbb{R}^{2m}\to \mathbb{C}$ of a state $\rho\in\mathfrak{S}(L^2(\mathbb R^m))$ is defined as $\chi_\rho(\mathbf r)=\Tr[ \rho  D_{-\mathbf{r}} ]$, where for all $\mathbf{r}\in \mathbb{R}^{2m}$ the displacement operator $ D_{\mathbf{r}}$ is defined as 
\bb\label{def_charact_func}
D_{\mathbf{r}}\coloneqq e^{i {\mathbf{r}}^{\intercal}\Omega_m \mathbf{\hat{R}}}\,.
\ee
Any state $\rho$ can be written in terms of its characteristic function as
\bb\label{inverse_fourier_displacement}
\rho=\int_{\mathbb{R}^{2m}}\frac{\mathrm{d}^{2m}\mathbf{r}}{(2\pi)^m}\chi_\rho(\mathbf r) D_{\mathbf{r}}\,
\ee
and hence quantum states and characteristic functions are in one-to-one correspondence. The first moment and the covariance matrix of a quantum state $\rho$ are defined as
\begin{align}
	&\mathbf{m}(\rho)=\Tr\left[\mathbf{\hat{R}}\,\rho\right]\,,\\
	&V(\rho)=\Tr\left[\left\{\mathbf{(\hat{R}-m(\rho)),(\hat{R}-m(\rho))}^{\intercal}\right\}\rho\right]\, ,
\end{align}
respectively, where $\{A,B\}\coloneqq AB+BA$ is the anti-commutator. Note that the covariance matrix is defined with respect an ordering of the modes in the definition of the quadrature vector: here such an ordering is $(S_1, S_2, \ldots,S_m)$. A state $\rho$ is said to be Gaussian if there exists a $2m\times 2m$ real positive definite matrix $H_\rho$ and a vector $\mathbf{m}_\rho\in\mathbb{R}^{2m}$ such that $\rho$ can be written as a ground or a thermal state of the Hamiltonian $\frac{1}{2}(\mathbf{\hat{R}}-\mathbf{m}_\rho)^{\intercal}H(\mathbf{\hat{R}}-\mathbf{m}_\rho)$, i.e.
\bb
    \rho=\frac{e^{ -\frac{1}{2}(\mathbf{\hat{R}}-\mathbf{m}_\rho)^{\intercal}H_\rho(\mathbf{\hat{R}}-\mathbf{m}_\rho)} }{\Tr\left[ e^{ -\frac{1}{2}(\mathbf{\hat{R}}-\mathbf{m}_\rho)^{\intercal}H_\rho(\mathbf{\hat{R}}-\mathbf{m}_\rho)} \right]}\,.
\ee
It can be shown that $\mathbf{m}(\rho)=\mathbf{m}_\rho$ and $V(\rho)=V_\rho$, where $V_\rho\coloneqq \coth{\left(\frac{i\,\Omega_m H_\rho}{2}\right)}i\,\Omega_m$. The characteristic function of a Gaussian state $\rho$ is a Gaussian function in $\mathbf{r}$ which can be written in terms of $\mathbf{m}(\rho) $ and $V(\rho)$ as
\bb
\chi_{\rho}(\mathbf{r})=\exp\left( -\frac{1}{4}(\Omega_m \mathbf{r})^{\intercal}V(\rho)\Omega_m \mathbf{r}+i(\Omega_m \mathbf{r})^{\intercal}\mathbf{m}(\rho) \right)\,.
\ee
An example of Gaussian state is the thermal state $\tau_{N_s}\coloneqq \frac{1}{N_s+1}\sum_{n=0}^\infty \left(\frac{N_s}{N_s+1}\right)^{n}\ketbra{n}$, where the parameter $N_s\ge0$ is its mean photon number ($N_s=\Tr[a^\dagger a\,\tau_{N_s} ]$), which satisfies
\bb\label{moments_thermal}
\mathbf{m}(\tau_{N_s})&=(0,0)^{\intercal}\,,\\
V(\tau_{N_s})&=(2N_S+1)\mathbb{1}_2 \,.
\ee
Another example of Gaussian state is the two-mode squeezed vacuum state $\ket{\Psi_{N_s}}_{S_1 S_2}$, which for all $N_s\ge0$ it is defined as
\bb\label{two_mode_sq}
\ket{\Psi_{N_s}}_{S_1S_2}\coloneqq \frac{1}{\sqrt{N_s+1}}\sum_{n=0}^\infty \left(\frac{N_s}{N_s+1}\right)^{n/2}\ket{n}_{S_1}\ket{n}_{S_2}\,,
\ee
where $N_s$ denotes the mean photon number of the mode $S_1$ (or, equivalently, of the mode $S_2$), i.e.~
\bb
    N_s=\Tr_{S_2}[a_1^\dagger a_1\,\ketbra{\Psi_{N_s}}_{S_1S_2} ]\,.
\ee
The first moment and covariance matrix of $\ket{\Psi_{N_s}}_{S_1S_2}$ are
\bb\label{moments_squeezed}
\mathbf{m}(\ketbra{\Psi_{N_s}})&=(0,0,0,0)^{\text{T}}\,,\\
V(\ketbra{\Psi_{N_s}})&=\left(\begin{matrix} (2N_s+1)\mathbb{1}_2 & 2\sqrt{N_s(N_s+1)}\sigma_z \\ 2\sqrt{N_s(N_s+1)}\sigma_z  &(2N_s+1)\mathbb{1}_2\end{matrix}\right)\,,
\ee
where $\mathbb{1}_2\coloneqq \left(\begin{matrix}1&0\\0&1\end{matrix}\right)$ and $\sigma_z\coloneqq \left(\begin{matrix}1&0\\0&-1\end{matrix}\right)$.

A quantum channel is said to be Gaussian if it maps Gaussian states into Gaussian states. Later we will focus on three important examples of Gaussian quantum channels: the thermal attenuator, the thermal amplifier, and the additive Gaussian noise.
Before concluding this brief recap of Gaussian quantum information, let us state a lemma which will be useful in the following. The forthcoming Lemma~\ref{ConditionPPT_cov} provides a necessary and sufficient condition on the covariance matrix to assess whether a two-mode Gaussian state is entangled~\cite{Simon00, BUCCO}. This condition is based on the fact that a two-mode Gaussian states is separable (not entangled) if and only if it is PPT~\cite{Simon00, BUCCO}.
\begin{lemma}[\cite{Simon00, BUCCO}]\label{ConditionPPT_cov}
    Let $\rho\in\mathfrak{S}(\HH_{S_1}\otimes \HH_{S_2})$ be a two-mode Gaussian state. Let us write its covariance matrix $V(\rho)$ with respect the ordering $(S_1,S_2)$ as
	\bb
	V\left( \rho \right)= \left(\begin{matrix} V_{S_1} & V_{S_1S_2} \\ V_{S_1S_2}^{\intercal} & V_{S_2}\end{matrix}\right)\,
	\ee 
	and define the function $f:\mathfrak{S}(\HH_{S_1}\otimes \HH_{S_2})\to\mathbb{R}$ as $f(\rho)\coloneqq 1+\det(V(\rho))+2\det(V_{S_1S_2})- \det(V_{S_1})-\det(V_{S_2})$.
    The state $\rho$ is entangled if and only if $f(\rho)<0$.
\end{lemma}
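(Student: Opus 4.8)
The plan is to convert the entanglement question into a positivity condition on a real matrix, and then into the scalar inequality $f(\rho)<0$. The backbone is the cited fact that for two-mode Gaussian states separability is equivalent to the positive-partial-transpose (PPT) property~\cite{Simon00, BUCCO}: hence $\rho$ is entangled if and only if $\rho^{T_{S_2}}$ fails to be positive semidefinite, where $T_{S_2}$ denotes transposition on the second mode only. I would therefore aim to show that $\rho^{T_{S_2}}\ge0$ holds if and only if $f(\rho)\ge0$, so that contraposition yields the claim.

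First I would describe the action of the partial transpose at the level of the covariance matrix. Transposition is complex conjugation in the position representation, which amounts to the momentum reversal $\hat p_{S_2}\mapsto-\hat p_{S_2}$; using the one-to-one correspondence between states and characteristic functions in~\eqref{inverse_fourier_displacement}, this shows that $\rho^{T_{S_2}}$ is again a Hermitian, unit-trace, Gaussian-form operator, with first moments $\Lambda\,\mathbf m(\rho)$ and covariance matrix $\tilde V\coloneqq\Lambda\,V(\rho)\,\Lambda$, where $\Lambda\coloneqq\mathbb{1}_2\oplus\sigma_z=\diag(1,1,1,-1)$ (the first moments being irrelevant to entanglement, as they are removable by local displacements). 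By the characterisation of Gaussian states through the uncertainty relation, such an operator is positive semidefinite precisely when its covariance matrix is a bona fide one, i.e.\ $\tilde V+i\,\Omega_2\ge0$. Thus $\rho$ is PPT if and only if $\tilde V+i\,\Omega_2\ge0$, equivalently if and only if the smaller symplectic eigenvalue $\tilde\nu_-$ of $\tilde V$ satisfies $\tilde\nu_-\ge1$.

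It then remains to translate $\tilde\nu_-\ge1$ into $f(\rho)\ge0$ by a direct computation with symplectic invariants. Writing $\tilde\Delta\coloneqq\det V_{S_1}+\det V_{S_2}-2\det V_{S_1S_2}$, the blockwise effect of $\Lambda$ (which flips the sign of $\det V_{S_1S_2}$ while leaving $\det V_{S_1}$, $\det V_{S_2}$ and $\det V(\rho)$ unchanged) gives $2\tilde\nu_\pm^2=\tilde\Delta\pm\sqrt{\tilde\Delta^2-4\det V(\rho)}$ together with $\det\tilde V=\det V(\rho)$. Since $\tilde V$ is a genuine covariance matrix, $\tilde\nu_\pm^2\ge0$ with product $\det V(\rho)\ge1$, so by AM--GM one has $\tilde\Delta=\tilde\nu_+^2+\tilde\nu_-^2\ge2\sqrt{\det V(\rho)}\ge2$. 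Hence both sides of $\tilde\Delta-2\ge\sqrt{\tilde\Delta^2-4\det V(\rho)}$ are nonnegative, squaring is an equivalence, and it collapses to $1+\det V(\rho)\ge\tilde\Delta$, which is exactly $f(\rho)\ge0$. This closes the chain: $\rho$ is entangled $\Leftrightarrow$ not PPT $\Leftrightarrow\tilde\nu_-<1\Leftrightarrow f(\rho)<0$.

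The genuinely deep ingredient is the separability$\,=\,$PPT equivalence for two-mode Gaussian states, whose hard direction (PPT implies separability) requires Simon's full argument; I would invoke it from~\cite{Simon00, BUCCO} rather than reprove it. The remaining steps are elementary but call for care at two points: justifying that the partial transpose acts as the congruence $V\mapsto\Lambda V\Lambda$ via the momentum-reversal picture, and verifying that the squaring step is a true equivalence, which hinges on the inequality $\tilde\Delta\ge2$ established above. I expect this symplectic-eigenvalue bookkeeping to be the main, though routine, obstacle.
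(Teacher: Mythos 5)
Your proposal is correct, but note that the paper does not actually prove this lemma: it is imported wholesale from~\cite{Simon00, BUCCO}, so there is no in-paper argument to compare against. What you have written is a faithful reconstruction of Simon's original derivation --- partial transposition as momentum reversal, the congruence $V\mapsto\Lambda V\Lambda$ with $\Lambda=\mathbb{1}_2\oplus\sigma_z$ (which flips the sign of $\det V_{S_1S_2}$ while preserving $\det V_{S_1}$, $\det V_{S_2}$, $\det V(\rho)$), the reduction of $\rho^{T_{S_2}}\ge0$ to $\tilde\nu_-\ge1$, and the algebraic collapse of $\tilde\Delta-2\ge\sqrt{\tilde\Delta^2-4\det V(\rho)}$ to $1+\det V(\rho)\ge\tilde\Delta$, i.e.\ $f(\rho)\ge0$ --- together with the correct acknowledgement that the hard implication (PPT $\Rightarrow$ separable for two-mode Gaussian states) must be quoted rather than reproved. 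The squaring step is indeed an equivalence for the reason you give. One phrase deserves tightening: $\tilde V$ is \emph{not} in general a genuine quantum covariance matrix (whether it satisfies $\tilde V+i\,\Omega_2\ge0$ is precisely the question at issue), so you cannot invoke that to get $\tilde\nu_\pm^2\ge0$ and $\det\tilde V\ge1$. What you actually need, and what does hold, is that $\tilde V=\Lambda V(\rho)\Lambda$ is real, symmetric and positive definite (being a congruence of the positive-definite $V(\rho)$), so its symplectic eigenvalues are real and positive by Williamson, and $\det\tilde V=\det V(\rho)=\nu_+^2\nu_-^2\ge1$ because the \emph{original} state is physical. With that correction the AM--GM step $\tilde\Delta\ge2\sqrt{\det V(\rho)}\ge2$ and the rest of the bookkeeping go through as you describe.
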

The forthcoming Lemma~\ref{holevo_lemma_eb_gauss} gives necessary and sufficient condition on a Gaussian quantum channel to be entanglement breaking~\cite[Chapter 4.6]{MARK}.
\begin{lemma}\cite{Holevo-EB}\label{holevo_lemma_eb_gauss}
    Let $\Phi:\mathfrak{S}(L^2(\mathbb R^m))\to\mathfrak{S}(L^2(\mathbb R^m))$ be a Gaussian quantum channel. Let $K,\beta\in\mathbb{R}^{2m\times2m}$ and $l\in\mathbb{R}^{2m}$ such that for all $\rho\in\mathfrak{S}(L^2(\mathbb R^m))$ it holds that
\bb 
&\mathbf{m}\left(\Phi(\rho)\right)=K\, \mathbf{m}(\rho)\,,\\
&V\left(\Phi (\rho) \right)=K^{\intercal}\, V(\rho)K+\beta\,.
\ee
Then, $\Phi$ is entanglement breaking if and only if $\beta$ admits the following decomposition:
\bb
\beta=\alpha+\gamma\,,\quad\text{where }\alpha,\gamma\in\mathbb{R}^{2m\times 2m}\text{ with }\alpha\ge i\, \Omega_{m}\,\text{ and }\,\gamma\ge i K^{\intercal}\,\Omega_m K\,.
\ee
    
\end{lemma}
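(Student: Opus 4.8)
The plan is to reduce the entanglement-breaking property of $\Phi$ to a separability statement about a single family of bipartite Gaussian states, and then to convert that separability into the claimed matrix decomposition by means of a Gaussian separability criterion. Two ingredients drive the argument. First, $\Phi$ is entanglement breaking if and only if, for the two-mode-squeezed inputs $\ket{\Psi_{N_s}}_{RA}$ of~\eqref{two_mode_sq} (with reference $R$ consisting of $m$ modes), the output $(\Id_R\otimes\Phi)(\ketbra{\Psi_{N_s}})$ is separable across $R|B$ for all $N_s>0$: the forward implication is immediate, while the converse uses that, as $N_s\to\infty$, the two-mode squeezed vacuum approaches a maximally entangled state and every bipartite input can be generated by a local operation on $R$, under which separability is preserved. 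Second, I will use the Werner--Wolf separability criterion for Gaussian states, which refines Lemma~\ref{ConditionPPT_cov} to arbitrary mode numbers: a bipartite Gaussian state with covariance matrix $V_{RB}$ is separable if and only if there exist bona fide covariance matrices $\sigma_R\ge i\Omega_m$ and $\sigma_B\ge i\Omega_m$ with $V_{RB}\ge\sigma_R\oplus\sigma_B$.

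For necessity, I would compute the covariance matrix of $(\Id_R\otimes\Phi)(\ketbra{\Psi_{N_s}})$ from~\eqref{moments_squeezed} and the channel rule $V\mapsto K^\intercal V K+\beta$, obtaining the block matrix $V_{RB}$ with diagonal blocks $s\,\mathbb{1}$ and $s\,K^\intercal K+\beta$ and off-diagonal block $c\,\sigma_z^{\oplus m}K$, where $s=2N_s+1$ and $c=2\sqrt{N_s(N_s+1)}$. Werner--Wolf then supplies, for each $N_s$, matrices $\sigma_R,\sigma_B\ge i\Omega_m$ with $V_{RB}\ge\sigma_R\oplus\sigma_B$; taking the Schur complement of the $R$-block and using $c^2=s^2-1$ together with $c/s\to1$, the $O(s)$ terms cancel and (up to the compactness argument discussed below) the surviving condition becomes $\beta\ge\sigma_B+K^\intercal(\sigma_z^{\oplus m}\sigma_R\sigma_z^{\oplus m})K$. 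Setting $\gamma\coloneqq K^\intercal(\sigma_z^{\oplus m}\sigma_R\sigma_z^{\oplus m})K$ and absorbing the positive-semidefinite slack into $\alpha\coloneqq\beta-\gamma$ yields $\beta=\alpha+\gamma$ with $\alpha\ge i\Omega_m$; finally, since $\sigma_z^{\oplus m}\Omega_m\sigma_z^{\oplus m}=-\Omega_m$ one gets $\gamma\ge -iK^\intercal\Omega_m K$, which is equivalent to $\gamma\ge iK^\intercal\Omega_m K$ because $K^\intercal\Omega_m K$ is real and antisymmetric, so that $\gamma-iK^\intercal\Omega_m K\ge0$ and its complex conjugate $\gamma+iK^\intercal\Omega_m K\ge0$ hold simultaneously.

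For sufficiency I would instead exhibit an explicit entanglement-breaking realisation. Given $\beta=\alpha+\gamma$ with $\alpha\ge i\Omega_m$ and $\gamma\ge iK^\intercal\Omega_m K$, I construct the Gaussian measure-and-prepare channel that (i) performs a general-dyne measurement with Gaussian seed of covariance $\Gamma$ obeying $\gamma=K^\intercal\Gamma K$, which can be chosen with $\Gamma\ge i\Omega_m$ precisely because $\gamma\ge iK^\intercal\Omega_m K$, and (ii) prepares, conditionally on the outcome, a Gaussian state of covariance $\alpha$ (a valid state precisely because $\alpha\ge i\Omega_m$) whose first moment is fed forward linearly in the outcome. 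A direct moment computation shows this channel acts as $\mathbf{m}\mapsto K\mathbf{m}$ and $V\mapsto K^\intercal V K+\alpha+K^\intercal\Gamma K=K^\intercal V K+\beta$, so it coincides with $\Phi$ (Gaussian channels being determined by their action on moments), and it is entanglement breaking by construction, since it factors through the classical measurement record.

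I expect the main obstacle to be the infinite-dimensional and limiting aspects of the necessity direction rather than the algebra. Concretely: (a) the equivalence between the entanglement-breaking property and separability on the two-mode-squeezed family requires care in infinite dimensions, since the Choi state is not normalisable and the $N_s\to\infty$ argument must be justified (this is the technical content of~\cite{Holevo-EB}); and (b) the separating covariance matrices $\sigma_R=\sigma_R(N_s)$ and $\sigma_B=\sigma_B(N_s)$ produced by Werner--Wolf are $N_s$-dependent and may grow with $s$, so extracting the limiting decomposition needs a boundedness/compactness argument to pass to a convergent subsequence while preserving $\sigma_R,\sigma_B\ge i\Omega_m$ in the limit. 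A secondary point is the sufficiency step when $K$ is singular, where $\Gamma$ cannot be taken as $(K^\intercal)^{-1}\gamma K^{-1}$ and one must instead solve $\gamma=K^\intercal\Gamma K$ on the range of $K$ (or perturb $K\mapsto K+\epsilon\,\mathbb{1}$ and let $\epsilon\to0$).
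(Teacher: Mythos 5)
The paper does not actually prove this lemma: it is imported verbatim from the cited reference (Holevo's work on entanglement-breaking channels in infinite dimensions), so there is no in-paper argument to compare yours against. Judged on its own, your proposal is a faithful reconstruction of the standard proof. The algebra of the necessity direction checks out: with $s=2N_s+1$, $c^2=s^2-1$ and $\Sigma\coloneqq\sigma_z^{\oplus m}$, the Schur complement of $V_{RB}\ge\sigma_R\oplus\sigma_B$ gives $\beta\ge\sigma_B+(1-s^{-2})\,K^{\intercal}\Sigma\sigma_R\Sigma K-s^{-1}K^{\intercal}K$, the $O(s)$ terms cancel exactly as you say, and the signs work because $\Sigma\Omega_m\Sigma=-\Omega_m$ while a real symmetric $\gamma$ satisfies $\gamma\ge -iK^{\intercal}\Omega_m K$ if and only if $\gamma\ge iK^{\intercal}\Omega_m K$. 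The sufficiency direction is the standard Gaussian measure-and-prepare construction and is correct. You have also correctly located where the genuine technical content lies, namely (a) the infinite-dimensional reduction of the entanglement-breaking property to separability on the two-mode-squeezed family (the local filter that maps $\ket{\Psi_{N_s}}$ to an arbitrary input is unbounded, so one must truncate and use closedness of the separable set under trace-norm limits), and (b) the uniform boundedness of the separating covariance matrices $\sigma_B(N_s)$ and $K^{\intercal}\Sigma\sigma_R(N_s)\Sigma K$ needed to extract a convergent subsequence; note that the limiting inequality itself supplies these bounds, since $i\Omega_m\le\sigma_B\le\beta-iK^{\intercal}\Omega_m K$ pins down $\sigma_B$, and similarly for the $\gamma$-term on the range of $K$. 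These are exactly the points handled in the cited reference, so your sketch, with those caveats filled in, would constitute a complete proof.
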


\subsection{Two-way capacities of a quantum channel}
The two-way quantum capacity $Q_2(\Phi)$ and the secret-key capacity $K(\Phi)$ of a quantum channel $\Phi$ are the maximum achievable rate of qubits and secret-key bits, respectively, that can be reliably transmitted through $\Phi$ by assuming that the sender Alice and the receiver Bob have free access to a public, noiseless, two-way classical communication line. The rate of qubits (resp.~secret-key bits) is defined as the ratio between the number of reliably transmitted qubits (resp.~secret-key bits) and the number of uses of $\Phi$~\cite[Chapters 14 and 15]{Sumeet_book}. An ebit is a Bell state $\ket{\psi_{00}}_{AB}$ shared between Alice and Bob. For any $\Phi$, the two-way capacities satisfy
\bb\label{relation_2waycap}
Q_2(\Phi)\le K(\Phi)\,.
\ee
Indeed, by recalling that Alice and Bob can freely send an infinite amount of bits to each other, an ebit can generate a secret-key bit, thanks to E91 protocol~\cite{Ekert91}, and hence $Q_2(\Phi)\le K(\Phi)$. The two-way quantum capacity $Q_2(\Phi)$ and the secret-key capacity $K(\Phi)$ are collectively called the \emph{two-way capacities of $\Phi$}.

In practice, Alice has access to a limited budget ($N_s$) of energy to produce each input signal. Here, by definition, the energy of a signal initialised in a state $\rho$ is equal to its mean photon number $\Tr[\rho\, a^\dagger a]$. Fixed $N_s>0$, the energy-constrained (EC) two-way capacities $Q_2(\Phi,N_s)$ and $K(\Phi,N_s)$ are defined as above but the maximisation of the rate is restricted to the strategies such that the average photon number less or equal to $N_s$. In other words, $N_s$ is the maximum allowed average photon number of the input signals to the channel $\Phi$.
In addition note that the generalisation of~\ref{relation_2waycap} to the EC case  holds, i.e.
\bb\label{relation_2waycapEC}
Q_2(\Phi,N_s) \le K(\Phi,N_s) \,,
\ee
and that any EC capacity is upper bounded by the corresponding unconstrained capacity and tends to it in the limit $N_s\rightarrow\infty$.

\subsection{Entanglement distillation}
The goal of an entanglement distillation protocol is to turn a large number $n$ of copies of a bipartite entangled state $\rho_{AB}\in\mathfrak{S}(\HH_A\otimes\HH_{B})$ shared between Alice and Bob into a smaller number $m$ of ebits by LOCCs (local operations and classical communication). The yield of an entanglement distillation protocol is defined by the ratio $m/n$. The two-way distillable entanglement $E_d(\rho_{AB})\,$ of $\rho_{AB}$ is defined as the maximum yields over all the possible entanglement distillation protocols~\cite{reviewEDP_dur}~\cite[Chapter 8]{Sumeet_book}. The state $\rho_{AB}$ is said to be \emph{distillable} if $E_d(\rho_{AB})>0$. The coherent information of $\rho_{AB}$ is defined by 
\bb
    I_{\text{c}}(\rho_{AB})\coloneqq S(\Tr_A\rho_{AB})-S(\rho_{AB}) 
\ee
and it is a yield achievable by an entanglement distillation protocol which requires classical communication only from Alice to Bob~\cite{devetak2005}.
By exchanging the roles of Alice and Bob in such an entanglement distillation protocol, the reverse coherent information of $\rho_{AB}$, which is defined by 
\bb
    I_{\text{rc}}(\rho_{AB})\coloneqq S(\Tr_B\rho_{AB})-S(\rho_{AB})\,,
\ee
is a yield achievable by an entanglement distillation protocol which only requires classical communication only from Bob to Alice~\cite{devetak2005}. In particular, the following inequality, known as \emph{hashing inequality}, holds:
\begin{equation}\label{hashing_ineq}
    E_d(\rho_{AB})\ge \max\{I_{\text{c}}(\rho_{AB})\,, I_{\text{rc}}(\rho_{AB})\}\,.
\end{equation}
Let us briefly link the notions of distillable entanglement $E_d$ and two-way quantum capacity $Q_2(\Phi,N_s)$. Suppose that Alice produces $n$ copies of a state $\rho_{AA'}$ such that the mean photon number of the half $A'$ is less or equal to $N_s$. Then, she uses $n$ times the channel $\Phi$ to send the halves $A'$, which satisfy the energy constraint, to Bob. Hence, $n$ copies of $\Id_{A}\otimes\Phi(\rho_{AA'})$ are shared between Alice and Bob and can be used to generate ebits by means of an entanglement distillation protocol. Consequently, it holds that
\bb\label{link_D2_D}
    Q_2(\Phi,N_s)\ge E_d\left(\Id_{A}\otimes\Phi(\rho_{AA'}) \right)\,
\ee
for all $N_s\ge0$ and all $\rho_{A'A}$ satisfying $\Tr[a^\dagger a\, \rho_{A'A}]\le N_s$, where $a$ denotes the annihilation operator on $A'$.

If a bipartite state $\rho_{AB}$ is such that the hashing inequality is trivial (i.e.~the right-hand side of~\eqref{hashing_ineq} is negative), in order to obtain a non-trivial lower bound on $E_d(\rho_{AB})$, one can adopt a sufficiently large number of iterations of a \emph{recurrence protocol} on $\rho_{AB}$ prior to apply the hashing inequality. In the context of entanglement distillation, the goal of a recurrence protocol is to transform a certain number of copies of the state $\rho_{AB}$ into fewer copies of another state $\rho'_{AB}$ such that $\bra{\psi_{00}}\rho'_{AB}\ket{\psi_{00}}>\bra{\psi_{00}}\rho_{AB}\ket{\psi_{00}}$~\cite{Bennett-error-correction,Bennett-distillation-mixed,reviewEDP_dur}. 
Examples of recurrence protocols for qubits can be found in~\cite{Bennett-distillation-mixed,DEJMPS,DNMV}, and their generalisations to the case of qudits in~\cite{Horodecki1999,Alber_2001,Dist-Number-Theory}. In the present paper we will exploit the recently introduced P1-or-P2 recurrence protocol~\cite{p1orp2}.   
To achieve a nonzero yield, one may adopt a suitable number of iterations of a recurrence protocol and then apply the hashing or breeding protocol~\cite{Bennett-error-correction,Bennett-distillation-mixed}.
The latter protocols, which exploit only one-way classical communication, achieves the yield of the hashing inequality in~\eqref{hashing_ineq}.
Improvements of the hashing and breeding protocols, which exploit two-way classical communication, have been provided in~\cite{Improvement-Hashing}: the two-way distillable entanglement of a convex combination of Bell states $\rho_{AB}\coloneqq \sum_{ij=0}^1\alpha_{ij}\ketbra{\psi_{ij}}$ is lower bounded by
\bb\label{improv_hashing}
   &Y(\alpha_{00},\alpha_{01},\alpha_{10},\alpha_{11})\coloneqq \max\left(0, 1-H(\{\alpha_{ij}\})+\frac{1}{2}(\alpha_{00}+\alpha_{10})(\alpha_{11}+\alpha_{01})\left[H_2\left(\frac{\alpha_{00}}{\alpha_{00}+\alpha_{10}}\right)+H_2\left(\frac{\alpha_{11}}{\alpha_{01}+\alpha_{11}}\right)\right]\right),
\ee
with $H(\{\alpha_{ij}\})\coloneqq -\sum_{m,n=0}^{1}\alpha_{mn}\log_2\alpha_{mn}$ being the Shannon entropy and $H_2(x)\coloneqq -x\log_2x-(1-x)\log_2(1-x)$ for all $x\in[0,1]$ being the binary entropy. The yield in~\eqref{improv_hashing} is larger than the yield achieved by the hashing protocol, which is $I_{\text{c}}\left(\sum_{ij=0}^1\alpha_{ij}\ketbra{\psi_{ij}}\right)=1-H(\{\alpha_{ij}\})$. Protocols with larger yiels than~\eqref{improv_hashing} may be obtained by exploiting the numerical methods introduced in~\cite{Numerical-Improvement-Hashing}.

Now, let us briefly review the definition, the relevant properties, and the known bounds on the two-way capacities of phase-insensitive bosonic Gaussian channels, namely thermal attenuator, thermal amplifier, and additive Gaussian noise.
\subsection{Thermal attenuator}
Let $\HH_S$ and $\HH_E$ be single-mode systems and let $a$ and $b$ denote their annihilation operators, respectively.
For all $\lambda\in[0,1]$ and $\nu\ge0$, a thermal attenuator $\mathcal{E}_{\lambda,\nu}:\mathfrak{S}(\HH_S)\to\mathfrak{S}(\HH_S)$ is a quantum channel defined by 
\bb\label{def_therm}
    \mathcal{E}_{\lambda,\nu}(\rho)\coloneqq\Tr_E\left[U_\lambda^{SE}\big(\rho^S \otimes\tau_\nu^E \big) {U_\lambda^{SE}}^\dagger\right]\,,
\ee
where $U_\lambda^{SE}$ denotes the unitary operator associated with a beam splitter of transmissivity $\lambda$, i.e.
\bb
	U_{\lambda}^{S E}\coloneqq\exp\left[\arccos\sqrt{\lambda}\left(a^\dagger b-a\, b^\dagger\right)\right]\,,
\ee
and $\tau_\nu\in\mathfrak{S}(\HH_E)$ denotes the thermal state with mean photon number equal to $\nu$. {The beam splitter unitary can be expressed via the following disentangling formula~\cite[Appendix 5]{BARNETT-RADMORE}
\bb
    U_{\lambda}^{SE}=e^{-\sqrt{\frac{1-\lambda}{\lambda}}ab^\dagger}e^{ \frac{1}{2}\ln\lambda\,\left(a^\dagger a -b^\dagger b\right) }e^{\sqrt{\frac{1-\lambda}{\lambda}}a^\dagger b}\,.
\ee}
By writing the quadrature vector $\mathbf{\hat{R}}$ with respect the ordering $(S,E)$, it can be shown that 
\bb\label{transf_r}
\left(U_\lambda^{SE}\right)^\dagger \mathbf{\hat{R}}\, U_{\lambda}^{SE}=S_\lambda\, \mathbf{\hat{R}}\,,
\ee
where 
\bb
S_\lambda\coloneqq	\begin{pmatrix}
		\sqrt{\lambda}\,\mathbb{1}_2 & \sqrt{1-\lambda}\,\mathbb{1}_2 \\
		-\sqrt{1-\lambda}\,\mathbb{1}_2 &\, \sqrt{\lambda}\,\mathbb{1}_2
	\end{pmatrix}\,.
\ee
This implies that for all $\sigma_{SE}\in\mathfrak{S}(\HH_S\otimes H_E)$ it holds that
\bb\label{relation_S}
&\mathbf{m}\left(U^{SE}_\lambda\sigma_{SE} \left(U_\lambda^{SE}\right)^\dagger\right)=S_\lambda\, \mathbf{m}(\sigma_{SE})\,,\\
&V\left(U^{SE}_\lambda\sigma_{SE} \left(U_\lambda^{SE}\right)^\dagger \right)=S_\lambda\,V(\sigma_{SE})\,S_\lambda^{\intercal}\,.
\ee
{In terms of the annihilation operators $a$ and $b$, the transformation in~\eqref{transf_r} reads
\bb
    \left(U_\lambda^{SE}\right)^\dagger a\, U_{\lambda}^{SE}&=\sqrt{\lambda}\,a+\sqrt{1-\lambda}\,b\,,\\
    U_\lambda^{SE} a\, \left(U_{\lambda}^{SE}\right)^\dagger&=\sqrt{\lambda}\,a-\sqrt{1-\lambda}\,b\,,\\
    \left(U_\lambda^{SE}\right)^\dagger b\, U_{\lambda}^{SE}&=-\sqrt{1-\lambda}\,a+\sqrt{\lambda}\,b\,,\\
    U_\lambda^{SE} b\, \left(U_{\lambda}^{SE}\right)^\dagger&=\sqrt{1-\lambda}\,a+\sqrt{\lambda}\,b\,.
\ee}It can be shown that for any single-mode state $\rho$ it holds that 
\bb\label{moment_therm_att}
&\mathbf{m}\left(\mathcal{E}_{\lambda,\nu}(\rho)\right)=\sqrt{
\lambda}\, \mathbf{m}(\rho)\,,\\
&V\left(\mathcal{E}_{\lambda,\nu}(\rho) \right)=\lambda\, V(\rho)+(1-\lambda)(2\nu+1)\mathbb{1}_2\,,
\ee
and, in terms of the characteristic function, for all $\mathbf{r}\in\mathbb{R}^2$ it holds that 
\bb\label{caract_att}
\chi_{\mathcal{E}_{\lambda,\nu}(\rho)}(\mathbf{r})=\chi_\rho(\sqrt{\lambda}\mathbf{r})e^{-\frac{1}{4}(1-\lambda)(2\nu+1)|\mathbf{r}|^2}\,.
\ee
By exploiting~\eqref{caract_att} and the fact that quantum states and characteristic functions are in one-to-one correspondence, for all $\lambda_1,\lambda_2\in[0,1]$ and $\nu\ge0$ the following composition rule holds:\bb\label{composition_them}
\mathcal{E}_{\lambda_1,\nu}\circ\mathcal{E}_{\lambda_2,\nu}=\mathcal{E}_{\lambda_1\lambda_2,\nu}\,.
\ee
{In Theorem~\ref{kraus_comp_thm} we will provide a simple Kraus representation of the thermal attenuator.}

\subsubsection{Bounds on two-way capacities of the thermal attenuator}
The best known upper bound on the two-way capacities of the thermal attenuator, shown by Pirandola-Laurenza-Ottaviani-Banchi (PLOB)~\cite{PLOB}, is
\bb\label{PLOB_Q2}
    K(\mathcal{E}_{\lambda,\nu})&\le \begin{cases}
-h(\nu)-\log_2[(1-\lambda)\lambda^\nu], & \text{if $\lambda\in(\frac{\nu}{\nu+1}, 1]$,} \\
0, & \text{otherwise}
\end{cases}
\ee
where 
\bb\label{bos_ent}
h(\nu)\coloneqq(\nu+1)\log_2(\nu+1)-\nu\log_2\nu\,
\ee
is the so-called bosonic entropy. The parameter region in which such an upper bound vanishes coincides with the parameter region in which the thermal attenuator $\mathcal{E}_{\lambda,\nu}$ is entanglement breaking, i.e.~$\nu\ge0$ and $\lambda\in[0,\frac{\nu}{\nu+1}]$~\cite{Ent_breaking_Gaussian, Holevo-EB}.
The best known lower bound (before our work) on $Q_2(\mathcal{E}_{\lambda,\nu})$ is given by~\cite{Pirandola2009}
\bb\label{lowQ2}
    Q_2(\mathcal{E}_{\lambda,\nu})\ge\max\{0,-h(\nu)-\log_2(1-\lambda)\}\,.
\ee
Although this is also a lower bound on $K(\mathcal{E}_{\lambda,\nu})$, it is not the best among those currently known. Indeed, an improved lower bound on $K(\mathcal{E}_{\lambda,\nu})$ has been shown by Ottaviani et al.~\cite{Ottaviani_new_lower}. In the energy-constrained case, the best known lower bound (before our work) on the EC two-way capacities of the thermal attenuator has been found by Noh-Pirandola-Jiang (NPJ)~\cite{Noh2020}, while the best known upper bound is --- depending on the parameters $\lambda$, $\nu$, and $N_s$ --- the bound found by Davis-Shirokov-Wilde (DSW)~\cite{Davis2018} or the PLOB bound in~\eqref{PLOB_Q2}.

The lower bound in~\eqref{lowQ2} on the two-way capacities of the thermal attenuator can be proved first by applying~\eqref{link_D2_D} with the choice $\rho_{AA'}=\ketbra{\Psi_{N_s}}_{AA'}$, where $\ket{\Psi_{N_s}}$ is the two-mode squeezed vacuum states with local mean photon number equal to $N_s$ defined in~\eqref{two_mode_sq},
second by applying the hashing inequality in~\eqref{hashing_ineq}, and finally by proving that the reverse coherent information satisfies
\bb\label{proof_lower}
   & \lim\limits_{N_s\rightarrow\infty}I_{\text{rc}}\left(\Id_{A}\otimes\mathcal{E}_{\lambda,\nu}(\ketbra{\Psi_{N_s}})\right) = -h(\nu)-\log_2(1-\lambda)\,.
\ee
Analogously, the coherent information  $$I_{\text{c}}\left(\Id_{A}\otimes\mathcal{E}_{\lambda,\nu}(\ketbra{\Psi_{N_s}})\right)$$ and the reverse coherent information $$I_{\text{rc}}\left(\Id_{A}\otimes\mathcal{E}_{\lambda,\nu}(\ketbra{\Psi_{N_s}})\right)$$ are lower bounds on the EC two-way capacities of the thermal attenuator $\mathcal{E}_{\lambda,\nu}$ with energy constraint equal to $N_s$:
\bb
    Q_2(\mathcal{E}_{\lambda,\nu},N_s)&\ge  \max\left\{ I_{\text{c}}\left(\Id_{A}\otimes\mathcal{E}_{\lambda,\nu}(\ketbra{\Psi_{N_s}})\right),\, I_{\text{rc}}\left(\Id_{A}\otimes\mathcal{E}_{\lambda,\nu}(\ketbra{\Psi_{N_s}})\right) \right\}\,\\
    &= \begin{cases}
I_{\text{c}}\left(\Id_{A}\otimes\mathcal{E}_{\lambda,\nu}(\ketbra{\Psi_{N_s}})\right), & \text{if $N_s\le \nu$,} \\
I_{\text{rc}}\left(\Id_{A}\otimes\mathcal{E}_{\lambda,\nu}(\ketbra{\Psi_{N_s}})\right), & \text{otherwise.}
\end{cases}
\ee
It holds that~\cite{holwer,PLOB,Pirandola2009,Noh2020} 
\bb\label{EC_coh_therm_att}
    I_{\text{c}}\left(\Id_{A}\otimes\mathcal{E}_{\lambda,\nu}(\ketbra{\Psi_{N_s}})\right)&=h\left(\lambda N_s +(1-\lambda)\nu \right)-h\left(\frac{D+(1-\lambda)(N_s-\nu)-1}{2}\right)-h\left(\frac{D-(1-\lambda)(N_s-\nu)-1}{2}\right)\,,\\
    I_{\text{rc}}\left(\Id_{A}\otimes\mathcal{E}_{\lambda,\nu}(\ketbra{\Psi_{N_s}})\right)&=h(N_s)-h\left(\frac{D+(1-\lambda)(N_s-\nu)-1}{2}\right)-h\left(\frac{D-(1-\lambda)(N_s-\nu)-1}{2}\right)\,,
\ee
where $D\coloneqq \sqrt{\left( (1+\lambda)N_s+(1-\lambda)\nu +1 \right)^2 -4\lambda N_s(N_s+1)}$. The NPJ lower bound, proved by mixing forward (coherent information) and backward (reverse coherent information) strategies, is~\cite{Noh2020}
\bb\label{npj_bound_therm}
Q_2(\mathcal{E}_{\lambda,\nu},N_s)\ge \sup_{\substack{x\in[0,1],\,N_1,N_2\ge0\\ xN_1+(1-x)N_2=N_s}} \left[  x\,I_{\text{c}}\left(\Id_{A}\otimes\mathcal{E}_{\lambda,\nu}(\ketbra{\Psi_{N_1}})\right)+(1-x)I_{\text{rc}}\left(\Id_{A}\otimes\mathcal{E}_{\lambda,\nu}(\ketbra{\Psi_{N_2}})\right) \right]\,.
\ee
Fixed $\lambda$ and $\nu$, if the energy constraint $N_s$ is sufficiently large, the NPJ lower bound is equal to the reverse coherent information bound (i.e.~the optimal values of the supremum problem in~\ref{npj_bound_therm} are $x=0$, $N_1=0$, and $N_2=N_s$).

\subsection{Thermal amplifier}
Let $\HH_S$ and $\HH_E$ be single-mode systems and let $a$ and $b$ denote their annihilation operators, respectively.
For all $g\ge 1$ and $\nu\ge0$, a thermal amplifier $\Phi_{g,\nu}:\mathfrak{S}(\HH_S)\to\mathfrak{S}(\HH_S)$ is a quantum channel defined by 
\bb\label{def_ampl}
    \Phi_{g,\nu}(\rho)\coloneqq\Tr_E\left[U_g^{SE}\big(\rho^S\otimes\tau_\nu^E\big) {U_g^{SE}}^\dagger\right]\,,
\ee
where $U_g^{SE}$ denotes the unitary operator associated with two-mode squeezing of parameter $g$, i.e.
\bb\label{def_unitary_squeez}
	U_{g}^{S E}\coloneqq\exp\left[\arccosh\sqrt{g}\left(a^\dagger b^\dagger-a\, b\right)\right]\,.
\ee
{The two-mode squeezing unitary can be expressed via the following disentangling formula~\cite[Appendix 5]{BARNETT-RADMORE}
\bb
    U_{g}^{SE}=e^{\sqrt{\frac{g-1}{g}}a^\dagger b^\dagger}e^{ \frac{1}{2}\ln\left(\frac{1}{g}\right)\,\left(a^\dagger a -b^\dagger b+1\right) }e^{-\sqrt{\frac{g-1}{g}}a b}\,.
\ee}By writing the quadrature vector $\mathbf{\hat{R}}$ with respect the ordering $(S,E)$, it can be shown that 
\bb\label{transf_r_amp}
\left(U_g^{SE}\right)^\dagger \mathbf{\hat{R}}\, U_{g}^{SE}=S_g\, \mathbf{\hat{R}}\,,
\ee
where 
\bb
S_g\coloneqq	\begin{pmatrix}
		\sqrt{g}\,\mathbb{1}_2\, & \,\sqrt{g-1}\,\sigma_z \\
		\sqrt{g-1}\,\sigma_z\, &\, \sqrt{g}\,\mathbb{1}_2
	\end{pmatrix}\,.
\ee
This implies that for all $\sigma_{SE}\in\mathfrak{S}(\HH_S\otimes H_E)$ it holds that
\bb\label{relation_S_amp}
&\mathbf{m}\left(U^{SE}_g\sigma_{SE} \left(U_g^{SE}\right)^\dagger\right)=S_g\, \mathbf{m}(\sigma_{SE})\,,\\
&V\left(U^{SE}_g\sigma_{SE} \left(U_g^{SE}\right)^\dagger \right)=S_g\,V(\sigma_{SE})\,S_g^{\intercal}\,.
\ee
{In terms of the annihilation operators $a$ and $b$, the transformation in~\eqref{transf_r_amp} reads
\bb
    \left(U_g^{SE}\right)^\dagger a\, U_{g}^{SE}&=\sqrt{g}\,a+\sqrt{g-1}\,b^\dagger\,,\\
    U_g^{SE} a\, \left(U_{g}^{SE}\right)^\dagger&=\sqrt{g}\,a-\sqrt{g-1}\,b^\dagger\,,\\
    \left(U_g^{SE}\right)^\dagger b\, U_{g}^{SE}&=\sqrt{g-1}\,a^\dagger+\sqrt{g}\,b\,,\\
    U_g^{SE} b\, \left(U_{g}^{SE}\right)^\dagger&=-\sqrt{g-1}\,a^\dagger+\sqrt{g}\,b\,.
\ee}
It can be shown that for any single-mode state $\rho$ it holds that 
\bb\label{moment_therm_amp}
&\mathbf{m}\left(\Phi_{g,\nu}(\rho)\right)=\sqrt{
g}\, \mathbf{m}(\rho)\,,\\
&V\left(\Phi_{g,\nu}(\rho) \right)=g\, V(\rho)+(g-1)(2\nu+1)\mathbb{1}_2\,,
\ee
and, in terms of the characteristic function, for all $\mathbf{r}\in\mathbb{R}^2$ it holds that 
\bb\label{caract_amp}
\chi_{\Phi_{g,\nu}(\rho)}(\mathbf{r})=\chi_\rho(\sqrt{g}\mathbf{r})e^{-\frac{1}{4}(g-1)(2\nu+1)|\mathbf{r}|^2}\,.
\ee
By exploiting~\eqref{caract_amp} and the fact that quantum states and characteristic functions are in one-to-one correspondence, for all $g_1,g_2\ge1$ and $\nu\ge0$ the following composition rule holds:\bb\label{comp_rule_amp}
\Phi_{g_1,\nu}\circ\Phi_{g_2,\nu}=\Phi_{g_1g_2,\nu}\,.
\ee
{In Theorem~\ref{kraus_comp_thm} we will provide a simple Kraus representation of the thermal amplifier.}
\subsubsection{Bounds on two-way capacities of the thermal amplifier}
The best known upper bound on the two-way capacities of the thermal amplifier, shown by PLOB~\cite{PLOB}, is
\bb\label{PLOB_amp}
    K(\Phi_{g,\nu})&\le \begin{cases}
-h(\nu)+\log_2\left(\frac{g^{\nu+1}}{g-1}\right), & \text{if $g\in[1, 1+\frac{1}{\nu})$,} \\
0, & \text{otherwise}
\end{cases}
\ee
where $h(\nu)$ is the bosonic entropy defined in~\eqref{bos_ent}. The parameter region in which such an upper bound vanishes coincides with the parameter region in which the thermal amplifier $\Phi_{g,\nu}$ is entanglement breaking, i.e.~$\nu\ge0$ and $g\ge 1+\frac{1}{\nu}$~\cite{Ent_breaking_Gaussian, Holevo-EB}. The best known lower bound (before our work) on $Q_2(\Phi_{\lambda,\nu})$ is given by~\cite{Pirandola2009}
\bb\label{lowQ2_amp}
    Q_2(\Phi_{g,\nu})\ge\max\left\{0,-h(\nu)+\log_2\left(\frac{g}{g-1}\right)\right\}\,,
\ee
which can be proved, analogously as it has been done in~\eqref{proof_lower}, by showing that the coherent information satisfies
\bb\label{proof_lower_ampl}
   & \lim\limits_{N_s\rightarrow\infty}I_{\text{c}}\left(\Id_{A}\otimes\Phi_{g,\nu}(\ketbra{\Psi_{N_s}})\right)  = -h(\nu)+\log_2\left(\frac{g}{g-1}\right)\,.
\ee
The best known lower bound on the secret-key capacity $K(\Phi_{g,\nu})$ has been shown by Wong-Ottaviani-Guo-Pirandola (WOGP)~\cite{Wang_Q2_amplifier}. In the energy-constrained scenario, the best known lower bound is the NPJ bound~\cite{Noh2020}, which is given by
\bb\label{npj_bound_amp}
Q_2(\Phi_{g,\nu},N_s)\ge \sup_{x\in[0,1]}  x\,I_{\text{c}}\left(\Id_{A}\otimes\Phi_{g,\nu}(\ketbra{\Psi_{\frac{N_s}{x}}})\right) \,,
\ee
where~\cite{holwer,PLOB,Pirandola2009,Noh2020} 
\bb
    &I_{\text{c}}\left(\Id_{A}\otimes\Phi_{g,\nu}(\ketbra{\Psi_{N_s}})\right)\\&=h\left(g N_s +(g-1)(\nu+1) \right)-h\left(\frac{D'+(g-1)(N_s+\nu+1)-1}{2}\right)-h\left(\frac{D'-(g-1)(N_s+\nu+1)-1}{2}\right)\,,
\ee
with $D'\coloneqq \sqrt{\left( (g+1)N_s+(g-1)(\nu+1) +1 \right)^2 -4g\,N_s(N_s+1)}$. Fixed $g$ and $\nu$, if the energy constraint $N_s$ is sufficiently large, the NPJ lower bound is equal to the coherent information bound (i.e.~the optimal value of the supremum problem in~\ref{npj_bound_amp} is $x=1$).

\subsection{Additive Gaussian noise}
Let $\HH_S$ be a single-mode system and let $\{D_\mathbf{r}\}_{\mathbf{r}\in\mathbb{R}^2}$ be its dispacement operators. For all $\xi\ge0$, the additive Gaussian noise $\Lambda_\xi:\mathfrak{S}(\HH_S)\to\mathfrak{S}(\HH_S)$ is a quantum channel defined by
\begin{equation}\label{def_add}
\Lambda_\xi(\rho)\coloneqq\frac{1}{2\pi\xi}\int_{\mathbb R^{2}} \mathrm{d}^2 {\mathbf{r}}\, e^{-\frac{1}{2\xi}\mathbf{r}^{\intercal}\mathbf{r}}  D_{\mathbf{r}}\rho  D_{\mathbf{r}}^\dagger\,.  
\end{equation} 
By using that $\mathbf{\hat R}= (\hat{x},\hat{p})$, $a=\frac{\hat{x}+\hat{p}}{\sqrt{2}}$, and by defining  $\mathbf{r}\coloneqq(x,p)^{\text{T}}$, $z\coloneqq \frac{x+ip}{\sqrt{2}}$, and
\bb
    D(z)\coloneqq\exp{\left[z a^\dagger-z^\ast a\right] }=D_{-\mathbf{r}}\,,
\ee
the additive Gaussian noise can be expressed in the following equivalent form: 
\bb
    \Lambda_\xi(\rho)=\frac{1}{\pi\xi}\int_{\mathbb C} \mathrm{d}^2 {z}\, e^{-\frac{|z|^2}{\xi}}  D(z)\,\rho\,  D(z)^\dagger\,,
\ee
where we have used that $\mathrm{d}^2 {\mathbf{r}}=\mathrm{d}x\,\mathrm{d}p=\frac{\mathrm{d}\text{Re}(z)\,\mathrm{d}\text{Im}(z)}{2}=\frac{\mathrm{d}^2z}{{2}}$ and we have performed the integral variable substitution $z\rightarrow -z$.
It can be shown that for all single-mode states $\rho$ it holds that
\bb\label{relation_add}
&\mathbf{m}\left(\Lambda_\xi(\rho)\right)=\mathbf{m}(\rho)\,,\\
&V\left(\Lambda_\xi(\rho) \right)=V(\rho)+2\xi\,\mathbb{1}_2\,.
\ee
and, in terms of the characteristic function, for all $\mathbf{r}\in\mathbb{R}^2$ it holds that 
\bb\label{caract_add}
\chi_{\Lambda_{\xi}(\rho)}(\mathbf{r})=\chi_\rho(\mathbf{r})e^{-\frac{1}{2}\xi|\mathbf{r}|^2}\,.
\ee
{In Theorem~\ref{kraus_comp_thm} we will provide a simple Kraus representation of the additive Gaussian noise.}
\subsubsection{Additive Gaussian noise as the strong limit of thermal attenuator or thermal amplifier}
For completeness, let us remark that the Additive Gaussian noise $\Lambda_\xi$ is the strong limit of the thermal attenuator $\mathcal{E}_{1-\frac{\xi}{\nu},\nu}$ and thermal amplifier $\Phi_{1+\frac{\xi}{\nu},\nu}$ for $\nu\rightarrow\infty$, i.e.~it holds that \bb \lim\limits_{\nu\rightarrow\infty}\|\mathcal{E}_{1-\frac{\xi}{\nu},\nu}(\rho)-\Lambda_\xi(\rho)\|_1&=0\,,\\\lim\limits_{\nu\rightarrow\infty}\|\Phi_{1+\frac{\xi}{\nu},\nu}(\rho)-\Lambda_\xi(\rho)\|_1&=0\,,\ee
for any single-mode state $\rho$.
Indeed,~\eqref{caract_att},~\eqref{caract_amp}, and~\eqref{caract_add} imply that for any single-mode state $\rho$ and any $\mathbf{r}\in\mathbb{R}^2$ it holds that
\bb\lim\limits_{\nu\rightarrow\infty}\chi_{\mathcal{E}_{1-\frac{\xi}{\nu},\nu}(\rho)}(\mathbf{r})&=\chi_{\Lambda_\xi(\rho)}(\mathbf{r})\,,\\ \lim\limits_{\nu\rightarrow\infty}\chi_{\Phi_{1+\frac{\xi}{\nu},\nu}(\rho)}(\mathbf{r})&=\chi_{\Lambda_\xi(\rho)}(\mathbf{r})\,.\ee
Consequently, by exploiting the fact that a sequence of states $\{\sigma_k\}_{k\in\N}\subseteq\mathfrak{S}(L^2(\mathbb{R}))$ converges in trace norm to a quantum state $\sigma\in\mathfrak{S}(L^2(\mathbb{R}))$ if and only if the sequence of characteristic functions $ \{\chi_{\sigma_k} (\mathbf{r})\}_{k\in\N}$ converges pointwise to the characteristic function $\chi_\sigma(\mathbf{r})$~\cite[Theorem 2]{Cushen1971}, 
the thermal attenuator $\mathcal{E}_{1-\frac{\xi}{\nu},\nu}$ and thermal amplifier $\Phi_{1+\frac{\xi}{\nu},\nu}$ strongly converge to the additive Gaussian noise $\Lambda_\xi$ for $\nu\rightarrow\infty$.

\subsubsection{Bounds on two-way capacities of the additive Gaussian noise}
The best known upper bound on the two-way capacities of the additive Gaussian noise, shown by PLOB~\cite{PLOB}, is
\bb\label{PLOB_add}
K(\Lambda_\xi)\le  \begin{cases}
\frac{\xi-1}{\ln2}-\log_2(\xi), & \text{if $\xi< 1$,} \\
0, & \text{otherwise}
\end{cases}
\ee
where $h(\nu)$ is the bosonic entropy defined in~\eqref{bos_ent}. The parameter region in which such an upper bound vanishes coincides with the parameter region in which the Additive Gaussian noise $\Lambda_\xi$ is entanglement breaking, i.e.~$\xi\ge 1$~\cite{Ent_breaking_Gaussian, Holevo-EB}.
The best known lower bound (before our work) on $Q_2(\Lambda_\xi)$ is given by~\cite{Pirandola2009}
\bb\label{lowQ2_add}
Q_2(\Lambda_\xi)\ge \max\{0,-\log_2(e\,\xi)\}\,,
\ee
which can be proved, analogously as it has been done in~\eqref{proof_lower}, by showing that the coherent information satisfies
\bb\label{proof_lower_additive}
   \lim\limits_{N_s\rightarrow\infty}I_{\text{c}}\left(\Id_{A}\otimes\Lambda_{\xi}(\ketbra{\Psi_{N_s}}_{AA'})\right) =\log_2(e\,\xi)\,.
\ee
In the energy-constrained scenario, the best known lower bound is the NPJ bound~\cite{Noh2020}, which is given by
\bb\label{npj_bound_add}
Q_2(\Lambda_\xi,N_s)\ge \sup_{x\in[0,1]}  x\,I_{\text{c}}\left(\Id_{A}\otimes\Lambda_\xi(\ketbra{\Psi_{\frac{N_s}{x}}})\right) \,,
\ee
where~\cite{holwer,PLOB,Pirandola2009,Noh2020} 
\bb
    &I_{\text{c}}\left(\Id_{A}\otimes\Lambda_\xi(\ketbra{\Psi_{N_s}})\right)  =h\left(N_s +\xi \right)-h\left(\frac{D''+\xi-1}{2}\right)-h\left(\frac{D''-\xi-1}{2}\right)\,,
\ee
with $D''\coloneqq \sqrt{\left( 2N_s+\xi +1 \right)^2 -4N_s(N_s+1)}$. Fixed $\xi$, if the energy constraint $N_s$ is sufficiently large, the NPJ lower bound is equal to the coherent information bound (i.e.~the optimal value of the supremum problem in~\ref{npj_bound_add} is $x=1$).

\section{Action of phase-insensitive bosonic Gaussian channels on generic operators}
{In this section we establish properties of the channel composition between pure amplifier channel and pure loss channel.}
\begin{definition}\label{def1}
For all $\lambda\in[0,1]$ and $g\ge1$ let us define the channel $\mathcal{N}_{g,\lambda}$ as the composition between pure amplifier channel $\Phi_{g,0}$ and pure loss channel $\mathcal{E}_{\lambda,0}$, i.e.
\bb\label{def_comp}
    \mathcal{N}_{g,\lambda}\coloneqq\Phi_{g,0}\circ \mathcal{E}_{\lambda,0}\,.
\ee
\end{definition}
\begin{lemma}\label{lemma_eb_comp}
    The channel $\mathcal{N}_{g,\lambda}$ is entanglement breaking if and only if $(1-\lambda)g\ge 1$.
\end{lemma}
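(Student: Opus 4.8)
The plan is to compute the Gaussian data of $\mathcal{N}_{g,\lambda}$ at the level of covariance matrices and then invoke Holevo's entanglement-breaking criterion, Lemma~\ref{holevo_lemma_eb_gauss}. First I would compose the covariance-matrix actions of the pure loss channel $\mathcal{E}_{\lambda,0}$ and the pure amplifier channel $\Phi_{g,0}$, obtained from~\eqref{moment_therm_att} and~\eqref{moment_therm_amp} by setting $\nu=0$. Applying $\mathcal{E}_{\lambda,0}$ first sends $V\mapsto \lambda V+(1-\lambda)\mathbb{1}_2$, and then $\Phi_{g,0}$ sends this to $g\lambda\, V+\big(2g-g\lambda-1\big)\mathbb{1}_2$, while the first moment transforms as $\mathbf{m}\mapsto\sqrt{g\lambda}\,\mathbf{m}$. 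Hence, in the notation of Lemma~\ref{holevo_lemma_eb_gauss}, one reads off $K=\sqrt{g\lambda}\,\mathbb{1}_2$ and $\beta=(2g-g\lambda-1)\mathbb{1}_2$, so that $K^{\intercal}\Omega_1 K=g\lambda\,\Omega_1$.

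By Lemma~\ref{holevo_lemma_eb_gauss}, $\mathcal{N}_{g,\lambda}$ is entanglement breaking if and only if there exist real symmetric matrices $\alpha,\gamma$ with $\alpha+\gamma=\beta$, $\alpha\ge i\,\Omega_1$, and $\gamma\ge i\,g\lambda\,\Omega_1$. The next step is to argue that, because every matrix involved ($\beta$, $\Omega_1$, and $g\lambda\,\Omega_1$) is invariant under conjugation by phase-space rotations $R(\theta)$, which are symplectic and satisfy $R(\theta)\,\Omega_1\, R(\theta)^{\intercal}=\Omega_1$, one may assume $\alpha$ and $\gamma$ to be isotropic without loss of generality. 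Indeed, given any admissible pair, averaging $R(\theta)\,\alpha\,R(\theta)^{\intercal}$ and $R(\theta)\,\gamma\,R(\theta)^{\intercal}$ over $\theta\in[0,2\pi)$ preserves both constraints (the cone of real symmetric operators $\ge i\,\Omega_1$ is convex and rotation-invariant) as well as the sum $\beta$, while producing rotation-invariant real symmetric matrices, which are necessarily multiples of $\mathbb{1}_2$.

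Writing $\alpha=a\,\mathbb{1}_2$ and $\gamma=c\,\mathbb{1}_2$, I would then use that $i\,\Omega_1$ has eigenvalues $\pm1$, so that $a\,\mathbb{1}_2\ge i\,\Omega_1$ is equivalent to $a\ge1$, and $c\,\mathbb{1}_2\ge i\,g\lambda\,\Omega_1$ is equivalent to $c\ge g\lambda$. A decomposition with $a+c=2g-g\lambda-1$, $a\ge1$, and $c\ge g\lambda$ therefore exists if and only if $2g-g\lambda-1\ge 1+g\lambda$, which simplifies precisely to $g(1-\lambda)\ge 1$. This yields the claimed characterisation.

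The main obstacle is the reduction to isotropic $\alpha,\gamma$: one must verify that restricting the Holevo decomposition to multiples of the identity loses no generality. The phase-averaging argument above settles this cleanly, relying only on the rotational symmetry of the Gaussian data together with the convexity of the positive-semidefinite cone; without it one would have to optimise over general real symmetric $\alpha$ satisfying $\alpha\ge i\,\Omega_1$, which is more cumbersome although it leads to the same threshold $g(1-\lambda)=1$.
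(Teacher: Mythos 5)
Your proposal is correct and takes essentially the same route as the paper: it computes the Gaussian data $K=\sqrt{g\lambda}\,\mathbb{1}_2$ and $\beta=(2g-g\lambda-1)\,\mathbb{1}_2$ and then invokes Holevo's criterion (Lemma~\ref{holevo_lemma_eb_gauss}). The only divergence is in the final step, where you reduce to isotropic $\alpha,\gamma$ by phase averaging, whereas the paper sums the two constraints to get the necessary condition $(2g-1-g\lambda)\,\mathbb{1}_2\ge i(1+g\lambda)\,\Omega_1$ (equivalent to $(1-\lambda)g\ge1$ since $x\,\mathbb{1}_2\ge i\,c\,\Omega_1\Leftrightarrow x\ge c$) and exhibits the explicit pair $\alpha=\mathbb{1}_2$, $\gamma=(2g-2-g\lambda)\,\mathbb{1}_2$ for sufficiency; both arguments are valid and yield the same threshold.
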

\begin{proof}
    First, let us determine the parameter region of $g$ and $\lambda$ where the channel $\mathcal{N}_{g,\lambda}$ is entanglement breaking. Since $\mathcal{N}_{g,\lambda}$ is a Gaussian channel, we can apply Lemma~\ref{holevo_lemma_eb_gauss}. By using~\eqref{transf_caract}, one can show that $\mathcal{N}_{g,\lambda}$ transforms the first moment and the covariance matrix as
\bb\label{relation_comp}
    &\mathbf{m}\left(\mathcal{N}_{g,\lambda}(\rho)\right)=\sqrt{g\lambda}\,\mathbf{m}(\rho)\,,\\
    &V\left(\mathcal{N}_{g,\lambda}(\rho) \right)=g\lambda\, V(\rho)+(2g-1-g\lambda)\,\mathbb{1}_2\,,
\ee
for all quantum states $\rho$. Hence, Lemma~\ref{holevo_lemma_eb_gauss} establishes that $\Phi$ is entanglement breaking if and only if there exists $\alpha,\gamma\in\mathbb{R}^{2\times 2}\text{ with }\alpha\ge i\, \Omega_{1}\,\text{ and }\,\gamma\ge i \lambda g\,\Omega_1$ such that
\bb\label{conditionEB_comp}
(2g-1-g\lambda)\,\mathbb{1}_2=\alpha+\gamma\,.
\ee
The condition in~\eqref{conditionEB_comp} is equivalent to
\bb\label{conditionEB_comp2}
    (2g-1-g\lambda)\,\mathbb{1}_2\ge i\,(1+\lambda g)\,\Omega_1\,.
\ee
Indeed, if the condition in~\eqref{conditionEB_comp} is satisfied, then $(2g-1-g\lambda)\,\mathbb{1}_2=\alpha+\gamma\ge i\,(1+\lambda g)\,\Omega_1$, i.e.~also the condition in~\eqref{conditionEB_comp2} is satisfied. Conversely, assume that the condition in~\eqref{conditionEB_comp2} is satisfied. Then, the fact that 
\bb\label{fact_diag}
x\,\mathbb{1}_2\ge i\, \Omega_1 \quad\text{if and only if}\quad x\ge1\,,
\ee
implies that $(1-\lambda)g\ge1$. Consequently, by choosing $\alpha\coloneqq\mathbb{1}_2$ and $\gamma\coloneqq (2g-2-g\lambda)\mathbb{1}_2$ and by using~\eqref{fact_diag}, it holds that the condition in~\eqref{conditionEB_comp} is satisfied with $\alpha\ge i\, \Omega_{1}\,\text{ and }\,\gamma\ge i \lambda g\,\Omega_1$. By exploiting~\eqref{fact_diag}, we deduce that $\mathcal{N}_{g,\lambda}$ is entanglement breaking if and only if $(1-\lambda)g\ge 1$. 
\end{proof}
\begin{lemma}\label{lemma_comp_bos}
    Let $\nu\ge 0$, $\lambda\in[0,1]$, $g\ge 1$, and $\xi\ge0$. The thermal attenuator $\mathcal{E}_{\lambda,\nu}$, the thermal amplifier $\Phi_{g,\nu}$, and the additive Gaussian noise $\Lambda_\xi$ can be expressed in terms of the composition between pure amplifier channel and pure loss channel as
    \bb\label{channels_as_composition}
        \mathcal{E}_{\lambda,\nu}&=\mathcal{N}_{1+(1-\lambda)\nu\,,\,\frac{\lambda}{1+(1-\lambda)\nu}}\,\,,\\
        \Phi_{g,\nu}&=\mathcal{N}_{g+(g-1)\nu\,,\,\frac{g}{g+(g-1)\nu}}\,\,,\\
        \Lambda_{\xi}&=\mathcal{N}_{1+\xi\,,\,\frac{1}{1+\xi}}\,\,.
    \ee
\end{lemma}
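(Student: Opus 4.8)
To prove Lemma~\ref{lemma_comp_bos} the plan is to reduce each claimed operator identity to a pair of scalar matching conditions, exploiting the fact that a phase-insensitive Gaussian channel is completely determined by how it acts on characteristic functions. Concretely, each of $\mathcal{E}_{\lambda,\nu}$, $\Phi_{g,\nu}$, $\Lambda_\xi$, and the composition $\mathcal{N}_{G,L}$ (with gain $G$ and transmissivity $L$, in the notation of Definition~\ref{def1}) acts on the characteristic function of an \emph{arbitrary} state $\rho$ by a rescaling of the argument $\mathbf{r}\mapsto s\,\mathbf{r}$ together with an isotropic Gaussian damping $e^{-\frac14 c|\mathbf{r}|^2}$, as recorded in~\eqref{caract_att}, \eqref{caract_amp}, and~\eqref{caract_add}. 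Since states and characteristic functions are in one-to-one correspondence via~\eqref{inverse_fourier_displacement}, two such channels coincide if and only if they share the same pair $(s,c)$. Hence it suffices to compute $(s,c)$ for $\mathcal{N}_{G,L}$ and, for each target channel, to solve the resulting equations for the composition parameters.

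First I would compute the characteristic-function action of $\mathcal{N}_{G,L}=\Phi_{G,0}\circ\mathcal{E}_{L,0}$ by composing~\eqref{caract_att} and~\eqref{caract_amp} at $\nu=0$, obtaining
\bb
\chi_{\mathcal{N}_{G,L}(\rho)}(\mathbf{r})=\chi_\rho\big(\sqrt{GL}\,\mathbf{r}\big)\,e^{-\frac{1}{4}(2G-1-GL)|\mathbf{r}|^2}\,,
\ee
so that $s=\sqrt{GL}$ and $c=2G-1-GL$ (equivalently, this is the content of the moment transformation~\eqref{relation_comp}). Matching $(s,c)$ against the thermal attenuator $\mathcal{E}_{\lambda,\nu}$ then requires $GL=\lambda$ and $2G-1-GL=(1-\lambda)(2\nu+1)$; substituting the first equation into the second gives $2G=2+2\nu(1-\lambda)$, i.e.\ $G=1+(1-\lambda)\nu$ and $L=\lambda/G$, which are exactly the values in~\eqref{channels_as_composition}. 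The amplifier and additive cases are handled identically: for $\Phi_{g,\nu}$ one imposes $GL=g$ and $2G-1-GL=(g-1)(2\nu+1)$, yielding $G=g+(g-1)\nu$; for $\Lambda_\xi$ one imposes $GL=1$ and $2G-1-GL=2\xi$, yielding $G=1+\xi$. In each case $L$ is then fixed as $L=(GL)/G$.

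A small point worth checking at the end is that the parameters produced actually define a legitimate composition, i.e.\ that $G\ge1$ and $L\in[0,1]$, so that $\mathcal{N}_{G,L}$ is a genuine composition of a pure amplifier and a pure loss channel. This holds because $(1-\lambda)\nu\ge0$, $(g-1)\nu\ge0$, and $\xi\ge0$ force $G\ge1$, while the identities $G-GL=(1-\lambda)(1+\nu)$, $G-GL=(g-1)\nu$, and $G-GL=\xi$ (all nonnegative) guarantee $L\le1$, and $GL\ge0$ guarantees $L\ge0$.

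I do not expect a genuine obstacle here: once the reduction to the pair $(s,c)$ is in place, everything is a short linear computation. The one step requiring conceptual care is the justification of the reduction itself --- namely that the rescaling-plus-damping form of~\eqref{caract_att}--\eqref{caract_add} holds for \emph{all} states, not merely Gaussian ones, so that equality of $(s,c)$ really forces equality of the channels. This is immediate from the one-to-one correspondence between states and characteristic functions in~\eqref{inverse_fourier_displacement}, and it is the only statement I would make explicit before carrying out the parameter matching.
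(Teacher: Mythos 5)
Your proposal is correct and follows essentially the same route as the paper: both compute the characteristic-function action of $\mathcal{N}_{G,L}$ by composing~\eqref{caract_att} and~\eqref{caract_amp} at $\nu=0$, match the rescaling and Gaussian-damping parameters against~\eqref{caract_att}--\eqref{caract_add}, and conclude via the one-to-one correspondence between states and characteristic functions. The only (harmless) difference is presentational --- you solve for the composition parameters while the paper verifies the stated ones --- and your added check that $G\ge 1$ and $L\in[0,1]$ is a sensible extra remark.
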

\begin{proof}
    Let $\rho$ be a single-mode state. The characteristic function of $\mathcal{N}_{g,\lambda}(\rho)$ is
    \bb\label{transf_caract}
        \chi_{\mathcal{N}_{g,\lambda}(\rho)}(\mathbf{r})=\chi_{\Phi_{g,0}\circ\mathcal{E}_{\lambda,0}(\rho)}(\mathbf{r})=\chi_{\mathcal{E}_{\lambda,0}(\rho)}(\sqrt{g}\mathbf{r})e^{-\frac{1}{4}(g-1)|\mathbf{r}|^2}=\chi_{\rho}(\sqrt{g\lambda}\,\mathbf{r})e^{-\frac{1}{4}\left[2g-1-g\lambda\right]|\mathbf{r}|^2}\,
    \ee
    for all $\mathbf{r}\in\mathbb{R}^2$, where we have used~\eqref{caract_att} and~\eqref{caract_amp}. Consequently, by exploiting~\eqref{caract_att},~\eqref{caract_amp}, and~\eqref{caract_add}, one can check that for all $\mathbf{r}\in\mathbb{R}^2$ it holds that
    \bb
        \chi_{\mathcal{E}_{\lambda,\nu}(\rho)}(\mathbf{r})&= \chi_{\mathcal{N}_{1+(1-\lambda)\nu\,,\,\frac{\lambda}{1+(1-\lambda)\nu}}(\rho)}(\mathbf{r}) \,,\\
        \chi_{\Phi_{g,\nu}(\rho)}(\mathbf{r})&= \chi_{\mathcal{N}_{g+(g-1)\nu\,,\,\frac{g}{g+(g-1)\nu}}(\rho)}(\mathbf{r}) \,,\\
        \chi_{\Lambda_{\xi}(\rho)}(\mathbf{r})&= \chi_{\mathcal{N}_{1+\xi\,,\,\frac{1}{1+\xi}}(\rho)}(\mathbf{r}) \,.\\
    \ee
    Hence, by exploiting the fact that quantum states and characteristic functions are in one-to-one correspondence,~\eqref{channels_as_composition} is proved.
\end{proof}
The forthcoming Theorem~\ref{kraus_comp_thm} provides a simple Kraus representation of $\mathcal{N}_{g,\lambda}$ and allows one to easily calculate the output of $\mathcal{N}_{g,\lambda}$ for generic input operators.
\begin{thm}\label{kraus_comp_thm}
Let $\lambda\in[0,1]$ and $g\ge 1$. The quantum channel $\mathcal{N}_{g,\lambda}$, defined in Definition~\eqref{def1}, admits the following Kraus representation:
\bb\label{krausform_comp}
    \mathcal{N}_{g,\lambda}(\rho)=\sum_{k,m=0}^\infty M_{k,m}^{\text{(comp)}}(g,\lambda)\,\rho \left(M_{k,m}^{\text{(comp)}}(g,\lambda)\right)^\dagger\,,
\ee
where
\bb\label{kraus_expl_comp}
    M_{k,m}^{(comp)}(g,\lambda)\coloneqq M_{k}^{\text{(pure amp)}}(g)\,M_{m}^{\text{(pure loss)}}(\lambda)=\sqrt{ \frac{ (g-1)^k\, (1-\lambda)^m}{k!\,m!\, g^{k+1}  } }(a^\dagger)^k \left(\sqrt{\frac{\lambda}{g}}\right)^{  a^\dagger a }a^m
\ee
and where we have introduced the Kraus operators of pure loss channel and pure amplifier channel:
\begin{align}
    M_{k}^{\text{(pure amp)}}(g)&\coloneqq \frac{1}{\sqrt{g\,k!}}\left(\sqrt{\frac{g-1}{g}}\right)^k (a^\dagger)^k \left(\frac{1}{\sqrt{g}}\right)^{  a^\dagger a }\,,\\
    M_{m}^{\text{(pure loss)}}(\lambda)&\coloneqq\sqrt{\frac{{(1-\lambda)^m}}{m!}}  (\sqrt{\lambda})^{ a^\dagger a }\, a^m\,.
\end{align}
In particular, by letting $\ket{n}$ and $\ket{i}$ two Fock states, it holds that
\begin{align}\label{action_comp_chan}
    \mathcal{N}_{g,\lambda}(\ketbraa{n}{i})&=\sum_{l=\max(i-n,0)}^\infty f_{n,i,l}(g,\lambda)\ketbraa{l+n-i}{l}\,.
\end{align}
where
\begin{align}\label{def_f_comp}
    f_{n,i,l}(g,\lambda)&\coloneqq \sum_{m=\max(i-l,0)}^{\min(n,i)}\frac{\sqrt{n!i!l!(l+n-i)!}}{(n-m)!(i-m)!m!(l+m-i)!} \frac{(g-1)^{l+m-i}(1-\lambda)^m\lambda^{\frac{n+i-2m}{2}}}{g^{l+1+\frac{n-i}{2}} } \,.
\end{align}
\end{thm}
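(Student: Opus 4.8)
The plan is to derive the Kraus operators of the two constituent channels directly from the disentangling formulas stated above, compose them, and then evaluate on Fock-state matrix elements. Since both $\Phi_{g,0}$ and $\mathcal{E}_{\lambda,0}$ have the environment in the vacuum $\ket{0}_E$, their Kraus operators are obtained as $M_m = \bra{m}_E U \ket{0}_E$, where $U$ is the relevant beam-splitter or two-mode-squeezing unitary. First I would compute these. Applying the loss disentangling formula to $\ket{0}_E$, the rightmost factor $e^{\sqrt{(1-\lambda)/\lambda}\, a^\dagger b}$ acts trivially because $b\ket{0}_E = 0$; the diagonal factor reduces to $(\sqrt{\lambda})^{a^\dagger a}$ on the system (its $b^\dagger b$ piece also kills the vacuum); and expanding $e^{-\sqrt{(1-\lambda)/\lambda}\,a b^\dagger}$ and using $(b^\dagger)^m\ket{0}_E = \sqrt{m!}\,\ket{m}_E$ yields $M_m^{\text{(pure loss)}}(\lambda)$ up to an irrelevant global sign $(-1)^m$, which does not affect the channel. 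The commutation relation $a^m f(a^\dagger a) = f(a^\dagger a + m)\, a^m$ then brings the operator into the stated ordering $(\sqrt{\lambda})^{a^\dagger a}a^m$, and the identity $\sqrt{(1-\lambda)/\lambda}\cdot\sqrt{\lambda}=\sqrt{1-\lambda}$ fixes the prefactor. An entirely analogous computation with the amplifier disentangling formula gives $M_k^{\text{(pure amp)}}(g)$, where now the scalar prefactor $e^{\frac12\ln(1/g)}=1/\sqrt{g}$ coming from the $+1$ in the exponent accounts for the extra $g^{-1/2}$.

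Next I would invoke the elementary composition rule for Kraus representations: if $\Phi_1(\rho)=\sum_k A_k\rho A_k^\dagger$ and $\Phi_2(\rho)=\sum_m B_m\rho B_m^\dagger$, then $(\Phi_1\circ\Phi_2)(\rho)=\sum_{k,m}(A_kB_m)\rho(A_kB_m)^\dagger$. With $\Phi_1 = \Phi_{g,0}$ and $\Phi_2 = \mathcal{E}_{\lambda,0}$ this immediately gives $M_{k,m}^{\text{(comp)}}(g,\lambda)=M_k^{\text{(pure amp)}}(g)\,M_m^{\text{(pure loss)}}(\lambda)$, which is the middle equality in~\eqref{kraus_expl_comp}. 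To reach the closed form I would multiply the two operators and use that the two functions of the number operator commute, $(1/\sqrt{g})^{a^\dagger a}(\sqrt{\lambda})^{a^\dagger a}=(\sqrt{\lambda/g})^{a^\dagger a}$; collecting the scalar prefactors then produces exactly $\sqrt{(g-1)^k(1-\lambda)^m/(k!\,m!\,g^{k+1})}$.

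Finally, for the Fock matrix elements I would act with $M_{k,m}^{\text{(comp)}}(g,\lambda)$ on $\ket{n}$, using $a^m\ket{n}=\sqrt{n!/(n-m)!}\,\ket{n-m}$, then $(\sqrt{\lambda/g})^{a^\dagger a}\ket{n-m}=(\lambda/g)^{(n-m)/2}\ket{n-m}$, and finally $(a^\dagger)^k\ket{n-m}=\sqrt{(n-m+k)!/(n-m)!}\,\ket{n-m+k}$; the analogous expression holds for $\ket{i}$. The outgoing operator is thus proportional to $\ketbraa{n-m+k}{i-m+k}$, so setting $l\coloneqq i-m+k$ (equivalently $k=l+m-i$) makes the bra index equal to $l$ and the ket index equal to $l+n-i$, matching~\eqref{action_comp_chan}. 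Collecting all factorial and power factors and imposing the nonvanishing constraints — $m\le\min(n,i)$ from the annihilation operators and $k=l+m-i\ge0$, i.e.\ $m\ge\max(i-l,0)$ — and then summing over $m$ reproduces $f_{n,i,l}(g,\lambda)$ as in~\eqref{def_f_comp}, with the outer sum over $l$ starting at $\max(i-n,0)$.

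I expect the main obstacle to be the bookkeeping in this last step: carefully tracking the two independent summation constraints (one from $a^m$ annihilating Fock states when $m$ exceeds $n$ or $i$, the other from the requirement $k\ge 0$ after the reindexing $k=l+m-i$) and verifying that the accumulated square roots of factorials and the powers of $g$, $\lambda$, $g-1$, and $1-\lambda$ assemble precisely into the stated $f_{n,i,l}$, in particular that the combined $g$-exponent collapses to $l+1+\tfrac{n-i}{2}$. A minor subtlety earlier is keeping the operator orderings straight and recognising that the global $(-1)^m$ in the loss Kraus operators is immaterial for the channel.
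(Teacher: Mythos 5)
Your proposal is correct and follows essentially the same route as the paper's own proof: extracting the pure-loss and pure-amplifier Kraus operators from the vacuum environment via the disentangling formulas (with the same observation that the $(-1)^m$ sign is immaterial), composing them with the standard Kraus composition rule, and then evaluating on $\ketbraa{n}{i}$ with the reindexing $l = i - m + k$ to assemble $f_{n,i,l}(g,\lambda)$. The bookkeeping you flag is exactly what the paper carries out, and your stated constraints on the summation ranges match theirs.
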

\begin{proof}
By using~\eqref{def_therm}, the pure loss channel can be written as
    \bb\label{kraus_pure_def}
        \mathcal{E}_{\lambda,0}(\rho)=\sum_{m=0}^\infty M_{m}^{\text{(pure loss)}}(\lambda)\,\rho \left(M_{m}^{\text{(pure loss)}}(\lambda)\right)^\dagger\,,
    \ee
where for all $m\in\N$ the Kraus operator $M_{m}^{\text{(pure loss)}}(\lambda)$ is
    \bb
        M_{m}^{\text{(pure loss)}}(\lambda)\coloneqq (-1)^m\bra{m}_E U_{\lambda}^{SE}\ket{0}_E\,.
    \ee
Hence, by using the disentangling formula for beam splitter unitary~\cite[Appendix 5]{BARNETT-RADMORE}
\bb
    U_{\lambda}^{SE}=e^{-\sqrt{\frac{1-\lambda}{\lambda}}ab^\dagger}e^{ \frac{1}{2}\ln\lambda\,\left(a^\dagger a -b^\dagger b\right) }e^{\sqrt{\frac{1-\lambda}{\lambda}}a^\dagger b}\,
\ee
and the fact that
\bb
    e^{ -\frac{1}{2}\ln\lambda\, a^\dagger a }\,a^m\, e^{ \frac{1}{2}\ln\lambda\, a^\dagger a }=\lambda^{m/2} a\,,
\ee
it holds that
\bb
    M_{m}^{\text{(pure loss)}}(\lambda)=\frac{1}{\sqrt{m!}}\left(\sqrt{\frac{1-\lambda}{\lambda}}\right)^m a^m e^{ \frac{1}{2}\ln\lambda\, a^\dagger a }=\sqrt{\frac{{(1-\lambda)^m}}{m!}}  (\sqrt{\lambda})^{ a^\dagger a }\, a^m\,.
\ee
By using~\eqref{def_ampl}, the pure amplifier channel can be written as
    \bb\label{kraus_amp_def}
        \Phi_{g,0}(\rho)=\sum_{k=0}^\infty M_{k}^{\text{(pure amp)}}(g)\,\rho \left( M_{k}^{\text{(pure amp)}}(g) \right)^\dagger\,,
    \ee
where for all $k\in\N$ the Kraus operator $M_{k}^{\text{(pure amp)}}(g)$ is
    \bb
        M_{k}^{\text{(pure amp)}}(g)\coloneqq \bra{k}_E U_{g}^{SE}\ket{0}_E\,.
    \ee
Hence, by using the disentangling formula for the two-mode squeezing unitary~\cite[Appendix 5]{BARNETT-RADMORE}
\bb
    U_{g}^{SE}=e^{\sqrt{\frac{g-1}{g}}a^\dagger b^\dagger}e^{ \frac{1}{2}\ln\left(\frac{1}{g}\right)\,\left(a^\dagger a -b^\dagger b+1\right) }e^{-\sqrt{\frac{g-1}{g}}a b}\,,
\ee
it holds that
\bb
    M_{k}^{\text{(pure amp)}}(g)=\frac{1}{\sqrt{g\,k!}}\left(\sqrt{\frac{g-1}{g}}\right)^k (a^\dagger)^k \left(\frac{1}{\sqrt{g}}\right)^{  a^\dagger a }\,.
\ee
By using~\eqref{kraus_pure_def},~\eqref{kraus_amp_def}, and the fact $\mathcal{N}_{g,\lambda}=\Phi_{g,0}\circ\mathcal{E}_{\lambda,0}$,~\eqref{krausform_comp} is proved. Now, let us calculate $\mathcal{N}_{g,\lambda}(\ketbraa{n}{i})=\sum_{m,n=0}^\infty M_{k,m}^{\text{(comp)}}(g,\lambda)\ketbraa{n}{i}( M_{k,m}^{\text{(comp)}}(g,\lambda) )^\dagger$ in order to prove~\eqref{action_comp_chan}. By exploiting the following formulae
\bb
a^m\ket{n}&=
\begin{cases}
	\sqrt{\frac{n!}{(n-m)!}}\ket{n-m}, & \text{if $n\ge m$,} \\
	0, & \text{otherwise}
\end{cases}\\
\left(a^\dagger\right)^k\ket{n-m}&=
\sqrt{\frac{(n-m+k)!}{(n-m)!}}\ket{n-m+k}\,,
\ee
 for $m>n$ it holds that $M_{k,m}\ket{n}=0$, otherwise for $m\le n$ it holds that
\bb
     M_{k,m}^{(comp)}(g,\lambda)\ket{n}&=\frac{1}{(n-m)!}\sqrt{\frac{n!(n-m+k)!}{k!m!}}  \sqrt{\frac{(g-1)^{k}(1-\lambda)^m\lambda^{n-m}}{g^{k+1+n-m}}}\ket{n-m+k}\,.
\ee
Consequently, we conclude that
\bb
    \mathcal{N}_{g,\lambda}(\ketbraa{n}{i})&=\sum_{k=0}^{\infty}\sum_{m=0}^{\min(n,i)}\frac{\sqrt{n!(n-m+k)!i!(i-m+k)!}}{(n-m)!(i-m)!k!m!}  \sqrt{\frac{(g-1)^{2k}(1-\lambda)^{2m}\lambda^{n+i-2m}}{g^{2k+2+n+i-2m}}}\ketbraa{n-m+k}{i-m+k}\\&=\sum_{l=\max(i-n,0)}^\infty f_{n,i,l}(g,\lambda)\ketbraa{l+n-i}{l}   \,.
\ee
Hence,~\eqref{action_comp_chan} is proved.
\end{proof}
Calculating the action of Gaussian channels on non-Gaussian states is cumbersome in general. The forthcoming Theorem~\ref{gen_master_eq_trick} overcomes this difficulty and allows one to easily calculate the output of all piBGCs for generic input operators. 
\begin{thm}\label{gen_master_eq_trick}
Let $\nu\ge 0$, $\lambda\in[0,1]$, $g\ge 1$, and $\xi\ge0$. The thermal attenuator $\mathcal{E}_{\lambda,\nu}$, the thermal amplifier $\Phi_{g,\nu}$, and the additive Gaussian noise $\Lambda_\xi$ admit the following Kraus representations:
\bb\label{krausform_att}
    \mathcal{E}_{\lambda,\nu}(\rho)=\sum_{k,m=0}^\infty M_{k,m}^{\text{(att)}}(\lambda,\nu)\,\rho \left(M_{k,m}^{\text{(att)}}(\lambda,\nu)\right)^\dagger\,,
\ee
\bb\label{krausform_amp}
    \Phi_{g,\nu}(\rho)=\sum_{k,m=0}^\infty M_{k,m}^{\text{(amp)}}(g,\nu)\,\rho \left(M_{k,m}^{\text{(amp)}}(g,\nu)\right)^\dagger\,,
\ee
\bb\label{krausform_add}
    \Lambda_{\xi}(\rho)=\sum_{k,m=0}^\infty M_{k,m}^{\text{(add)}}(\xi)\,\rho \left(M_{k,m}^{\text{(add)}}(\xi)\right)^\dagger\,,
\ee
where 
\begin{align}
    M_{k,m}^{\text{(att)}}(\lambda,\nu)&\coloneqq M_{k,m}^{\text{(comp)}}\left(1+(1-\lambda)\nu,\frac{\lambda}{1+(1-\lambda)\nu}\right)\,,\label{kraus_op_att}\\
    M_{k,m}^{\text{(amp)}}(g,\nu)&\coloneqq M_{k,m}^{\text{(comp)}}\left(g+(g-1)\nu,\frac{g}{g+(g-1)\nu}\right)\,,\label{kraus_op_ampl}\\
    M_{k,m}^{\text{(add)}}(\xi)&\coloneqq M_{k,m}^{\text{(comp)}}\left(1+\xi,\frac{1}{1+\xi}\right)\,,\label{kraus_op_add}
\end{align}
with $M_{k,m}^{\text{(comp)}}$ being defined in~\eqref{kraus_expl_comp}.
In particular, by letting $\ket{n}$ and $\ket{i}$ two Fock states, it holds that
\begin{align}
    \mathcal{E}_{\lambda,\nu}(\ketbraa{n}{i})&=\sum_{l=\max(i-n,0)}^\infty f_{n,i,l}\left(1+(1-\lambda)\nu,\frac{\lambda}{1+(1-\lambda)\nu}\right)\,\ketbraa{l+n-i}{l}\,,\label{action_ni_att}\\
    \Phi_{g,\nu}(\ketbraa{n}{i})&=\sum_{l=\max(i-n,0)}^\infty f_{n,i,l}\left(g+(g-1)\nu,\frac{g}{g+(g-1)\nu}\right)\,\ketbraa{l+n-i}{l}\,,\label{action_ni_amp}\\
    \Lambda_{\xi}(\ketbraa{n}{i})&=\sum_{l=\max(i-n,0)}^\infty f_{n,i,l}\left(1+\xi,\frac{1}{1+\xi}\right)\,\ketbraa{l+n-i}{l}\,,\label{action_ni_add}
\end{align}
with $f_{n,i,l}$ being defined in~\eqref{def_f_comp}.
\end{thm}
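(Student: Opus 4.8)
The plan is to reduce all three channels to the single composite channel $\mathcal{N}_{g,\lambda}=\Phi_{g,0}\circ\mathcal{E}_{\lambda,0}$ whose Kraus decomposition is already established in Theorem~\ref{kraus_comp_thm}, and then merely reparametrise. The crucial input is Lemma~\ref{lemma_comp_bos}, which identifies each piBGC with $\mathcal{N}_{g',\lambda'}$ for explicit effective parameters: $(g',\lambda')=\bigl(1+(1-\lambda)\nu,\ \tfrac{\lambda}{1+(1-\lambda)\nu}\bigr)$ for the thermal attenuator, $(g',\lambda')=\bigl(g+(g-1)\nu,\ \tfrac{g}{g+(g-1)\nu}\bigr)$ for the thermal amplifier, and $(g',\lambda')=\bigl(1+\xi,\ \tfrac{1}{1+\xi}\bigr)$ for the additive Gaussian noise. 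Once this identification is granted, the Kraus representations~\eqref{krausform_att}--\eqref{krausform_add} and the Fock-matrix-element formulas~\eqref{action_ni_att}--\eqref{action_ni_add} follow by substituting these parameters into~\eqref{krausform_comp} and~\eqref{action_comp_chan}, which is precisely how the operators $M_{k,m}^{\text{(att)}}$, $M_{k,m}^{\text{(amp)}}$, $M_{k,m}^{\text{(add)}}$ and the coefficients $f_{n,i,l}$ are defined in~\eqref{kraus_op_att}--\eqref{kraus_op_add}.

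Concretely, I would first invoke Lemma~\ref{lemma_comp_bos} to write $\mathcal{E}_{\lambda,\nu}=\mathcal{N}_{g',\lambda'}$ with the attenuator parameters above. Applying Theorem~\ref{kraus_comp_thm} with $(g,\lambda)\mapsto(g',\lambda')$ then gives $\mathcal{E}_{\lambda,\nu}(\rho)=\sum_{k,m}M_{k,m}^{\text{(comp)}}(g',\lambda')\,\rho\,\bigl(M_{k,m}^{\text{(comp)}}(g',\lambda')\bigr)^\dagger$, and the definition~\eqref{kraus_op_att} of $M_{k,m}^{\text{(att)}}(\lambda,\nu)$ makes this coincide verbatim with~\eqref{krausform_att}. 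The same two-line argument with the amplifier and additive-noise parameter choices produces~\eqref{krausform_amp} and~\eqref{krausform_add}, and the Fock-state actions~\eqref{action_ni_att}--\eqref{action_ni_add} come out identically from~\eqref{action_comp_chan}, since $f_{n,i,l}$ depends only on the pair $(g,\lambda)$.

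The only point demanding any care --- and the closest thing to an obstacle --- is verifying that the effective parameters returned by Lemma~\ref{lemma_comp_bos} lie in the admissible domain $g'\ge 1$, $\lambda'\in[0,1]$ on which Theorem~\ref{kraus_comp_thm} was proved. This is a brief check: for the attenuator, $g'=1+(1-\lambda)\nu\ge 1$ since $(1-\lambda)\nu\ge 0$, while $\lambda'\le 1$ reduces to $\lambda\le 1+(1-\lambda)\nu$, i.e.\ $(1-\lambda)(1+\nu)\ge 0$, which holds because $\lambda\le 1$. For the amplifier, $g'=g+(g-1)\nu\ge g\ge 1$ and $\lambda'=g/(g+(g-1)\nu)\le 1$ because $(g-1)\nu\ge 0$. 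For the additive noise, $g'=1+\xi\ge 1$ and $\lambda'=1/(1+\xi)\in(0,1]$ trivially. With these domain conditions confirmed the substitution is legitimate and the theorem follows with no further computation.
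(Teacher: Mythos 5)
Your proposal is correct and is essentially identical to the paper's proof, which simply states that the theorem follows directly from Lemma~\ref{lemma_comp_bos} and Theorem~\ref{kraus_comp_thm}. Your extra verification that the effective parameters $(g',\lambda')$ lie in the admissible domain $g'\ge 1$, $\lambda'\in[0,1]$ is a sensible addition that the paper leaves implicit.
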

\begin{proof}
    Theorem~\ref{gen_master_eq_trick} is a direct consequence of Lemma~\ref{lemma_comp_bos} and Theorem~\ref{kraus_comp_thm}.
\end{proof}
We observe here that the Kraus representation in~\eqref{kraus_op_att} of the thermal attenuator is precisely the one obtained in~\cite{Die-Hard-2-PRA} via the ``master equation trick".

\section{Results}
In this section we expound our results. In subsection~\ref{sub_preliminary} first we prove preliminary results on the two-way capacities of generic quantum channels and second we apply them to the composition between pure amplifier channel and pure loss channel. In subsection~\ref{sub_res_twoway} we specialise these results to the case of piBGCs (thermal attenuator, thermal amplifier, and additive Gaussian noise) and we find the following two main results:
\begin{itemize}
    \item The parameter regions where the (EC) two-way capacities of piBGCs are strictly positive are precisely those where these channels are not entanglement breaking;
    \item We find a new lower bound on the secret-key and two-way quantum capacity of piBGCs, which constitutes a significant improvement with respect the state-of-the-art lower bounds~\cite{Ottaviani_new_lower,Pirandola2009,Pirandola18,Wang_Q2_amplifier,Noh2020} in many parameter regions.
\end{itemize}
\subsection{Preliminary results}\label{sub_preliminary}
Let us begin by introducing the concept of a \emph{generalized Choi state} of a quantum channel.
\begin{Def}
[Generalised Choi state of a quantum channel~\cite{Holevo-CJ,Holevo-CJ-arXiv}]\label{gen_choi_thm}
Let $\HH_{B}$ be a possibly infinite-dimensional Hilbert space. Let $\HH_A,\HH_{A'}$ be isomorphic (possibly infinite dimensional) Hilbert spaces. Let $\ket{\psi}_{A'A}$ be a pure state of the form
\bb\label{state_psi_a'a}
    \ket{\psi}_{A'A}=\sum_{i}\sqrt{\lambda_i}\ket{e_i}_{A}\otimes\ket{e_i}_{A'}\,,
\ee
where $(\lambda_i)_{i}$ are strictly positive numbers such that $\sum_i\lambda_i=1$, and $(\ket{e_i}_A)_{i}$ and $(\ket{e_i}_{A'})_{i}$ form an orthonormal basis of $\HH_A$ and $\HH_{A'}$, respectively. Let $\Phi_{A'\to B}$ be a quantum channel from $\HH_{A'}$ to $\HH_B$. Then, the state 
\bb
    C_{AB}\coloneqq\Id_{A}\otimes\Phi_{A'\to B}(\ketbra{\psi}_{AA'})
\ee
is said to be a generalised Choi state of $\Phi$.
\end{Def}
In finite dimensions, if the state $\ket{\psi}$ in \eqref{state_psi_a'a} is a \emph{maximally entangled state}, then the state $C_{AB}\coloneqq\Id_{A}\otimes\Phi_{A'\to B}(\ketbra{\psi}_{AA'})$ is simply referred to as the Choi state of the channel $\Phi$. Additionally, it is well known that a quantum channel is completely characterised by its Choi state~\cite{Sumeet_book}. The following lemma extends this result, showing that a quantum channel can also be completely characterised in terms of its \emph{generalised} Choi state.
\begin{lemma}\label{lemma_choichoi}
    Let $\Phi_{A'\to B}$ be a quantum channel and let be $C_{AB}$ a generalised Choi state. Then, it holds that
    \bb
    \Phi_{A'\to B}(X_{A'})&=\Tr_{A}\!\left[\left(X_A\otimes\mathbb{1}_{B}\right)\,(D_A\otimes\mathbb{1}_B)\,C_{AB}^{t_A}\,(D_A\otimes\mathbb{1}_B)\right]   \quad\text{for all linear operators }X_{A'}\,,
    \ee
    where, by using the notation introduced in Definition~\ref{gen_choi_thm},  $D_A\coloneqq\sum_{i}\lambda_i^{-1/2}\ketbraa{e_i}{e_i}_{A}$ and $t_A$ is the partial transpose on $A$, i.e.~$(\ketbraa{e_i}{e_j}_{A})^{t_A}=\ketbraa{e_j}{e_i}_{A}$ for all $i,j$.

\end{lemma}
\begin{proof}
    By exploiting that $(\ket{e_i}_A)_{i}$ are orthonormal, we have that
    \bb\label{generalised_choi_eq_def}
    \Phi_{A'\to B}\!\left(\ketbraa{e_i}{e_j}_{A'}\right) = \frac{1}{\sqrt{\lambda_i \lambda_j}}\Tr_{A}\!\left[\left(\ketbraa{e_j}{e_i}_{A}\otimes\mathbb{1}_{B}\right)\,C_{AB}\right],\quad\forall\,i,j\,.
    \ee
    By writing 
    \bb
        X_{A'}=\sum_{ij}\bra{e_i}X\ket{e_j} \ketbraa{e_i}{e_j}_{A'}
    \ee
    and by exploiting the linearity of $\Phi_{A'\to B}$, it thus follows that
    \bb
    \Phi_{A'\to B}(X_{A'})&=\Tr_{A}\!\left[\left((D_AX_{A} D_A)^{t_A}\otimes\mathbb{1}_{B}\right)\,C_{AB}\right]\\
    &=\Tr_{A}\!\left[\left(D_AX_{A} D_A\otimes\mathbb{1}_{B}\right)\,C_{AB}^{t_A}\right]\\
    &=\Tr_{A}\!\left[\left(X_A\otimes\mathbb{1}_{B}\right)\,(D_A\otimes\mathbb{1}_B)\,C_{AB}^{t_A}\,(D_A\otimes\mathbb{1}_B)\right]\,.
    \ee
\end{proof}
In finite dimensions, it is well known that a quantum channel is entanglement breaking if and only if its Choi state is separable. The following lemma generalises this result, demonstrating that a quantum channel is entanglement breaking if and only if its \emph{generalised} Choi state is separable.
\begin{lemma}\label{lemma_eb_gen_choi}
    Let $\Phi$ be a quantum channel and let $C_{AB}$ be a generalised Choi state of $\Phi$. Then, $\Phi$ is entanglement breaking if and only if $C_{AB}$ is separable.
\end{lemma}
\begin{proof}
    By definition, if $\Phi$ is entanglement breaking, then $\Id_{R}\otimes\Phi_{A'\to B}(\rho_{RA'})$ is separable for all bipartite states $\rho_{RA'}$. In particular, any generalised Choi state of an entanglement breaking channel is separable.

    Conversely, let us assume that the generalised Choi state $C_{AB}$ is separable, that is there exists a probability distribution $p_x$ and states $(\rho_{A}^{(x)})_x,(\sigma_{B}^{(x)})_x$ such that
    \bb
        C_{AB}=\sum_x p_x\, \rho_{A}^{(x)}\otimes \sigma_{B}^{(x)}\,.
    \ee
    Let us show that $\Phi$ is entanglement breaking. To this end let us consider an arbitrary bipartite state $\rho_{RA'}$ and let us show that $\Id_{R}\otimes\Phi_{A'\to B}(\rho_{RA'})$ is separable. By exploiting Lemma~\ref{lemma_choichoi}, it holds that
    \bb\label{conti_choi_gen}
        \Id_{R}\otimes\Phi_{A'\to B}(\rho_{RA'})&=\Tr_{A}\!\left[\left(\rho_{RA}\otimes\mathbb{1}_{B}\right)\,\left(\mathbb{1}_R\otimes(D_A\otimes\mathbb{1}_B)\,C_{AB}^{t_A}\,(D_A\otimes\mathbb{1}_B)\right)\right]\\
        &= \sum_x p_x\, \Tr_{A}\!\left[\rho_{RA}\,\left(\mathbb{1}_R\otimes D_A\,(\rho_{A}^{(x)})^{t_A}\,D_A\right)\right] \otimes \sigma_{B}^{(x)}\,.
    \ee 
    Since the operator $D_A\,(\rho_{A}^{(x)})^{t_A}\,D_A$ is positive semidefinite, we can write its spectral decomposition as
    \bb
        D_A\,(\rho_{A}^{(x)})^{t_A}\,D_A=\sum_{i} \eta^{(x)}_{i}\ketbra{\phi^{(x)}_i}
    \ee
    with the eigenvalues $(\eta^{(x)}_{i})_i$ being positive. This implies that the operator $ \Tr_{A}\!\left[\rho_{RA}\,\left(\mathbb{1}_R\otimes D_A\,(\rho_{A}^{(x)})^{t_A}\,D_A\right)\right]$ is positive semidefinite, as it can be written as
    \bb
        \Tr_{A}\!\left[\rho_{RA}\,\left(\mathbb{1}_R\otimes D_A\,(\rho_{A}^{(x)})^{t_A}\,D_A\right)\right]= \sum_i \eta^{(x)}_{i} \bra{\phi^{(x)}_i}_A \rho_{RA} \ket{\phi^{(x)}_i}_A\,.
    \ee 
   and $\bra{\phi^{(x)}_i}_A \rho_{RA} \ket{\phi^{(x)}_i}_A$ is positive semidefinite. In particular, the trace of the operator $\Tr_{A}\!\left[\rho_{RA}\,\left(\mathbb{1}_R\otimes D_A\,(\rho_{A}^{(x)})^{t_A}\,D_A\right)\right]$ vanishes if and only if it is the zero operator. Consequently, \eqref{conti_choi_gen} implies that
    \bb
        \Id_{R}\otimes\Phi_{A'\to B}(\rho_{RA'})&= \sum_{x:\,q_x\ne 0} q_x\, \omega_{R}^{(x)} \otimes \sigma_{B}^{(x)}\,,
    \ee
    where we defined
    \bb
        q_x&\coloneqq p_x\Tr_{RA}\!\left[\rho_{RA}\,\left(\mathbb{1}_R\otimes D_A\,(\rho_{A}^{(x)})^{t_A}\,D_A\right)\right]\,,\\
        \omega_{R}^{(x)} &\coloneqq \frac{\Tr_{A}\!\left[\rho_{RA}\,\left(\mathbb{1}_R\otimes D_A\,(\rho_{A}^{(x)})^{t_A}\,D_A\right)\right]}{\Tr_{RA}\!\left[\rho_{RA}\,\left(\mathbb{1}_R\otimes D_A\,(\rho_{A}^{(x)})^{t_A}\,D_A\right)\right]}\,.
    \ee
    The fact that the operator $\Tr_{A}\!\left[\rho_{RA}\,\left(\mathbb{1}_R\otimes D_A\,(\rho_{A}^{(x)})^{t_A}\,D_A\right)\right]$ is positive semidefinite implies that $(q_x)_x$ is a probability distribution and that $\omega_{R}^{(x)}$ is a quantum state. Hence, we conclude that $\Id_{R}\otimes\Phi_{A'\to B}(\rho_{RA'})$ is separable.
\end{proof}

The following theorem establishes that the energy-constrained two-way capacities of a single-mode Gaussian channel are strictly positive if and only if the channel is not entanglement breaking.
\begin{thm} 
    Let $\Phi$ be a single-mode Gaussian channel and let $N_s>0$. The energy-constrained two-way quantum capacity $Q_2(\Phi,N_s)$ and secret-key capacity $K(\Phi,N_s)$ are strictly positive if and only if $\Phi$ is not entanglement breaking. 
\end{thm}
\begin{proof}[Proof] 
    Since any entanglement-breaking channel has vanishing two-way capacities~\cite{Sumeet_book}, it suffices to consider the case where $\Phi$ is not entanglement breaking. Assume that Alice prepares many copies of the two-mode squeezed vacuum state $\ket{\Psi_{N_s}}_{AA'}$ with mean local photon number $N_s$ and sends the systems $A'$ through the Gaussian channel $\Phi_{A'\to B}$. Now Alice and Bob share many copies of the two-mode Gaussian state $C_{N_s}\coloneqq \big(\Id_{A}\otimes\,\Phi_{A'\to B}\big)(\ketbra{\Psi_{N_s}}_{AA'})$, which is a generalised Choi state of $\Phi$~\cite{Holevo-CJ,BUCCO}. As such, $C_{N_s}$ is entangled, as established by Lemma~\ref{lemma_eb_gen_choi}. By exploiting the fact that a two-mode Gaussian state is entangled if and only if it is not PPT~\cite{PeresPPT,Simon00, BUCCO}, it thus follows that $C_{N_s}$ is not PPT. Since any two-mode Gaussian state that it is not PPT is also \emph{distillable}~\cite{Giedke01} --- i.e.~it can be converted into ebits with a strictly positive rate --- we conclude that $K(\Phi,N_s)\ge Q_2(\Phi,N_s)>0$.  
\end{proof}
In the following, we provide an alternative, more explicit proof of the above result. We start by proving the following lemma.

\begin{lemma}\label{lemma_Q2_positive}
    Let $\Phi:\mathfrak{S}(L^2(\mathbb{R}))\to\mathfrak{S}(L^2(\mathbb{R}))$ be a single-mode Gaussian quantum channel and let $N_s>0$. Suppose that $f\left( \Id\otimes\Phi(\ketbra{\Psi_{N_s}})\right)<0$, where $\ket{\Psi_{N_s}}$ is the two-mode squeezed vacuum state defined in~\eqref{two_mode_sq} and $f$ is the function defined in Lemma~\ref{ConditionPPT_cov}.
    The energy-constrained two-way capacities $Q_2(\Phi,N_s)$ and $K(\Phi,N_s)$ are strictly positive. In particular, the (unconstrained) two-way capacities $Q_2(\Phi)$ and $K(\Phi)$ are strictly positive.
\end{lemma}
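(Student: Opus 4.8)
The plan is to combine the capacity lower bound~\eqref{link_D2_D} with the Gaussian distillability criterion of~\cite{Giedke01}, using the two-mode squeezed vacuum as the input state. First I would define $\sigma_{AB}\coloneqq\Id_A\otimes\Phi(\ketbra{\Psi_{N_s}}_{AA'})$ and note that, since $\Phi$ is a single-mode Gaussian channel and the state $\ket{\Psi_{N_s}}$ of~\eqref{two_mode_sq} is Gaussian, $\sigma_{AB}$ is a two-mode Gaussian state, so that Lemma~\ref{ConditionPPT_cov} is applicable to it.

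Next, the hypothesis $f(\sigma_{AB})<0$ together with Lemma~\ref{ConditionPPT_cov} implies that $\sigma_{AB}$ is entangled; since for two-mode Gaussian states separability coincides with the PPT property, this is equivalent to saying that $\sigma_{AB}$ is not PPT. At this point I would invoke the distillability criterion of~\cite{Giedke01}, which states that a two-mode Gaussian state is distillable if and only if it is not PPT. Hence $\sigma_{AB}$ is distillable, i.e.\ $E_d(\sigma_{AB})>0$.

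It then remains to transfer this positivity to the capacities. The two-mode squeezed vacuum $\ket{\Psi_{N_s}}_{AA'}$ has mean photon number $N_s$ on the transmitted half, i.e.\ $\Tr[a^\dagger a\,\ketbra{\Psi_{N_s}}_{AA'}]=N_s$ with $a$ the annihilation operator on $A'$, so it meets the energy constraint and~\eqref{link_D2_D} gives $Q_2(\Phi,N_s)\ge E_d(\sigma_{AB})>0$. The inequality $Q_2(\Phi,N_s)\le K(\Phi,N_s)$ from~\eqref{relation_2waycapEC} then yields $K(\Phi,N_s)>0$ as well, and since every energy-constrained capacity is bounded above by its unconstrained counterpart we obtain at once $Q_2(\Phi)\ge Q_2(\Phi,N_s)>0$ and $K(\Phi)\ge K(\Phi,N_s)>0$.

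The argument is a straightforward chaining of results already available in the excerpt, so I do not anticipate any genuine analytic difficulty; the only substantive external ingredient is the Gaussian distillability theorem of~\cite{Giedke01} (equivalently, the absence of bound entanglement among NPT two-mode Gaussian states). The sole point needing a little care is that the hypothesis is formulated through the covariance-matrix function $f$ rather than directly through the PPT property, but bridging the two is exactly the content of Lemma~\ref{ConditionPPT_cov}, so the translation is immediate.
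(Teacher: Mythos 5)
Your proposal is correct and follows essentially the same route as the paper's proof: Lemma~\ref{ConditionPPT_cov} to conclude entanglement from $f<0$, the distillability of NPT two-mode Gaussian states from~\cite{Giedke01}, then~\eqref{link_D2_D} with the TMSV input and~\eqref{relation_2waycapEC} to lift positivity to $Q_2(\Phi,N_s)$, $K(\Phi,N_s)$, and the unconstrained capacities. No gaps.
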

\begin{proof} 
Since the state $\Id \otimes\Phi(\ketbra{\Psi_{N_s}})$ is a two-mode Gaussian state, we can apply Lemma~\ref{ConditionPPT_cov} to conclude that it is entangled. Consequently, since any two-mode Gaussian entangled state is distillable~\cite{Giedke01}, then  $\Id\otimes\Phi(\ketbra{\Psi_{N_s}})$ is distillable. 
Hence, by exploiting~\eqref{link_D2_D}, we deduce that $Q_2(\Phi,N_s)>0$. In addition,~\eqref{relation_2waycapEC} implies that $K(\Phi,N_s)>0$. Finally, since the energy-constrained capacities are lower bounds on the corresponding unconstrained capacities, we conclude that the unconstrained two-way capacities of $\Phi$ are strictly positive.
\end{proof}

The forthcoming Theorem~\ref{th1} determines the parameter region of $g\ge1$ and $\lambda\in[0,1]$ where the composition $\mathcal{N}_{g,\lambda}\coloneqq\Phi_{g,0}\circ \mathcal{E}_{\lambda,0}$ between pure amplifier channel $\Phi_{g,0}$ and pure loss channel $\mathcal{E}_{\lambda,0}$ has strictly positive (EC) two-way capacities. In particular, we show that the (EC) two-way capacities of $\mathcal{N}_{g,\lambda}$ are strictly positive if and only if $\mathcal{N}_{g,\lambda}$ is not entanglement breaking. 
\begin{thm}\label{th1}
    Let $\lambda\in[0,1]$, $g\ge1$, and $N_s>0$. The energy-constrained two-way capacities $Q_2(\mathcal{N}_{g,\lambda},N_s)$ and $K(\mathcal{N}_{g,\lambda},N_s)$ are strictly positive if and only if $(1-\lambda)g< 1$, i.e.~if and only if $\mathcal{N}_{g,\lambda}$ is not entanglement breaking. In particular, the same holds for the unconstrained two-way capacities.
\end{thm}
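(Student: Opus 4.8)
The plan is to split the proof into the two implications and hinge both on the entanglement-breaking characterisation of Lemma~\ref{lemma_eb_comp}, which says that $\mathcal{N}_{g,\lambda}$ is entanglement breaking precisely when $(1-\lambda)g\ge1$. The hard direction is showing that \emph{not} entanglement breaking forces strictly positive capacities; the converse is short.

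For the forward implication I assume $(1-\lambda)g<1$ and invoke Lemma~\ref{lemma_Q2_positive}, which reduces everything to checking $f\!\left(\Id\otimes\mathcal{N}_{g,\lambda}(\ketbra{\Psi_{N_s}})\right)<0$ for every $N_s>0$. First I would assemble the covariance matrix of this output state. Because the channel touches only the $A'$ mode, the $A$-block of~\eqref{moments_squeezed} survives untouched, while the $B$-block and the cross-correlations get dressed by the single-mode rule~\eqref{relation_comp}, i.e.\ by $X=\sqrt{g\lambda}\,\mathbb{1}_2$ with added noise $(2g-1-g\lambda)\mathbb{1}_2$. Writing $t\coloneqq 2N_s+1$ and $\mu\coloneqq 2g-1-g\lambda$, this produces the blocks $V_A=t\,\mathbb{1}_2$, $V_B=(g\lambda t+\mu)\mathbb{1}_2$, and $V_{AB}=2\sqrt{g\lambda\,N_s(N_s+1)}\,\sigma_z$.

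Next I would feed these into the function $f$ of Lemma~\ref{ConditionPPT_cov}. The $4\times4$ determinant factorises because the $x$- and $p$-sectors decouple into two $2\times2$ blocks of equal determinant $ab-c^2=\mu t+g\lambda$, so $\det V=(\mu t+g\lambda)^2$; together with $\det V_{AB}=-g\lambda(t^2-1)$, $\det V_A=t^2$, and $\det V_B=(g\lambda t+\mu)^2$, the terms linear in $t$ cancel and the whole expression collapses to $f=\bigl[\mu^2-(g\lambda+1)^2\bigr](t^2-1)$. Since $N_s>0$ forces $t>1$, the sign of $f$ is that of $\mu^2-(g\lambda+1)^2$; using $\mu=2g-1-g\lambda\ge g-1\ge0$, this is negative exactly when $\mu<g\lambda+1$, which rearranges to $(1-\lambda)g<1$. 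Hence $f<0$, and Lemma~\ref{lemma_Q2_positive} yields $Q_2(\mathcal{N}_{g,\lambda},N_s)>0$ and $K(\mathcal{N}_{g,\lambda},N_s)>0$.

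For the converse I assume $(1-\lambda)g\ge1$, so $\mathcal{N}_{g,\lambda}$ is entanglement breaking by Lemma~\ref{lemma_eb_comp}; entanglement-breaking channels have vanishing two-way capacities, since the secret-key capacity is bounded above by the channel's squashed entanglement, which is zero for such channels, and $Q_2\le K$. The unconstrained statement follows because each energy-constrained capacity lower-bounds its unconstrained counterpart (so positivity propagates upward), while in the entanglement-breaking regime the unconstrained capacities vanish by the same argument. I expect the only genuinely computational step to be the $4\times4$ determinant evaluation of $f$, where the main thing to get right is applying~\eqref{relation_comp} to the correct mode and tracking the blocks carefully; the payoff is the clean factorisation $f=[\mu^2-(g\lambda+1)^2](t^2-1)$, whose sign-change locus coincides exactly with the entanglement-breaking threshold $(1-\lambda)g=1$.
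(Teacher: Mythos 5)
Your proposal is correct and follows essentially the same route as the paper's proof: the converse via Lemma~\ref{lemma_eb_comp} plus the vanishing of two-way capacities for entanglement-breaking channels, and the forward direction via Lemma~\ref{lemma_Q2_positive} applied to $\Id\otimes\mathcal{N}_{g,\lambda}(\ketbra{\Psi_{N_s}})$, whose covariance matrix and value of $f$ you compute correctly (your factorisation $f=[\mu^2-(g\lambda+1)^2](t^2-1)$ agrees with the paper's $-16N_s(1+N_s)g(1-(1-\lambda)g)$). The only cosmetic difference is that you assemble the covariance matrix from the channel's $(X,Y)$ action on second moments rather than from the explicit two-environment dilation used in the paper.
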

\begin{proof}
Suppose that $(1-\lambda)g\ge 1$. Then Lemma~\ref{lemma_eb_comp} implies that $\mathcal{N}_{g,\lambda}$ is entanglement breaking and hence~\cite{Goodenough16} its two way-capacities vanish.

Now, suppose that $(1-\lambda)g< 1$. Let us check that the hypothesis of Lemma~\ref{lemma_Q2_positive} is fulfilled, i.e.~we need to check that $f\left( \Id\otimes\mathcal{N}_{g,\lambda}(\ketbra{\Psi_{N_s}})\right)<0$, where $\ket{\Psi_{N_s}}$ is the two-mode squeezed vacuum state defined in~\eqref{two_mode_sq} and $f$ is the function defined in Lemma~\ref{ConditionPPT_cov}. Let us calculate the covariance matrix of the state 
\bb 
&\Id_{A}\otimes\mathcal{N}_{g,\lambda}(\ketbra{\Psi_{N_s}}_{AA'}) \\&=  \Tr_{E_1E_2}\left[\left(\mathbb{1}_A\otimes U_g^{A'E_1}\otimes U_\lambda^{A'E_2}\right)\, \ketbra{\Psi_{N_s}}_{AA'}\otimes\ketbra{0}_{E_1}\otimes\ketbra{0}_{E_2}\text{} \left(\mathbb{1}_A\otimes U_g^{A'E_1}\otimes U_\lambda^{A'E_2}\right)^\dagger\right]\,	
\ee
with respect the ordering $(A,A',E_1,E_2)$.
By using~\eqref{relation_S} and~\eqref{relation_S_amp}, one can show that the covariance matrix of 
\bb
\left(\mathbb{1}_A\otimes U_g^{A'E_1}\otimes U_\lambda^{A'E_2}\right)\, \ketbra{\Psi_{N_s}}_{AA'}\otimes\ketbra{0}_{E_1}\otimes\ketbra{0}_{E_2}\text{} \left(\mathbb{1}_A\otimes U_g^{A'E_1}\otimes U_\lambda^{A'E_2}\right)^\dagger
\ee
with respect the ordering $(A,A',E_1,E_2)$ is
\bb\label{cov_AA'EE_comp}
\left(\mathbb{1}_2\oplus S_g\oplus \mathbb{1}_2\right) 
\left(\mathbb{1}_2\oplus \bar{S}_\lambda\right) \left(V(\ketbra{\Psi_{N_s}}_{AA'})\oplus V(\ketbra{0})\oplus V(\ketbra{0})\right) \left(\mathbb{1}_2\oplus \bar{S}_\lambda^{\intercal}\right)\left(\mathbb{1}_2\oplus S_g^{\intercal}\oplus \mathbb{1}_2\right)\,,
\ee
where
\bb
&\bar{S}_\lambda \coloneqq \left(\begin{matrix} \sqrt{\lambda}\,\mathbb{1}_2 & 0_{2\times 2} & \sqrt{1-\lambda}\,\mathbb{1}_2 \\ 0_{2\times 2} & \mathbb{1}_{2} & 0_{2\times 2} \\
-\sqrt{1-\lambda}\,\mathbb{1}_2 & 0_{2\times 2} & \sqrt{\lambda}\,\mathbb{1}_2 \end{matrix}\right) 
\ee 
and $0_{2\times 2}\coloneqq\left(\begin{matrix}0&0\\0&0\end{matrix}\right)$. Hence, since $V\left( \Id_{A}\otimes\mathcal{N}_{g,\lambda}(\ketbra{\Psi_{N_s}}_{AA'}) \right)$ is the $4\times4$ upper-left block of the covariance matrix in~\eqref{cov_AA'EE_comp}, one can show that
\bb\label{cov_choi_comp}
&V\left( \Id_{A}\otimes\mathcal{N}_{g,\lambda}(\ketbra{\Psi_{N_s}}_{AA'}) \right) = \left(\begin{matrix} (2N_s+1)\mathbb{1}_2 & 2\sqrt{g\lambda N_s(N_s+1)}\sigma_z \\ 2\sqrt{g\lambda N_s(N_s+1)}\sigma_z  &[2g\left( 1+\lambda N_s\right)-1]\mathbb{1}_2\end{matrix}\right)\,,
\ee 
where we used~\eqref{moments_thermal} and~\eqref{moments_squeezed}.
Consequently, since 
\bb
&f\left( \Id_{A}\otimes\mathcal{N}_{g,\lambda}(\ketbra{\Psi_{N_s}}_{AA'} \right) = -16 N_s (1 + N_s) g \left(1 - (1-\lambda)g \right)\,,
\ee 
and since $(1-\lambda)g < 1$, we have that $f\left( \Id_{A}\otimes\mathcal{N}_{g,\lambda}(\ketbra{\Psi_{N_s}}_{AA'} \right)<0$, i.e.~the hypothesis of Lemma~\ref{lemma_Q2_positive} is fulfilled. Hence, Lemma~\ref{lemma_Q2_positive} implies that the energy-constrained two-way capacities of $\mathcal{N}_{g,\lambda}$ are strictly positive. {This concludes the proof of Theorem~\ref{th1}. In Remark~\ref{remark_alternative_proof} we will provide an alternative proof.}
\end{proof}

In the forthcoming Theorem~\ref{th_lower_Q2} we obtain a lower bound on the two-way capacities of a quantum channel $\Phi:\mathfrak{S}(L^2(\mathbb{R}))\to\mathfrak{S}(L^2(\mathbb{R}))$ by introducing a protocol to distribute ebits though $\Phi$. The idea of such a protocol is the following. First, Alice prepares states of the form
\bb\label{initial_state}
    \ket{\Psi_{M,c}}_{AA'}\coloneqq c\ket{0}_A\ket{0}_{A'}+\sqrt{1-c^2}\ket{M}_A\ket{M}_{A'}\,,
\ee
where $M\in\N^+$ and $c\in(0,1)$. Then, she sends the halves $A'$ to Bob trough $\Phi$, who makes a measurement on each half in order to project his half onto the span of $\{\ket{0},\ket{M}\}$. Then, Alice and Bob run $k$ times the P1-or-P2 recurrence protocol~\cite{p1orp2} on the resulting states. After this, Alice and Bob run the improved hashing protocol introduced in~\cite{Improvement-Hashing} in order to generate ebits. Let $R(\Phi,M,c,k)$ be the rate of distributed ebits of this protocol. A lower bound on $Q_2(\Phi)$ (and hence on $K(\Phi)$) can be obtained by maximising $R(\Phi,M,c,k)$ over $M\in\N^+$, $c\in(0,1)$, and $k\in\N$.

\begin{thm}\label{th_lower_Q2}
Let $\Phi:\mathfrak{S}(L^2(\mathbb{R}))\to\mathfrak{S}(L^2(\mathbb{R}))$ be a quantum channel which maps a single-mode system $A'$ into another single-mode system $B$. The EC two-way capacities $Q_2(\Phi,N_s)$ and $K(\Phi,N_s)$ satisfy the following lower bound
\bb\label{lowQ2_genEC}
	K(\Phi,N_s)&\ge Q_2(\Phi,N_s) \ge \sup_{\substack{c\in(0,1),\, M\in\N^+,\, k\in\N\\ (1-c^2)M\le N_s}} R(\Phi,M,c,k)\,,
\ee
and, in particular, the unconstrained two-way capacities satisfy
\bb\label{lowQ2_gen}
	K(\Phi)&\ge Q_2(\Phi) \ge \sup_{c\in(0,1),\, M\in\N^+,\, k\in\N} R(\Phi,M,c,k)\,,
\ee
where 
\begin{equation}\label{rate_gen}
	R(\Phi,M,c,k)\coloneqq \mathcal{C}(\Phi,c,M)  \frac{\prod_{t=0}^{k-1}P_t}{2^{k}}\mathcal{I}(  \alpha^{(k)}_{00},\alpha^{(k)}_{01},\alpha^{(k)}_{10},\alpha^{(k)}_{11})\,.
\end{equation} 
Fixed $c\in(0,1)$, $M\in\N^+$, and $k\in\N$, the quantities present in~\eqref{rate_gen} are defined as follows. $\mathcal{C}(\Phi,c,M)$ is defined as
\bb\label{formula_gen_norm}
&\mathcal{C}(\Phi,c,M)  \coloneqq   \Tr\left[\mathbb{1}_A\otimes\Pi_M\, \Id_{A}\otimes\Phi(\ketbra{\Psi_{M,c}}_{AA'}) \right]\,,
\ee
where $\Pi_M\coloneqq \ketbra{0}_{B}+\ketbra{M}_{B}$ and the state $\ket{\Psi_{M,c}}_{AA'}$ is defined in~\eqref{initial_state}.
Let us define for all $m,n\in\{0,1\}$ the coefficients $\alpha_{mn}^{(0)}$ as
    \bb \label{formula_gen_alpha0}
        &\alpha_{mn}^{(0)}\coloneqq \frac{\bra{\psi^{(M)}_{mn}}_{AB}\mathbb{1}_A\otimes\Pi_M\, \Id_{A}\otimes\Phi(\ketbra{\Psi_{M,c}})\, \mathbb{1}_A\otimes\Pi_M\ket{\psi^{(M)}_{mn}}_{AB}}{\mathcal{C}(\Phi,c,M)}\,,
    \ee    
where $\ket{\psi^{(M)}_{mn}}_{AB}$ is defined as 
\bb\label{Bell_states_delta}
\ket{\psi^{(M)}_{mn}}_{AB}\coloneqq \frac{1}{\sqrt{2}}\sum_{j=0}^1 (-1)^{mj}\ket{jM}_A\otimes\ket{(j\oplus n)M}_{B}\,.
\ee 
For all $t\in\{0,1,\ldots,k-1\}$ and all $m,n\in\{0,1\}$ the coefficients $\alpha_{mn}^{(t+1)}$ and $P_t$ are defined in the following way:
\begin{itemize}
    \item If $\alpha^{(t)}_{10}<\alpha^{(t)}_{01}$, then
    \bb\label{coeff_p1}
            \alpha^{(t+1)}_{mn}\coloneqq \frac{1}{P_t}\sum_{\substack{m_1,m_2=0\\m_1\oplus m_2=m}}^{1}\alpha_{m_1n}^{(t)}\alpha_{m_2n}^{(t)}\,,
    \ee
    where
    \bb\label{probk_p1}
        P_t \coloneqq \sum_{n=0}^{1}\left(\sum_{m=0}^{1}\alpha^{(t)}_{m n}\right)^2\,.
    \ee
    \item If $\alpha^{(t)}_{10}\ge \alpha^{(t)}_{01}$, then
    \bb\label{coeff_p2}
    \alpha^{(t+1)}_{mn}\coloneqq \frac{1}{P_t}\sum_{\substack{n_1,n_2=0\\n_1\oplus n_2=n}}^{1}\alpha_{mn_1}^{(t)}\alpha_{mn_2}^{(t)}\,,
    \ee    
    where
    \bb\label{probk_p2}
        P_t \coloneqq \sum_{m=0}^{1}\left(\sum_{n=0}^{1}\alpha^{(t)}_{m n}\right)^2\,.
    \ee
\end{itemize}
For all $\alpha_{00},\alpha_{01},\alpha_{10},\alpha_{11}\ge0$ with $\alpha_{00}+\alpha_{01}+\alpha_{10}+\alpha_{11}=1$, the quantity $\mathcal{I}(  \alpha_{00},\alpha_{01},\alpha_{10},\alpha_{11})$ is defined as 
\bb\label{yield_final_protocol}
&\mathcal{I}(  \alpha_{00},\alpha_{01},\alpha_{10},\alpha_{11})\coloneqq \max\left(\,Y(  \alpha_{00},\alpha_{01},\alpha_{10},\alpha_{11}),  Y(  \alpha_{00},\alpha_{10},\alpha_{01},\alpha_{11} ),\, Y(  \alpha_{01},\alpha_{00},\alpha_{10},\alpha_{11})\right)\,,
\ee
where the function $Y$ is defined in~\eqref{improv_hashing}.
\end{thm}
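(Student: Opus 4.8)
The plan is to prove this bound constructively by analysing the six-step entanglement distribution protocol, tracking both the state and the surviving fraction of pairs through each step, and showing that the end-to-end ebit yield per channel use equals $R(\Phi,M,c,k)$ of \eqref{rate_gen}. The backbone is \eqref{link_D2_D}: since the reduced state of $\ket{\Psi_{M,c}}_{AA'}$ on $A'$ is $c^2\ketbra{0}+(1-c^2)\ketbra{M}$, its mean photon number is $(1-c^2)M$, so whenever $(1-c^2)M\le N_s$ the state $\Id_A\otimes\Phi(\ketbra{\Psi_{M,c}})$ is an admissible resource and $Q_2(\Phi,N_s)\ge E_d(\Id_A\otimes\Phi(\ketbra{\Psi_{M,c}}))$. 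It then suffices to exhibit a concrete LOCC distillation protocol on this state whose yield is $R(\Phi,M,c,k)$; the inequality $K\ge Q_2$ from \eqref{relation_2waycapEC}, together with the fact that unconstrained capacities dominate their energy-constrained versions, gives the remaining claims.

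First I would handle S1--S3. Bob's binary measurement $\{\Pi_M,\mathbb 1-\Pi_M\}$ projects his mode onto $\mathrm{span}\{\ket 0,\ket M\}$; the probability of the kept outcome is exactly $\mathcal C(\Phi,c,M)$ of \eqref{formula_gen_norm}, so a fraction $\mathcal C(\Phi,c,M)$ of the pairs survives. After relabelling $\ket M\mapsto\ket 1$ on both sides, each kept pair is a two-qubit state whose overlaps with the rotated Bell states $\ket{\psi^{(M)}_{mn}}$ of \eqref{Bell_states_delta} are the $\alpha^{(0)}_{mn}$ of \eqref{formula_gen_alpha0}. The Pauli twirling of S3 maps this state to the Bell-diagonal state $\sum_{mn}\alpha^{(0)}_{mn}\ketbra{\psi_{mn}}$; the point I would stress is that twirling is LOCC (local Paulis plus shared randomness) and leaves the diagonal coefficients untouched, so nothing relevant to the subsequent distillation is lost in passing to the Bell-diagonal normal form.

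The technical core is S4. I would show that one round of the P1-or-P2 sub-routine~\cite{p1orp2}, applied to two Bell-diagonal states with coefficients $\{\alpha^{(t)}_{mn}\}$, produces a single Bell-diagonal state with coefficients $\{\alpha^{(t+1)}_{mn}\}$, succeeding with probability $P_t$. Writing the two input Bell labels as $(m_1,n_1)$ and $(m_2,n_2)$, the bilateral CNOT maps the phase and bit indices additively modulo two; measuring the target pair in the computational basis and post-selecting on equal outcomes enforces one parity constraint and renormalises, yielding precisely the convolution in \eqref{coeff_p1} (respectively \eqref{coeff_p2}) together with the normalisation $P_t$ of \eqref{probk_p1} (respectively \eqref{probk_p2}); the two branches correspond to the $X$- versus $Z$-error-reducing choice dictated by the sign of $\alpha^{(t)}_{10}-\alpha^{(t)}_{01}$. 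Since each round consumes two pairs to make one and keeps a fraction $P_t$, after $k$ rounds the surviving fraction relative to the post-S2 pairs is $\prod_{t=0}^{k-1}P_t/2^{k}$. I expect this Bell-algebra bookkeeping --- correctly orienting the CNOT controls and targets and the Hadamard conjugations so that the two branches realise \eqref{coeff_p1} and \eqref{coeff_p2} --- to be the main obstacle; everything else is either cited or routine.

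Finally, for S5--S6 I would note that the admissible single-qubit unitaries (the $y$-rotation by $\pi/2$, the Hadamard, and the identity) act on Bell-diagonal coefficients by permuting the labels $\{00,01,10,11\}$, so by choosing the best one we may feed any of the three orderings appearing in \eqref{yield_final_protocol} into the improved hashing yield $Y$ of \eqref{improv_hashing}; the improved hashing protocol~\cite{Improvement-Hashing} then distils at rate $\mathcal I(\alpha^{(k)}_{00},\alpha^{(k)}_{01},\alpha^{(k)}_{10},\alpha^{(k)}_{11})$ per surviving pair. Multiplying the three survival and yield factors reproduces exactly $R(\Phi,M,c,k)$, and optimising over $c,M,k$ subject to $(1-c^2)M\le N_s$ gives \eqref{lowQ2_genEC}; dropping the constraint gives \eqref{lowQ2_gen}.
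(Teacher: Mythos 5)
Your proposal is correct and follows essentially the same route as the paper's own proof: reduce to $E_d$ via \eqref{link_D2_D} using that the $A'$-marginal of $\ket{\Psi_{M,c}}$ has mean photon number $(1-c^2)M$, track the survival probability $\mathcal C(\Phi,c,M)$ through S2, the Bell-diagonal coefficients through the twirl and the $k$ rounds of P1-or-P2 (factor $\prod_t P_t/2^k$), and conclude with the label-permuting unitaries and the improved hashing yield $\mathcal I$. The Bell-algebra bookkeeping you flag as the main obstacle is exactly what the paper carries out via \eqref{formula_bi_cnot} and \eqref{formula_bi_hadamard}; no step in your outline would fail.
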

\begin{proof} We introduce a protocol to distribute ebits through the channel $\Phi$, which depends on three parameters: $M\in\N^+$, $c\in(0,1)$,  $k\in\N$.
Our lower bound on $Q_2(\Phi,N_s)$ in~\eqref{lowQ2_genEC} can be obtained by optimising over these parameters the rate of ebits of such a protocol. The lower bound on the other EC two-way capacities follows from~\eqref{relation_2waycapEC}.
The steps of the protocol are the following.

\textbf{-Step 1}: Alice prepares $n_0$ copies of the state $\ket{\Psi_{M,c}}_{AA'}$ in~\eqref{initial_state} and she sends the halves $A'$ to Bob through the channel $\Phi$. Hence, Alice and Bob share $n_0$ copies of the state $\Id_{A}\otimes\Phi(\ketbra{\Psi_{M,c}})$.

\textbf{-Step 2}: Bob performs the local POVM $\{\Pi_M, \mathbb{1}-\Pi_M\}$ on each pair $\Id_{A}\otimes\Phi(\ketbra{\Psi_{M,c}})$, where $\Pi_M\coloneqq \ketbra{0}+\ketbra{M}$. If Bob finds the outcome which corresponds to $\Pi_M$, then Alice and Bob keep the pair, otherwise they discard it. They keep the pair with probability
    \bb
            \mathcal{C}(\Phi,c,M)&\coloneqq \Tr\left[\mathbb{1}_A\otimes\Pi_M\, \Id_{A}\otimes\Phi(\ketbra{\Psi_{M,c}}) \right]\,.
    \ee
At this point, Alice and Bob shares $\approx n_0\,\mathcal{C}(\Phi,c,M)$ pairs.
    Each of these pairs are in the state $\rho'$ given by 
\bb\label{rho_primo_distil}
        \rho'&= \frac{\mathbb{1}_A\otimes\Pi_M\, \Id_{A}\otimes\Phi(\ketbra{\Psi_{M,c}})\, \mathbb{1}_A\otimes\Pi_M}{\mathcal{C}(\Phi,c,M)} \,.
\ee
Note that the support of $\rho'$ is equal to $\text{Span}\{\ket{0}\otimes\ket{0},\ket{0}\otimes\ket{M},\ket{M}\otimes\ket{0},\ket{M}\otimes\ket{M}\}$.
For simplicity, in the following we will use the notation $\ket{1}\equiv \ket{M}$. This formally corresponds to consider the state $\rho''\coloneqq U_M\otimes U_M\, \rho'\, U_M^\dagger \otimes U_M^\dagger$, which is obtained once both Alice and Bob have applied the unitary 
\bb
U_M\coloneqq \sum_{i\ne\{0, M\}}^\infty\ketbra{i}+\ketbraa{1}{M}+\ketbraa{M}{1}
\ee
on the remaining state $\rho'$. Hence, since the support of $\rho''$ is equal to $\text{Span}\{\ket{0}\otimes\ket{0},\ket{0}\otimes\ket{1},\ket{1}\otimes\ket{0},\ket{1}\otimes\ket{1}\}$, in the following we consider transformations which act on qubit systems.

\textbf{-Step 3}:
    For each of the $\approx n_0\,\mathcal{C}(\Phi,c,M)$ pairs, Alice and Bob choose randomly two bits $\mu,\nu\in\{0,1\}$ and they both apply the unitary $\sigma_{\mu\nu}$ defined by 
    \bb\label{def_pauli}
        \sigma_{\mu\nu}\coloneqq\sum_{i=0}^{1} (-1)^{\mu i}\ketbraa{i\oplus \nu}{i}\,
    \ee
    (in terms of the Pauli matrices it holds that $\sigma_{00}=\mathbb{1}_2$, $\sigma_{01}=\sigma_x$,  $\sigma_{10}=\sigma_z$, and $\sigma_{11}=i\sigma_y$).
    Hence, each pair is transformed into the state $\rho_0$ defined by
    \bb
        \rho_0\coloneqq \frac{1}{4}\sum_{\mu,\nu=0}^{1} \left(\sigma_{\mu\nu}\otimes\sigma_{\mu\nu}\right)\,\rho''\,\left(\sigma_{\mu\nu}\otimes\sigma_{\mu\nu}\right)^\dagger\,.
    \ee
    By exploiting the fact that the Bell states defined in~\ref{Bell_states} form an orthonormal basis, one can show that $\rho$ is diagonal in the Bell basis:
    \bb
        \rho_0=\sum_{m,n=0}^{1}\alpha_{mn}^{(0)}\ketbra{\psi_{mn}}\,,
    \ee
    where the coefficients $\alpha_{mn}^{(0)}$ are given by 
    \bb
        \alpha_{mn}^{(0)}&= \bra{\psi_{mn}}\rho''\ket{\psi_{mn}} = \bra{\psi^{(M)}_{mn}}\rho'\ket{\psi^{(M)}_{mn}}\,,
    \ee
    with $\ket{\psi^{(M)}_{mn}}$ being defined in~\ref{Bell_states_delta}.

\textbf{-Step 4}: 
 Alice and Bob run the following sub-routine, which is a recurrence protocol dubbed \emph{P1-or-P2}~\cite{p1orp2}.
     \begin{itemize}
         \item \textbf{Step 4.0}: Let $t=0$.
         \item \textbf{Step 4.1}: 
         At this point, all the pairs are in the state $\rho_t$.  Alice and Bob collect all the pairs in groups of two pairs. Let $\rho_t^{(A_1B_1)}$ denote the first pair of each group and let $\rho_t^{(A_2B_2)}$ denote the second one. If $\alpha^{(t)}_{10}<\alpha^{(t)}_{01}$, then Alice and Bob apply the bi-local unitary $U_{1}$ defined as
         \bb
	         U_{1}\coloneqq U_{\text{CNOT}}^{(A_1A_2)}\otimes U_{\text{CNOT}}^{(B_1B_2)}\,,
         \ee
         where for all $S=A,B$ the operator $U_{\text{CNOT}}^{(S_1S_2)}$ is the CNOT gate on $S_1$ and $S_2$ with control qubit $S_1$, i.e.
        \bb
          U_{\text{CNOT}}^{(S_1S_2)}\ket{i}_{S_1}\otimes\ket{j}_{S_2}=\ket{i}_{S_1}\otimes\ket{i\oplus j}_{S_2}\,.
         \ee
          Otherwise if $\alpha^{(t)}_{10}\ge\alpha^{(t)}_{01}$, they apply the bi-local unitary $U_{2}$ defined as
          	\bb
         	U_{2}&\coloneqq(H^{(A_1)}\otimes H^{(B_1)})(U_{\text{CNOT}}^{(A_1A_2)}\otimes U_{\text{CNOT}}^{(B_1B_2)}) (H^{(A_1)}\otimes H^{(A_2)}\otimes H^{(B_1)}\otimes H^{(B_2)})\,,
         	\ee
 where for all $S=A_1,A_2,B_1,B_2$ the operator $H^{(S)}$ on $S$ is the Hadamard gate, i.e.
  \bb \label{hadamard}
 H^{(S)}=\frac{1}{\sqrt{2}}\sum_{m,n=0}^{1}(-1)^{mn}\ketbraa{n}{m}_S\,.
 \ee
At this point, the state of $A_1A_2B_1B_2$ is 
\bb
            \rho_t^{(A_1A_2B_1B_2)}\coloneqq U_p\,\rho_t^{(A_1B_2)}\otimes\rho_k^{(A_2B_2)}\, U_p^\dagger\,.
\ee
with $p=1$ if $\alpha^{(t)}_{10}<\alpha^{(t)}_{01}$, and $p=2$ otherwise.
         \item \textbf{Step 4.2}:  Alice and Bob measure the pair $A_2B_2$ of each group with respect to the local POVM $\{M_{i,j}\}_{i,j\in\{0,1\}}$ with $M_{i,j}\coloneqq\ketbra{i}_{A_2}\otimes\ketbra{j}_{B_2}$ for all $i,j\in\{0,1\}$. Then they discard the pair $A_2B_2$. 
         They discard also the pair $A_1B_1$ if the outcome of the previous measurement corresponds to $M_{i,j}$ with $i\ne j$. The probability that a pair $A_1B_1$ is not discarded is given by 
          \bb
         P_t &\coloneqq \sum_{i=0}^{1}\bra{i}_{A_2}\bra{i}_{B_2}\Tr_{A_1B_1}\left[\rho_t^{(A_1A_2B_1B_2)}\right]\ket{i}_{A_2}\ket{i}_{B_2}\,.
         \ee
         By using that for all $k_1,k_2,j_1,j_2\in\{0,1\}$ it holds that
         \bb\label{formula_bi_cnot}
         &U_{\text{CNOT}}^{(A_1A_2)}\otimes U_{\text{CNOT}}^{(B_1B_2)}\ket{\psi_{k_1j_1}}_{A_1B_1}\otimes\ket{\psi_{k_2j_2}}_{A_2B_2} =\ket{\psi_{k_1\oplus k_2,\,j_1}}_{A_1B_1}\otimes\ket{\psi_{k_2,\,j_1\oplus j_2}}_{A_2B_2}
         \ee
         and that 
          \bb\label{formula_bi_hadamard}
         H^{(A)}\otimes H^{(B)}\ket{\psi_{k_1j_1}}_{AB}=(-1)^{k_1j_1}\ket{\psi_{j_1k_1}}_{AB}\,,
         \ee
		one can show that $P_t$ can be expressed as in~\eqref{probk_p1}  if $\alpha^{(t)}_{10}<\alpha^{(t)}_{01}$, and as in~\eqref{probk_p2} otherwise.
        At this point, the number of remaining pairs is 
         \begin{equation}
         	\approx n_0\,\mathcal{C}(\Phi,c,M) \frac{1}{2^{t+1}} \prod_{m=0}^{t}P_m\,
         \end{equation}
     	and each of these is in the state $\rho_{t+1}$ given by
        \bb
\rho_{t+1}&=\frac{1}{2}\sum_{i=0}^1\frac{\bra{i}_{A_2}\bra{i}_{B_2}\rho_t^{(A_1A_2B_1B_2)}\ket{i}_{A_2}\ket{i}_{B_2}}{\Tr_{A_1B_1}\left[ \bra{i}_{A_2}\bra{i}_{B_2}\rho_t^{(A_1A_2B_1B_2)}\ket{i}_{A_2}\ket{i}_{B_2} \right]}\,.
\ee
By using~\eqref{formula_bi_cnot} and~\eqref{formula_bi_hadamard}, one can show that
\bb
\rho_{t+1}=\sum_{m,n=0}^{1}\alpha^{(t+1)}_{mn}\ketbra{\psi_{mn}}\,,
\ee
where the coefficients $\alpha^{(t+1)}_{mn}$ are given by~\eqref{coeff_p1} if $\alpha^{(t)}_{10}<\alpha^{(t)}_{01}$, and by~\eqref{coeff_p2} otherwise.

         \item \textbf{Step 4.3}: Let $t=t+1$.
         \item \textbf{Step 4.4}: If the condition $t<k$ is satisfied, then go back to Step 4.1.
     \end{itemize}
 	Before introducing Step 5, let us recall that if the improved hashing protocol of~\cite{Improvement-Hashing} is applied on states of the form $\rho= \sum_{ij=0}^1\alpha_{ij}\ketbra{\psi_{ij}}$ then it can generate ebits with a yield $Y(\alpha_{00},\alpha_{01},\alpha_{10},\alpha_{11})$ given by~\eqref{improv_hashing}.
   	Note that such a yield is not invariant under permutations of the variables $\alpha_{00},\alpha_{01},\alpha_{10},\alpha_{11}$. Hence, one may achieve a yield which is larger than $Y(\alpha_{00},\alpha_{01},\alpha_{10},\alpha_{11})$ by applying suitable bi-local unitaries, which suitably permutes the Bell states, just before running the improved hashing protocol. Since the yield function $Y(\alpha_{00},\alpha_{01},\alpha_{10},\alpha_{11})$ satisfies
   	\bb
   	Y(\alpha_{00},\alpha_{01},\alpha_{10},\alpha_{11})&=Y(\alpha_{10},\alpha_{01},\alpha_{00},\alpha_{11})\,,\\
   	Y(\alpha_{00},\alpha_{01},\alpha_{10},\alpha_{11})&=Y(\alpha_{00},\alpha_{11},\alpha_{10},\alpha_{01})\,,\\
   	Y(\alpha_{00},\alpha_{01},\alpha_{10},\alpha_{11})&=Y(\alpha_{01},\alpha_{00},\alpha_{11},\alpha_{10})\,,\\
   	\ee
   	then by permuting the four variables $\alpha_{ij}$ it is possible to obtain at most three different values of the rate function, which are: $Y(  \alpha_{00},\alpha_{01},\alpha_{10},\alpha_{11})$, $Y(  \alpha_{00},\alpha_{10},\alpha_{01},\alpha_{11})$, and $Y(  \alpha_{01},\alpha_{00},\alpha_{10},\alpha_{11})$.
   	Let us define the function $\mathcal{I}$ as
   	\bb
   		&\mathcal{I}(  \alpha_{00},\alpha_{01},\alpha_{10},\alpha_{11})\coloneqq \max\left(\,Y(  \alpha_{00},\alpha_{01},\alpha_{10},\alpha_{11}),   Y(  \alpha_{00},\alpha_{10},\alpha_{01},\alpha_{11} ),\, Y(  \alpha_{01},\alpha_{00},\alpha_{10},\alpha_{11})\right)\,.
   	\ee
Note that at the beginning of Step 5, the number of remaining pairs is
\begin{equation}
	\approx n_0\,\mathcal{C}(\Phi,c,M) \frac{1}{2^{k}} \prod_{t=0}^{k-1}P_t\,
\end{equation}
and each of these is in $\rho_{k}=\sum_{m,n=0}^{1}\alpha^{(k)}_{mn}\ketbra{\psi_{mn}}$.

 \textbf{-Step 5}:  If $\mathcal{I}(  \alpha^{(k)}_{00},\alpha^{(k)}_{01},\alpha^{(k)}_{10},\alpha^{(k)}_{11})= Y(  \alpha^{(k)}_{00},\alpha^{(k)}_{10},\alpha^{(k)}_{01},\alpha^{(k)}_{11})$, then both Alice and Bob apply the Hadamard gate defined by~\eqref{hadamard}. Therefore, in this case, the state of each of pairs becomes           \bb
\left(H\otimes H\right)\rho_{k} \left(H\otimes H\right)^\dagger&=\alpha^{(k)}_{00}\ketbra{\psi_{00}}+\alpha^{(k)}_{10}\ketbra{\psi_{01}} +\alpha^{(k)}_{01}\ketbra{\psi_{10}}+\alpha^{(k)}_{11}\ketbra{\psi_{11}}\,,
\ee
where we have exploited~\eqref{formula_bi_hadamard}.
If $\mathcal{I}(  \alpha^{(k)}_{00},\alpha^{(k)}_{01},\alpha^{(k)}_{10},\alpha^{(k)}_{11})= Y(  \alpha^{(k)}_{01},\alpha^{(k)}_{00},\alpha^{(k)}_{10},\alpha^{(k)}_{11})$, then both Alice and Bob apply $B_x\coloneqq \frac{\mathbb{1}_2-i\sigma_{01}}{\sqrt{2}}$, where $\sigma_{01}$ is defined by~\eqref{def_pauli}, and hence the state becomes
\bb
\left(B_x\otimes B_x\right)\rho_{k} \left(B_x\otimes B_x\right)^\dagger&=\alpha^{(k)}_{01}\ketbra{\psi_{00}}+\alpha^{(k)}_{00}\ketbra{\psi_{01}} +\alpha^{(k)}_{10}\ketbra{\psi_{10}}+\alpha^{(k)}_{11}\ketbra{\psi_{11}}\,.
\ee

\textbf{-Step 6}: Alice and Bob run the improved hashing protocol of~\cite{Improvement-Hashing}, which can achieve the yield $\mathcal{I}(  \alpha^{(k)}_{00},\alpha^{(k)}_{01},\alpha^{(k)}_{10},\alpha^{(k)}_{11})$. Hence, in the end, Alice and Bob can generate a number of ebits equal to 
\begin{equation}
	\approx n_0\,\mathcal{C}(\Phi,c,M)  \frac{\prod_{t=0}^{k-1}P_t}{2^{k}}\mathcal{I}(  \alpha^{(k)}_{00},\alpha^{(k)}_{01},\alpha^{(k)}_{10},\alpha^{(k)}_{11})\,.
\end{equation}
Since the channel $\Phi$ is used $n_0$ times (during Step 1) to send the $n_0$ halves of the state $\ket{\Psi_{M,c}}_{AA'}$, the rate of distributed ebits of the presented protocol is 
\begin{equation}\label{rate_gen2}
\mathcal{C}(\Phi,c,M)  \frac{\prod_{t=0}^{k-1}P_t}{2^{k}}\mathcal{I}(  \alpha^{(k)}_{00},\alpha^{(k)}_{01},\alpha^{(k)}_{10},\alpha^{(k)}_{11})\,.
\end{equation}    
Since the local mean photon number of $\ket{\Psi_{M,c}}_{AA'}$ is $(1-c^2)M$, the rate in~\eqref{rate_gen2} is a lower bound on the energy-constrained two-way quantum capacity $Q_2(\Phi,N_s)$ for all $M\in\N^+$, $c\in(0,1)$, $k\in\N$ such that $(1-c^2)M\le N_s$. The optimisation over these parameters of the rate in~\eqref{rate_gen2} leads to the lower bound on $Q_2(\Phi,N_s)$ in~\eqref{lowQ2_genEC}. In addition, since $K(\Phi,N_s)\ge Q_2(\Phi,N_s)$ thanks to~\eqref{relation_2waycapEC}, we have proved~\eqref{lowQ2_genEC}. By taking the limit $N_s\rightarrow\infty$ of~\eqref{lowQ2_genEC}, the lower bound on the unconstrained two-way capacities in~\eqref{lowQ2_gen} is also proved.
\end{proof}
In the forthcoming Theorem~\ref{th_comp_Q2} we apply Theorem~\ref{th_lower_Q2} to the composition $\mathcal{N}_{g,\lambda}\coloneqq\Phi_{g,0}\circ \mathcal{E}_{\lambda,0}$ between pure amplifier channel $\Phi_{g,0}$ and pure loss channel $\mathcal{E}_{\lambda,0}$.
\begin{thm}\label{th_comp_Q2}
 Let $g\ge1$, $\lambda\in[0,1]$, and $N_s\ge0$. The EC two-way capacities $Q_2(\mathcal{N}_{g,\lambda},N_s)$ and $K(\mathcal{N}_{g,\lambda},N_s)$ of the composition $\mathcal{N}_{g,\lambda}\coloneqq\Phi_{g,0}\circ \mathcal{E}_{\lambda,0}$ between pure amplifier channel and pure loss channel satisfy the following lower bound
\bb\label{lowQ2_deltaEC_comp}
	K(\mathcal{N}_{g,\lambda},N_s)& \ge  Q_2(\mathcal{N}_{g,\lambda},N_s)\ge \sup_{\substack{c\in(0,1),\, M\in\N^+,\, k\in\N\\ (1-c^2)M\le N_s}} \mathcal{R}(g,\lambda,M,c,k)\,,
\ee
and, in particular, the unconstrained two-way capacities satisfy
\bb\label{lowQ2_delta_comp}
	K(\mathcal{N}_{g,\lambda})& \ge  Q_2(\mathcal{N}_{g,\lambda})\ge \sup_{c\in(0,1),\, M\in\N^+,\, k\in\N} \mathcal{R}(g,\lambda,M,c,k)\,,
\ee
where
\bb\label{def_mathcal_R}
    \mathcal{R}(g,\lambda,M,c,k)\coloneqq R(\mathcal{N}_{g,\lambda},M,c,k)\,,
\ee
with the quantity $R(\mathcal{N}_{g,\lambda},M,c,k)$ being defined in Theorem~\ref{th_lower_Q2}. The quantities $\mathcal{C}(\mathcal{N}_{g,\lambda},c,M)$ and $\alpha_{mn}^{(0)}$, which appear in the definition of $R(\mathcal{N}_{g,\lambda},M,c,k)$ in Theorem~\ref{th_lower_Q2}, can be expressed as
\bb\label{formula_N_alpha_0_comp}
\mathcal{C}(\mathcal{N}_{g,\lambda},c,M) &\coloneqq \sum_{n,l=0}^{1} c_n^2\, f_{M n,M n,M l}(g,\lambda)\,,\\        \alpha_{mn}^{(0)}&\coloneqq\frac{1}{2\mathcal{C}(\mathcal{N}_{g,\lambda},c,M)}\sum_{x,y=0}^{1}\sum_{l=\max(y-x,0)}^{1+\min(y-x,0)} \delta_{x\oplus n,l+x-y}\,\delta_{y\oplus n,l}\,  (-1)^{m(x+y)} c_x c_y\, f_{ M x,M y,M l}(g,\lambda)\,,
    \ee
where $c_0\coloneqq c$, $c_1\coloneqq \sqrt{1-c^2}$, $f_{n,i,l}(g,\lambda)$ is defined in~\eqref{def_f_comp}, and $\delta_{x,y}$ denotes the Kronecker delta.
\end{thm}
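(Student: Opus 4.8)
The plan is to obtain Theorem~\ref{th_comp_Q2} as a direct specialisation of the general lower bound of Theorem~\ref{th_lower_Q2} to the channel $\Phi=\mathcal{N}_{g,\lambda}$, and then to make explicit the only two input-dependent ingredients of the rate $R(\mathcal{N}_{g,\lambda},M,c,k)$, namely the success probability $\mathcal{C}(\mathcal{N}_{g,\lambda},c,M)$ of \eqref{formula_gen_norm} and the initial Bell coefficients $\alpha_{mn}^{(0)}$ of \eqref{formula_gen_alpha0}. Indeed, once the definition $\mathcal{R}(g,\lambda,M,c,k)\coloneqq R(\mathcal{N}_{g,\lambda},M,c,k)$ of \eqref{def_mathcal_R} is adopted, the inequalities \eqref{lowQ2_deltaEC_comp} and \eqref{lowQ2_delta_comp} are nothing but \eqref{lowQ2_genEC} and \eqref{lowQ2_gen} of Theorem~\ref{th_lower_Q2}, since all the other quantities entering the rate (the probabilities $P_t$, the coefficients $\alpha_{mn}^{(t)}$ for $t\geq1$, and the yield $\mathcal{I}$) are generated recursively from $\alpha_{mn}^{(0)}$ by channel-independent formulas. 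Thus the entire content of the theorem beyond Theorem~\ref{th_lower_Q2} is the computation of $\mathcal{C}$ and $\alpha_{mn}^{(0)}$ in \eqref{formula_N_alpha_0_comp}.

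To carry this out I would first expand $\ketbra{\Psi_{M,c}}_{AA'}=\sum_{x,y=0}^{1}c_x c_y\,\ketbraa{xM}{yM}_A\otimes\ketbraa{xM}{yM}_{A'}$, with $c_0\coloneqq c$, $c_1\coloneqq\sqrt{1-c^2}$, identifying each label $x\in\{0,1\}$ with the Fock index $xM\in\{0,M\}$. Applying $\Id_{A}\otimes\mathcal{N}_{g,\lambda}$ and inserting the closed form \eqref{action_comp_chan} of Theorem~\ref{kraus_comp_thm} gives $\Id_{A}\otimes\mathcal{N}_{g,\lambda}(\ketbra{\Psi_{M,c}})=\sum_{x,y=0}^{1}c_x c_y\,\ketbraa{xM}{yM}_A\otimes\sum_{l}f_{Mx,My,l}(g,\lambda)\,\ketbraa{l+Mx-My}{l}_B$, where $f$ is the coefficient \eqref{def_f_comp}. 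For $\mathcal{C}$, tracing against $\mathbb{1}_A$ uses $\Tr_A\ketbraa{xM}{yM}=\delta_{xy}$ to kill the off-diagonal terms $x\neq y$, leaving $\sum_{x=0}^{1}c_x^2\,\Tr_B[\Pi_M\,\mathcal{N}_{g,\lambda}(\ketbra{xM})]$; since the diagonal action collapses to $\sum_{l}f_{Mx,Mx,l}\ketbra{l}$ and $\Pi_M=\ketbra{0}+\ketbra{M}$ retains only the Fock levels $l\in\{0,M\}$, the surviving terms are exactly $l=Ml'$ with $l'\in\{0,1\}$, reproducing $\mathcal{C}=\sum_{n,l=0}^{1}c_n^2\,f_{Mn,Mn,Ml}(g,\lambda)$.

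The coefficients $\alpha_{mn}^{(0)}$ require a little more care. Because the state $\ket{\psi^{(M)}_{mn}}_{AB}$ of \eqref{Bell_states_delta} is supported on Fock levels $\{0,M\}$ in both subsystems, the projector $\mathbb{1}_A\otimes\Pi_M$ acts as the identity on it, so \eqref{formula_gen_alpha0} reduces to $\alpha_{mn}^{(0)}=\mathcal{C}^{-1}\bra{\psi^{(M)}_{mn}}\Id_{A}\otimes\mathcal{N}_{g,\lambda}(\ketbra{\Psi_{M,c}})\ket{\psi^{(M)}_{mn}}$. Substituting the expansion above and using the $A$-side overlaps $\bra{jM}\ketbraa{xM}{yM}\ket{j'M}=\delta_{jx}\delta_{j'y}$ fixes $x=j$, $y=j'$ and produces both the weight $c_x c_y$ and the phase $(-1)^{m(x+y)}$ from the Bell-type coefficients; the $B$-side then contributes $\sum_{l}f_{Mx,My,l}\,\bra{(x\oplus n)M}\ketbraa{l+Mx-My}{l}\ket{(y\oplus n)M}$. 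The delta $\delta_{l,(y\oplus n)M}$ forces the column index into $\{0,M\}$, i.e.\ $l=Ml'$ with $l'\in\{0,1\}$, after which the second overlap becomes $\delta_{(x\oplus n)M,\,Ml'+Mx-My}$; relabelling $l'\to l$ and dividing the indices by $M$ turns these into the rescaled Kronecker deltas $\delta_{x\oplus n,\,l+x-y}$ and $\delta_{y\oplus n,\,l}$ of \eqref{formula_N_alpha_0_comp}.

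I expect the only genuine subtlety, rather than a true obstacle, to lie in the index bookkeeping of this last step: one must verify that the projection onto $\{\ket{0},\ket{M}\}$ combined with the selection rule $l+Mx-My$ of \eqref{action_comp_chan} leaves precisely the terms indexed by $l\in\{\max(y-x,0),\dots,1+\min(y-x,0)\}$ (the requirement that both the row index $l+Mx-My$ and the column index $l$ lie in $\{0,M\}$), and that the alternating signs $(-1)^{m(x+y)}$ are carried through correctly. Everything else is a mechanical substitution into the already-established rate formula of Theorem~\ref{th_lower_Q2}, so no further conceptual ingredient is needed.
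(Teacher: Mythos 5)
Your proposal is correct and follows essentially the same route as the paper: the inequalities are obtained by specialising Theorem~\ref{th_lower_Q2} to $\Phi=\mathcal{N}_{g,\lambda}$, and the only remaining work is to compute $\mathcal{C}(\mathcal{N}_{g,\lambda},c,M)$ and $\alpha_{mn}^{(0)}$ by expanding $\Id_A\otimes\mathcal{N}_{g,\lambda}(\ketbra{\Psi_{M,c}})$ via~\eqref{action_comp_chan}, projecting with $\mathbb{1}_A\otimes\Pi_M$, and reading off the surviving indices $l\in\{\max(y-x,0),\dots,1+\min(y-x,0)\}$. Your index bookkeeping for the Kronecker deltas and the phases $(-1)^{m(x+y)}$ matches the paper's Eq.~\eqref{step_proof} and the subsequent substitution into~\eqref{formula_gen_norm} and~\eqref{formula_gen_alpha0}.
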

\begin{proof} 
\eqref{lowQ2_deltaEC_comp} and~\eqref{lowQ2_delta_comp} follows by applying Theorem~\ref{th_lower_Q2} to $\mathcal{N}_{g,\lambda}$. We only need to show the expressions of $\mathcal{C}(\mathcal{N}_{g,\lambda},c,M)$ and $\alpha_{mn}^{(0)}$ in~\eqref{formula_N_alpha_0_comp}. In this proof we use the notation introduced in the statement of Theorem~\ref{th_lower_Q2}. By using~\eqref{action_comp_chan}, we deduce that
\bb\label{formula_choi}
&\Id_{A}\otimes\mathcal{N}_{g,\lambda}(\ketbra{\Psi_{M,c}})=\sum_{n,i=0}^{1}\sum_{l=M\max(i-n,0)}^\infty c_n c_i f_{M n,M i,l}(g,\lambda)\ketbraa{M n}{M i}_{A}\otimes\ketbraa{l+M(n-i)}{l}_B.
\ee
Consequently, it holds that
\bb\label{step_proof}
        &\mathbb{1}_A\otimes\Pi_M\, \Id_{A}\otimes\mathcal{N}_{g,\lambda}(\ketbra{\Psi_{M,c}})\, \mathbb{1}_A\otimes\Pi_M=\sum_{n,i=0}^{1}\sum_{l=\max(i-n,0)}^{1+\min(i-n,0)} c_n c_i  f_{M n,M i,M l}(g,\lambda)\ketbraa{M n}{M i}_{A}\otimes\ketbraa{M(l+n-i)}{M l}_B.
\ee
By inserting this into the definition of $\mathcal{C}(\mathcal{N}_{g,\lambda},c,M)$ in~\eqref{formula_gen_norm} and of $\alpha_{mn}^{(0)}$ in~\eqref{formula_gen_alpha0}, one obtains the expressions in~\eqref{formula_N_alpha_0_comp}.
\end{proof}

\subsection{Remarks}
Let us consider the entanglement distribution protocol shown in the proof of Theorem~\ref{th_lower_Q2} applied to the the composition $\mathcal{N}_{g,\lambda}\coloneqq\Phi_{g,0}\circ \mathcal{E}_{\lambda,0}$ between pure amplifier channel $\Phi_{g,0}$ and pure loss channel $\mathcal{E}_{\lambda,0}$. After completing Step 2 of this protocol, the entanglement distribution process is reduced to an entanglement distillation protocol on the two-qubit state reported in~\eqref{rho_primo_distil}. We will denote this two-qubit state as $\rho^{(g,\lambda,M,c)}_{AB}$, where $M\in\N^+$ and $c\in(0,1)$ correspond to the constants appearing in the state in~\eqref{initial_state} that Alice produces during Step 1. The natural question that arises is: "Under what conditions is $\rho^{(g,\lambda,M,c)}_{AB}$ distillable?" In Remark~\ref{remark_alternative_proof} we answer this question.
\begin{remark}\label{remark_alternative_proof}
 $\rho^{(g,\lambda,M,c)}_{AB}$ is distillable if and only if $\lambda$ and $g$ satisfy the inequality $(1-\lambda) g<1$, meaning that $\mathcal{N}_{g,\lambda}$ is not entanglement breaking. This provides an alternative proof of Theorem~\ref{th1}.
\end{remark}
\begin{proof}
    By exploiting~\eqref{step_proof}, for all $g>1,\lambda\in(0,1),M\in\N^+,c\in(0,1)$ the state in~\eqref{rho_primo_distil} can be expressed as
\bb
    \rho^{(g,\lambda,M,c)}_{AB}\coloneqq\frac{ \sum_{n,i=0}^{1}\sum_{l=\max(i-n,0)}^{1+\min(i-n,0)} c_n c_i  f_{M n,M i,M l}(g,\lambda)\ketbraa{M n}{M i}_{A}\otimes\ketbraa{M(l+n-i)}{M l}_B }{ \sum_{n,l=0}^{1} c_n^2\, f_{M n,M n,M l}(g,\lambda)}\,,
\ee
where $c_0\coloneqq c$, $c_1\coloneqq \sqrt{1-c^2}$, and $f_{n,i,l}(g,\lambda)$ is defined in~\eqref{def_f_comp}. Consequently, it holds that
\bb\label{explicit_state_f}
    \rho^{(g,\lambda,M,c)}_{AB}=\frac{ 1}{ \sum_{n,l=0}^{1} c_n^2\, f_{M n,M n,M l}(g,\lambda)}\big[&c^2f_{0,0,0}(g,\lambda)\ketbraa{0}{0}_{A}\otimes\ketbraa{0}{0}_B\\&+c^2f_{0,0,M}(g,\lambda)\ketbraa{0}{0}_{A}\otimes\ketbraa{M}{M}_B\\&+c\sqrt{1-c^2}f_{0,M,M}(g,\lambda)\ketbraa{0}{M}_{A}\otimes\ketbraa{0}{M }_B\\&+c\sqrt{1-c^2}f_{M,0,0}(g,\lambda)\ketbraa{M }{0}_{A}\otimes\ketbraa{M}{0}_B\\&+(1-c^2)f_{M,M,0}(g,\lambda)\ketbraa{M}{M}_{A}\otimes\ketbraa{0}{0}_B\\&+(1-c^2)f_{M,M,M}(g,\lambda)\ketbraa{M}{M}_{A}\otimes\ketbraa{M}{M}_B\big]\,,
\ee
Hence, the matrix associated with the partial transpose on $B$ of $\rho^{(g,\lambda,M,c)}_{AB}$, written with respect the basis $\{\ket{0}_A\otimes\ket{0}_B, \ket{0}_A\otimes\ket{M}_B, \ket{M}_A\otimes\ket{0}_B, \ket{M}_A\otimes\ket{M}_B\}$, is 
\bb\nonumber
&\frac{ 1}{ \sum_{n,l=0}^{1} c_n^2\, f_{M n,M n,M l}(g,\lambda)}\left(\begin{matrix} c^2f_{0,0,0}(g,\lambda)\quad & 0& 0& 0\\ 0 & c^2f_{0,0,M}(g,\lambda)\quad & c\sqrt{1-c^2}f_{0,M,M}(g,\lambda)\quad& 0\\
0 & c\sqrt{1-c^2}f_{M,0,0}(g,\lambda)\quad & (1-c^2)f_{M,M,0}(g,\lambda)\quad & 0\\ 0 & 0 & 0 &(1-c^2)f_{M,M,M}(g,\lambda)\quad \end{matrix}\right) \,.
\ee 
It follows that $\rho^{(g,\lambda,M,c)}_{AB}$ is not PPT if and only if 
\bb\label{condition_det}
    f_{M,0,0}(g,\lambda)\,f_{M,M,0}(g,\lambda)< f_{0,M,M}(g,\lambda)\, f_{0,0,M}(g,\lambda)\,.
\ee
The definition of $f_{\cdot,\cdot,\cdot}(g,\lambda)$ in~\eqref{def_f_comp} yields
\bb\label{formulae_f_g_lam}
    f_{0,0,M}(g,\lambda)&=\frac{(g-1)^M}{g^{1+M}}\,,\\
    f_{M,M,0}(g,\lambda)&=\frac{(1-\lambda)^M}{g}\,,\\
    f_{0,M,M}(g,\lambda)&=f_{M,0,0}(g,\lambda)=\frac{\lambda^{\frac{M}{2}}}{g^{1+\frac{M}{2}}}\,.
\ee
Consequently,~\eqref{condition_det} establishes that $\rho^{(g,\lambda,M,c)}_{AB}$ is not PPT if and only if $(1-\lambda)g<1$, independentely of $c$ and $M$. The fact that any two-qubit state is distillable if and only if it is not PPT~\cite{2-qubit-distillation} implies that $\rho^{(g,\lambda,M,c)}_{AB}$ is distillable if and only if $(1-\lambda)g<1$ for all $c\in(0,1)$ and all $M\in\N^+$. 

Let us now show that this fact constitutes an alternative proof of Theorem~\ref{th1}, i.e.~let us show that the energy-constrained two-way capacities $Q_2(\mathcal{N}_{g,\lambda},N_s)$ and $K(\mathcal{N}_{g,\lambda},N_s)$ are strictly positive if and only if $(1-\lambda)g< 1$, i.e.~if and only if $\mathcal{N}_{g,\lambda}$ is not entanglement breaking. The entanglement distribution protocol's Steps S1 and S2 imply that for all $N_s\ge 0$ it holds that $Q_2(\mathcal{N}_{g,\lambda},N_s)\ge E_d\left(  \rho^{(g,\lambda,M,c)}_{AB} \right)$, for any $c\in(0,1)$ and $M\in\N^+$ satisfying $(1-c^2)M\le N_s$. Here, $E_d(\cdot)$ denotes the distillable entanglement. As we have proved above, if $(1-\lambda)g<1$ then the state $\rho^{(g,\lambda,M,c)}_{AB}$ is distillable, i.e.~$E_d(\rho^{(g,\lambda,M,c)}_{AB})>0$. This implies that if $(1-\lambda)g<1$, then the energy-constrained two-way capacities of $\mathcal{N}_{g,\lambda}$ are strictly positive, i.e.~$K(\mathcal{N}_{g,\lambda},N_s)\ge Q_2(\mathcal{N}_{g,\lambda},N_s)>0$. 
Conversely, by exploiting Theorem~\ref{lemma_eb_comp} and the fact that any entanglement-breaking channel has vanishing two-way capacities, it follows that if $(1-\lambda)g\ge1$ then $K(\mathcal{N}_{g,\lambda})= Q_2(\mathcal{N}_{g,\lambda})=0$ and hence $K(\mathcal{N}_{g,\lambda},N_s)= Q_2(\mathcal{N}_{g,\lambda},N_s)=0$.
\end{proof}

{Remark~\ref{remark_alternative_proof} ensures that if the channel $\mathcal{N}_{g,\lambda}$ is not entanglement breaking, then the state $\rho^{(g,\lambda,M,c)}_{AB}$ obtained at the end of Step~2 is distillable for any $M\in\N^+$ and $c\in(0,1)$. We now turn our attention to the state, denoted as $\sigma^{(g,\lambda,M,c)}_{AB}$, which is obtained at the end of Step 3 through Pauli-based twirling of $\rho^{(g,\lambda,M,c)}_{AB}$. It is possible for this operation to map distillable states to undistillable states, so we ask the question: "Under what conditions is $\sigma^{(g,\lambda,M,c)}_{AB}$ distillable?" In Remark~\ref{remark_step3} we will demonstrate that for any $\lambda\in(0,1)$ and $g>1$, if $\mathcal{N}_{g,\lambda}$ is not entanglement breaking, then for all $M\in\N^+$ the state $\sigma^{(g,\lambda,M,\bar{c})}_{AB}$ is distillable, where $\bar{c}\coloneqq\frac{1}{\sqrt{1+(g-1)^M}}$. This means that Alice and Bob can choose the value of $c$ appropriately such that the Pauli-based twirling does not affect the distillability of the shared state.}
{\begin{remark}\label{remark_step3}
Let $M\in\N^+$, $\lambda\in(0,1)$, and $g>1$ with $(1-\lambda)g<1$ (meaning that $\mathcal{N}_{g,\lambda}$ is not entanglement breaking). Then, the state $\sigma^{(g,\lambda,M,\bar{c})}_{AB}$ is distillable, where $\bar{c}\coloneqq\frac{1}{\sqrt{1+(g-1)^M}}$.
    \end{remark}
    \begin{proof}
    After applying the Pauli-based twirling on the state $\rho^{(g,\lambda,M,\bar{c})}_{AB}$, the resulting state $\sigma^{(g,\lambda,M,\bar{c})}_{AB}$ is transformed into a Bell-diagonal form, that is
    \bb
        \sigma^{(g,\lambda,M,\bar{c})}_{AB}=\sum_{i,j=0}^1 p_{ij} \ketbraa{\psi^{(M)}_{ij}}{\psi^{(M)}_{ij}}_{AB}\,,
    \ee
    where $p_{ij}\coloneqq \bra{\psi^{(M)}_{ij}}\rho^{(g,\lambda,M,\bar{c})}_{AB}\ket{\psi^{(M)}_{ij}}$ and $\{\ket{\psi^{(M)}_{ij}}_{AB}\}_{i,j\in\{0,1\}}$ are the Bell states defined in~\eqref{Bell_states_delta}. In particular, it holds that
    \bb\label{prob_bell_diag_state}
        p_{00}+p_{10}&=\bra{0}_{A}\bra{0}_{B}\rho^{(g,\lambda,M,c)}_{AB}\ket{0}_{A}\ket{0}_{B}+\bra{M}_{A}\bra{M}_{B}\rho^{(g,\lambda,M,c)}_{AB}\ket{M}_{A}\ket{M}_{B}\,,\\
        p_{01}-p_{11}&=2\bra{0}_{A}\bra{0}_{B}\rho^{(g,\lambda,M,c)}_{AB}\ket{M}_{A}\ket{M}_{B}\,,\\
        p_{01}+p_{11}&=\bra{0}_{A}\bra{M}_{B}\rho^{(g,\lambda,M,c)}_{AB}\ket{0}_{A}\ket{M}_{B}+\bra{M}_{A}\bra{0}_{B}\rho^{(g,\lambda,M,c)}_{AB}\ket{M}_{A}\ket{0}_{B}\,,\\
        p_{00}-p_{10}&=2\bra{0}_{A}\bra{0}_{B}\rho^{(g,\lambda,M,c)}_{AB}\ket{M}_{A}\ket{M}_{B}\,.
    \ee
    Lemma~\ref{lemma_bell_diag} guarantees that if $p_{01}+p_{11}-|p_{00}-p_{10}|<0$ then the state $\sigma^{(g,\lambda,M,\bar{c})}_{AB}$ is distillable.  By using~\eqref{explicit_state_f} and~\eqref{prob_bell_diag_state}, the condition $p_{01}+p_{11}-|p_{00}-p_{10}|<0$ is satisfied if and only if
    \bb
        \bar{c}^2f_{0,0,M}(g,\lambda)+(1-\bar{c}^2)f_{M,M,0}(g,\lambda)-2\bar{c}\sqrt{1-\bar{c}^2}f_{M,0,0}(g,\lambda)<0\,,
    \ee
    that is
    \bb
        \bar{c}^2(g-1)^M+(1-\bar{c}^2)(1-\lambda)^Mg^M-2\bar{c}\sqrt{1-\bar{c}^2}(\lambda g)^{M/2}<0\,,
    \ee
    where we have exploited~\eqref{formulae_f_g_lam}. By hypothesis, the channel $\mathcal{N}_{g,\lambda}$ is entanglement breaking and hence $(1-\lambda)g<1$, as established by Lemma~\ref{lemma_comp_bos}. Consequently, for all $g>1$ and $\lambda\in(0,1)$ it holds that 
    \bb
        \bar{c}^2(g-1)^M+(1-\bar{c}^2)(1-\lambda)^Mg^M-2\bar{c}\sqrt{1-\bar{c}^2}(\lambda g)^{M/2}&<\bar{c}^2(g-1)^M+(1-\bar{c}^2)-2\bar{c}\sqrt{1-\bar{c}^2}(g-1)^{M/2}\\&=\left( \bar{c}(g-1)^{M/2}-\sqrt{1-\bar{c}^2} \right)^2=0\,,
    \ee
    where we have used that $\bar{c}\coloneqq\frac{1}{\sqrt{1+(g-1)^M}}$. Hence, for all $M\in \N^+$, $\lambda\in(0,1)$, and $g>1$ with $(1-\lambda)g<1$, it holds that $\sigma^{(g,\lambda,M,\bar{c})}_{AB}$ is distillable.
    \end{proof}

\begin{lemma}\label{lemma_bell_diag}
    Let $\{\ket{\psi_{ij}}\}_{i,j\in\{0,1\}}$ be the Bell states defined in~\eqref{Bell_states}. A convex combination of Bell states $\rho_{AB}=\sum_{i,j=0}^1p_{ij}\ketbraa{\psi_{ij}}{\psi_{ij}}$ is distillable if and only if $p_{00}+p_{10}<|p_{01}-p_{10}|$ or $p_{01}+p_{11}<|p_{00}-p_{10}|$.
\end{lemma}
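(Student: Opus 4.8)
The plan is to reduce the whole statement to a single positivity test. By the two-qubit distillability criterion of~\cite{2-qubit-distillation}, a state on $\HH_2^{(A)}\otimes\HH_2^{(B)}$ is distillable if and only if it has a non-positive partial transpose (is NPT). Since $\rho_{AB}=\sum_{i,j=0}^1 p_{ij}\ketbra{\psi_{ij}}$ is already a two-qubit state, the lemma follows the moment I determine exactly when $\rho_{AB}^{T_B}\not\ge 0$. So the entire argument is a short spectral computation on the partial transpose.

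First I would write $\rho_{AB}$ as a matrix in the computational basis $\{\ket{00},\ket{01},\ket{10},\ket{11}\}$. Reading off~\eqref{Bell_states}, the states $\ket{\psi_{00}},\ket{\psi_{10}}$ lie in $\Span\{\ket{00},\ket{11}\}$ while $\ket{\psi_{01}},\ket{\psi_{11}}$ lie in $\Span\{\ket{01},\ket{10}\}$, so $\rho_{AB}$ is supported on these two orthogonal sectors and has the familiar ``X'' shape: the populations are $\tfrac12(p_{00}+p_{10})$ on $\ket{00},\ket{11}$ and $\tfrac12(p_{01}+p_{11})$ on $\ket{01},\ket{10}$, while the only coherences are $\tfrac12(p_{00}-p_{10})$ coupling $\ket{00}\!\leftrightarrow\!\ket{11}$ and $\tfrac12(p_{01}-p_{11})$ coupling $\ket{01}\!\leftrightarrow\!\ket{10}$.

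Next I would apply the partial transpose on $B$. The essential observation is that $T_B$ fixes every diagonal entry but sends the coherence $\ketbraa{00}{11}$ to $\ketbraa{01}{10}$ and conversely; hence $\rho_{AB}^{T_B}$ is again block diagonal, but now the two $2\times2$ blocks are $\left(\begin{smallmatrix}\tfrac12(p_{00}+p_{10}) & \tfrac12(p_{01}-p_{11})\\ \tfrac12(p_{01}-p_{11}) & \tfrac12(p_{00}+p_{10})\end{smallmatrix}\right)$ on $\Span\{\ket{00},\ket{11}\}$ and $\left(\begin{smallmatrix}\tfrac12(p_{01}+p_{11}) & \tfrac12(p_{00}-p_{10})\\ \tfrac12(p_{00}-p_{10}) & \tfrac12(p_{01}+p_{11})\end{smallmatrix}\right)$ on $\Span\{\ket{01},\ket{10}\}$. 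Each block is real symmetric, so its eigenvalues are immediate: $\tfrac12\bigl[(p_{00}+p_{10})\pm|p_{01}-p_{11}|\bigr]$ from the first block and $\tfrac12\bigl[(p_{01}+p_{11})\pm|p_{00}-p_{10}|\bigr]$ from the second.

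Finally, $\rho_{AB}^{T_B}$ is positive semidefinite if and only if all four eigenvalues are nonnegative, so it is NPT precisely when the smaller eigenvalue of at least one block is negative, i.e.\ when $p_{00}+p_{10}<|p_{01}-p_{11}|$ or $p_{01}+p_{11}<|p_{00}-p_{10}|$; invoking~\cite{2-qubit-distillation} once more turns ``NPT'' into ``distillable'' and yields the claimed equivalence. I do not expect a genuine obstacle here, as the computation is elementary. The only points demanding care are fixing the identification of the four $\ket{\psi_{ij}}$ with the standard Bell basis so the signs of the coherences come out right, and tracking that $T_B$ swaps the off-diagonal entries between the two sectors — this is exactly what pairs the population combination $p_{00}+p_{10}$ with the coherence $p_{01}-p_{11}$, and $p_{01}+p_{11}$ with $p_{00}-p_{10}$, in the negativity conditions.
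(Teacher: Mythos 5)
Your proof is correct and follows essentially the same route as the paper: write $\rho_{AB}$ in the computational basis, observe the two-sector ``X'' structure, take the partial transpose, diagonalise the two $2\times2$ blocks, and invoke the two-qubit NPT--distillability equivalence of~\cite{2-qubit-distillation}. One point worth flagging: your computation yields the condition $p_{00}+p_{10}<|p_{01}-p_{11}|$ for the first block, whereas the lemma as stated reads $p_{00}+p_{10}<|p_{01}-p_{10}|$; your version is the correct one, and the discrepancy traces back to index typos in the paper's own matrix (whose middle block should carry $p_{01}\pm p_{11}$ rather than $p_{10}\pm p_{11}$). This does not affect the paper's later use of the lemma, which only invokes the second disjunct $p_{01}+p_{11}<|p_{00}-p_{10}|$.
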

\begin{proof}
    The matrix associated with $\rho_{AB}=\sum_{i,j=0}^1p_{ij}\ketbraa{\psi_{ij}}{\psi_{ij}}$, written with respect the basis $\{\ket{0}_A\otimes\ket{0}_B, \ket{0}_A\otimes\ket{1}_B, \ket{1}_A\otimes\ket{0}_B, \ket{1}_A\otimes\ket{1}_B\}$, is 
\bb\nonumber
&\frac{ 1}{ 2}\left(\begin{matrix} p_{00}+p_{10}\quad & 0& 0& p_{00}-p_{10}\\ 0 & p_{10}+p_{11}\quad & p_{10}-p_{11}\quad& 0\\
0 & p_{10}-p_{11}\quad & p_{10}+p_{11}\quad & 0\\ p_{00}-p_{10} & 0 & 0 &p_{00}+p_{10}\quad \end{matrix}\right) \,.
\ee 
Its partial transpose on $B$ is
\bb\nonumber
&\frac{ 1}{ 2}\left(\begin{matrix} p_{00}+p_{10}\quad & 0& 0& p_{10}-p_{11}\\ 0 & p_{10}+p_{11}\quad & p_{00}-p_{10}\quad& 0\\
0 & p_{00}-p_{10}\quad & p_{10}+p_{11}\quad & 0\\ p_{10}-p_{11} & 0 & 0 &p_{00}+p_{10}\quad \end{matrix}\right) \,.
\ee 
Hence, the state $\rho_{AB}$ is PPT if and only if $p_{00}+p_{10}\ge|p_{01}-p_{10}|$ and $p_{01}+p_{11}\ge|p_{00}-p_{10}|$. Consequently, the fact that any two-qubit state is distillable if and only if it is not PPT~\cite{2-qubit-distillation} implies the validity of the thesis.
\end{proof}

}

\subsection{Experimental challenges regarding our protocol}\label{SM_experimental_challenges}
As demonstrated in the main text, applying our main result regarding the maximum tolerable excess noise to the current Internet infrastructure shows that continuous-variable quantum key distribution is feasible if and only if the fibre length is approximately less than $1000 $ kilometres. Hence, any practical QKD protocol, which is based on the existing Internet infrastructure, must adhere to this fundamental limit. Furthermore, this limit of $1000 $ kilometres can now serve as a benchmark for evaluating the quality of any new CV-QKD protocol, underscoring the significant impact of our results on practical implementations.

The potential benefit of our protocol (presented both in the main text and in the proof of Theorem~\ref{th_lower_Q2} above) lies in its \emph{faithfulness} --- it can distil entanglement (and hence generate secret keys) whenever the channel is not entanglement breaking. With the current Internet infrastructure based on optical fibres, our protocol could \emph{theoretically} achieve the ultimate limit set by quantum physics of transmitting entanglement and secret keys over distances up to $1000$ kilometres. This is a unique feature of our protocol, which stands in stark contrast with \emph{all} existing entanglement distribution and key distribution protocols.

While there exist CV-QKD protocols that are relatively easy to implement with current technology (capable of distributing secret keys across optical fibres of at most $200$ kilometres~\cite{Record1,Record2,Record3,Record4,Record5}), this is not the case for entanglement distribution. Indeed, \emph{all} known entanglement-distribution protocols are experimentally challenging with current technology. For example, the best known entanglement-distribution protocol prior to our work~\cite{Pirandola2009} --- i.e.~the hashing protocol applied to the Choi state of the channel --- is not experimentally feasible. 

Our protocol is an entanglement-distribution protocol and, as such, is experimentally challenging at present. We emphasise that this limitation is not unique to our protocol but is a common challenge faced by \emph{all} entanglement distribution protocols due to current technological constraints.  A major factor is the lack of a noiseless quantum memory, which makes it challenging to perform even a few iterations of a recurrence entanglement distillation protocol. Nevertheless, given the significant recent experimental advancements regarding quantum memories~\cite{q_memory1, q_memory2} and entanglement distillation~\cite{Kalb_2017,Hu_2021,Ecker_2021}, we are optimistic about the future experimental viability of our protocol. Hence, we stress that, although our protocol is experimentally challenging with current technology, there is no way that a protocol as simple as our ours will \emph{not} be realisable in a few decades at worst. 
 
Let us provide further details about a possible practical realisation of our protocol. To perform Step 1, it suffices that Alice produces the state
\bb\label{state}
    \ket{\Psi}_{\!AA'}\coloneqq \frac{\ket{0}_{\!A}\!\otimes\!\ket{0}_{\!A'}+\ket{1}_{\!A}\!\otimes\!\ket{1}_{\!A'}}{\sqrt{2}}\,
\ee
in order to make the rate of the protocol faithful. However, without changing the rate (as explained below), Alice can instead produce the \emph{NOON state}~\cite{Sanders1989}
\bb\label{noon}
    \ket{\Psi'}_{\!AA'}\coloneqq \frac{\ket{0}_{\!A}\!\otimes\!\ket{1}_{\!A'}+\ket{1}_{\!A}\!\otimes\!\ket{0}_{\!A'}}{\sqrt{2}}\,,
\ee
which can be experimentally prepared~\cite{exp_noon1,exp_noon2}.
After Alice has sent the sub-system $A'$ through the channel to Bob, Step 2 involves performing a non-demolition measurement with the POVM operator $\ketbra{0}_B+\ketbra{1}_B$. Although this measurement appears challenging to implement experimentally, fortunately the problem has been studied already, and several promising approaches do exist. Specifically, one may exploit either: the pre-certification scheme employed in \cite{measurement1}; the coupling scheme between optical signals and trapped cold atomic gas designed in \cite{measurement2}; single photon filters based on Rydberg blockade~\cite{rydberg1,rydberg2} to implement single photon subtraction~\cite{rydberg3,rydberg4}; single atoms inside an optical cavity to perform single photon subtraction~\cite{substraction1,substraction2}. After this non-demolition measurement, Alice and Bob share a two-mode state in the subspace spanned by $\{\ket{0}\otimes\ket{0},\ket{0}\otimes\ket{1},\ket{1}\otimes\ket{0}, \ket{1}\otimes\ket{1}\}$. At this point, they can transfer their state from the optical modes to a qubit solid-state platform (e.g., superconducting or trapped ion platform). This transfer can be experimentally performed in several ways, for example by exploiting: the quantum-memory based approaches introduced in~\cite{q_memory1,q_memory2}; the aforementioned single photon subtraction methods~\cite{rydberg1,rydberg2,rydberg3,rydberg4,substraction1,substraction2}, which map the photonic state onto the atomic state; quantum transduction from optical to microwave photons that are compatible with the superconducting qubits~\cite{transf1,transf2,transf3}.  This means that only the first two steps of the protocol involve optical platforms, which is advantageous because all the two-qubit unitaries used in the subsequent steps of our protocol are much easier to experimentally implement in a qubit platform. To address the fact that Alice has sent the NOON state in \eqref{noon} instead of the state in \eqref{state}, she simply needs to apply the Pauli $\sigma_x$ before initiating Step~3.

\section{Multi-rail strategies}\label{multiplerail_section}
In this section we introduce an additional protocol for distributing ebits across the piBGC $\mathcal{N}_{g,\lambda}$ by combining and optimising the multi-rail protocol introduced in~\cite{Winnel} and the qudit P1-or-P2 protocol introduced in~\cite{p1orp2}. To begin, we will establish some notation and we will prove a useful lemma. For any $K\in\N$ with $K\geq 2$ and any $\textbf{n}\coloneqq(n_1,\ldots,n_K)\in\N^K$, we denote as $\ket{\textbf{n}}_{A_1\ldots A_K}$ the following $K$-mode Fock state with total photon number equal to $\|\textbf{n}\|_1$:
\bb
        \ket{\textbf{n}}_{A_1\ldots A_K}\coloneqq \ket{n_1}_{A_1}\otimes\ket{n_2}_{A_2}\otimes\ldots\otimes\ket{n_K}_{A_K}\,,
    \ee
    where we have used the notation $\|\textbf{n}\|_1\coloneqq \sum_{j=1}^K n_j$. For any $N,K\in\N^+$ with $K\ge2$, let us order the set 
    \bb
    \{\ket{\textbf{n}}_{A_1\ldots A_K}:\,\textbf{n}\in\N^K\,,  \|\textbf{n}\|_1=N\}
    \ee
    according to the restricted lexicographic ordering. More formally, the relation $\preceq$ is defined as
    \bb
\ket{\textbf{n}}_{A_1\ldots A_K}\preceq\ket{\textbf{m}}_{A_1\ldots A_K}   \Longleftrightarrow \sum_{j=1}^{K} n_{j}\, (N+1)^j<\sum_{j=1}^{K} m_{j} \,(N+1)^j\,.
    \ee
     The set has $\binom{N+K-1}{N}$ elements, and for all $n=0,1,\ldots,\binom{N+K-1}{N}-1$, we define the state $\ket{\phi^{(N)}_n}_{A_1\ldots A_K}$ as the $n$th element of the ordered set. For example, if $N=2$ and $K=3$, we have that
    \bb
        \ket{\phi^{(2)}_0}_{A_1 A_2 A_3}&\coloneqq \ket{0}_{A_1}\otimes\ket{0}_{A_2}\otimes\ket{2}_{A_3}\,,\\
        \ket{\phi^{(2)}_1}_{A_1 A_2 A_3}&\coloneqq \ket{0}_{A_1}\otimes\ket{1}_{A_2}\otimes\ket{1}_{A_3}\,,\\
        \ket{\phi^{(2)}_2}_{A_1 A_2 A_3}&\coloneqq \ket{0}_{A_1}\otimes\ket{2}_{A_2}\otimes\ket{0}_{A_3}\,,\\
        \ket{\phi^{(2)}_3}_{A_1 A_2 A_3}&\coloneqq \ket{1}_{A_1}\otimes\ket{0}_{A_2}\otimes\ket{1}_{A_3}\,,\\
        \ket{\phi^{(2)}_4}_{A_1 A_2 A_3}&\coloneqq \ket{1}_{A_1}\otimes\ket{1}_{A_2}\otimes\ket{0}_{A_3}\,,\\
        \ket{\phi^{(2)}_5}_{A_1 A_2 A_3}&\coloneqq \ket{2}_{A_1}\otimes\ket{0}_{A_2}\otimes\ket{0}_{A_3}\,.
    \ee
    In addition, for all $N,K\in\N^+$ with $K\ge2$ let us define the following state of $K+K$ modes $A_1,\ldots,A_k,A_1',\ldots,A_K'$:
    \bb
        \ket{\Psi_{N,K}}_{A_1\ldots A_K, A'_1,\ldots,A'_K}&\coloneqq \frac{1}{\sqrt{\binom{N+K-1}{N}}}\sum_{n=0}^{\binom{N+K-1}{N}-1}\ket{\phi^{(N)}_n}_{A_1\ldots A_k}\otimes\ket{\phi^{(N)}_n}_{A'_1\ldots A'_k}\\&=\frac{1}{\sqrt{\binom{N+K-1}{N}}} \sum_{\substack{\textbf{n}\in\N^K\\ \|\textbf{n}\|_1=N}} \ket{\textbf{n}}_{A_1\ldots A_K}\otimes\ket{\textbf{n}}_{A'_1\ldots A'_K}\,,
    \ee
    which is a $\binom{N+K-1}{N}$-dimensional maximally entangled state 
    that corresponds to the subspace of the Hilbert space of $K$ modes with total photon number equal to $N$. Moreover, let us define for all $K,F\in\N$ the projector $\Pi^{(K)}_F$ onto the subspace of $K$ modes $A_1,\ldots,A_K$ whose total photon number equals $F$, i.e.
    \bb\label{PROJECTOR_N_photon}
        \Pi^{(K)}_F\coloneqq \sum_{\substack{\textbf{m}\in\N^K\\ \|\textbf{m}\|_1=F}} \ketbra{\textbf{m}}_{A_1\ldots A_K} \,.
    \ee
    The following lemma will be useful in order to calculate the rate of our entanglement distribution protocol.
    \begin{lemma}\label{Lemma_P_F}
        Let $\lambda\in[0,1],g\ge1, N\in\N,K\in\N^+$, and $\textbf{n}\in\N^K$ such that $\|\textbf{n}\|_1=N$. Assume that Alice transmits the $K$-mode Fock state $\ket{\textbf{n}}$ to Bob via $K$ parallel uses of the piBGC $\mathcal{N}_{g,\lambda}\coloneqq \Phi_{g,0}\circ\mathcal{E}_{\lambda,0}$ and suppose further that Bob measures the total photon number of the $K$ received modes. The probability $\mathcal{P}_F$ that Bob gets the outcome $F\in\N$ is 
        \bb\label{expr_pf_values}
            \mathcal{P}_F\coloneqq \Tr\left[\mathcal{N}_{g,\lambda}^{\otimes K}(\ketbra{\textbf{n}})\,\Pi^{(K)}_F\right]=\sum_{P=0}^{\min(F,N)}\binom{N}{P}\binom{K+F-1}{F-P}\lambda^{P}(1-\lambda)^{N-P}\frac{(g-1)^{F-P}}{g^{K+F}} \,.
        \ee
        In particular, note that $\mathcal{P}_F$ depends on $\textbf{n}$ only through the total photon number $\|\textbf{n}\|_1=N$.
        Specifically, if the communication channel is the pure loss channel $\mathcal{E}_{\lambda,0}=\mathcal{N}_{1,\lambda}$, the probability of getting the outcome $F\in\N$ is
        \bb
            \Tr\left[\mathcal{E}_{\lambda,0}^{\otimes K}(\ketbra{\textbf{n}})\,\Pi^{(K)}_F\right]&=\binom{N}{F}\lambda^{F}(1-\lambda)^{N-F} \Theta(N-F)\,,
         \ee   
         where we have introduced the Heaviside function $\Theta(x)$ defined as $\Theta(x)=1$ if $x\ge 0$, and $\Theta(x)=0$ if $x<0$.
         In addition, if the communication channel is the pure amplifier channel $\Phi_{g,0}=\mathcal{N}_{g,1}$, the probability of getting the outcome $F\in\N$ is
         \bb
            \Tr\left[\Phi_{g,0}^{\otimes K}(\ketbra{\textbf{n}})\,\Pi^{(K)}_F\right]&=\binom{K+F-1}{F-N} \frac{(g-1)^{F-N}}{g^{K+F}}\Theta(F-N)\,.
        \ee
        Therefore, the probability $\mathcal{P}_F$ in \eqref{expr_pf_values} of getting $F$ photons at the output of $K$ parallel uses of the composition between pure loss channel and pure amplifier channel can be expressed as the sum over $P\in\N$ of the conditional probability of getting $F$ photons at the output of the $K$ pure amplifier channels conditioned on the event of getting $P$ photons at the output of the $K$ pure loss channels, multiplied by the probability of the latter event.
    \end{lemma}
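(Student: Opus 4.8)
The plan is to exploit the composition structure $\mathcal{N}_{g,\lambda}^{\otimes K}=\Phi_{g,0}^{\otimes K}\circ\mathcal{E}_{\lambda,0}^{\otimes K}$ together with the observation that the total-photon-number measurement $\{\Pi^{(K)}_F\}$ is insensitive to the off-diagonal (in the Fock basis) part of the output state. Since Theorem~\ref{kraus_comp_thm} shows that $\mathcal{N}_{g,\lambda}$ maps any diagonal Fock input to a diagonal Fock output (the $n=i$ case of~\eqref{action_comp_chan}), both the pure loss and the pure amplifier stages act on the populations as honest classical channels on photon numbers. The whole quantity $\mathcal{P}_F$ therefore reduces to a two-stage classical process: first each of the $N$ input photons is independently transmitted or lost, and then the amplifier adds photons; I would compute the total-photon distribution of each stage separately and convolve them by conditioning on the intermediate total photon number $P$.

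For the loss stage I would specialise Theorem~\ref{kraus_comp_thm} to $g=1$, giving the single-mode transition $f_{k,k,p}(1,\lambda)=\binom{k}{p}\lambda^{p}(1-\lambda)^{k-p}$, i.e.\ binomial thinning. Because the $K$ parallel losses act independently and the total surviving count is the sum of independent $\mathrm{Bin}(n_j,\lambda)$ variables, additivity of the binomial yields a $\mathrm{Bin}(N,\lambda)$ distribution for the total, depending on $\mathbf{n}$ only through $N=\|\mathbf{n}\|_1$; this is precisely the pure-loss special case $\binom{N}{F}\lambda^{F}(1-\lambda)^{N-F}\Theta(N-F)$.

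For the amplifier stage I would set $\lambda=1$ in Theorem~\ref{kraus_comp_thm} to obtain the single-mode transition $f_{k,k,m}(g,1)=\binom{m}{k}(g-1)^{m-k}/g^{m+1}$, and introduce its probability generating function
\begin{equation}
G_k(x)=\sum_{m\ge k}\binom{m}{k}\frac{(g-1)^{m-k}}{g^{m+1}}\,x^{m}=\frac{x^{k}}{\big(g-(g-1)x\big)^{k+1}}\,,
\end{equation}
which follows from the generalised binomial series. The total output of $K$ parallel amplifiers with inputs $\mathbf{k}$ then has generating function $\prod_{j=1}^{K}G_{k_j}(x)=x^{\|\mathbf{k}\|_1}\big/\big(g-(g-1)x\big)^{\|\mathbf{k}\|_1+K}$, which manifestly depends on $\mathbf{k}$ only through $P=\|\mathbf{k}\|_1$. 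Extracting the coefficient of $x^{F}$ by the negative-binomial expansion of $\big(g-(g-1)x\big)^{-(P+K)}$ gives $\binom{K+F-1}{F-P}(g-1)^{F-P}/g^{K+F}$ for $F\ge P$, which is the pure-amplifier special case when $P=N$.

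Finally I would combine the two stages. After $\mathcal{E}_{\lambda,0}^{\otimes K}$ the state is Fock-diagonal, and any diagonal term carrying total photon number $P$ occurs with probability $\binom{N}{P}\lambda^{P}(1-\lambda)^{N-P}$; conditioned on such a term, the amplifier stage produces total output $F$ with probability $\binom{K+F-1}{F-P}(g-1)^{F-P}/g^{K+F}$, which depends only on $P$. Summing over the intermediate total $P$, constrained by $P\le N$ (loss cannot increase the count) and $P\le F$ (amplification cannot decrease it), reproduces~\eqref{expr_pf_values}. The step I expect to be the crux---and which the generating-function factorisation above is designed to settle---is establishing that each stage's total-output distribution is a function of the total input photon number alone, independent of how those photons are spread across the $K$ modes; this mode-independence is exactly what legitimises conditioning on the single scalar $P$ and hence delivers the stated convolution.
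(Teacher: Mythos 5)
Your proposal is correct and follows essentially the same route as the paper's proof: decompose $\mathcal{N}_{g,\lambda}^{\otimes K}$ into the loss stage followed by the amplifier stage, use Fock-diagonality to reduce everything to classical photon-number distributions, and evaluate the two multi-mode combinatorial sums by generating functions (your binomial-additivity step for the loss stage and the product $\prod_j G_{k_j}(x)=x^{P}/(g-(g-1)x)^{P+K}$ for the amplifier stage are exactly the identities the paper establishes via $(1+x)^N$ and $(1-x)^{-(P+K)}$, respectively). The final convolution over the intermediate total $P\le\min(F,N)$ matches the paper's computation, so no gaps remain.
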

    \begin{proof}
        As a consequence of \eqref{action_ni_att}, for all $n\in\N$ it holds that
        \bb
            \mathcal{E}_{\lambda,0}(\ketbra{n})=\sum_{l=0}^n\binom{n}{l}\lambda^l(1-\lambda)^{n-l}\ketbra{l}
        \ee
        and hence 
        \bb
        \mathcal{E}_{\lambda,0}^{\otimes K}(\ketbra{\textbf{n}})=\sum_{\substack{\textbf{l}\in\N^K\\ \textbf{l}\le \textbf{n}}}\left(\prod_{j=1}^K\binom{n_j}{l_j}\right)\lambda^{\|\textbf{l}\|_1}(1-\lambda)^{N-\|\textbf{l}\|_1}\ketbra{\textbf{l}}\,,
        \ee
        where the inequality between vectors $\textbf{a}\ge \textbf{b}$ means that $a_j \ge b_j$ for all $j=1,\ldots,K$.
        Consequently, by using that $\sum_{P=0}^\infty \Pi^{(K)}_P=\mathbb{1}$, it holds that 
        \bb\label{calculation_P_F}
            \mathcal{P}_F&\coloneqq \Tr\left[\Phi_{g,0}^{\otimes K}\left(\mathcal{E}_{\lambda,0}^{\otimes K}(\ketbra{\textbf{n}})\right)\,\Pi^{(K)}_F\right]=\sum_{P,P'=0}^\infty\Tr\left[\Phi_{g,0}^{\otimes K}\left(\Pi^{(K)}_P\mathcal{E}_{\lambda,0}^{\otimes K}(\ketbra{\textbf{n}})\Pi^{(K)}_{P'}\right)\,\Pi^{(K)}_F\right]\\&=\sum_{P=0}^{N}\lambda^{P}(1-\lambda)^{N-P}\sum_{\substack{\textbf{l}\in\N^K\\ \textbf{l}\le \textbf{n}\\ \|\mathbf{l}\|_1=P}}\left(\prod_{j=1}^K\binom{n_j}{l_j}\right)\Tr\left[\Phi_{g,0}^{\otimes K}\left(\ketbra{\textbf{l}}\right)\,\Pi^{(K)}_F\right]\,.
        \ee
        Moreover, \eqref{action_ni_amp} implies that for all $l\in\N$ it holds that
        \bb
            \Phi_{g,0}(\ketbra{l})=\frac{1}{g^{l+1}}\sum_{m=0}^\infty \binom{l+m}{l}\left(\frac{g-1}{g}\right)^m\ketbra{m+l}
        \ee
        and hence
        \bb
            \Phi_{g,0}^{\otimes K}(\ketbra{\textbf{l}})=\frac{1}{g^{P+K}}\sum_{\textbf{m}\in\N^K} \left(\prod_{j=1}^K\binom{l_j+m_j}{l_j}\right)\left(\frac{g-1}{g}\right)^{\|\textbf{m}\|_1}\ketbra{\textbf{m}+\textbf{l}}\,.
        \ee
    Consequently, it holds that
    \bb\label{expression_with_sum}
        \Tr\left[\Phi_{g,0}^{\otimes K}\left(\ketbra{\textbf{l}}\right)\,\Pi^{(K)}_F\right]=\frac{(g-1)^{F-P}}{g^{K+F}}\sum_{\substack{\textbf{m}\in\N^K\\ \|\textbf{m}\|_1=F-P}} \prod_{j=1}^K\binom{l_j+m_j}{l_j}\,.
    \ee
    The sum 
    \bb
         \sum_{\substack{\textbf{m}\in\N^K\\ \|\textbf{m}\|_1=F-P}} \prod_{j=1}^K\binom{l_j+m_j}{l_j}\,,\qquad\quad
    \ee
    which appears in \eqref{expression_with_sum}, is the coefficient of the term $x^{F-P}$ of the power series $Q(x)$ in the variable $x\in(0,1)$ defined as
    \bb
        Q(x)\coloneqq\sum_{\substack{\textbf{m}\in\N^K}} \left(\prod_{j=1}^K\binom{l_j+m_j}{l_j}\right)x^{\|\textbf{m}\|_1}\,.
    \ee
    By exploiting that for all $l\in\N$ it holds that
    \bb\label{identity_pol}
    	\sum_{m=0}^\infty \binom{m+l}{m}x^m=\frac{1}{(1-x)^{l+1}}\,,
    \ee
    one obtains that
    \bb
        Q(x)=\sum_{\substack{\textbf{m}\in\N^K}} \left(\prod_{j=1}^K\binom{l_j+m_j}{l_j}\right)x^{\|\textbf{m}\|_1}&=\prod_{j=1}^K \left(\sum_{m=0}^\infty\binom{l_j+m}{m}x^m\right)=\prod_{j=1}^K\frac{1}{(1-x)^{l_j+1}}=\frac{1}{(1-x)^{P+K}}\\&=\sum_{m=0}^\infty\binom{P+K+m-1}{m}x^m\,.
    \ee
    It follows that
    \bb
        \sum_{\substack{\textbf{m}\in\N^K\\ \|\textbf{m}\|_1=F-P}} \prod_{j=1}^K\binom{l_j+m_j}{l_j}=\binom{K+F-1}{F-P}\Theta(F-P)
    \ee
    and hence
    \bb
        \Tr\left[\Phi_{g,0}^{\otimes K}\left(\ketbra{\textbf{l}}\right)\,\Pi^{(K)}_F\right]=\frac{(g-1)^{F-P}}{g^{K+F}}\binom{K+F-1}{F-P}\Theta(F-P)\,,
    \ee
    where we have introduced the Heaviside function $\Theta(x)$ defined as $\Theta(x)=1$ if $x\ge 0$, and $\Theta(x)=0$ if $x<0$.
    Consequently, \eqref{calculation_P_F} implies that
    \bb
        \mathcal{P}_F &= \sum_{P=0}^{\min(F,N)}\lambda^{P}(1-\lambda)^{N-P}\frac{(g-1)^{F-P}}{g^{K+F}}\binom{K+F-1}{F-P}\sum_{\substack{\textbf{l}\in\N^K\\ \textbf{l}\le \textbf{n}\\ \|\mathbf{l}\|_1=P}}\left(\prod_{j=1}^K\binom{n_j}{l_j}\right)\\&=\sum_{P=0}^{\min(F,N)}\binom{N}{P}\lambda^{P}(1-\lambda)^{N-P}\frac{(g-1)^{F-P}}{g^{K+F}}\binom{K+F-1}{F-P} \,,
    \ee
    where in the last equality we have exploited that
    \bb\label{identity_binomial_constr}
        \sum_{\substack{\textbf{l}\in\N^K\\ \textbf{l}\le \textbf{n}\\ \|\mathbf{l}\|_1=P}}\left(\prod_{j=1}^K\binom{n_j}{l_j}\right)=\binom{N}{P}\,.
    \ee
    This follows from the fact that the sum in \eqref{identity_binomial_constr} is equal to the coefficient of the term 
    $x^{P}$ of the following polynomial in the variable $x\in\mathbb{R}$:
    \bb
        \sum_{\substack{\textbf{l}\in\N^K\\ \textbf{l}\le \textbf{n}}}\left(\prod_{j=1}^K\binom{n_j}{l_j}\right)x^{\|\textbf{l}\|_1}=\prod_{j=1}^K(1+x)^{n_j}=(1+x)^N=\sum_{l=0}^N\binom{N}{l}x^{l}\,.
    \ee
    \end{proof}

    \begin{remark}
        Here we present an alternative method to calculate the probability $\mathcal{P}_F$ reported in \eqref{expr_pf_values}. For all $x\in(0,1)$ let us consider the tensor product of $K$ thermal states with mean photon number $\frac{x}{1-x}$, i.e.
        \bb
            \tau_{\frac{x}{1-x}}^{\otimes K}=(1-x)^K\sum_{\textbf{l}\in\N^K}x^{\|\textbf{l}\|_1}\ketbra{\textbf{l}}\,.
        \ee
        Consequently, the quantity
        \bb
            \mathcal{P}_F= \Tr\left[\mathcal{N}_{g,\lambda}^{\otimes K}(\ketbra{\textbf{n}})\,\Pi^{(K)}_F\right]
        \ee
        is the coefficient of the term $x^{F}$ of the power series $P(x)$ in the variable $x\in(0,1)$ defined as
    \bb\label{eq_p_x}
        P(x)\coloneqq \frac{1}{(1-x)^K}\Tr\left[\mathcal{N}_{g,\lambda}^{\otimes K}(\ketbra{\textbf{n}})\,\tau_{\frac{x}{1-x}}^{\otimes K}\right]\,.
    \ee
    By using the characteristic function properties reported in~\eqref{def_charact_func}, \eqref{inverse_fourier_displacement}, \eqref{transf_caract}, and the fact that the characteristic function of a thermal state $\tau_\nu$ is $\chi_{\tau_{\nu}}(\mathbf r)=e^{ -\frac{1}{4}(2\nu+1)|\mathbf{r}|^2}$, one obtains that for any single-mode state $\rho$ it holds that
    \bb
         \Tr\left[\mathcal{N}_{g,\lambda}(\rho)\,\tau_{\frac{x}{1-x}}\right]&=\int_{\mathbb{R}^{2}}\frac{\mathrm{d}^{2}\mathbf{r}}{2\pi}\chi_{\mathcal{N}_{g,\lambda}(\rho)}(\mathbf r)\,\chi_{\tau_{\frac{x}{1-x}}}(\mathbf r)=\int_{\mathbb{R}^{2}}\frac{\mathrm{d}^{2}\mathbf{r}}{2\pi} \chi_{\rho}(\sqrt{g\lambda}\,\mathbf{r})e^{-\frac{1}{4}\left(2g-g\lambda+2\frac{x}{1-x}\right)|\mathbf{r}|^2}\\&=\frac{1}{g\lambda}\int_{\mathbb{R}^{2}}\frac{\mathrm{d}^{2}\mathbf{r}}{2\pi} \chi_{\rho}(\mathbf{r})e^{-\frac{1}{4g\lambda}\left(2g-g\lambda+2\frac{x}{1-x}\right)|\mathbf{r}|^2}=\frac{1}{g\lambda}\Tr\left[\rho\,\,\tau_{\frac{g-g\lambda+(1+g\lambda-g)x}{g\lambda(1-x)}}\right]\,.
    \ee
    Hence, by exploiting \eqref{identity_pol} and the fact that $\|\textbf{n}\|_1=N$, the power series $P(x)$ can be expressed as 
    \bb
    		P(x)&=\frac{1}{(1-x)^K (g\lambda)^K}\Tr\left[ \ketbra{\textbf{n}} \tau_{\frac{g-g\lambda+(1+g\lambda-g)x}{g\lambda(1-x)}}^{\otimes K}  \right] =\frac{\left[g(1-\lambda)+(1+g\lambda-g)x\right]^{N}}{[g-(g-1)x]^{N+K}}\\&=\sum_{P=0}^N \binom{N}{P}(1+g\lambda-g)^P(1-\lambda)^{N-P}g^{-P-K}x^P\sum_{l=0}^\infty \binom{N+K-1+l}{l}\left(\frac{g-1}{g}\right)^l x^l\,.
    \ee
    It follows that
    \bb\label{expr_pf_values2}
    	\mathcal{P}_F=\sum_{P=0}^{\min(F,N)}\binom{N}{P}\binom{N+K+F-P-1}{F-P}(1+g\lambda-g)^P(1-\lambda)^{N-P}\frac{(g-1)^{F-P}}{g^{F+K}}\,.
    \ee
      Incidentally, by comparing the two expressions of $\mathcal{P}_F$ in \eqref{expr_pf_values} and \eqref{expr_pf_values2}, one deduces the following identity:
    \bb
    \sum_{P=0}^{\min(F,N)}\binom{N}{P}\binom{K+F-1+N-P}{F-P}\left(\frac{g\lambda -(g-1)}{(1-\lambda)(g-1)}\right)^P=\sum_{P=0}^{\min(F,N)}\binom{N}{P}\binom{K+F-1}{F-P}\left(\frac{\lambda}{(1-\lambda)(g-1)}\right)^P\,.
    \ee
    \end{remark}

    Let us now introduce an additional entanglement distribution protocol to distribute ebits across any piBGC $\mathcal{N}_{g,\lambda}$. The protocol depends on two parameters, $K,N\in\N^+$ with $K\ge2$, and it is composed of five steps named S1-S5, which we now outline.

\begin{enumerate}[\bf S1:]
    \item Alice prepares the state $\ket{\Psi_{N,K}}_{A_1\ldots A_K, A'_1,\ldots,A'_K}$ of $K+K$ modes $A_1,\ldots,A_k,A_1',\ldots,A_K'$, sending the systems $A'_1,\ldots,A'_K$ to Bob through $K$ uses of the channel $\mathcal{N}_{g,\lambda}$. Now Alice and Bob share the state $\Id_{A_1\ldots A_k}\otimes\mathcal{N}_{g,\lambda}^{\otimes K}(\ketbra{\Psi_{N,K}})$. By using~\eqref{action_comp_chan}, such a state can be expressed as
    \bb
        &\Id_{A_1\ldots A_k}\otimes\mathcal{N}_{g,\lambda}^{\otimes K}(\ketbra{\Psi_{N,K}})\\&=\frac{1}{\binom{N+K-1}{N}}\sum_{\substack{\textbf{n}\in\N^K\\ \|\textbf{n}\|_1=N}}\,\sum_{\substack{\textbf{i}\in\N^K\\ \|\textbf{i}\|_1=N}}\,\sum_{\substack{\textbf{l}\in\N^K\\ \textbf{l}\ge \max(\textbf{i}-\textbf{n},\textbf{0})}}\left(\prod_{j=1}^K f_{n_j,i_j,l_j}(g,\lambda)\right)\ketbraa{\textbf{n}}{\textbf{i}}_{A_1\ldots A_K} \otimes \ketbraa{\textbf{l}+\textbf{n}-\textbf{i}}{\textbf{l}}_{B_1\ldots B_K}\,,
    \ee
    where $\textbf{0}\in\N^K$ is the zero vector and the inequality between vectors $\textbf{a}\ge \textbf{b}$ means that $a_j \ge b_j$ for all $j=1,\ldots,K$.

    \item Bob performs the local POVM $\{\Pi^{(K)}_F\}_{F\in\N}$, where $\Pi^{(K)}_F$ is the projector onto the subspace whose total photon number equals $F$ (see \eqref{PROJECTOR_N_photon}), on the $K$ modes he has received.
    The probability of getting the outcome $F$ is denoted by $\mathcal{P}_F$ and it can be calculated as
    \bb
            \mathcal{P}_F&\coloneqq \Tr\left[\left(\mathbb{1}_{A_1\ldots A_k}\otimes\Pi^{(K)}_F\right)\, \left(\Id_{A_1\ldots A_k}\otimes\mathcal{N}_{g,\lambda}^{\otimes K}(\ketbra{\Psi_{N,K}})\right) \right]= \frac{1}{\binom{N+K-1}{N}}\sum_{\substack{\textbf{n}\in\N^K\\ \|\textbf{n}\|_1=N}}\Tr\left[\Pi^{(K)}_F\,\mathcal{N}_{g,\lambda}^{\otimes K}(\ketbra{\textbf{n}})\right]\\&=\sum_{P=0}^{\min(F,N)}\binom{N}{P}\binom{K+F-1}{F-P}\lambda^{P}(1-\lambda)^{N-P}\frac{(g-1)^{F-P}}{g^{K+F}} \,,
    \ee
    where we have exploited Lemma~\ref{Lemma_P_F}. The post-measurement state $\rho_{A_1\ldots A_kB_1\ldots B_k}^{(F)}$ conditioned on the outcome $F\in\N$ is given by
\bb\label{rho_step2_multirail}
        &\rho^{(F)}_{A_1\ldots A_kB_1\ldots B_k} \\
        &\quad = \frac{1}{\mathcal{P}_F}\left(\mathbb{1}_{A_1\ldots A_k}\otimes\Pi^{(K)}_F\right)\left( \Id_{A_1\ldots A_k}\otimes\mathcal{N}_{g,\lambda}^{\otimes K}(\ketbra{\Psi_{N,K}})\right)\left( \mathbb{1}_{A_1\ldots A_k}\otimes\Pi^{(K)}_F\right)\\
        &\quad =\frac{1}{\mathcal{P}_F\binom{N+K-1}{N}}\sum_{\substack{\textbf{n}\in\N^K\\ \|\textbf{n}\|_1=N}}\,\sum_{\substack{\textbf{i}\in\N^K\\ \|\textbf{i}\|_1=N}}\,\sum_{\substack{\textbf{l}\in\N^K\\ \|\textbf{l}\|_1=F\\\textbf{l}\ge \max(\textbf{i}-\textbf{n},\textbf{0})}}\left(\prod_{j=1}^K f_{n_j,i_j,l_j}(g,\lambda)\right)\ketbraa{\textbf{n}}{\textbf{i}}_{A_1\ldots A_K} \otimes \ketbraa{\textbf{l}+\textbf{n}-\textbf{i}}{\textbf{l}}_{B_1\ldots B_K}\,\\
        &\quad =\sum_{n,i=0}^{\binom{N+K-1}{N}-1} \sum_{h,l=0}^{\binom{F+K-1}{F}-1}c_{n,i,h,l}\ketbraa{\phi^{(N)}_n}{\phi^{(N)}_i}_{A_1\ldots A_k}\otimes\ketbraa{\phi^{(F)}_h}{\phi^{(F)}_l}_{B_1\ldots B_k}\,,
\ee
where for all $n,i=0,1,\ldots, \binom{N+K-1}{N}-1$ and all $h,l=0,1,\ldots, \binom{F+K-1}{F}-1$ the coefficient $c_{n,i,h,l}$ is defined as follows. Let $\textbf{n},\textbf{i}, \textbf{h}, \textbf{l}\in\N^K$ such that $\ket{\phi^{(N)}_n}=\ket{\textbf{n}}$, $\ket{\phi^{(N)}_i}=\ket{\textbf{i}}$, $\ket{\phi^{(F)}_h}=\ket{\textbf{h}}$, and $\ket{\phi^{(F)}_l}=\ket{\textbf{l}}$. If $\textbf{l}\ge \max(\textbf{i}-\textbf{n},\textbf{0})$ and $\textbf{h}=\textbf{l}+\textbf{n}-\textbf{i}$, then 
\bb
    c_{n,i,h,l}\coloneqq\frac{\left(\prod_{j=1}^K f_{n_j,i_j,l_j}(g,\lambda)\right)}{\mathcal{P}_F\binom{N+K-1}{N}}\,,
\ee
    otherwise $c_{n,i,h,l}=0$. By setting 
    \bb
    d\coloneqq \max\left(\binom{N+K-1}{N}, \binom{F+K-1}{F}\right)\,,
    \ee
    the resulting state in~\eqref{rho_step2_multirail} can be seen as a bipartite two-qu$d$it state $\rho^{(F)}_{AB}\in\mathfrak{S}(\HH_d\otimes\HH_d)$ of the form
    \bb
        \rho^{(F)}_{AB}=\sum_{n,i,h,l=0}^{d-1} \eta_{n,i,h,l}\ketbraa{n}{i}_A\otimes\ketbraa{h}{l}_B\,,
    \ee
    where $\HH_d$ is the qu$d$it Hilbert space with $\{\ket{0},\ket{1},\ldots, \ket{d-1}\}$ as an orthonormal basis, and where the coefficients $\eta_{n,i,h,l}$ are defined as follows:
    \begin{itemize}
        \item if $n,i\le \binom{N+K-1}{N}-1$ and $h,l\le \binom{F+K-1}{F}-1$, then $\eta_{n,i,h,l}\coloneqq c_{n,i,h,l}$;
        \item otherwise, $\eta_{n,i,h,l}\coloneqq 0$.
    \end{itemize}
    Consequently, Alice and Bob have reduced the problem in distilling ebits from the two-qu$d$it state $\rho^{(F)}_{AB}$.

    \item Now Alice and Bob decide whether or not to run the reverse hashing protocol, which can distil ebits from $\rho^{(F)}_{AB}$ with a rate equal to its reverse coherent information, i.e. 
    \bb
        I_{\text{rc}}(\rho^{(F)}_{AB})=S(\Tr_B\rho^{(F)}_{AB})-S(\rho^{(F)}_{AB})\,,
    \ee
    where $S(\cdot)$ denotes the von Neumann entropy. By exploiting that
    \bb
        \Tr_B\rho^{(F)}_{AB}=\frac{1}{\binom{N+K-1}{N}}\sum_{n=0}^{\binom{N+K-1}{N}-1}\ketbra{n},
    \ee
    as guaranteed by \eqref{rho_step2_multirail} and Lemma~\ref{Lemma_P_F}, it follows that the reverse coherent information can be calculated as
    \bb
        I_{\text{rc}}(\rho^{(F)}_{AB})=\log_2\binom{N+K-1}{N}-S\left( \sum_{n,i,h,l=0}^{d-1} \eta_{n,i,h,l}\ketbraa{n}{i}\otimes\ketbraa{h}{l}\right)\,.
    \ee
    If Alice and Bob choose to run the reverse hashing protocol, the protocol terminates. Otherwise, they apply the qu$d$it Pauli-based twirling reported in~\cite[Eq.~(18)]{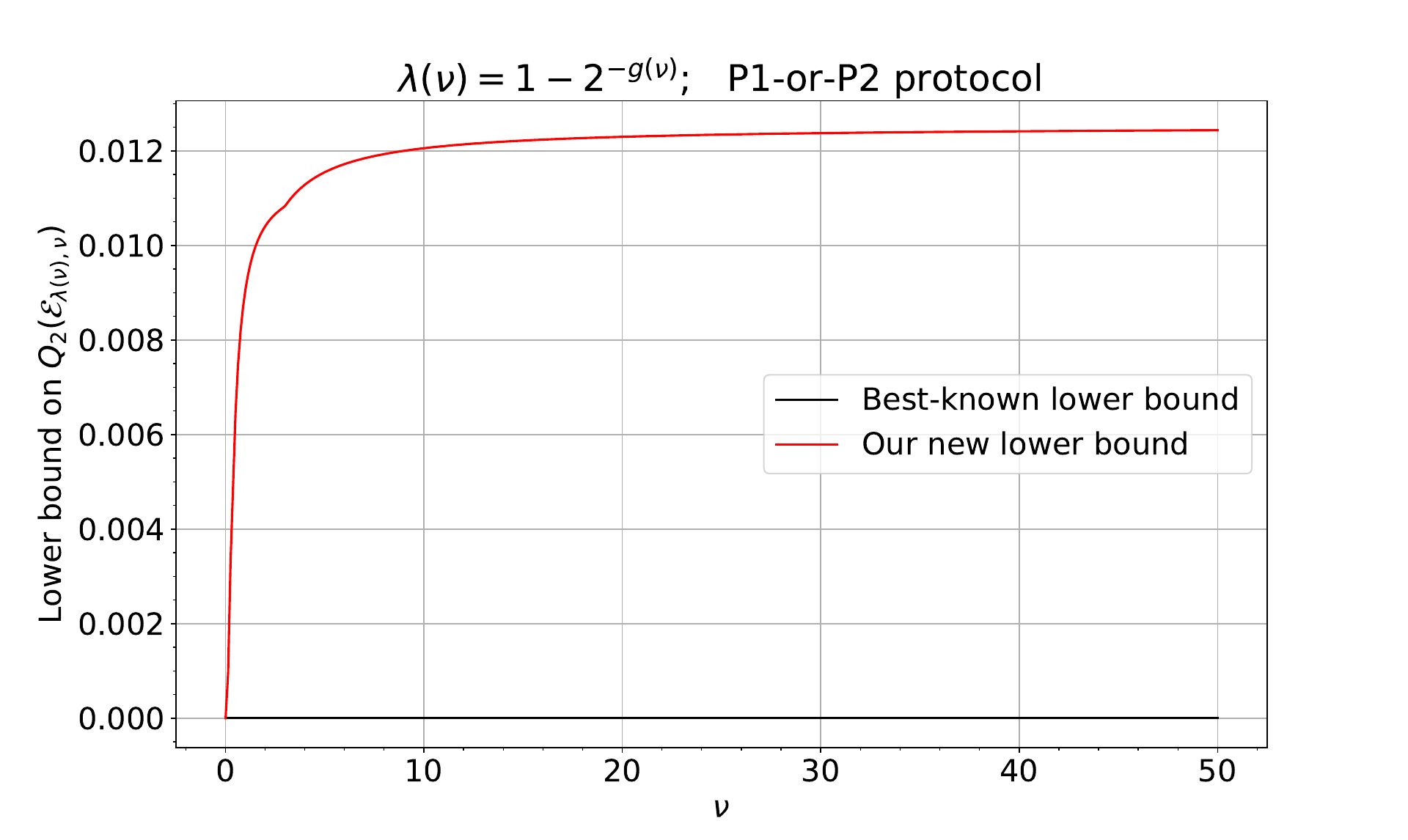} in order to transform their state in a Bell-diagonal state of the form 
    \bb
    \rho'^{(F)}_{AB}=\sum_{m,n=0}^d \alpha^{(F,0)}_{mn}\ketbra{\psi^{(d)}_{mn}}_{AB}\,,
    \ee
    where 
    \bb
        \ket{\psi^{(d)}_{mn}}_{AB}\coloneqq\frac{1}{\sqrt{d}}\sum_{r=0}^{d-1}e^{i\frac{2\pi m r}{d}}\ket{r}_A\otimes\ket{(r-n)\text{ mod } d}_B
    \ee
    and 
    \bb
        \alpha^{(F,0)}_{mn}\coloneqq \bra{\psi^{(d)}_{mn} }\rho^{(F)}_{AB}\ket{\psi^{(d)}_{mn}}=\frac{1}{d}\sum_{r_1,r_2=0}^{d-1} \cos\left(\frac{2\pi m (r_2-r_1)}{d}\right)\eta_{r_1,r_2,\,(r_1-n)\text{ mod } d,\, (r_2-n)\text{ mod } d}\,.
    \ee

    \item Alice and Bob run $\bar{k}$ times the P1-or-P2 sub-routine for qu$d$its~\cite{p1orp2}, where $\bar{k}$ is chosen in order to maximise the ebit rate. The goal of this step is to bring the shared state closer to the $d$-dimensional maximally-entangled state $\ket{\psi_{00}^{(d)}}$. This step is successful, i.e.~the protocol is not aborted, with a probability of success equal to $\prod_{t=0}^{\bar{k}-1}P^{(F)}_t$ and it allows Alice and Bob to transform $2^{\bar{k}}$ copies of $\rho'^{(F)}_{AB}=\sum_{m,n=0}^d \alpha^{(F,0)}_{mn}\ketbra{\psi^{(d)}_{mn}}_{AB}$ in a state of the form
    \bb
         \rho'^{(F,\bar{k})}_{AB}\coloneqq\sum_{m,n=0}^d \alpha^{(F,\bar{k})}_{mn}\ketbra{\psi^{(d)}_{mn}}_{AB}\,.
    \ee
    For all $t\in\{0,1,\ldots,\bar{k}-1\}$ and all $m,n\in\{0,1,\ldots,d-1\}$ the coefficients $\alpha_{mn}^{(F,t+1)}$ and the probabilities $P^{(F)}_t$ are recursively defined in the following way~\cite{p1orp2}:
\begin{itemize}
    \item If $\sum_{m_1=0}^{d-1}\alpha^{(F,t)}_{m_10}<\sum_{n_1=0}^{d-1}\alpha^{(F,t)}_{0n_1}$, then
    \bb\label{coeff_p1_multirail}
            \alpha^{(F,t+1)}_{mn}\coloneqq \frac{1}{P^{(F)}_t}\sum_{\substack{m_1,m_2=0\\(m_1+ m_2)\text{ mod }d = m}}^{d-1}\alpha_{m_1n}^{(F,t)}\alpha_{m_2n}^{(F,t)}\,,
    \ee
    where
    \bb\label{probk_p1_multirail}
        P^{(F)}_t \coloneqq \sum_{m_1,m_2,n=0}^{d-1}\alpha_{m_1n}^{(F,t)}\alpha_{m_2n}^{(F,t)} \,.
    \ee
    \item Otherwise,
    \bb\label{coeff_p2_multirail}
        \alpha^{(F,t+1)}_{mn}\coloneqq \frac{1}{P^{(F)}_t}\sum_{\substack{n_1,n_2=0\\(n_1+ n_2)\text{ mod }d = n}}^{d-1}\alpha_{mn_1}^{(F,t)}\alpha_{mn_2}^{(F,t)}\,,
    \ee    
    where
    \bb\label{probk_p2_multirail}
        P^{(F)}_t \coloneqq \sum_{m,n_1,n_2=0}^{d-1}\alpha_{mn_1}^{(F,t)}\alpha_{mn_2}^{(F,t)} \,.
    \ee
\end{itemize}

     \item Alice and Bob distil ebits from the state $\rho'^{(F,\bar{k})}_{AB}=\sum_{m,n=0}^d \alpha^{(F,\bar{k})}_{mn}\ketbra{\psi^{(d)}_{mn}}_{AB}$ with a yield denoted as $\mathcal{I}_d(\alpha^{(F,\bar{k})})$ by running the following protocol:
    \begin{itemize}
    
        \item If $d=2$, then Alice and Bob run the Step~5 and Step~6 of the entanglement distribution protocol introduced in the proof of Theorem~\ref{th_lower_Q2} in order to distil ebits from $\rho'^{(F,\bar{k})}_{AB}$ with a yield equal to 
        \bb
            \mathcal{I}_2(\alpha^{(F,\bar{k})})\coloneqq\mathcal{I}(  \alpha^{(F,\bar{k})}_{00},\alpha^{(F,\bar{k})}_{01},\alpha^{(F,\bar{k})}_{10},\alpha^{(F,\bar{k})}_{11})\,,
        \ee
    where $\mathcal{I}$ is defined in~\eqref{yield_final_protocol}. 

        \item If $d>2$, then Alice and Bob run the hashing protocol on $\rho'^{(F,\bar{k})}_{AB}$ and thus they distil ebits with a yield equal to the coherent information of $\rho'^{(F,\bar{k})}_{AB}$, i.e.
        \bb
            \mathcal{I}_d(\alpha^{(F,\bar{k})})\coloneqq I_{\text{c}}(\rho'^{(F,\bar{k})}_{AB})= \log_2 d+ \sum_{m,n=0}^{d-1}\alpha^{(F,\bar{k})}_{mn}\log_2\alpha^{(F,\bar{k})}_{mn}\,.
        \ee
    \end{itemize}

\end{enumerate}

    The ebit rate of the protocol is given by
    \bb\label{rate_multirail}
        R(g,\lambda,N,K)\coloneqq \frac{1}{K}\sum_{F=1}^\infty \mathcal{P}_F \max\left(I_{\text{rc}}(\rho^{(F)}_{AB})\,,\,\sup_{\bar{k}\in\N} \frac{\prod_{t=0}^{\bar{k}-1}P^{(F)}_t}{2^{\bar{k}}}\mathcal{I}_d(  \alpha^{(F,\bar{k})})\right)\,.
    \ee
    The term $\frac{1}{K}$ in the expression~\eqref{rate_multirail} arises from the fact that Alice uses the channel $K$ times during step S1, and the variable $F$ corresponds to the outcome of the total photon number measurement in step S2, with associated probability $\mathcal{P}_F$. The sum over $F$ equals the expected value of the yield of ebits that can be distilled from the post-measurement state $\rho^{(F)}_{AB}$ by running steps S3, S4, and S5. The maximum comes from the fact that during step S3 Alice and Bob choose whether or not to run the reverse hashing protocol, which can distil ebits with a rate equal to $I_{\text{rc}}(\rho^{(F)}_{AB})$. The supremum over $\bar{k}$ comes from the fact that Alice and Bob choose the number of iterations $\bar{k}$ of the P1-or-P2 subroutine in order to maximise the rate. The rate in~\eqref{rate_multirail} is a lower bound on the two-way quantum capacity of the piBGC $\mathcal{N}_{g,\lambda}$ for all $N,K\in\N^+$ with $K\ge 2$. Therefore, we have
    \bb
        K(\mathcal{N}_{g,\lambda}) \ge  Q_2(\mathcal{N}_{g,\lambda})\ge \sup_{\substack{N,K\in\N^+\\ K\ge2 }} R(g,\lambda,N,K)\,.
    \ee
    Let us summarise this result in the following theorem.
    \begin{thm}\label{thm_multirail}
        For all $\lambda\in[0,1]$ and $g\ge 1$ the secret-key capacity $K(\mathcal{N}_{g,\lambda})$ and the two-way quantum capacity $Q_2(\mathcal{N}_{g,\lambda})$ of the piBGC $\mathcal{N}_{g,\lambda}$ satisfy
        \bb
            K(\mathcal{N}_{g,\lambda}) \ge  Q_2(\mathcal{N}_{g,\lambda})\ge \sup_{\substack{N,K\in\N^+\\ K\ge2 }} R(g,\lambda,N,K)\,,
        \ee
        where
        \bb\label{rate_expr_multirail}
        R(g,\lambda,N,K)\coloneqq \frac{1}{K}\sum_{F=1}^\infty \mathcal{P}_F \max\left(I_{\text{rc}}^{(F)}\,,\,\sup_{\bar{k}\in\N} \frac{\prod_{t=0}^{\bar{k}-1}P^{(F)}_t}{2^{\bar{k}}}\mathcal{I}_d(  \alpha^{(F,\bar{k})})\right)\,.
        \ee
        The quantities present in~\eqref{rate_expr_multirail} are defined as follows. For all $F\in\N$ the dimension $d$ is defined as 
        \bb
        d\coloneqq \max\left(\binom{N+K-1}{N}, \binom{F+K-1}{F}\right)
        \ee
        and the probability $\mathcal{P}_F$ is defined as
        \bb
            \mathcal{P}_F&\coloneqq \sum_{P=0}^{\min(F,N)}\binom{N}{P}\binom{K+F-1}{F-P}\lambda^{P}(1-\lambda)^{N-P}\frac{(g-1)^{F-P}}{g^{K+F}}\,.
        \ee
        Moreover, the probabilities $P^{(F)}_{\bar{k}}$ and the coefficients $\{\alpha_{mn}^{(F,\bar{k})}\}_{m,n\in\{0,1,\ldots,d-1\}}$ are recursively defined as follows.
        For all $t\in\{0,1,\ldots,\bar{k}-1\}$ and all $m,n\in\{0,1,\ldots,d-1\}$ it holds that:
\begin{itemize}
    \item If $\sum_{m_1=0}^{d-1}\alpha^{(F,t)}_{m_10}<\sum_{n_1=0}^{d-1}\alpha^{(F,t)}_{0n_1}$, then
    \bb\label{coeff_p1_multirail2}
            \alpha^{(F,t+1)}_{mn}&\coloneqq \frac{1}{P^{(F)}_t}\sum_{\substack{m_1,m_2=0\\(m_1+ m_2)\text{ mod }d = m}}^{d-1}\alpha_{m_1n}^{(F,t)}\alpha_{m_2n}^{(F,t)}\,,\\
        P^{(F)}_t &\coloneqq \sum_{m_1,m_2,n=0}^{d-1}\alpha_{m_1n}^{(F,t)}\alpha_{m_2n}^{(F,t)} \,.
    \ee
    \item Otherwise,
    \bb\label{coeff_p2_multirail2}
        \alpha^{(F,t+1)}_{mn}&\coloneqq \frac{1}{P^{(F)}_t}\sum_{\substack{n_1,n_2=0\\(n_1+ n_2)\text{ mod }d = n}}^{d-1}\alpha_{mn_1}^{(F,t)}\alpha_{mn_2}^{(F,t)}\,,\\
        P^{(F)}_t &\coloneqq \sum_{m,n_1,n_2=0}^{d-1}\alpha_{mn_1}^{(F,t)}\alpha_{mn_2}^{(F,t)} \,.
    \ee
\end{itemize}
Moreover, for all $m,n\in\{0,1,\ldots,d-1\}$ the coefficient $\alpha^{(F,0)}_{mn}$ is defined as
    \bb
        \alpha^{(F,0)}_{mn}\coloneqq \frac{1}{d}\sum_{r_1,r_2=0}^{d-1} \cos\left(\frac{2\pi m (r_2-r_1)}{d}\right)\eta_{r_1,r_2,(r_1-n)\text{ mod } d, (r_2-n)\text{ mod } d}\,.
    \ee
    In addition, for all $n,i\in{0,1,\ldots,\binom{N+K-1}{N}-1}$, we define $(n_1,\ldots,n_K)$ and $(i_1,\ldots,i_K)$ as the $n$th and $i$th element of the ordered set $S_{K,N}$, where $S_{K,N}$ is defined as  \bb
    S_{K,N}\coloneqq\{(f_1,\ldots,f_K)\in\N^K:\,  \sum_{j=1}^{K} f_j=N\}
    \ee
    and it is ordered according to the relation $\preceq_{K,N}$, given by     \bb
        (f_1,\ldots,f_K)\,\preceq_{K,N}\,(g_1,\ldots,g_K) \, \Longleftrightarrow \sum_{j=1}^{K} f_{j}\, (N+1)^j<\sum_{j=1}^{K} g_{j}\, (N+1)^j\,.
    \ee
    Additionally, for all $h,l\in\{0,1,\ldots,\binom{F+K-1}{F}-1\}$, we define $(h_1,\ldots,h_K)$ and $(l_1,\ldots,l_K)$ as the $h$th and $l$th element of the set $S_{K,F}$ ordered according to the relation $\preceq_{F,N}$. Furthermore, for all $n,i,h,l\in\{0,1,\ldots,d-1\}$ the coefficients $\eta_{n,i,h,l}$ are defined as follows:
    \begin{itemize}
        \item If 
        \bb
        n,i&\le \binom{N+K-1}{N}-1\,,\\
        h,l&\le \binom{F+K-1}{F}-1\,,\\ 
        l_j&\ge \max(i_j-n_j,0)\,\quad \text{for all } j=1,2,\ldots,K\,,\\  
        h_j&=l_j+n_j-i_j\,\quad \text{for all } j=1,2,\ldots,K\,,  
    \ee
    then 
    \bb
    \eta_{n,i,h,l}\coloneqq\frac{\left(\prod_{j=1}^K f_{n_j,i_j,l_j}(g,\lambda)\right)}{\mathcal{P}_F\binom{N+K-1}{N}}\,,
    \ee
    where $f_{n,i,l}(g,\lambda)$ is defined in~\eqref{def_f_comp}.
    \item Otherwise, $\eta_{n,i,h,l}=0$. 
    \end{itemize}
    Moreover, the quantity $I_{\text{rc}}^{(F)}$ is defined as
    \bb
        I_{\text{rc}}^{(F)}\coloneqq \log_2\binom{N+K-1}{N}-S\left( \sum_{n,i,h,l=0}^{d-1} \eta_{n,i,h,l}\ketbraa{n}{i}\otimes\ketbraa{h}{l}\right)\,,
    \ee
    where $S(\cdot)$ denotes the von Neumann entropy.
    Finally, the term $\mathcal{I}_d(\alpha^{(F,\bar{k})})$ is defined differently depending on the value of $d$:
    \begin{itemize}
        \item If $d=2$, then
        \bb
            \mathcal{I}_2(\alpha^{(F,\bar{k})})\coloneqq\mathcal{I}(  \alpha^{(F,\bar{k})}_{00},\alpha^{(F,\bar{k})}_{01},\alpha^{(F,\bar{k})}_{10},\alpha^{(F,\bar{k})}_{11})\,,
        \ee
    where $\mathcal{I}$ is defined in~\eqref{yield_final_protocol}. 
        \item If $d>2$, then 
        \bb
            \mathcal{I}_d(\alpha^{(F,\bar{k})})\coloneqq \log_2 d+ \sum_{m,n=0}^{d-1}\alpha^{(F,\bar{k})}_{mn}\log_2\alpha^{(F,\bar{k})}_{mn}\,.
        \ee
    \end{itemize}
    \end{thm}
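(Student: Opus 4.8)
The plan is to observe that the right-hand side is, by construction, the asymptotic ebit rate of the explicit entanglement distribution protocol S1--S5 presented above, and then to invoke the standard reduction from entanglement distribution to the two-way quantum capacity. Concretely, Alice prepares many copies of $\ket{\Psi_{N,K}}$ and transmits the $K$ primed modes of each copy through $K$ independent uses of $\mathcal{N}_{g,\lambda}$; any LOCC procedure that subsequently extracts $\approx n_0\,y$ ebits from $n_0$ such shared states, while consuming $n_0 K$ channel uses, certifies $Q_2(\mathcal{N}_{g,\lambda})\ge y/K$, by exactly the reduction used to prove Theorem~\ref{th_lower_Q2} and recorded in~\eqref{link_D2_D}. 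The inequality $K(\mathcal{N}_{g,\lambda})\ge Q_2(\mathcal{N}_{g,\lambda})$ is then immediate from~\eqref{relation_2waycap}. Thus it suffices to check that every step of the protocol is a valid LOCC operation and that the per-copy yield it produces is precisely the summand of $R(g,\lambda,N,K)$.

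First I would analyze Step S2: the local photon-number projectors $\{\Pi^{(K)}_F\}_{F}$ constitute a valid local measurement on Bob's side, outcome $F$ occurs with probability $\mathcal{P}_F$ as computed in Lemma~\ref{Lemma_P_F}, and the post-measurement state is the two-qu$d$it state displayed in~\eqref{rho_step2_multirail}. Here the two Fock subspaces at input ($N$ photons, dimension $\binom{N+K-1}{N}$) and output ($F$ photons, dimension $\binom{F+K-1}{F}$) are embedded into a common space of dimension $d=\max(\binom{N+K-1}{N},\binom{F+K-1}{F})$ via the restricted-lexicographic indexing fixed above, and the coefficients $\eta_{n,i,h,l}$ are read off from the matrix elements of $\Id\otimes\mathcal{N}_{g,\lambda}^{\otimes K}(\ketbra{\Psi_{N,K}})$ using~\eqref{action_comp_chan}. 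Since $\Tr_B\rho^{(F)}_{AB}$ is maximally mixed on the $N$-photon subspace, the reverse coherent information $I_{\text{rc}}^{(F)}$ acquires the closed form entering the first branch of the maximum.

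Next I would treat the distillation branch S3--S5. The qu$d$it Pauli-based twirling of~\cite{p1orp2} is a valid LOCC twirl sending $\rho^{(F)}_{AB}$ to the Bell-diagonal state with coefficients $\alpha^{(F,0)}_{mn}=\bra{\psi^{(d)}_{mn}}\rho^{(F)}_{AB}\ket{\psi^{(d)}_{mn}}$, whose explicit form follows by expanding the generalized Bell states. The $\bar{k}$-fold iteration of the qu$d$it P1-or-P2 subroutine is then analyzed exactly as in~\cite{p1orp2}: each round consumes two copies, succeeds with probability $P^{(F)}_t$, and updates the coefficients by the quadratic recursions~\eqref{coeff_p1_multirail2}--\eqref{coeff_p2_multirail2}, so that $2^{\bar{k}}$ input copies yield one copy of $\rho'^{(F,\bar{k})}_{AB}$ with survival factor $\prod_{t=0}^{\bar{k}-1}P^{(F)}_t/2^{\bar{k}}$. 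The terminal yield $\mathcal{I}_d(\alpha^{(F,\bar{k})})$ is the two-way improved-hashing yield~\eqref{yield_final_protocol} when $d=2$ (reusing Steps~5--6 of the protocol in Theorem~\ref{th_lower_Q2}) and the coherent-information hashing yield~\eqref{hashing_ineq} when $d>2$. Weighting each outcome by $\mathcal{P}_F$, taking the better of the reverse-hashing and distillation branches, optimizing over $\bar{k}$, summing over $F$, and dividing by the $K$ channel uses reproduces~\eqref{rate_expr_multirail}; optimizing over $N,K$ gives the theorem.

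The main obstacle I expect is bookkeeping rather than conceptual: correctly matching the two differently sized Fock subspaces through the common qu$d$it space and verifying that the imported qu$d$it P1-or-P2 recursion of~\cite{p1orp2} acts on the embedded coefficients as claimed, with the zero-padded entries $\eta_{n,i,h,l}=0$ never spuriously contributing. Some care is also needed to ensure that the twirl and the recursion are built from the generalized Pauli group adapted to dimension $d$, so that they commute with the embedding and do not mix the padded and physical sectors.
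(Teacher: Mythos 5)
Your proposal is correct and follows essentially the same route as the paper: the theorem is proved there exactly by presenting the multi-rail protocol S1--S5, checking each step is LOCC, computing the per-outcome yield (reverse hashing versus twirl + qu$d$it P1-or-P2 + terminal hashing), weighting by $\mathcal{P}_F$ from Lemma~\ref{Lemma_P_F}, dividing by the $K$ channel uses, and invoking the reduction~\eqref{link_D2_D} together with $K\ge Q_2$. The bookkeeping concerns you flag (embedding the two Fock subspaces into a common qu$d$it space with zero-padded coefficients) are handled in the paper exactly as you describe, so no gap remains.
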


\section{Results on the two-way capacities of piBGCs}\label{sub_res_twoway}
In this subsection, for each of the piBGCs, first we determine the parameter region where the two-way capacities vanish, second we find a new lower bound on the two-way capacities, and finally we compare our results with the existing literature.
\subsection{Results on the two-way capacities of the thermal attenuator}
Let us consider the thermal attenuator $\mathcal{E}_{\lambda,\nu}$ of transmissivity $\lambda\in[0,1]$ and thermal noise $\nu\ge0$. Since the PLOB bound in~\eqref{PLOB_Q2} vanishes for $\lambda\le \frac{\nu}{\nu+1}$, it is already known that the two-way capacities of $\mathcal{E}_{\lambda,\nu}$ vanish for $\lambda<\frac{\nu}{\nu+1}$. The following theorem establishes that also the vice-versa is true.
\begin{thm}\label{th1_therm}
Let $\lambda\in[0,1]$, $\nu\ge0$, and $N_s>0$. The energy-constrained two-way capacities of the thermal attenuator $Q_2(\mathcal{E}_{\lambda,\nu},N_s)$ and $K(\mathcal{E}_{\lambda,\nu},N_s)$ vanish if and only if $\lambda\le \frac{\nu}{\nu+1}$, i.e.~if and only if $\mathcal{E}_{\lambda,\nu}$ is entanglement breaking. In particular, the same holds for the unconstrained two-way capacities.
\end{thm}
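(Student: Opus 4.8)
The plan is to reduce the entire statement to Theorem~\ref{th1}, which already settles the analogous question for the composition channel $\mathcal{N}_{g,\lambda}=\Phi_{g,0}\circ\mathcal{E}_{\lambda,0}$. The bridge is Lemma~\ref{lemma_comp_bos}, which expresses the thermal attenuator as exactly such a composition:
\begin{equation*}
\mathcal{E}_{\lambda,\nu}=\mathcal{N}_{g',\lambda'}\,,\qquad g'\coloneqq 1+(1-\lambda)\nu\,,\quad \lambda'\coloneqq\frac{\lambda}{1+(1-\lambda)\nu}\,.
\end{equation*}
First I would check that these parameters lie in the admissible ranges: since $\lambda\in[0,1]$ and $\nu\ge0$ we have $g'\ge1$, and $\lambda'=\lambda/g'\in[0,1]$ because $0\le\lambda\le g'$. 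Hence Theorem~\ref{th1} applies verbatim to $\mathcal{N}_{g',\lambda'}$, reducing the problem to translating its entanglement-breaking criterion into a condition on $(\lambda,\nu)$.

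The only genuine computation is this translation. Theorem~\ref{th1} declares the EC two-way capacities of $\mathcal{N}_{g',\lambda'}$ strictly positive if and only if $(1-\lambda')g'<1$, and a direct substitution collapses this quantity neatly:
\begin{equation*}
(1-\lambda')g'=\left(1-\frac{\lambda}{g'}\right)g'=g'-\lambda=1+(1-\lambda)\nu-\lambda=(1-\lambda)(\nu+1)\,.
\end{equation*}
Thus $(1-\lambda')g'<1$ is equivalent to $(1-\lambda)(\nu+1)<1$, i.e.\ $\lambda>\frac{\nu}{\nu+1}$. By Theorem~\ref{th1}, the capacities $Q_2(\mathcal{E}_{\lambda,\nu},N_s)$ and $K(\mathcal{E}_{\lambda,\nu},N_s)$ are therefore strictly positive precisely when $\lambda>\frac{\nu}{\nu+1}$, which is exactly the complement of the entanglement-breaking region $\lambda\le\frac{\nu}{\nu+1}$ recorded after~\eqref{PLOB_Q2}. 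Negating this yields the claimed equivalence: the EC capacities vanish if and only if $\lambda\le\frac{\nu}{\nu+1}$, i.e.\ if and only if $\mathcal{E}_{\lambda,\nu}$ is entanglement breaking.

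Finally, the statement for the unconstrained capacities follows without further work. If $\mathcal{E}_{\lambda,\nu}$ is entanglement breaking, its unconstrained two-way capacities vanish, since any entanglement-breaking channel has zero two-way capacity (the same fact invoked in the proof of Theorem~\ref{th1}); conversely, if it is not entanglement breaking, the unconstrained capacities dominate the EC ones, which are already positive, and so are positive too. I expect no real obstacle here: the whole argument is a transport of Theorem~\ref{th1} through the composition identity of Lemma~\ref{lemma_comp_bos}, and the only points demanding care are the elementary simplification $(1-\lambda')g'=(1-\lambda)(\nu+1)$ and the verification that $(g',\lambda')$ stays in range.
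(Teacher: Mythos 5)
Your proposal is correct and follows exactly the paper's own route: the paper proves Theorem~\ref{th1_therm} as a direct consequence of Lemma~\ref{lemma_comp_bos} and Theorem~\ref{th1}, and your computation $(1-\lambda')g'=(1-\lambda)(\nu+1)$ is precisely the substitution that makes this reduction work. You have simply spelled out the algebra that the paper leaves implicit.
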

\begin{proof}
    Theorem~\ref{th1_therm} is a direct consequence of Lemma~\ref{lemma_comp_bos} and Theorem~\ref{th1}.
\end{proof}
The validity of Theorem~\ref{th1_therm} was not known before the present work. Indeed, in~\cite{Pirandola18,Pirandola20} the authors says that it is an open problem to determine the exact value of the maximum tolerable excess noise, which is defined by 
\begin{equation}\label{excess_noise}
   \epsilon(\lambda)\coloneqq \frac{1-\lambda}{\lambda}\max\{\nu\ge0\,:\,K(\mathcal{E}_{\lambda,\nu})>0\}\,.
\end{equation}
Theorem~\ref{th1} implies that $\varepsilon(\lambda)=1$ for all $\lambda\in(0,1)$. Hence, we have answered to the question, which was deemed ``crucial" in~\cite[Section 7]{Pirandola18}, ``What is the maximum
excess noise that is tolerable in QKD? I.e., optimizing over all QKD protocols?"
In~\cite{Pirandola18,Pirandola20} the authors showed, by applying the PLOB bound, the upper bound $\varepsilon(\lambda)\le1$ and provided also a lower bound on $\varepsilon(\lambda)$ which was far from $1$.

Except for the special case $\nu=0$, it is an open question whether the reverse coherent information lower bound in Eq.~\ref{lowQ2} equals the true two-way quantum capacity of the thermal attenuator $Q_2(\mathcal{E}_{\lambda,\nu})$: Theorem~\ref{th1_therm} provides a negative answer to this question. Indeed, although $Q_2(\mathcal{E}_{\lambda,\nu})=0$ if and only if $\lambda\le \frac{\nu}{\nu+1}$ (thanks to Theorem~\ref{th1_therm}), the reverse coherent information lower bound vanishes for all $\lambda \le 1-2^{-h(\nu)}$. Hence, since $1-2^{-h(\nu)}>\frac{\nu}{\nu+1}$ for all $\nu>0$, the reverse coherent information lower bound is not equal to $Q_2(\mathcal{E}_{\lambda,\nu})$ at least in the region $\nu>0$ and $\lambda\in (\frac{\nu}{\nu+1},1-2^{-h(\nu)}]$. In the following theorem we obtain an improved lower bound on the two-way capacities of the thermal attenuator. 
\begin{thm}\label{th_delta}
 Let $\lambda\in[0,1]$, $\nu\ge0$, and $N_s\ge0$. The EC two-way capacities $Q_2(\mathcal{E}_{\lambda,\nu},N_s)$ and $K(\mathcal{E}_{\lambda,\nu},N_s)$ of the thermal attenuator $\mathcal{E}_{\lambda,\nu}$ satisfy the following lower bound
\bb\label{lowQ2_deltaEC}
	K(\mathcal{E}_{\lambda,\nu},N_s)& \ge  Q_2(\mathcal{E}_{\lambda,\nu},N_s)\ge \sup_{\substack{c\in(0,1),\, M\in\N^+,\, k\in\N\\ (1-c^2)M\le N_s}} \mathcal{R}\left(1+(1-\lambda)\nu,\frac{\lambda}{1+(1-\lambda)\nu},M,c,k\right)\,,
\ee
and, in particular, the unconstrained two-way capacities satisfy
\bb\label{lowQ2_delta}
	K(\mathcal{E}_{\lambda,\nu})& \ge  Q_2(\mathcal{E}_{\lambda,\nu})\ge \sup_{c\in(0,1),\, M\in\N^+,\, k\in\N} \mathcal{R}\left(1+(1-\lambda)\nu,\frac{\lambda}{1+(1-\lambda)\nu},M,c,k\right)\,,
\ee
where the quantity $\mathcal{R}$ is defined in~\eqref{def_mathcal_R}. 
\end{thm}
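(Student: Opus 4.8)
The plan is to deduce this directly from the lower bound already established for the composed channel in Theorem~\ref{th_comp_Q2}, exploiting the fact that the thermal attenuator is itself an instance of such a composition. The entire argument is a change of parameters: no fresh protocol analysis is required, since the concrete entanglement distribution scheme and its rate $\mathcal{R}$ have already been analysed for $\mathcal{N}_{g,\lambda}$.

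First I would invoke Lemma~\ref{lemma_comp_bos}, which supplies the exact channel identity
\bb
    \mathcal{E}_{\lambda,\nu}=\mathcal{N}_{g_0,\lambda_0}\,,\qquad g_0\coloneqq 1+(1-\lambda)\nu\,,\quad \lambda_0\coloneqq\frac{\lambda}{1+(1-\lambda)\nu}\,.
\ee
Since $\nu\ge0$ and $\lambda\in[0,1]$ one has $g_0\ge1$ and $\lambda_0\in[0,1]$, so the pair $(g_0,\lambda_0)$ lies in the admissible range required by Theorem~\ref{th_comp_Q2}.

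Second, I would apply Theorem~\ref{th_comp_Q2} to the channel $\mathcal{N}_{g_0,\lambda_0}$, which asserts, for every $N_s\ge0$,
\bb
    K(\mathcal{N}_{g_0,\lambda_0},N_s)\ge Q_2(\mathcal{N}_{g_0,\lambda_0},N_s)\ge \sup_{\substack{c\in(0,1),\,M\in\N^+,\,k\in\N\\ (1-c^2)M\le N_s}} \mathcal{R}(g_0,\lambda_0,M,c,k)\,,
\ee
together with its $N_s\to\infty$ counterpart for the unconstrained capacities. Substituting the channel identity from the first step into the left-hand side replaces $\mathcal{N}_{g_0,\lambda_0}$ by $\mathcal{E}_{\lambda,\nu}$ throughout, while inserting the explicit values of $g_0$ and $\lambda_0$ into the right-hand side produces exactly $\mathcal{R}\bigl(1+(1-\lambda)\nu,\,\tfrac{\lambda}{1+(1-\lambda)\nu},M,c,k\bigr)$, which is the quantity appearing in~\eqref{lowQ2_deltaEC} and~\eqref{lowQ2_delta}.

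The only point warranting care — and it is the closest thing to an obstacle, though a minor one — is the consistency of the energy constraint under this identification. In Theorem~\ref{th_comp_Q2} the condition $(1-c^2)M\le N_s$ encodes the requirement that the mean photon number $(1-c^2)M$ of the half $A'$ of $\ket{\Psi_{M,c}}_{AA'}$ does not exceed $N_s$ prior to transmission. Because the identity furnished by Lemma~\ref{lemma_comp_bos} is a genuine equality of CPTP maps and the transmitted input signal $\ket{\Psi_{M,c}}$ is literally the same state, the energy of the input system, and hence the constraint set over which the supremum is taken, is unchanged whether the channel is regarded as $\mathcal{E}_{\lambda,\nu}$ or as $\mathcal{N}_{g_0,\lambda_0}$. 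Thus the bound transfers verbatim, and Theorem~\ref{th_delta} follows as a direct consequence of Lemma~\ref{lemma_comp_bos} and Theorem~\ref{th_comp_Q2}.
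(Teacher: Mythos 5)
Your proposal is correct and coincides with the paper's own argument: the paper proves Theorem~\ref{th_delta} precisely as a direct consequence of Lemma~\ref{lemma_comp_bos} (the identity $\mathcal{E}_{\lambda,\nu}=\mathcal{N}_{1+(1-\lambda)\nu,\,\lambda/(1+(1-\lambda)\nu)}$) combined with Theorem~\ref{th_comp_Q2}. Your additional remark on the invariance of the energy-constraint set under this reparametrisation is a sensible sanity check, but it does not change the route.
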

\begin{proof} 
     Theorem~\ref{th_delta} is a direct consequence of Lemma~\ref{lemma_comp_bos} and Theorem~\ref{th_comp_Q2}.
\end{proof}
Theorem~\ref{th_delta} shows a new lower bound, reported in~\eqref{lowQ2_delta}, on the two-way capacities of the thermal attenuator $\mathcal{E}_{\lambda,\nu}$. Our new lower bound outperforms all the previous known lower bounds in a large region of the parameters $\lambda$ and $\nu$.
In Fig.~\ref{figure_nu2_delta}a and in Fig.~\ref{figure_nu2_delta}b we plot our new bound and its ratio with the PLOB bound, respectively, with respect to $\nu$ where the transmissivity is chosen to be equal to $\lambda(\nu)\coloneqq 1-2^{-h(\nu)}$, which is the upper endpoint for the $\lambda$-range for which the best known lower bound on $Q_2(\mathcal{E}_{\lambda,\nu})$ (i.e.~the reverse coherent information lower bound reported in~\eqref{lowQ2}) vanishes. From Fig.~\ref{figure_nu2_delta}a and Fig.~\ref{figure_nu2_delta}b we see that for these choices of $\nu$ and $\lambda(\nu)$, our new lower bound is now the best lower bound on $Q_2(\mathcal{E}_{\lambda,\nu})$ and it achieves the $\simeq 14\%$ of the PLOB bound for $\nu\gg 1$. For example, if $\nu=1$ and if the transmissivity is equal to $\lambda=1-2^{-h(1)}=0.75$, our new lower bound is $\simeq 0.033$, its ratio with the PLOB bound is $\simeq 0.08$, and the optimal parameters of the supremum present in the expression of our new bound in~\eqref{lowQ2_delta} are $c\simeq0.703$, $M=2$, and $k=2$. In Fig.~\ref{rate_vs_lambda1} we plot our new bound on $Q_2(\mathcal{E}_{\lambda,\nu})$ with respect to $\lambda$ for $\nu=1$ and $\nu=10$.

Our new bound can outperform also the best known lower bound (before our work) on the secret-key capacity $K(\mathcal{E}_{\lambda,\nu})$ found by Ottaviani et al.~\cite{Ottaviani_new_lower}. To demonstrate that our new bound can be strictly tighter than the Ottaviani et al.~lower bound, in Fig.~\ref{secret_key_nu} we plot the latter bound and our new bound with respect to $\lambda$ for $\nu=1$ and $\nu=10$. From Fig.~\ref{secret_key_nu}, we note that the Ottaviani et al.~lower bound vanishes for larger transmissivities than our bound. In particular, fixed $\nu>0$, we numerically observe that our new bound is strictly positive for all $\lambda>\frac{\nu}{\nu+1}$, which is the region where the two-way capacities of $\mathcal{E}_{\lambda,\nu}$ are strictly positive, as established by Theorem~\ref{th1}. As an example, for $\nu=1$, in Fig.~\ref{log1} we plot the ratio between our bound and the PLOB bound in logarithmic scale and we see that our bound is strictly positive for $\lambda\gtrsim \frac{\nu}{\nu+1}=0.5$. In addition, fixed $\nu>0$, we numerically observe that the optimal value of $k$ of the supremum present in the expression of our new bound in~\eqref{lowQ2_delta} increases as $\lambda$ decreases and tends to infinity as $\lambda$ tends to $\frac{\nu}{\nu+1}$, where we recall that $k$ represents the number of iterations of the P1-or-P2 sub-routine~\cite{p1orp2} in the entanglement distribution protocol we have introduced in the proof of Theorem~\ref{th_lower_Q2}.

We numerically observe that for all $\lambda$ and $\nu$ the optimal choice of $M$ of the supremum present in the expression of our bound in~\eqref{lowQ2_delta} is always less or equal to $3$. Hence, since the mean photon number of each signal sent by Alice is $\Tr[a^\dagger a\ketbra{\Psi_{M,c}}]=(1-c^2)M$ (see~\eqref{initial_state}), the entanglement distribution protocol we have presented in the proof of Theorem~\ref{th_delta} exploits a mean photon number per channel use which is strictly lower than $3$. On the contrary, the entanglement distribution protocol which leads to the reverse coherent information lower bound in~\eqref{lowQ2} requires infinite mean photon number per channel use, as we reviewed in~\ref{proof_lower}.

Theorem~\ref{th_delta} shows also the bound in~\eqref{lowQ2_deltaEC}, which constitutes a new lower bound on the EC two-way capacities of the thermal attenuator $\mathcal{E}_{\lambda,\nu}$. 
This new lower bound can outperform the NPJ lower bound~\cite{Noh2020} reported in~\eqref{npj_bound_therm}, which is the best known lower bound on the EC two-way capacities of the thermal attenuator, as we show in Fig.~\ref{ECfigures} where we plot our new bound in~\eqref{lowQ2_deltaEC} with respect to $\lambda$ for different choices of $\nu$ and of the energy constraint $N_s$.


By using the results of Section~\ref{multiplerail_section}, in the forthcoming Theorem~\ref{theorem_new_lower_multirail} we show an additional lower bound on the two-way quantum capacity of the thermal attenuator $\mathcal{E}_{\lambda,\nu}$.
\begin{thm}[Multi-rail lower bound]\label{theorem_new_lower_multirail}
 For all $\lambda\in[0,1]$ and $\nu\ge0$ the two-way capacities of the thermal attenuator $\mathcal{E}_{\lambda,\nu}$ satisfy
\bb\label{multiplerail_low_bound}
    K(\mathcal{E}_{\lambda,\nu}) &\ge  Q_2(\mathcal{E}_{\lambda,\nu})\ge \sup_{\substack{N,K\in\N^+\\ K\ge2 }} R\left(1+(1-\lambda)\nu,\frac{\lambda}{1+(1-\lambda)\nu},N,K\right)\,,
\ee
where the quantity $R$ is defined in~\eqref{rate_expr_multirail}. 
\end{thm}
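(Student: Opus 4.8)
The plan is to derive Theorem~\ref{theorem_new_lower_multirail} as a direct specialization of the general multi-rail bound in Theorem~\ref{thm_multirail} to the thermal attenuator, in the same spirit in which Theorem~\ref{th_delta} follows from Theorem~\ref{th_comp_Q2}. The only ingredient needed beyond Theorem~\ref{thm_multirail} is the channel identity of Lemma~\ref{lemma_comp_bos},
\[
\mathcal{E}_{\lambda,\nu}=\mathcal{N}_{g,\lambda'},\qquad g\coloneqq 1+(1-\lambda)\nu,\quad \lambda'\coloneqq\frac{\lambda}{1+(1-\lambda)\nu},
\]
which holds as an honest equality of quantum channels, established there through the characteristic-function representation and the one-to-one correspondence between states and characteristic functions.

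First I would check that the pair $(g,\lambda')$ lies in the parameter ranges required by Theorem~\ref{thm_multirail}. Since $(1-\lambda)\nu\ge0$ one has $g=1+(1-\lambda)\nu\ge1$, and since $0\le\lambda\le 1+(1-\lambda)\nu$ one has $\lambda'\in[0,1]$; hence the hypotheses $\lambda\in[0,1]$ and $g\ge1$ of Theorem~\ref{thm_multirail} are met by $(g,\lambda')$. Next, because the capacities $Q_2$ and $K$ and the rate functional $R(g,\lambda,N,K)$ in~\eqref{rate_expr_multirail} depend on the channel alone, the identity $\mathcal{N}_{g,\lambda'}=\mathcal{E}_{\lambda,\nu}$ can be substituted directly into the conclusion of Theorem~\ref{thm_multirail}, which produces exactly the chain~\eqref{multiplerail_low_bound}.

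Being a corollary, the statement presents essentially no residual obstacle at this stage: all of the substantive work---constructing the multi-rail entanglement distribution protocol, computing the output photon-number statistics $\mathcal{P}_F$ via Lemma~\ref{Lemma_P_F}, propagating the Bell-diagonal coefficients through the qu$d$it P1-or-P2 recursion, and assembling the final rate~\eqref{rate_expr_multirail}---was already carried out in establishing Theorem~\ref{thm_multirail}. The only point deserving care is the clean verification of the parameter identification $\mathcal{E}_{\lambda,\nu}=\mathcal{N}_{g,\lambda'}$ and of its admissible ranges, which I addressed above; accordingly I expect the proof to reduce to a single sentence citing Lemma~\ref{lemma_comp_bos} and Theorem~\ref{thm_multirail}.
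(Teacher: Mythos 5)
Your proposal is correct and coincides with the paper's own proof, which likewise derives the statement as a direct consequence of Lemma~\ref{lemma_comp_bos} (the identification $\mathcal{E}_{\lambda,\nu}=\mathcal{N}_{1+(1-\lambda)\nu,\,\lambda/(1+(1-\lambda)\nu)}$) and Theorem~\ref{thm_multirail}. Your additional check that the substituted parameters lie in the admissible ranges $g\ge 1$ and $\lambda'\in[0,1]$ is a welcome, if routine, piece of diligence that the paper leaves implicit.
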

\begin{proof}
    Theorem~\ref{theorem_new_lower_multirail} is a direct consequence of Theorem~\ref{thm_multirail} and Lemma~\ref{lemma_comp_bos}.
\end{proof}
Theorem~\ref{theorem_new_lower_multirail} shows an additional lower bound on $Q_2(\mathcal{E}_{\lambda,\nu})$, that we dub `multi-rail lower bound'. This bound is the ebit rate of the entanglement distribution protocol presented in Section~\ref{multiplerail_section}, which combines the multi-rail protocol introduced in~\cite{Winnel} and the qudit P1-or-P2 protocol introduced in~\cite{p1orp2}.
In Fig.~\ref{multiplerail_figures} we plot both the multi-rail lower bound (reported in~\eqref{multiplerail_low_bound}) and our previously discussed lower bound (reported in~\eqref{lowQ2_delta}) as a function of $\lambda$ for $\nu=0.1$, $\nu=0.5$, $\nu=1$, and $\nu=10$. Our numerical investigation shows that for $\nu\lesssim 1$, the multi-rail lower bound is tighter than the previously discussed lower bound, as confirmed by Fig.~\ref{multiplerail_figures}.

\subsection{Results on the two-way capacities of the thermal amplifier}
Let us consider the thermal amplifier $\Phi_{g,\nu}$ of gain $g\ge1$ and thermal noise $\nu\ge0$. Since the PLOB bound in~\eqref{PLOB_amp} vanishes for $g\ge 1+\frac{1}{\nu}$, it is already known that the two-way capacities of $\Phi_{g,\nu}$ vanish for $g\ge 1+\frac{1}{\nu}$. The following theorem establishes that also the vice-versa is true.
\begin{thm}\label{th1_amp}
Let $g\ge1$, $\nu\ge0$, and $N_s>0$. The energy-constrained two-way capacities of the thermal amplifier $Q_2(\Phi_{g,\nu},N_s)$ and $K(\Phi_{g,\nu},N_s)$ vanish if and only if $g\ge 1+\frac{1}{\nu}$, i.e.~if and only if $\Phi_{g,\nu}$ is entanglement breaking. In particular, the same holds for the unconstrained two-way capacities.
\end{thm}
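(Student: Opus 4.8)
The plan is to obtain Theorem~\ref{th1_amp} as an immediate corollary of the structural reduction already in place, exactly mirroring the argument used for the thermal attenuator in Theorem~\ref{th1_therm}. The key input is Lemma~\ref{lemma_comp_bos}, which writes the thermal amplifier as a composition of a pure amplifier followed by a pure loss channel,
\bb
\Phi_{g,\nu}=\mathcal{N}_{\tilde g,\tilde\lambda}\,,\qquad \tilde g\coloneqq g+(g-1)\nu\,,\quad \tilde\lambda\coloneqq\frac{g}{g+(g-1)\nu}\,,
\ee
thereby reducing the question entirely to the analysis of $\mathcal{N}_{\tilde g,\tilde\lambda}$ that was carried out in Theorem~\ref{th1}.

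First I would invoke Theorem~\ref{th1}, which asserts that $Q_2(\mathcal{N}_{\tilde g,\tilde\lambda},N_s)$ and $K(\mathcal{N}_{\tilde g,\tilde\lambda},N_s)$ are strictly positive if and only if $(1-\tilde\lambda)\tilde g<1$, i.e.~if and only if $\mathcal{N}_{\tilde g,\tilde\lambda}$ is not entanglement breaking. It then remains only to translate this threshold back into the native $(g,\nu)$ parametrisation. A one-line computation gives
\bb
(1-\tilde\lambda)\,\tilde g=\frac{(g-1)\nu}{g+(g-1)\nu}\,\bigl(g+(g-1)\nu\bigr)=(g-1)\nu\,,
\ee
so that the condition $(1-\tilde\lambda)\tilde g<1$ is equivalent to $(g-1)\nu<1$, i.e.~$g<1+\frac1\nu$. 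Hence the (energy-constrained and unconstrained) two-way capacities of $\Phi_{g,\nu}$ are strictly positive precisely when $g<1+\frac1\nu$ and vanish precisely when $g\ge 1+\frac1\nu$, which is exactly the entanglement-breaking threshold for the thermal amplifier recorded below~\eqref{PLOB_amp}.

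The hard part has, in effect, already been done upstream: all the analytic content lives in Theorem~\ref{th1} (which in turn rests on Lemma~\ref{lemma_Q2_positive}, the Gaussian PPT criterion of Lemma~\ref{ConditionPPT_cov}, and the distillability of entangled two-mode Gaussian states), so no new obstacle arises here. The only point requiring genuine care is the bookkeeping that identifies the entanglement-breaking region $g\ge 1+\frac1\nu$ of $\Phi_{g,\nu}$ with the region $(1-\tilde\lambda)\tilde g\ge1$ of the associated composition $\mathcal{N}_{\tilde g,\tilde\lambda}$; the identity $(1-\tilde\lambda)\tilde g=(g-1)\nu$ makes this transparent and also confirms consistency with Lemma~\ref{lemma_eb_comp}. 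The converse direction is automatic, since any entanglement-breaking channel has vanishing two-way capacities, so the equivalence closes on both sides.
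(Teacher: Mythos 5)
Your proposal is correct and follows exactly the paper's own route: the paper proves Theorem~\ref{th1_amp} as a direct consequence of Lemma~\ref{lemma_comp_bos} and Theorem~\ref{th1}, and your identity $(1-\tilde\lambda)\tilde g=(g-1)\nu$ is precisely the bookkeeping that makes that deduction explicit. Nothing is missing.
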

\begin{proof}
    Theorem~\ref{th1_amp} is a direct consequence of Lemma~\ref{lemma_comp_bos} and Theorem~\ref{th1}.
\end{proof}
Except for the special case $\nu=0$, it is an open question whether the coherent information lower bound in Eq.~\ref{lowQ2_amp} equals the true two-way quantum capacity of the thermal amplifier $Q_2(\Phi_{g,\nu})$: Theorem~\ref{th1_amp} provides a negative answer to this question. Indeed, although $Q_2(\Phi_{g,\nu})=0$ if and only if $g > 1+\frac{1}{\nu}$ (thanks to Theorem~\ref{th1_amp}), the coherent information lower bound vanishes for all $g \ge \frac{1}{1-2^{-h(\nu)}}$. Hence, since $1+\frac{1}{\nu}>\frac{1}{1-2^{-h(\nu)}}$ for all $\nu>0$, the coherent information lower bound is not equal to $Q_2(\Phi_{g,\nu})$ at least in the region $\nu>0$ and $g\in [\frac{1}{1-2^{-h(\nu)}},1+\frac{1}{\nu})$. In the following theorem we obtain an improved lower bound on the two-way capacities of the thermal amplifier.
\begin{thm}\label{th_delta_amp}
Let $g\ge 1$, $\nu\ge0$, and $N_s\ge0$. The EC two-way capacities $Q_2(\Phi_{g,\nu},N_s)$ and $K(\Phi_{g,\nu},N_s)$ of the thermal amplifier $\Phi_{g,\nu}$ satisfy the following lower bound
\bb\label{lowQ2_deltaEC_amp}
	K(\Phi_{g,\nu},N_s)&\ge Q_2(\Phi_{g,\nu},N_s) \ge \sup_{\substack{c\in(0,1),\, M\in\N^+,\, k\in\N\\ (1-c^2)M\le N_s}} \mathcal{R}\left(g+(g-1)\nu,\frac{g}{g+(g-1)\nu},M,c,k\right)\,,
\ee
and, in particular, the unconstrained two-way capacities satisfy
\bb\label{lowQ2_delta_amp}
	K(\Phi_{g,\nu})&\ge Q_2(\Phi_{g,\nu}) \ge \sup_{c\in(0,1),\, M\in\N^+,\, k\in\N} \mathcal{R}\left(g+(g-1)\nu,\frac{g}{g+(g-1)\nu},M,c,k\right)\,,
\ee
where the quantity $\mathcal{R}$ is defined in~\eqref{def_mathcal_R}. 
\end{thm}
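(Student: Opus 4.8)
The plan is to obtain Theorem~\ref{th_delta_amp} as an immediate consequence of two results already established: the representation of the thermal amplifier as a composition of pure channels (Lemma~\ref{lemma_comp_bos}) and the explicit protocol lower bound for such compositions (Theorem~\ref{th_comp_Q2}). This mirrors exactly the proof of Theorem~\ref{th_delta} for the thermal attenuator.

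First I would invoke Lemma~\ref{lemma_comp_bos}, which establishes the exact identity
\bb
\Phi_{g,\nu}=\mathcal{N}_{g+(g-1)\nu,\,\frac{g}{g+(g-1)\nu}}\,,
\ee
i.e.~the thermal amplifier coincides with the composition $\mathcal{N}_{g',\lambda'}=\Phi_{g',0}\circ\mathcal{E}_{\lambda',0}$ of a pure amplifier and a pure loss channel, with $g'\coloneqq g+(g-1)\nu$ and $\lambda'\coloneqq\frac{g}{g+(g-1)\nu}$. This identity holds because the two channels act identically on characteristic functions, and quantum states are in one-to-one correspondence with their characteristic functions.

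Then I would apply Theorem~\ref{th_comp_Q2} to the channel $\mathcal{N}_{g',\lambda'}$ with these parameter values. That theorem supplies the lower bound $K(\mathcal{N}_{g',\lambda'},N_s)\ge Q_2(\mathcal{N}_{g',\lambda'},N_s)\ge\sup\mathcal{R}(g',\lambda',M,c,k)$, where the supremum ranges over $c\in(0,1)$, $M\in\N^+$, $k\in\N$ subject to $(1-c^2)M\le N_s$, and $\mathcal{R}$ is the ebit rate of the entanglement distribution protocol of Theorem~\ref{th_lower_Q2}. Substituting $g'=g+(g-1)\nu$ and $\lambda'=\frac{g}{g+(g-1)\nu}$ reproduces verbatim the energy-constrained bound~\eqref{lowQ2_deltaEC_amp}; the unconstrained bound~\eqref{lowQ2_delta_amp} then follows by letting $N_s\to\infty$, since any energy-constrained capacity converges to its unconstrained counterpart in that limit.

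I expect no genuine obstacle here: all of the conceptual and computational effort has already been expended in proving the general protocol bound (Theorem~\ref{th_lower_Q2}), its specialisation to composition channels (Theorem~\ref{th_comp_Q2}), and the decomposition lemma (Lemma~\ref{lemma_comp_bos}). The only point to verify is that the parameter substitution is the correct one, which is guaranteed by Lemma~\ref{lemma_comp_bos}; the proof is therefore a one-line corollary, entirely parallel to that of Theorem~\ref{th_delta}.
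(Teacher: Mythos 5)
Your proposal is correct and is essentially identical to the paper's proof, which derives Theorem~\ref{th_delta_amp} as a direct consequence of Lemma~\ref{lemma_comp_bos} (giving $\Phi_{g,\nu}=\mathcal{N}_{g+(g-1)\nu,\,g/(g+(g-1)\nu)}$) and Theorem~\ref{th_comp_Q2}. The only cosmetic difference is that you obtain the unconstrained bound by taking $N_s\to\infty$, whereas the paper reads it off directly from the unconstrained statement already included in Theorem~\ref{th_comp_Q2}; both routes are equivalent.
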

\begin{proof} 
     Theorem~\ref{th_delta_amp} is a direct consequence of Lemma~\ref{lemma_comp_bos} and Theorem~\ref{th_comp_Q2}.
\end{proof}
Theorem~\ref{th_delta_amp} shows a new lower bound, reported in~\eqref{lowQ2_delta_amp}, on the two-way capacities of the thermal amplifier $\Phi_{g,\nu}$. Our new lower bound outperforms all the previous known lower bounds in a large region of the parameters $g$ and $\nu$.
In Fig.~\ref{capvsnu_amp}a and in Fig.~\ref{capvsnu_amp}b we plot our new bound and its ratio with the PLOB bound, respectively, with respect to $\nu$ where the transmissivity is chosen to be equal to $g(\nu)\coloneqq \frac{1}{1-2^{-h(\nu)}}$, which is the lower endpoint for the $g$-range for which the best known lower bound on $Q_2(\Phi_{g,\nu})$ (i.e.~the coherent information lower bound reported in~\eqref{lowQ2_amp}) vanishes. From Fig.~\ref{capvsnu_amp}a and Fig.~\ref{capvsnu_amp}b we see that for these choices of $\nu$ and $g(\nu)$, our new lower bound is now the best lower bound on $Q_2(\Phi_{g,\nu})$ and it achieves the $\simeq 14\%$ of the PLOB bound for $\nu\gg 1$. In Fig.~\ref{bound_vs_nu1_amp} we plot our new bound with respect to $g$ for $\nu=1$ and $\nu=10$.

Our new bound can outperform also the WOGP-bound~\cite{Wang_Q2_amplifier}, which is the best known lower bound (before our work) on the secret-key capacity $K(\Phi_{g,\nu})$. To demonstrate that our new bound can be strictly tighter than the WOGP lower bound, in Fig.~\ref{secret_g_nu10} we plot the latter bound and our new bound with respect to $g$ for $\nu=1$ and $\nu=10$. From Fig.~\ref{secret_g_nu10} we note that the WOGP lower bound vanishes for smaller values of $g$ than our bound. In particular, fixed $\nu>0$, we numerically observe that our new bound is strictly positive for all $g<1+\frac{1}{\nu}$, which is the region where the two-way capacities of $\Phi_{g,\nu}$ are strictly positive, as established by Theorem~\ref{th1_amp}. As an example, for $\nu=1$, in Fig.~\ref{log_ratio_vs_lam_amp} we plot the ratio between our bound and the PLOB bound in logarithmic scale and we see that our bound is strictly positive for $g\lesssim 1+\frac{1}{\nu}=2$.

\subsection{Results on the two-way capacities of the additive Gaussian noise}
Let us consider the additive Gaussian noise $\Lambda_\xi$ of parameter $\xi\ge0$. Since the PLOB bound in~\eqref{PLOB_add} vanishes for $\xi\ge1$, it is already known that the two-way capacities of $\Lambda_\xi$ vanish for $\xi\ge1$. The following theorem establishes that also the vice-versa is true.
\begin{thm}\label{th1_adgn}
    Let $\xi\ge 0$, and $N_s>0$. The energy-constrained two-way capacities of the additive Gaussian noise $Q_2({\Lambda}_{\xi},N_s)$ and $K({\Lambda}_{\xi},N_s)$ vanish if and only if $\xi\ge1 $. In particular, the two-way capacities $Q_2({\Lambda}_{\xi})$ and $K({\Lambda}_{\xi})$ vanish if and only if $\xi\ge 1$.
\end{thm}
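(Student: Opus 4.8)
The plan is to obtain Theorem~\ref{th1_adgn} as an immediate corollary of the structural results already established for the composition channel $\mathcal{N}_{g,\lambda}\coloneqq\Phi_{g,0}\circ\mathcal{E}_{\lambda,0}$, exactly as was done for the thermal attenuator (Theorem~\ref{th1_therm}) and the thermal amplifier (Theorem~\ref{th1_amp}). The key observation is that Lemma~\ref{lemma_comp_bos} identifies the additive Gaussian noise with a single member of the family $\mathcal{N}_{g,\lambda}$, namely $\Lambda_\xi=\mathcal{N}_{1+\xi,\,\frac{1}{1+\xi}}$. This is an exact channel identity (established at the level of characteristic functions), so no limiting argument in $\nu$ is required even though $\Lambda_\xi$ also arises as a strong limit of thermal attenuators and amplifiers.

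First I would substitute $g=1+\xi$ and $\lambda=\frac{1}{1+\xi}$ into the entanglement-breaking threshold quantity $(1-\lambda)g$ appearing in Theorem~\ref{th1} and Lemma~\ref{lemma_eb_comp}. A one-line computation gives $(1-\lambda)g=\left(1-\frac{1}{1+\xi}\right)(1+\xi)=\xi$, so that the condition $(1-\lambda)g<1$ characterising strictly positive two-way capacities becomes simply $\xi<1$, while the entanglement-breaking condition $(1-\lambda)g\ge1$ becomes $\xi\ge1$. This is consistent with the known fact that $\Lambda_\xi$ is entanglement breaking precisely for $\xi\ge1$.

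Then I would invoke Theorem~\ref{th1} directly: for every $N_s>0$ the energy-constrained capacities $Q_2(\mathcal{N}_{g,\lambda},N_s)$ and $K(\mathcal{N}_{g,\lambda},N_s)$ are strictly positive if and only if $(1-\lambda)g<1$. Combined with the identification above, this yields that $Q_2(\Lambda_\xi,N_s)$ and $K(\Lambda_\xi,N_s)$ vanish if and only if $\xi\ge1$. The statement for the unconstrained capacities then follows because every energy-constrained capacity lower bounds the corresponding unconstrained one and tends to it as $N_s\to\infty$; hence strict positivity of the EC capacity for some $N_s>0$ forces positivity of the unconstrained capacity, whereas entanglement breaking forces all of them to vanish.

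Since the substantive work has already been carried out elsewhere --- namely the covariance-matrix evaluation of $\Id_A\otimes\mathcal{N}_{g,\lambda}(\ketbra{\Psi_{N_s}})$ in Theorem~\ref{th1}, together with the two-mode Gaussian distillability criterion used in Lemma~\ref{lemma_Q2_positive} and the entanglement-breaking criterion of Lemma~\ref{lemma_eb_comp} --- I do not expect any genuine obstacle here. The only point requiring care is verifying the parameter identification and confirming that the equality $\Lambda_\xi=\mathcal{N}_{1+\xi,\,1/(1+\xi)}$ places $\xi=1$ exactly at the entanglement-breaking threshold, so that the boundary case is classified correctly rather than being absorbed into the wrong side of the transition.
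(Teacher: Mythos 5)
Your proposal is correct and follows exactly the paper's own route: the paper proves Theorem~\ref{th1_adgn} as a direct consequence of Lemma~\ref{lemma_comp_bos} (the identification $\Lambda_\xi=\mathcal{N}_{1+\xi,\,1/(1+\xi)}$) and Theorem~\ref{th1}, with the substitution $(1-\lambda)g=\xi$ turning the threshold $(1-\lambda)g\ge 1$ into $\xi\ge 1$. Your additional checks of the boundary case and the passage to unconstrained capacities are consistent with what the paper relies on implicitly.
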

\begin{proof}
    Theorem~\ref{th1_adgn} is a direct consequence of Lemma~\ref{lemma_comp_bos} and Theorem~\ref{th1}.
\end{proof}
 It is an open question whether the coherent information lower bound in Eq.~\ref{lowQ2_add} equals the true two-way quantum capacity of the additive Gaussian noise $Q_2(\Lambda_{\xi})$: Theorem~\ref{th1_adgn} provides a negative answer to this question. Indeed, although $Q_2(\Lambda_\xi)=0$ if and only if $\xi\ge1$ (thanks to Theorem~\ref{th1_adgn}), the coherent information lower bound vanishes for all $\xi\ge\frac{1}{e}$. Hence, the coherent information lower bound is not equal to $Q_2(\Lambda_\xi)$ at least in the region $\xi\in [\frac{1}{e},1)$. In the following theorem we obtain an improved lower bound on the two-way capacities of the additive Gaussian noise.
\begin{thm}\label{th_delta_add}
Let $\xi\in [0,1)$ and $N_s\ge0$. The EC two-way capacities $Q_2(\Lambda_\xi,N_s)$ and $K(\Lambda_\xi,N_s)$ of the additive Gaussian noise $\Lambda_\xi$ satisfy the following lower bound
\bb\label{lowQ2_deltaEC_add}
	K(\Lambda_\xi,N_s)&\ge Q_2(\Lambda_\xi,N_s) \ge \sup_{\substack{c\in(0,1),\, M\in\N^+,\, k\in\N\\ (1-c^2)M\le N_s}} \mathcal{R}\left(1+\xi,\frac{1}{1+\xi},M,c,k\right)\,,
\ee
and, in particular, the unconstrained two-way capacities satisfy
\bb\label{lowQ2_delta_add}
	K(\Lambda_\xi)&\ge Q_2(\Lambda_\xi) \ge \sup_{c\in(0,1),\, M\in\N^+,\, k\in\N} \mathcal{R}\left(1+\xi,\frac{1}{1+\xi},M,c,k\right)\,,
\ee 
where the quantity $\mathcal{R}$ is defined in~\eqref{def_mathcal_R}. 
\end{thm}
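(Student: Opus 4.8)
The plan is to recognize that Theorem~\ref{th_delta_add} is not an independent result but rather a specialization of the general lower bound already established for the composed channel $\mathcal{N}_{g,\lambda} \coloneqq \Phi_{g,0} \circ \mathcal{E}_{\lambda,0}$. The key enabling fact is Lemma~\ref{lemma_comp_bos}, which shows that the additive Gaussian noise channel admits the \emph{exact} factorization $\Lambda_\xi = \mathcal{N}_{1+\xi,\,\frac{1}{1+\xi}}$ into a pure amplifier followed by a pure loss channel. Because this is an exact identity between quantum channels (not merely an asymptotic or approximate relation of the type discussed in the strong-limit remark), every lower bound proved for $\mathcal{N}_{g,\lambda}$ transfers verbatim to $\Lambda_\xi$ upon substituting the appropriate parameters.

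Concretely, I would invoke Theorem~\ref{th_comp_Q2}, which furnishes the bound $Q_2(\mathcal{N}_{g,\lambda},N_s) \ge \sup \mathcal{R}(g,\lambda,M,c,k)$ arising from the explicit entanglement-distribution protocol of Theorem~\ref{th_lower_Q2} (prepare the two-Fock-level state $\ket{\Psi_{M,c}}$, project onto $\mathrm{span}\{\ket{0},\ket{M}\}$, twirl, iterate the P1-or-P2 recurrence $k$ times, and finish with the improved hashing protocol). Setting $g = 1+\xi$ and $\lambda = \frac{1}{1+\xi}$ and using $\Lambda_\xi = \mathcal{N}_{1+\xi,\,\frac{1}{1+\xi}}$ immediately yields the energy-constrained bound~\eqref{lowQ2_deltaEC_add}, and taking $N_s \to \infty$ gives the unconstrained bound~\eqref{lowQ2_delta_add}. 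The accompanying inequality $K \ge Q_2$ is supplied by~\eqref{relation_2waycapEC}.

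There is essentially no obstacle at this stage: the only elementary points to verify are that for $\xi \ge 0$ one has $1+\xi \ge 1$ and $\frac{1}{1+\xi} \in (0,1]$, so that the parameters fall within the admissible ranges of Theorem~\ref{th_comp_Q2}, and that the rate functional $\mathcal{R}$ is by definition~\eqref{def_mathcal_R} precisely $R(\mathcal{N}_{g,\lambda},\cdot)$. The substantive content of the result lives entirely in the construction and analysis of the distillation protocol (Theorem~\ref{th_lower_Q2}) and in the explicit evaluation of the action of $\mathcal{N}_{g,\lambda}$ on Fock operators (Theorem~\ref{kraus_comp_thm} and the coefficients $f_{n,i,l}$), which feed the quantities $\mathcal{C}(\mathcal{N}_{g,\lambda},c,M)$ and $\alpha^{(0)}_{mn}$ entering $\mathcal{R}$. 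Thus the proof reduces to a one-line citation, exactly parallel to the proofs of Theorem~\ref{th_delta} and Theorem~\ref{th_delta_amp}: Theorem~\ref{th_delta_add} follows directly from Lemma~\ref{lemma_comp_bos} and Theorem~\ref{th_comp_Q2}.
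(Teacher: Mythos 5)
Your proposal is correct and follows exactly the same route as the paper: the paper's proof of Theorem~\ref{th_delta_add} is precisely the one-line observation that it is a direct consequence of Lemma~\ref{lemma_comp_bos} (the exact factorization $\Lambda_\xi=\mathcal{N}_{1+\xi,\,\frac{1}{1+\xi}}$) and Theorem~\ref{th_comp_Q2}. Your additional remarks on parameter admissibility and on where the substantive content of the argument actually resides are accurate but not needed beyond the citation.
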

\begin{proof} 
     Theorem~\ref{th_delta_add} is a direct consequence of Lemma~\ref{lemma_comp_bos} and Theorem~\ref{th_comp_Q2}.
\end{proof}
Theorem~\ref{th_delta_add} shows a new lower bound, reported in~\eqref{lowQ2_delta_add}, on the two-way capacities of the additive Gaussian noise $\Lambda_\xi$. Our new lower bound outperforms all the previous known lower bounds in a large region of the parameter $\xi$, as it can been seen from Fig.~\ref{add_vs_xi}.

\begin{figure}
\begin{tabular}{c}
  \includegraphics[width=0.97\linewidth]{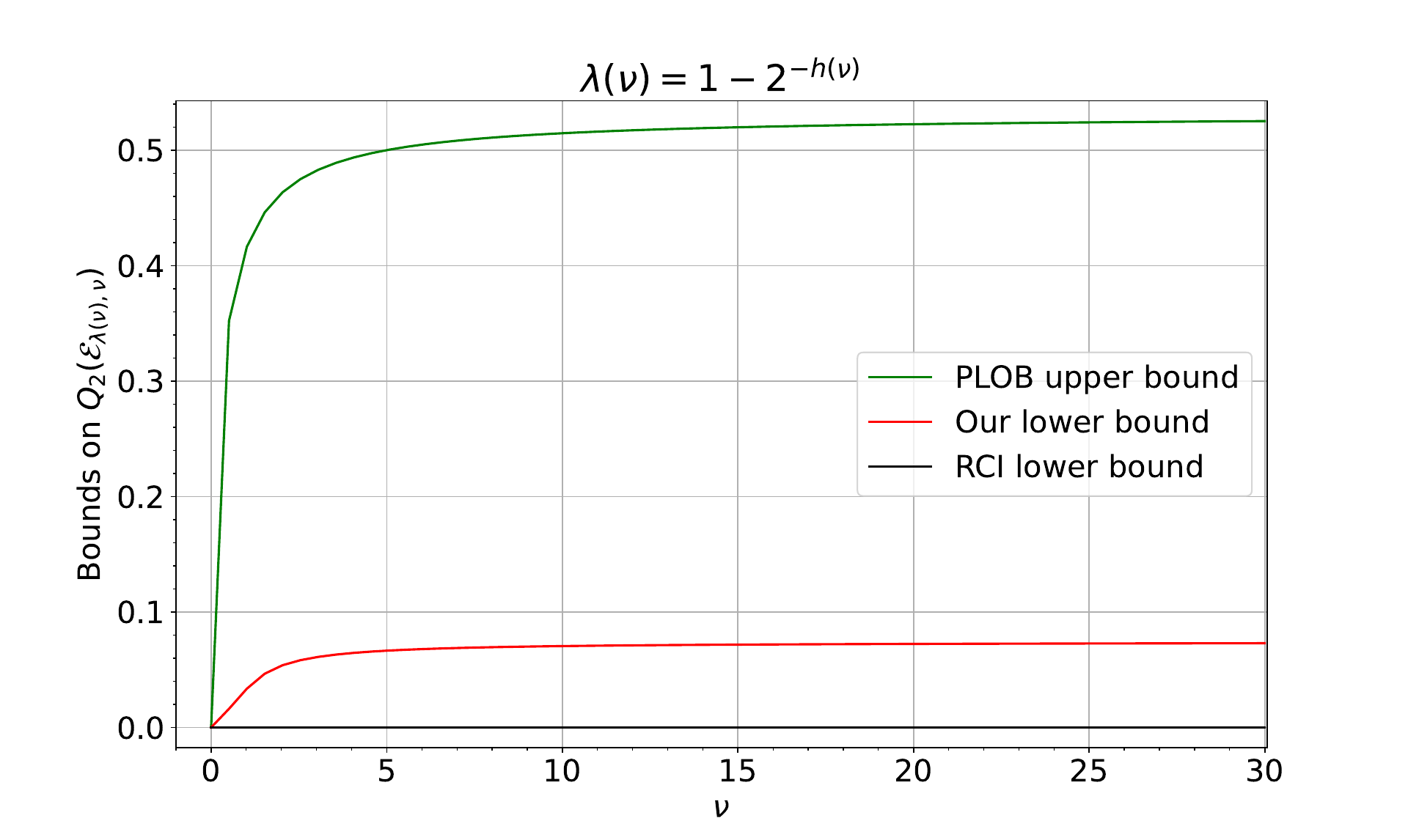} \\  
(a)\\
 \includegraphics[width=0.97\linewidth]{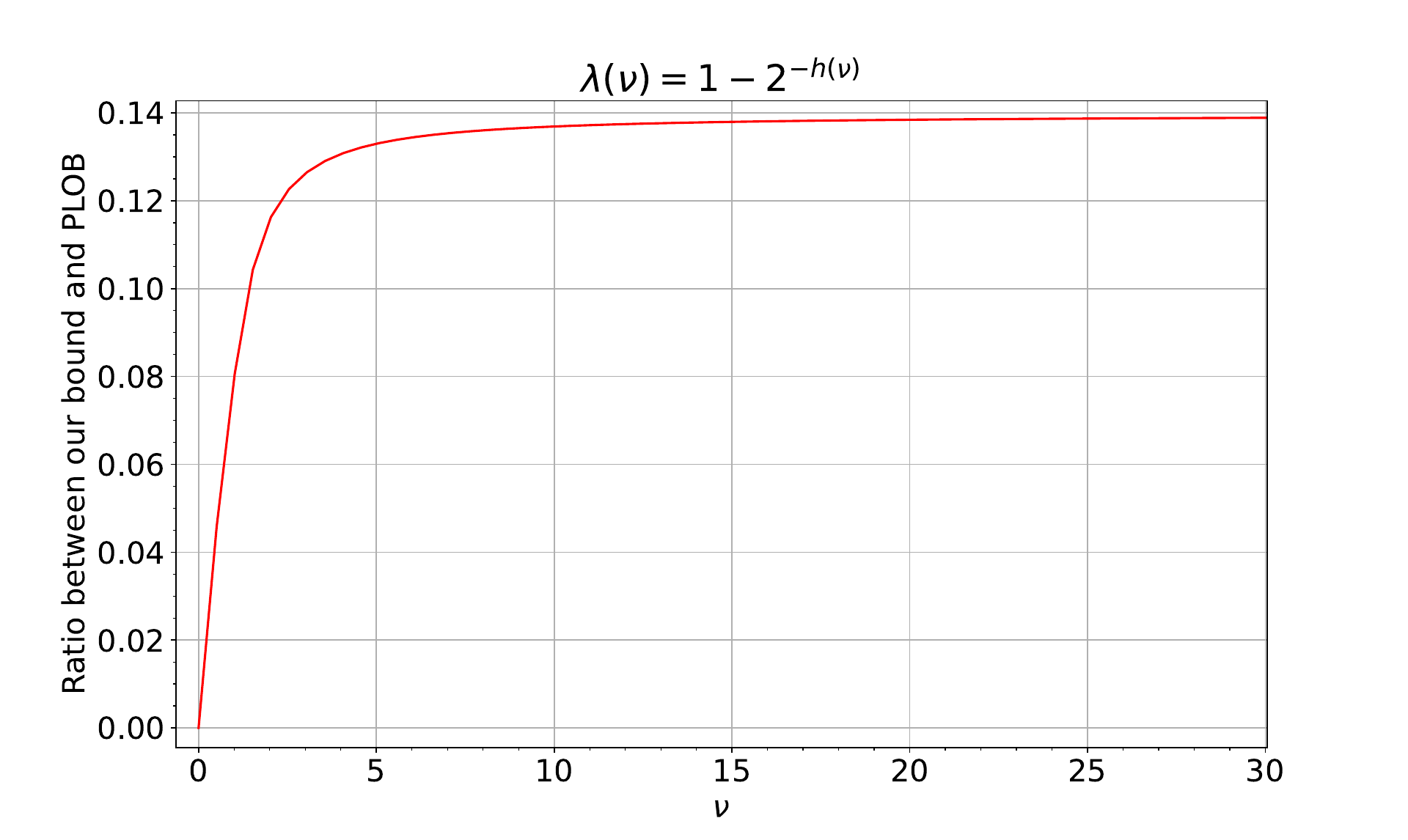} \\ 
(b) 
\end{tabular}
\caption{\textbf{(a).}~Bounds on the two-way quantum capacity of the thermal attenuator $Q_2(\mathcal{E}_{\lambda(\nu),\nu})$ plotted with respect to $\nu$, where the transmissivity is equal to the critical value $\lambda(\nu)\coloneqq 1-2^{-h(\nu)}$. The red curve is our lower bound calculated by exploiting~\eqref{lowQ2_delta}. The black curve is the best known lower bound on $Q_2(\mathcal{E}_{\lambda(\nu),\nu})$, which is the reverse coherent information lower bound reported in~\eqref{lowQ2} (which is zero since $\lambda(\nu)= 1-2^{-h(\nu)}$). The green curve is the PLOB upper bound reported in~\eqref{PLOB_Q2}. These bounds are also bounds on the secret-key capacity $K(\mathcal{E}_{\lambda(\nu),\nu})$. \textbf{(b).}~Ratio between our new lower bound on the two-way quantum and secret-key capacities in~\eqref{lowQ2_delta} and the PLOB bound in~\eqref{PLOB_Q2} as a function of $\nu$ where the transmissivity is $\lambda(\nu)\coloneqq 1-2^{-h(\nu)}$.}
\label{figure_nu2_delta}

\end{figure}

\begin{figure}[t]
	\centering
	\includegraphics[width=1\linewidth]{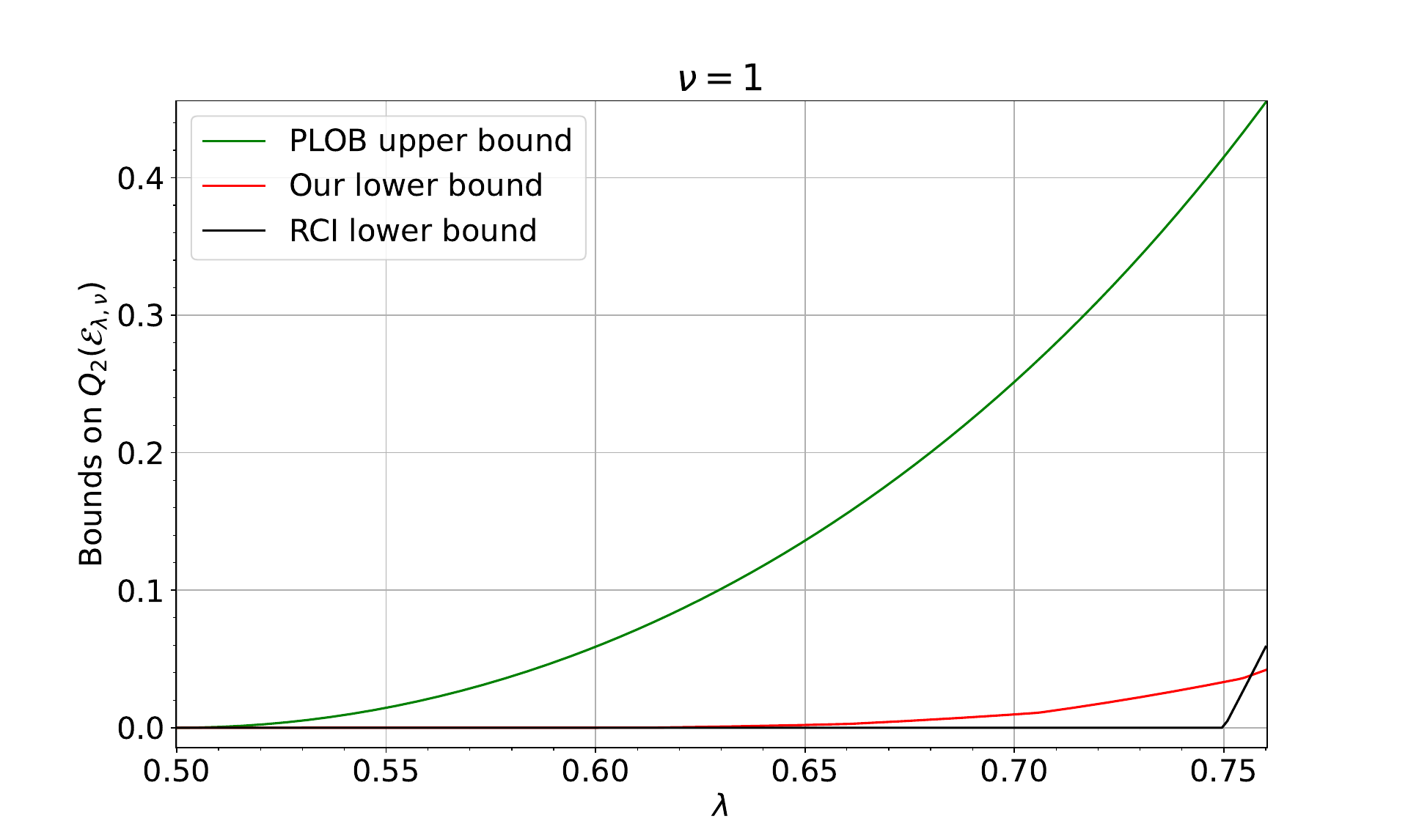}
	\includegraphics[width=1\linewidth]{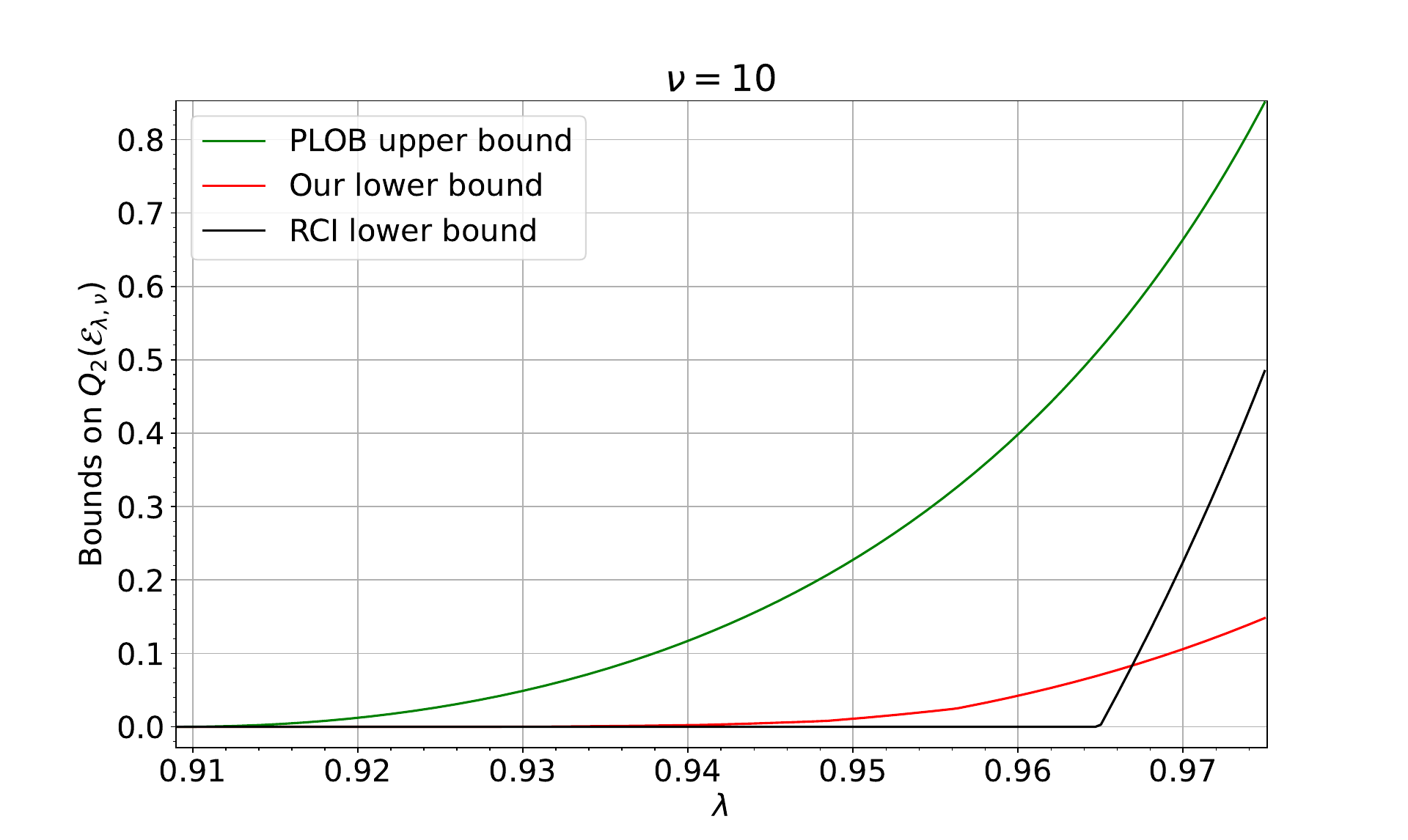} 
	\caption{Bounds on the two-way quantum capacity of the thermal attenuator $Q_2(\mathcal{E}_{\lambda,\nu})$ plotted with respect to $\lambda$. The red line is our new lower bound obtained by exploiting~\eqref{lowQ2_delta}. The black line is the best known lower bound on $Q_2(\mathcal{E}_{\lambda,\nu})$, which is the reverse coherent information lower bound reported in~\eqref{lowQ2}. The green line is the PLOB upper bound reported in~\eqref{PLOB_Q2}. These bounds are also bounds on the secret-key capacity $K(\mathcal{E}_{\lambda,\nu})$.}
	\label{rate_vs_lambda1}
\end{figure}

\begin{figure}[t]
	\centering
	\includegraphics[width=1\linewidth]{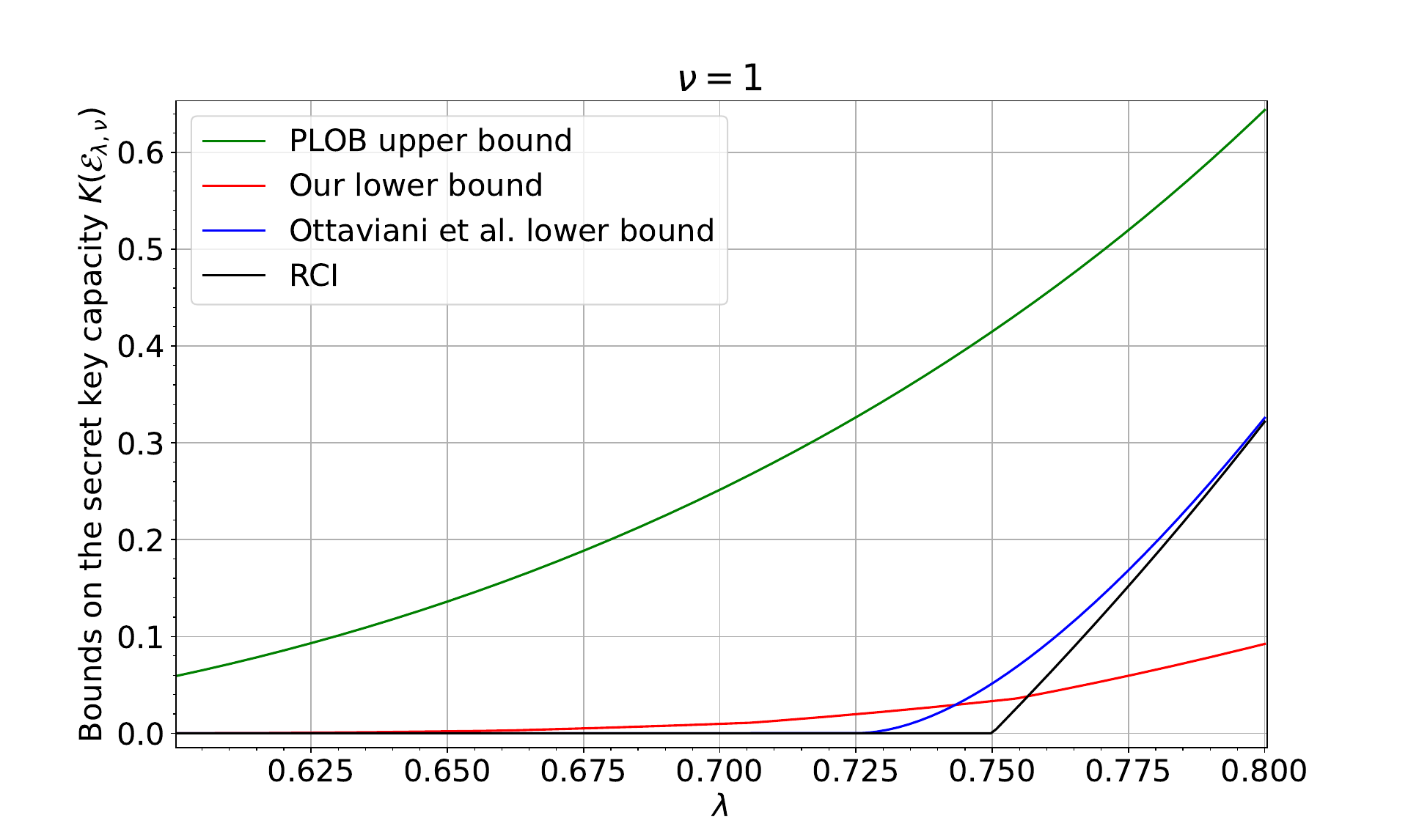}
	\includegraphics[width=1\linewidth]{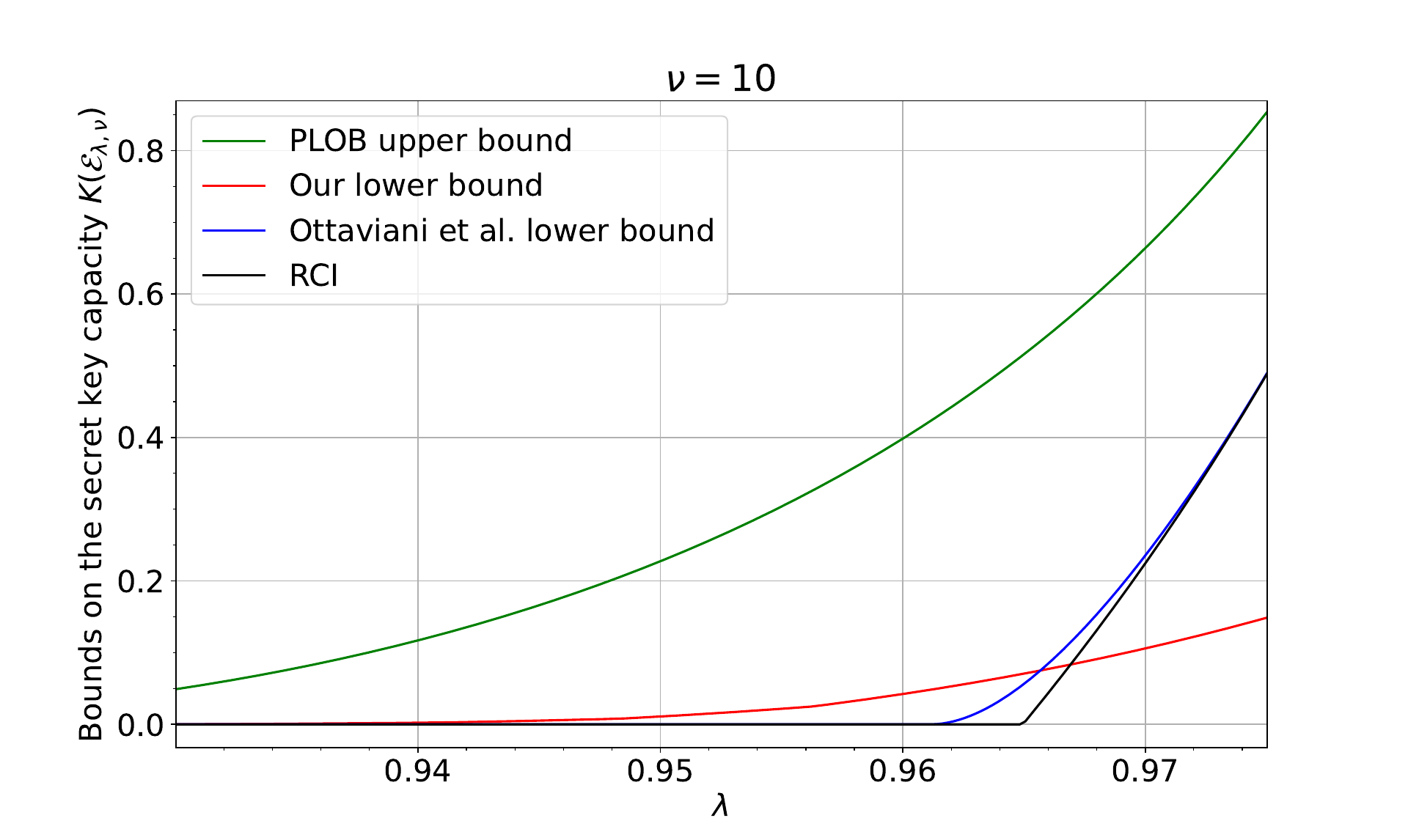} 
	\caption{Bounds on the secret-key capacity of the thermal attenuator $K(\mathcal{E}_{\lambda,\nu})$ plotted with respect to $\lambda$. The red line is our new lower bound obtained by exploiting~\eqref{lowQ2_delta}, the black line is the bound in~\eqref{lowQ2} calculated by evaluating the reverse coherent information in~\eqref{proof_lower}, the blue line is the best known lower bound discovered by~\cite{Ottaviani_new_lower}, and the green line is the PLOB upper bound reported in~\eqref{PLOB_Q2}.}
	\label{secret_key_nu}
\end{figure}

\begin{figure}[t]
	\centering
	\includegraphics[width=1\linewidth]{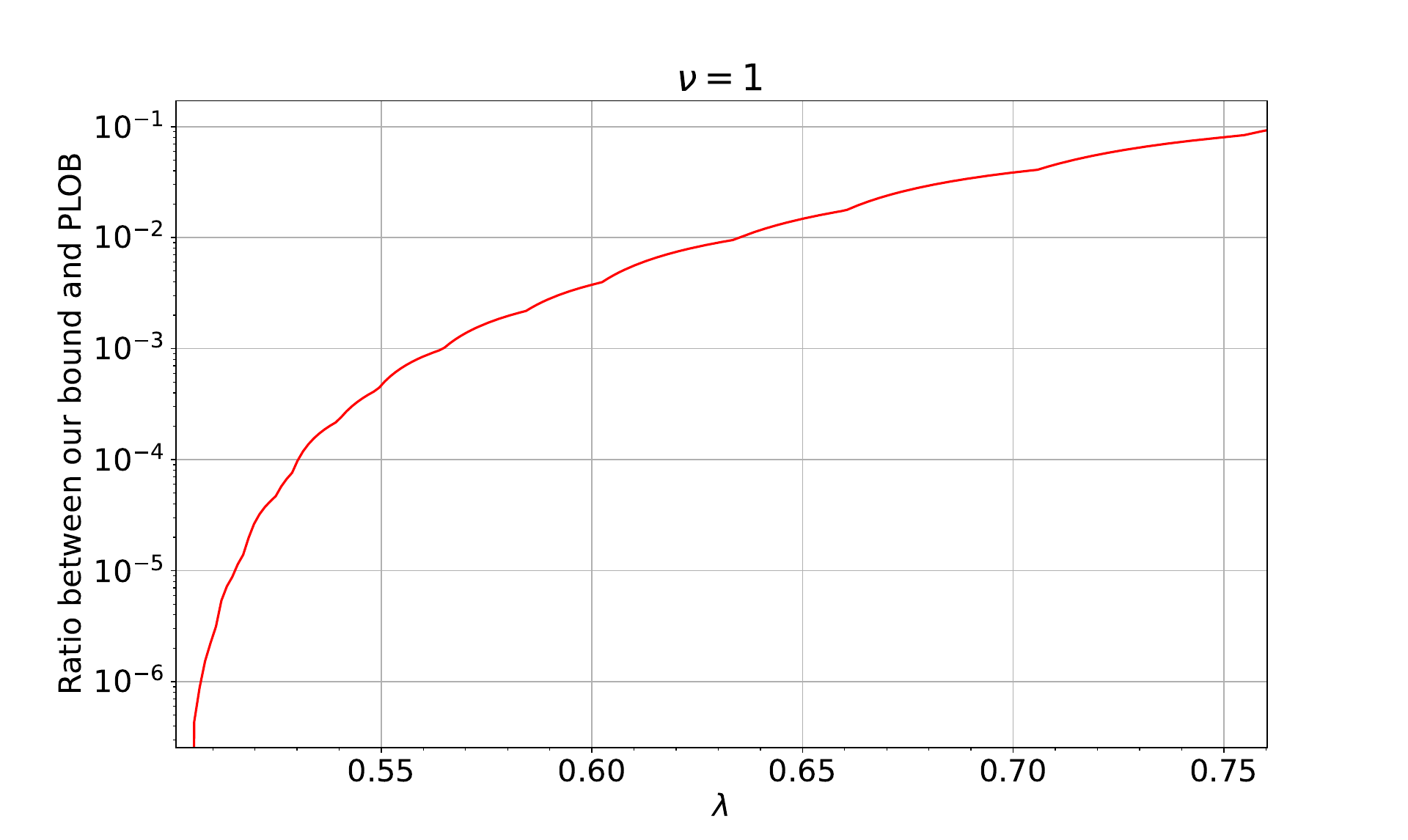}
	\caption{Ratio between our lower bound on the two-way quantum and secret-key capacities of the thermal attenuator in~\eqref{lowQ2_delta} and the PLOB bound in~\eqref{PLOB_Q2} as a function of $\lambda$ for $\nu=1$.}
	\label{log1}
\end{figure}

\begin{figure}

\begin{tabular}{cc}
  \includegraphics[width=0.5\linewidth]{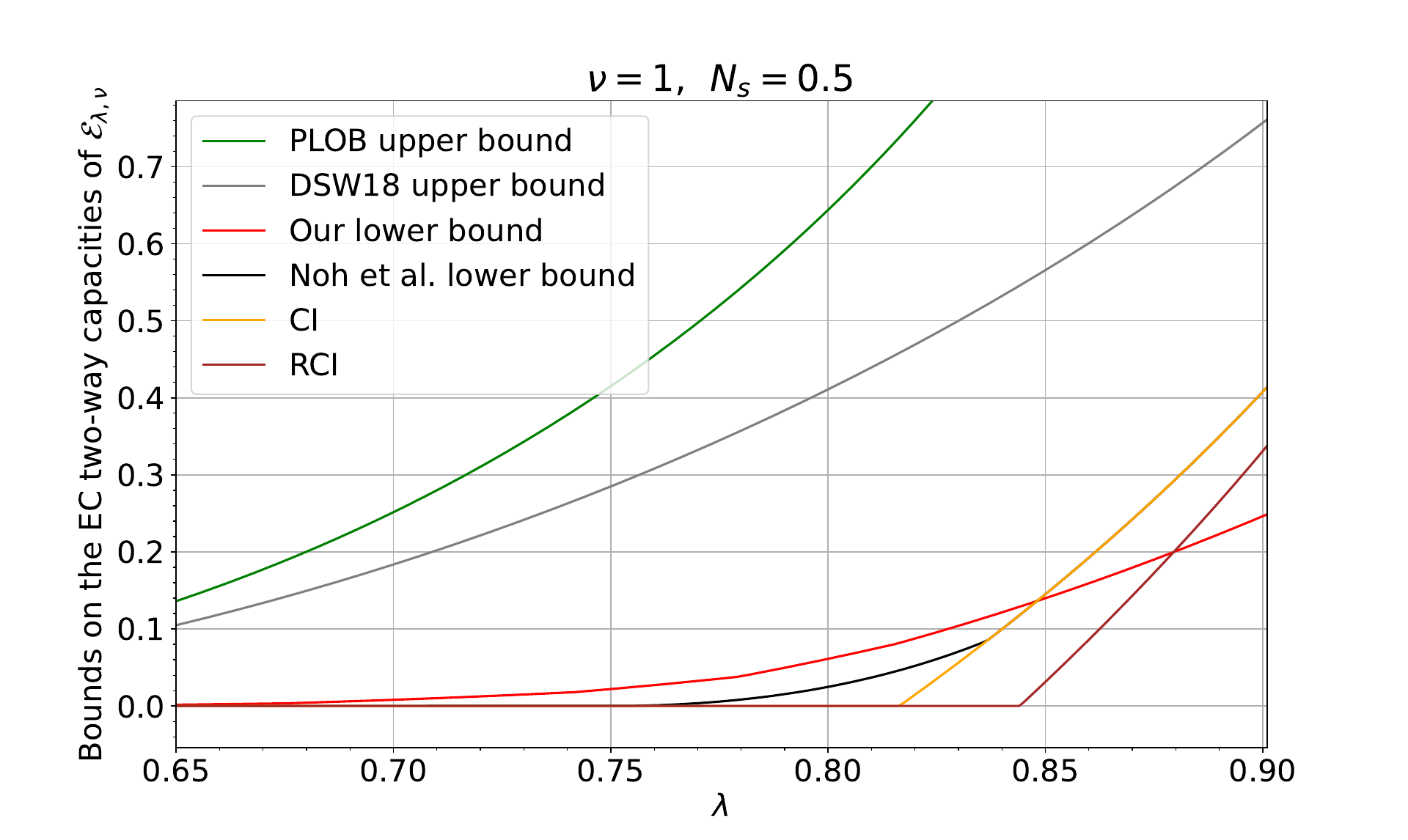} &   \includegraphics[width=0.5\linewidth]{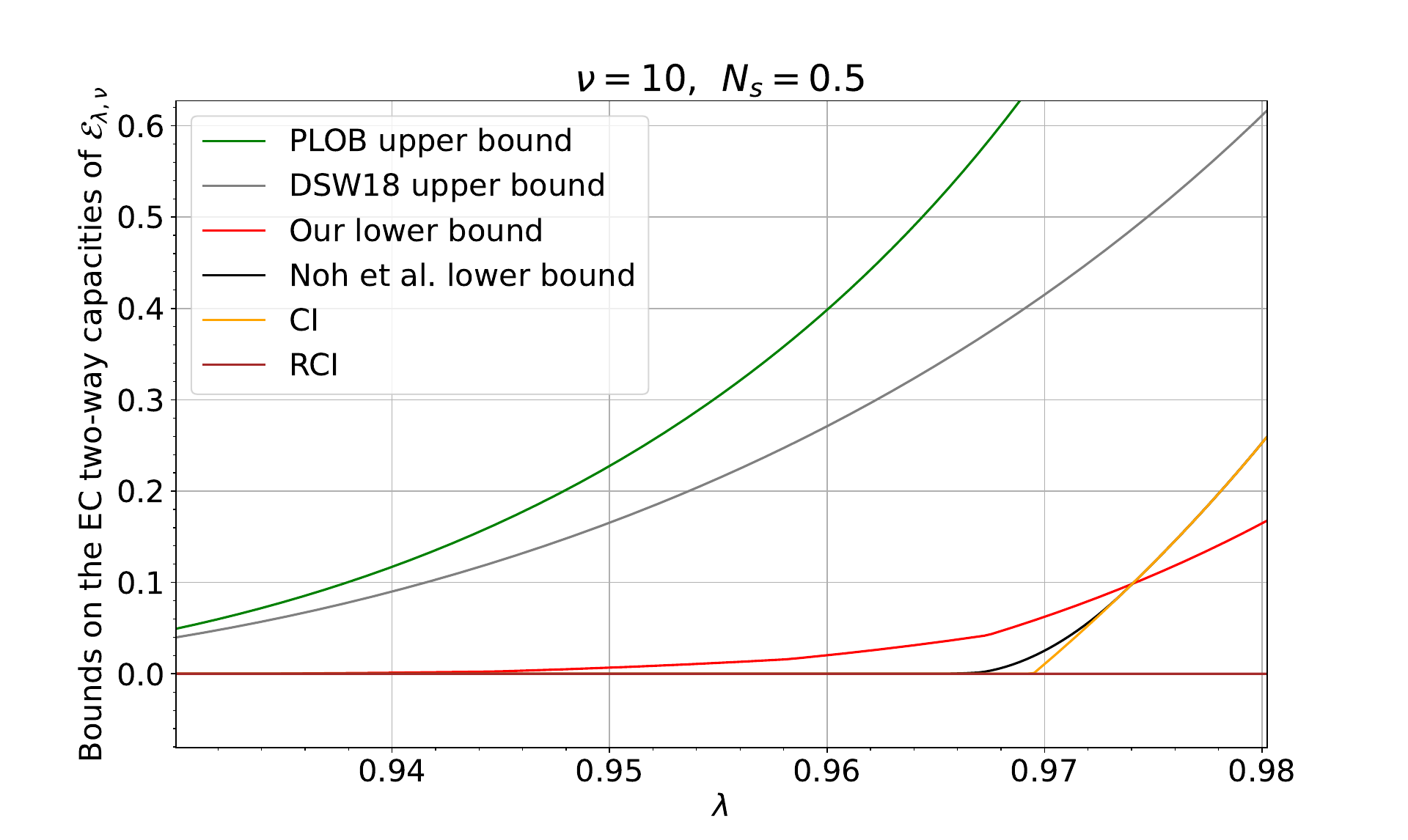} \\
 \includegraphics[width=0.5\linewidth]{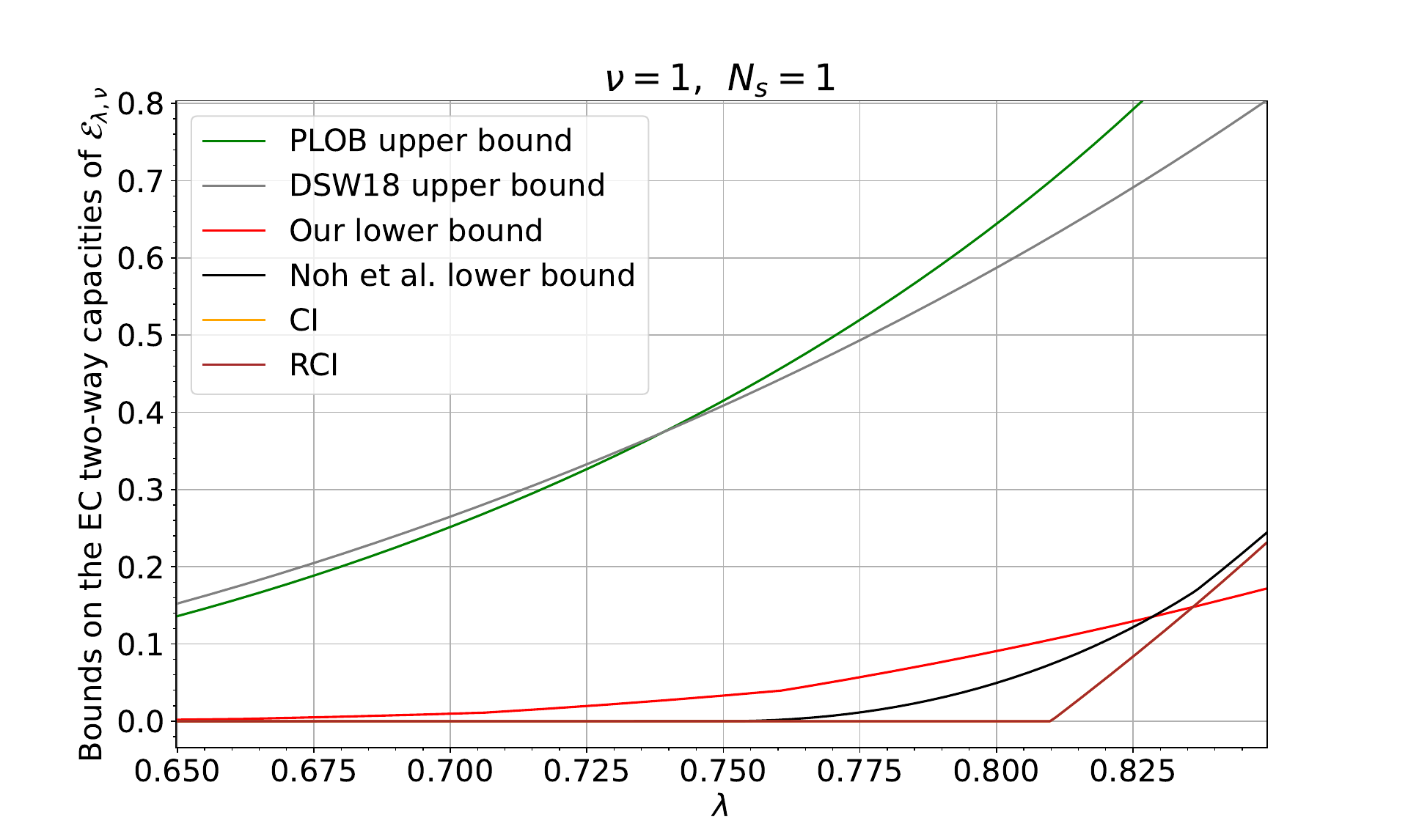} &   \includegraphics[width=0.5\linewidth]{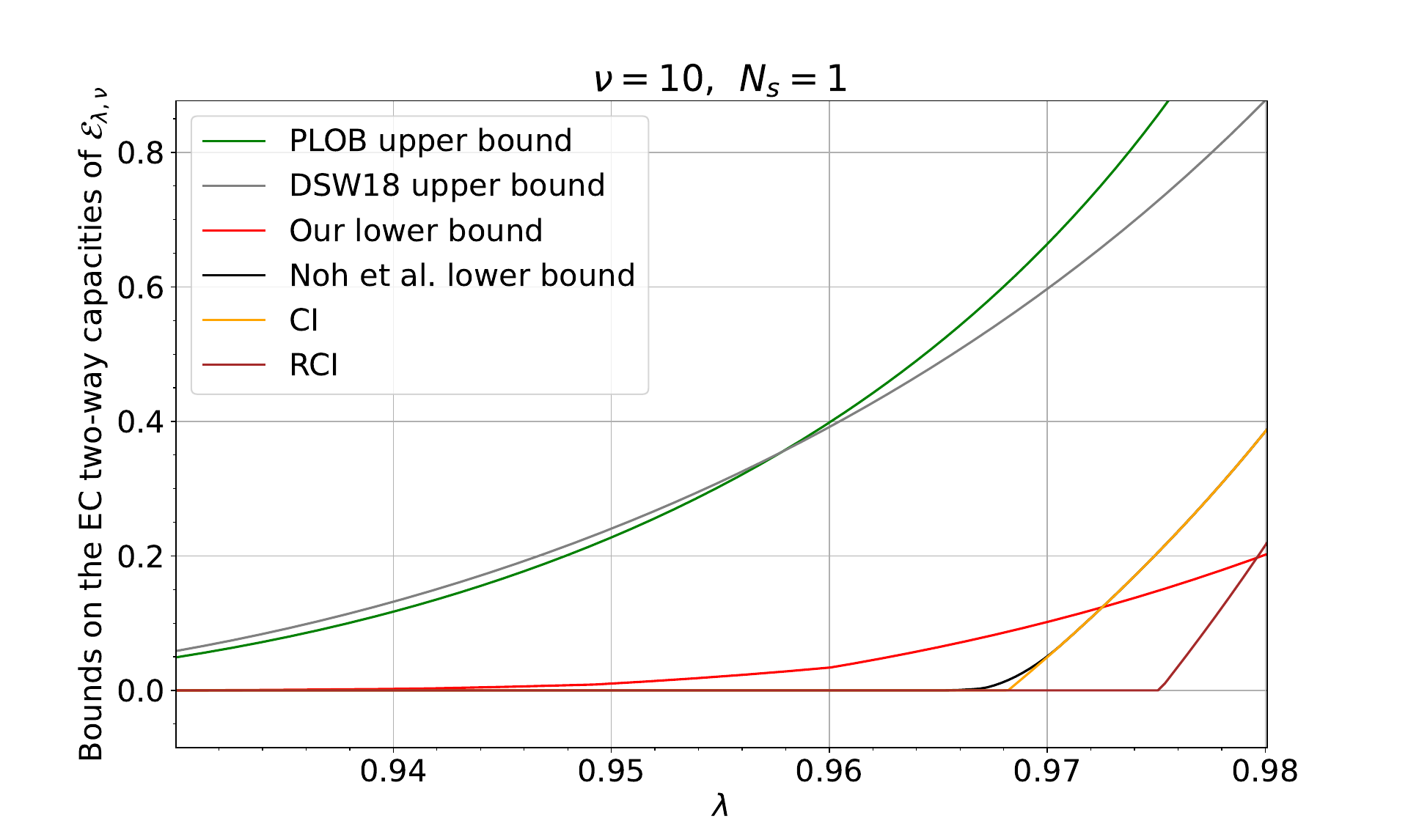} \\
 \includegraphics[width=0.5\linewidth]{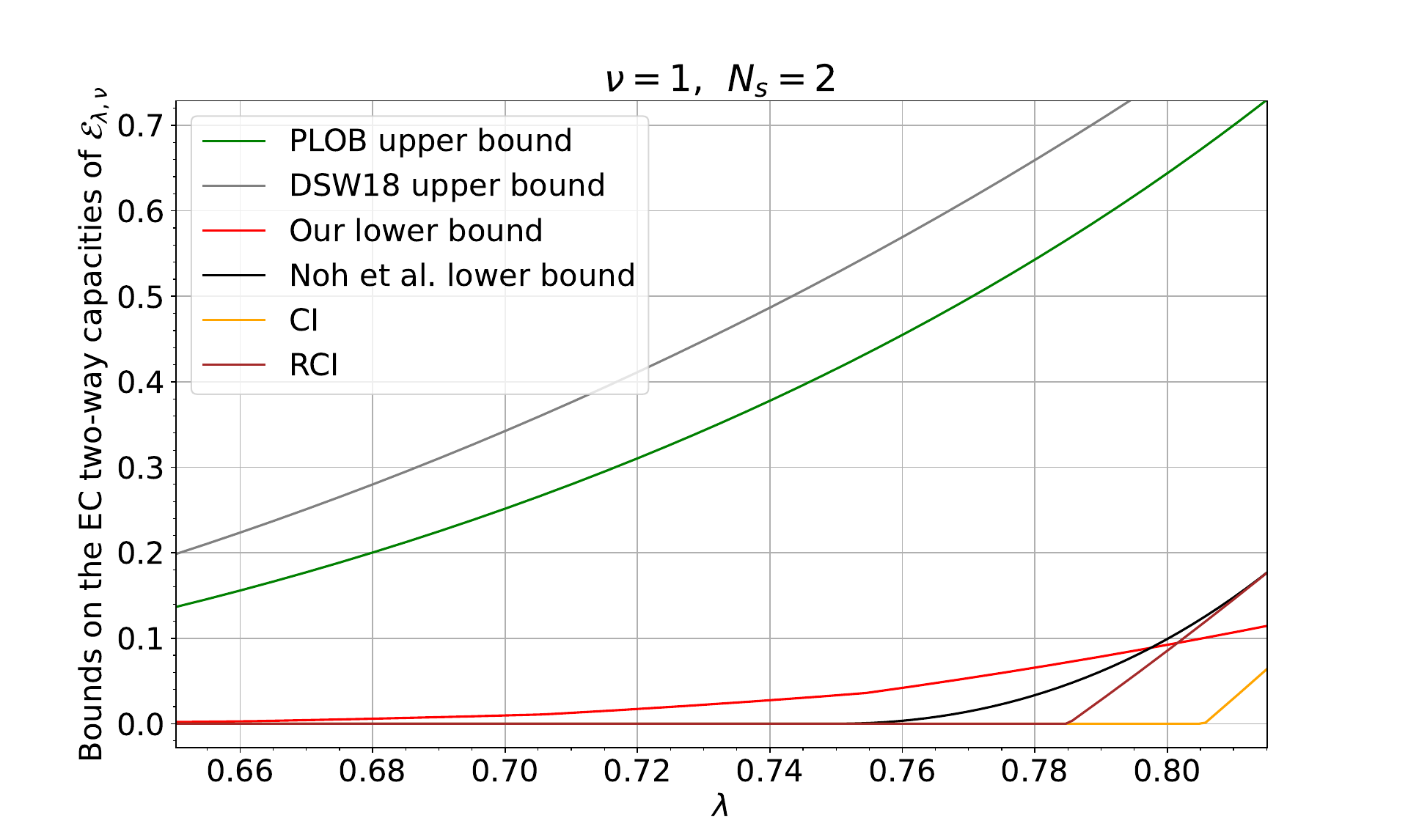} &  \includegraphics[width=0.5\linewidth]{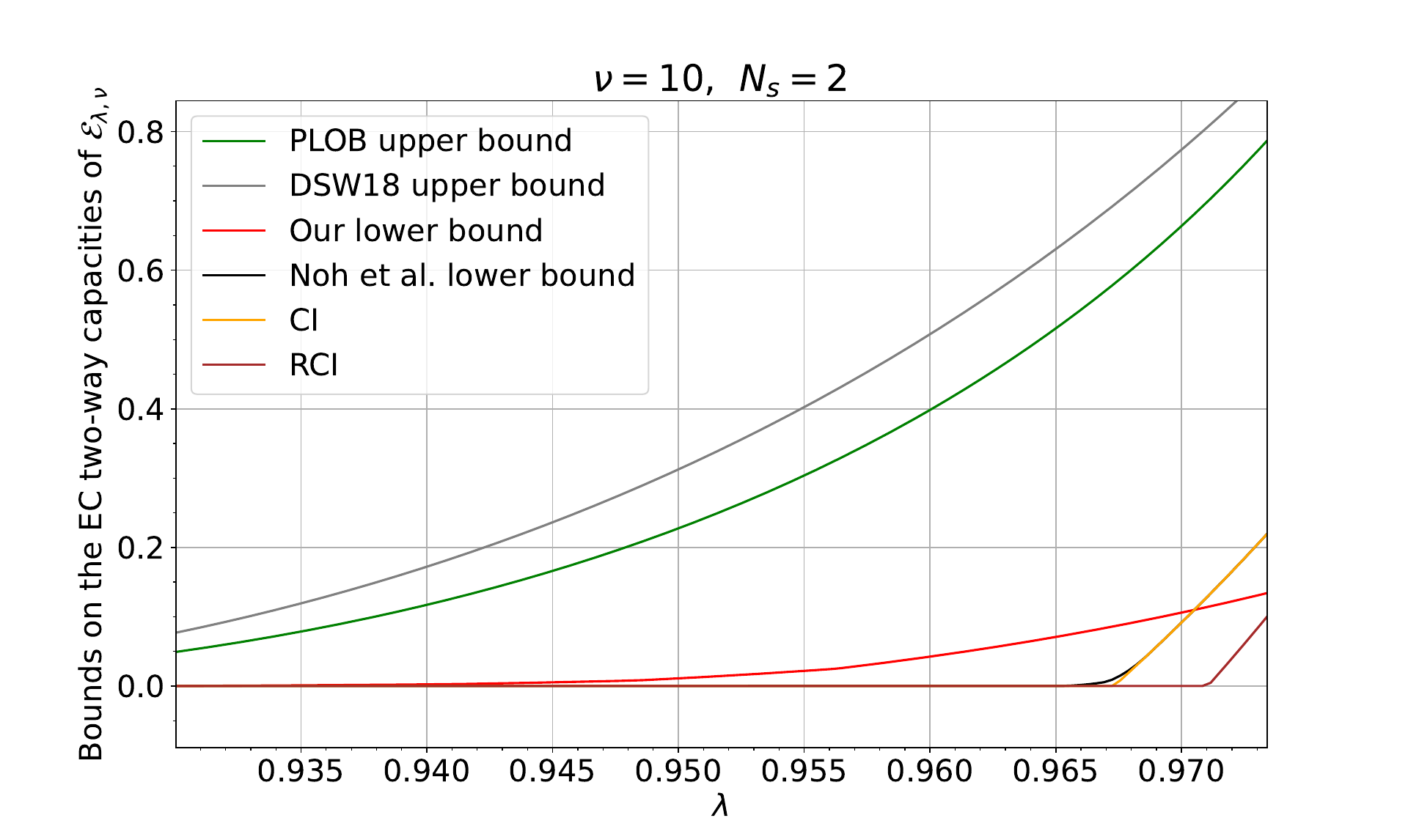} \\
\end{tabular}
\caption{Bounds on the energy-constrained two-way quantum capacity  $Q_2(\mathcal{E}_{\lambda,\nu},N_s)$ and secret-key capacity $K(\mathcal{E}_{\lambda,\nu},N_s)$ of the thermal attenuator plotted with respect to $\lambda$ for different choices of $\nu$ and of the energy constraint $N_s$. The red line is our new lower bound obtained by exploiting~\eqref{lowQ2_deltaEC}, the black line is the NPJ lower bound~\cite{Noh2020} reported in~\eqref{npj_bound_therm}, the yellow line is the coherent information lower bound reported in~\eqref{EC_coh_therm_att}, the brown line is the reverse coherent information lower bound reported in~\eqref{EC_coh_therm_att}, the grey line is the DSW18 upper bound~\cite{Davis2018}, and the green line is the PLOB upper bound reported in~\eqref{PLOB_Q2}. }
\label{ECfigures}

\end{figure}

\begin{figure}

\begin{tabular}{cc}
  \includegraphics[width=0.5\linewidth]{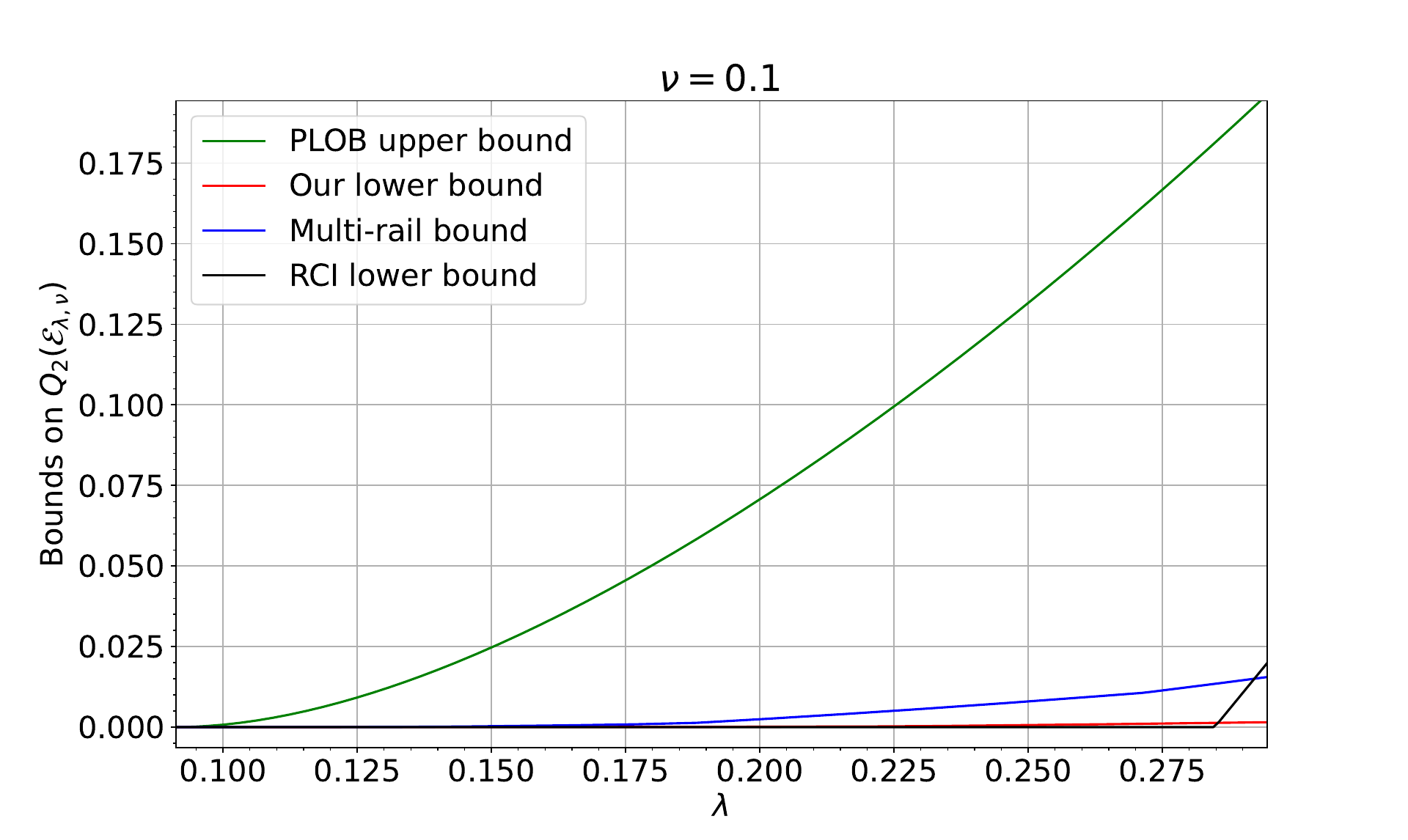} &   \includegraphics[width=0.5\linewidth]{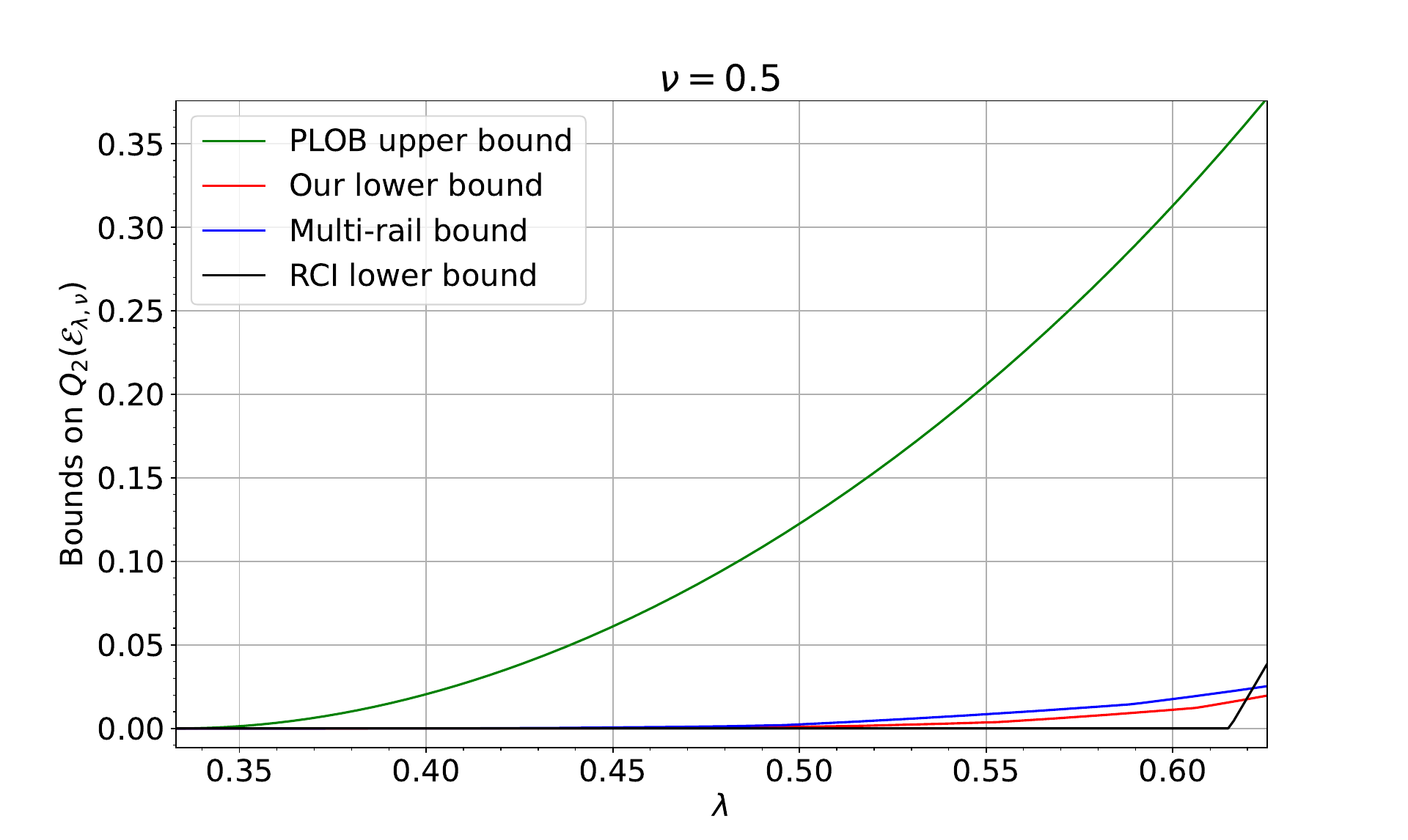} \\
 \includegraphics[width=0.5\linewidth]{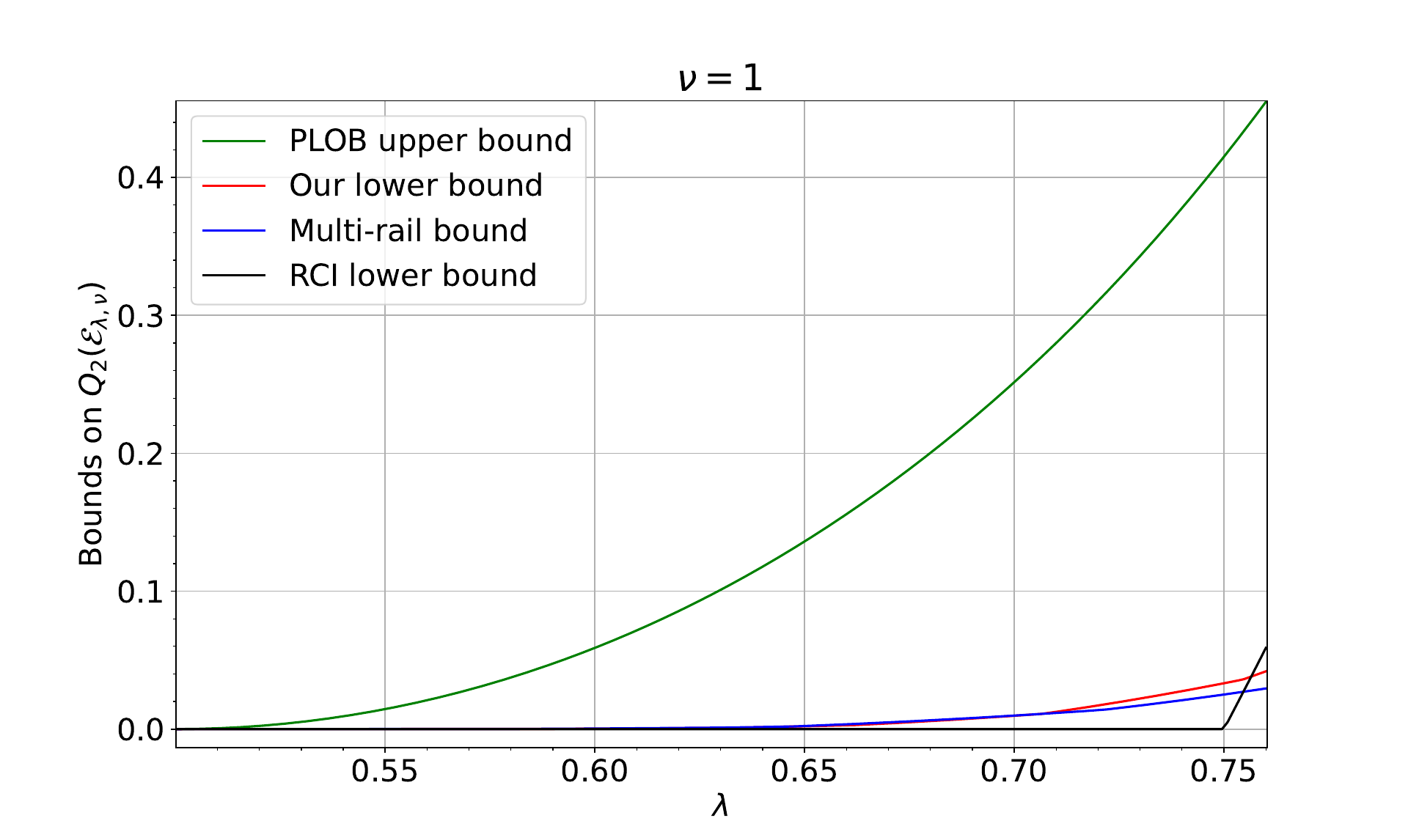} &   \includegraphics[width=0.5\linewidth]{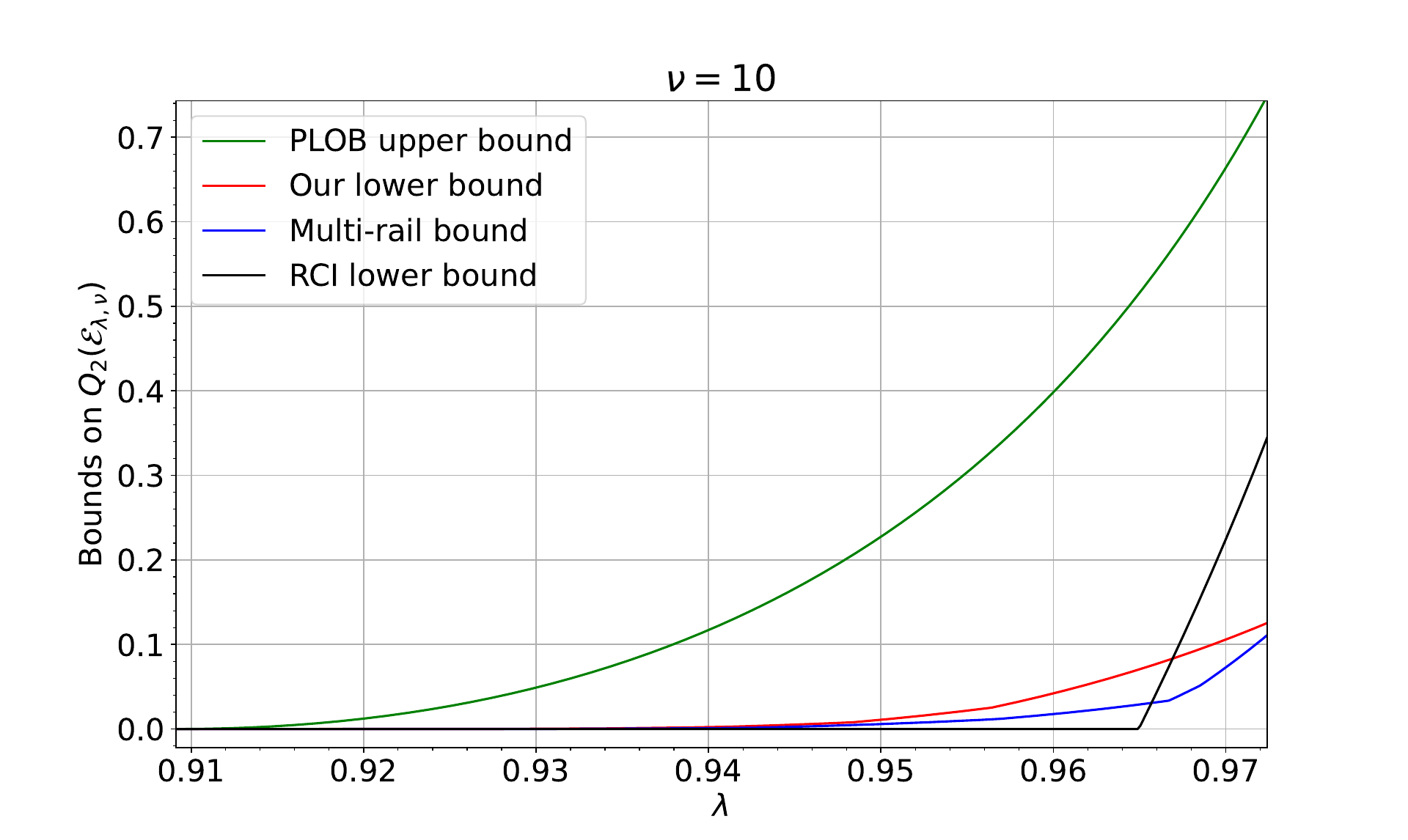} \\
\end{tabular}
\caption{Bounds on the two-way quantum capacity of the thermal attenuator $Q_2(\mathcal{E}_{\lambda,\nu})$ plotted with respect to $\lambda$. The blue line is our multi-rail lower bound obtained by exploiting~\eqref{multiplerail_low_bound}. The red line is our lower bound reported in~\eqref{lowQ2_delta}. The black line is the best known lower bound on $Q_2(\mathcal{E}_{\lambda,\nu})$, which is the reverse coherent information lower bound reported in~\eqref{lowQ2}. The green line is the PLOB upper bound reported in~\eqref{PLOB_Q2}. These bounds are also bounds on the secret-key capacity $K(\mathcal{E}_{\lambda,\nu})$.}
\label{multiplerail_figures}

\end{figure}

 \begin{figure}
\begin{tabular}{c}
  \includegraphics[width=1.0\linewidth]{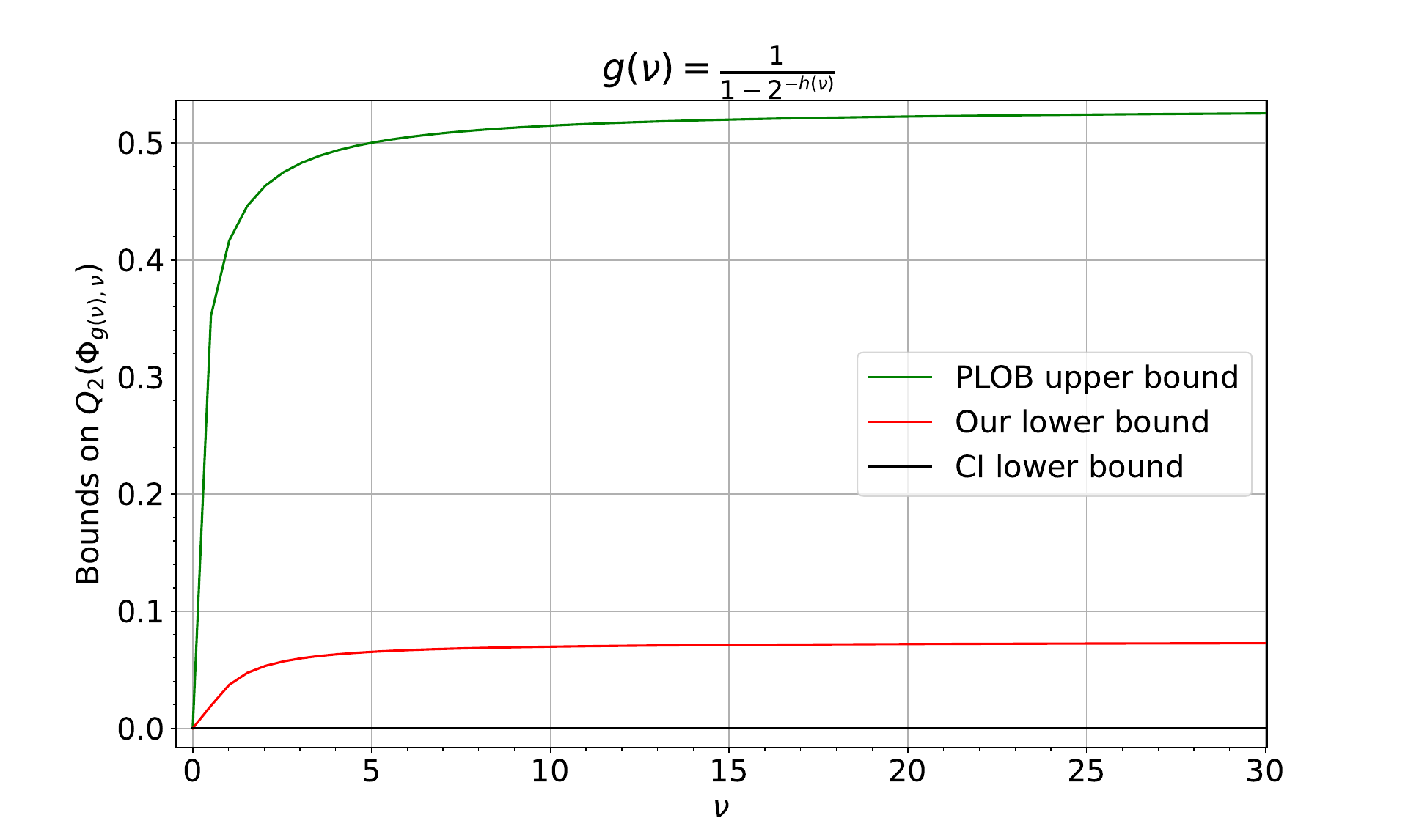} \\  
(a)\\
 \includegraphics[width=1.0\linewidth]{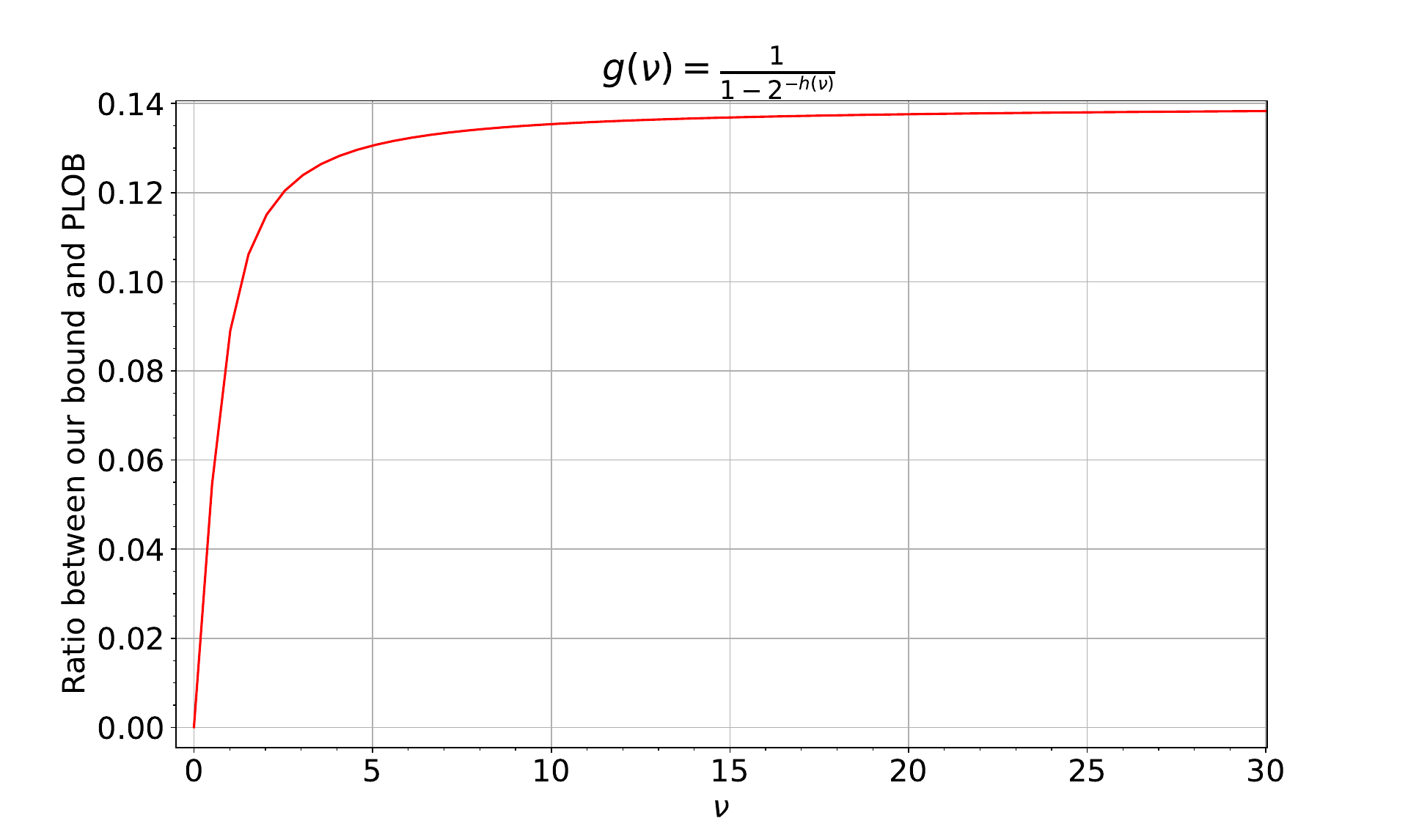} \\ 
(b) 
\end{tabular}
\caption{\textbf{(a).}~Bounds on the two-way quantum capacity of the thermal amplifier $Q_2(\Phi_{g(\nu),\nu})$ plotted with respect to $\nu$, where the gain is equal to the critical value $g(\nu)= \frac{1}{1-2^{-h(\nu)}}$. The red curve is our new lower bound calculated by exploiting~\eqref{lowQ2_delta_amp}. The black curve is the best known lower bound, i.e.~the coherent information lower bound reported in~\eqref{lowQ2_amp} (which is zero since $g(\nu)= \frac{1}{1-2^{-h(\nu)}}$). The green curve is the PLOB upper bound reported in~\eqref{PLOB_amp}. These bounds are also bounds on the secret-key capacity $K(\Phi_{g(\nu),\nu})$.  \textbf{(b).}~Ratio between our new lower bound in~\eqref{lowQ2_delta_amp} and the PLOB bound in~\eqref{PLOB_amp} as a function of $\nu$ where the gain is $g(\nu)\coloneqq \frac{1}{1-2^{-h(\nu)}}$.}
\label{capvsnu_amp}
\end{figure}

\begin{figure}[t]
	\centering
	\includegraphics[width=1\linewidth]{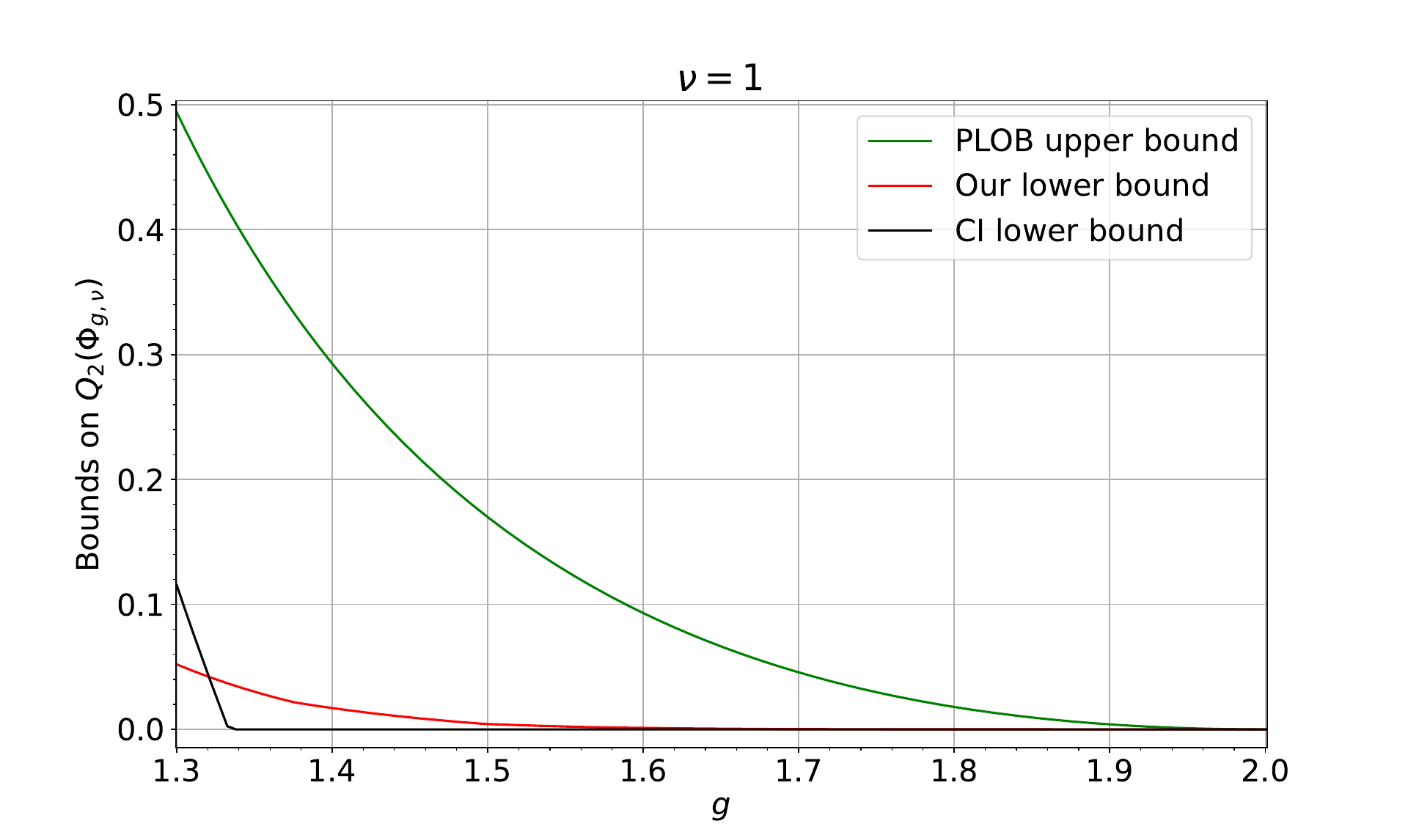}
	\includegraphics[width=1\linewidth]{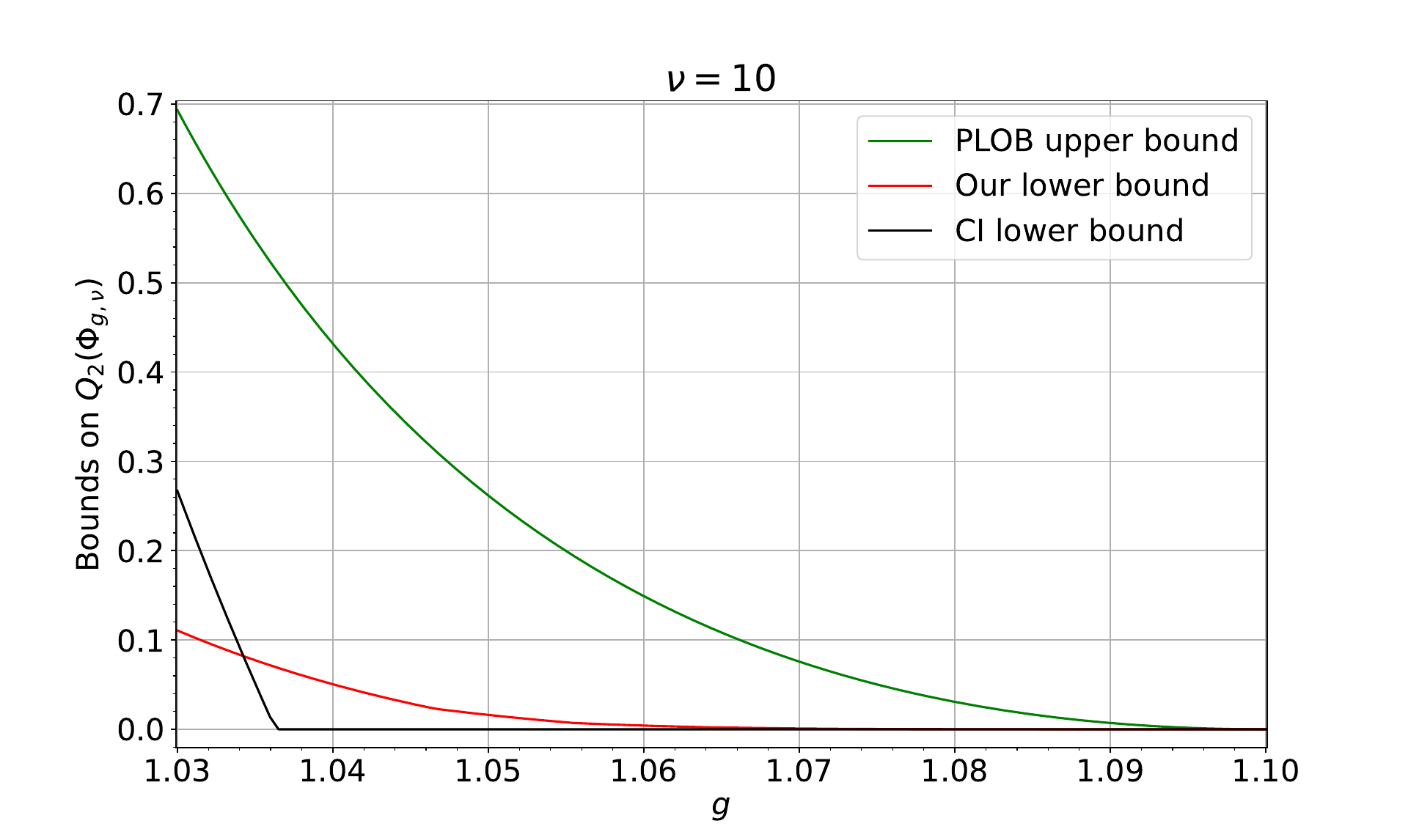} 
	\caption{Bounds on the two-way quantum capacity of thermal amplifier $Q_2(\Phi_{g,\nu})$ plotted with respect to $g$. The red line is our new lower bound obtained by exploiting~\eqref{lowQ2_delta_amp}. The black line is the best known lower bound on $Q_2(\Phi_{g,\nu})$, which is the coherent information lower bound reported in~\eqref{lowQ2_amp}. The green line is the PLOB upper bound reported in~\eqref{PLOB_amp}. These bounds are also bounds on the secret-key capacity $K(\Phi_{g,\nu})$.}
	\label{bound_vs_nu1_amp}
\end{figure}

\begin{figure}[t]
	\centering
	\includegraphics[width=1\linewidth]{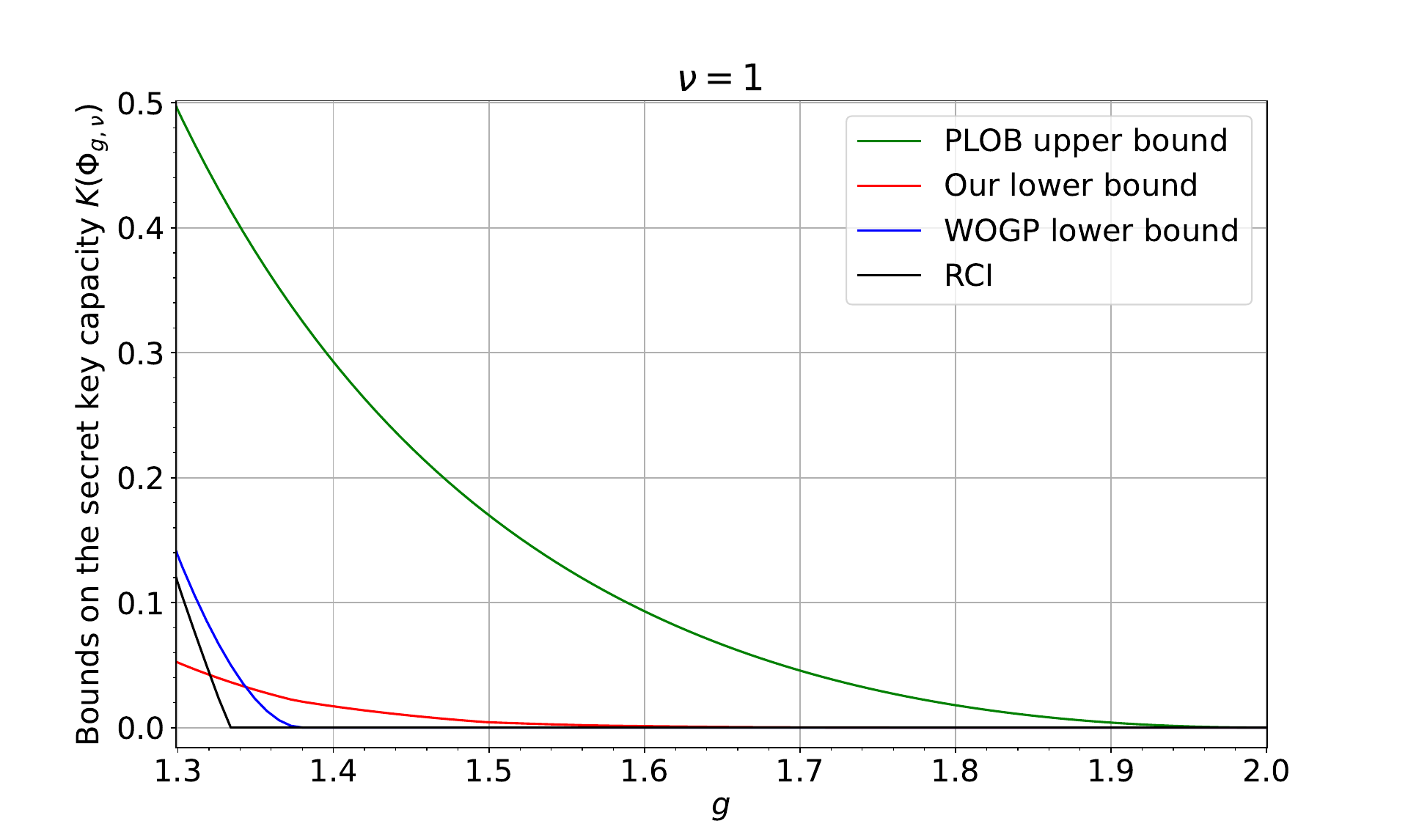}
	\includegraphics[width=1\linewidth]{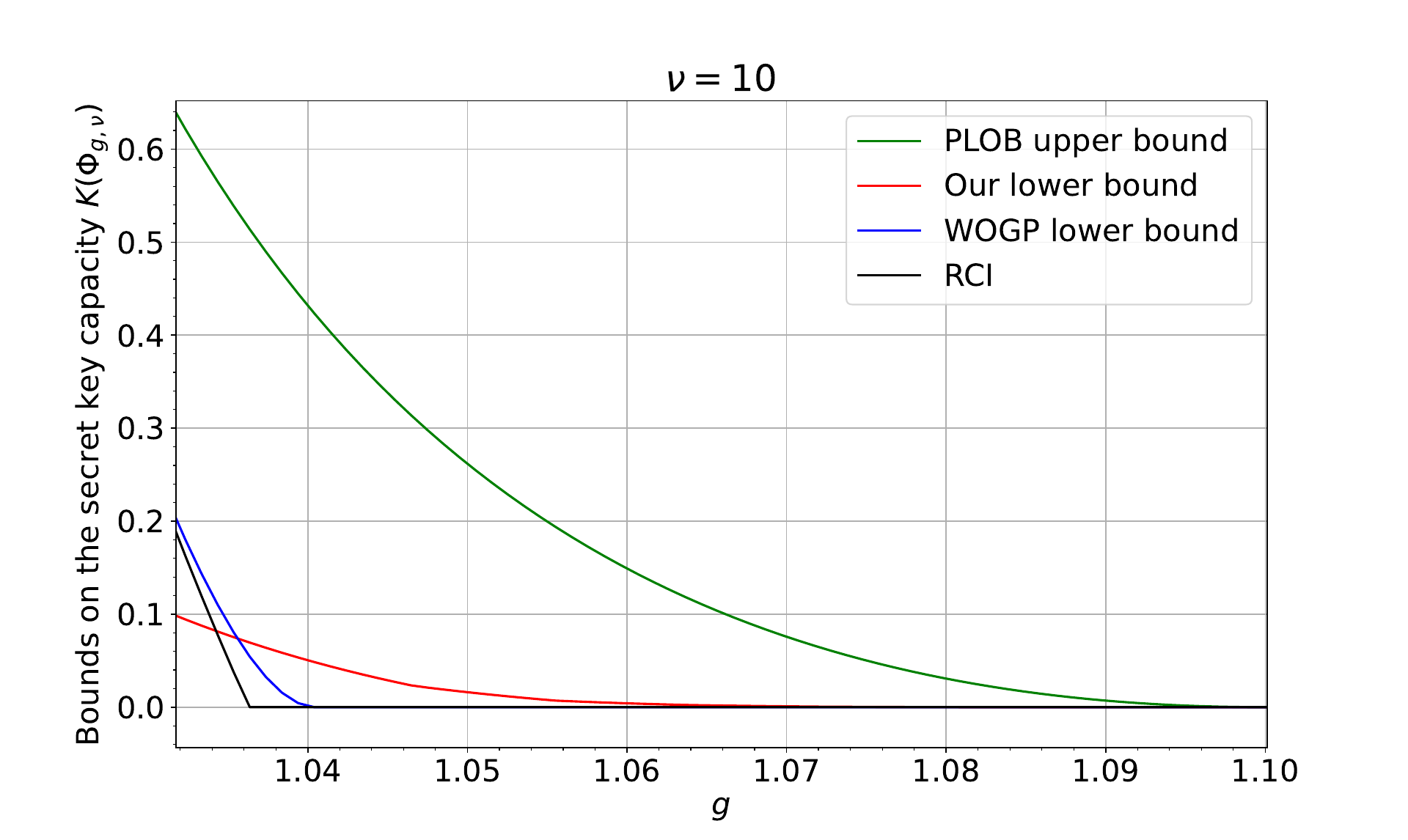} 
	\caption{Bounds on the secret-key capacity of the thermal amplifier $K(\Phi_{g,\nu})$ plotted with respect to $g$. The red line is our new lower bound obtained by exploiting~\eqref{lowQ2_delta_amp}, the black line is the bound in~\eqref{lowQ2_amp} calculated by evaluating the coherent information in~\eqref{proof_lower_ampl}, the blue line is the WOGP lower bound~\cite{Ottaviani_new_lower}, and the green line is the PLOB upper bound reported in~\eqref{PLOB_amp}.}
	\label{secret_g_nu10}
\end{figure}

\begin{figure}[t]
	\centering
	\includegraphics[width=1\linewidth]{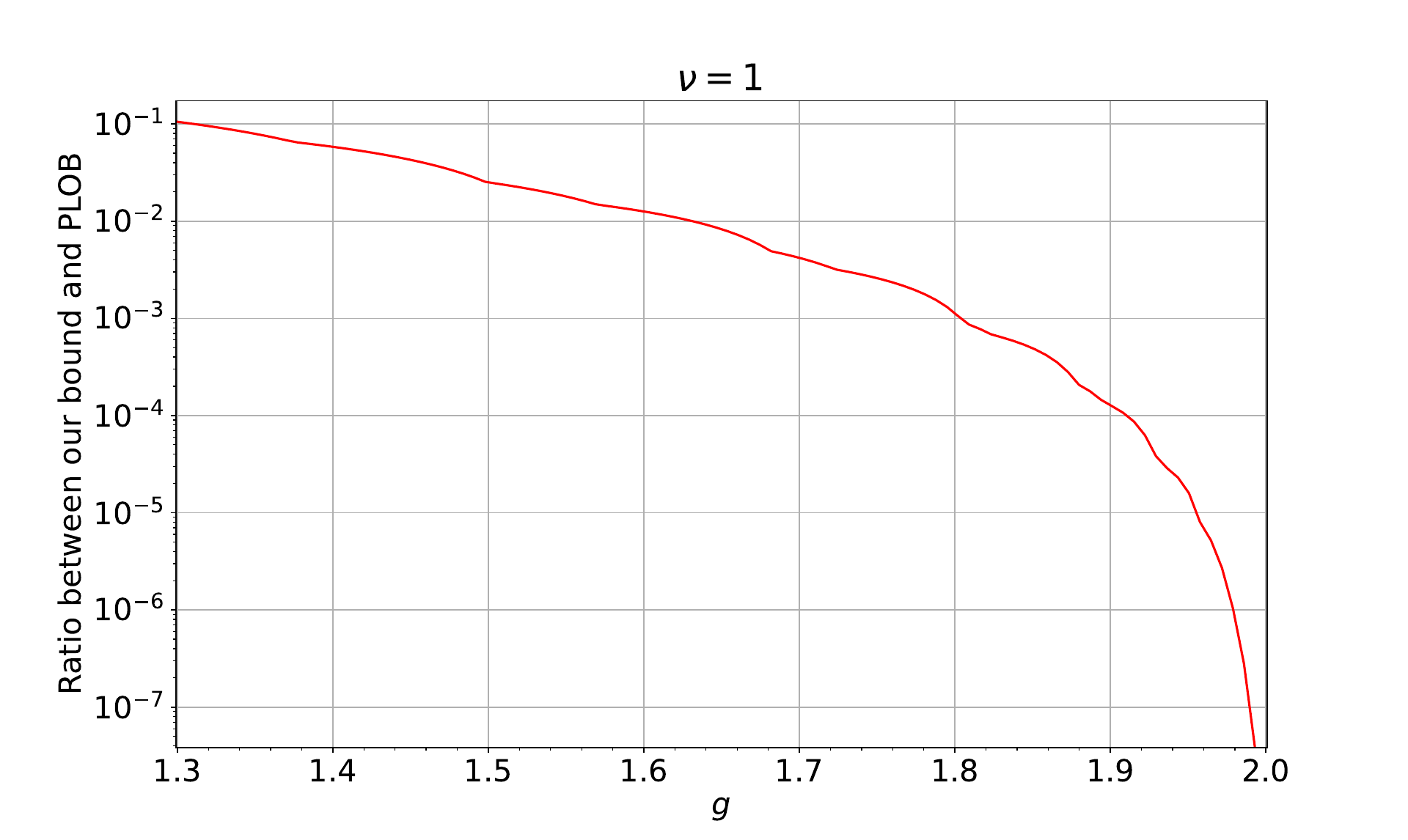}
	\caption{Ratio between our new lower bound on the two-way capacities of the thermal amplifier $\Phi_{g,\nu} $ in~\eqref{lowQ2_delta_amp} and the PLOB bound in~\eqref{PLOB_amp} as a function of $g$ for $\nu=1$.}
	\label{log_ratio_vs_lam_amp}
\end{figure}

\begin{figure}[t]
	\centering
	\includegraphics[width=1\linewidth]{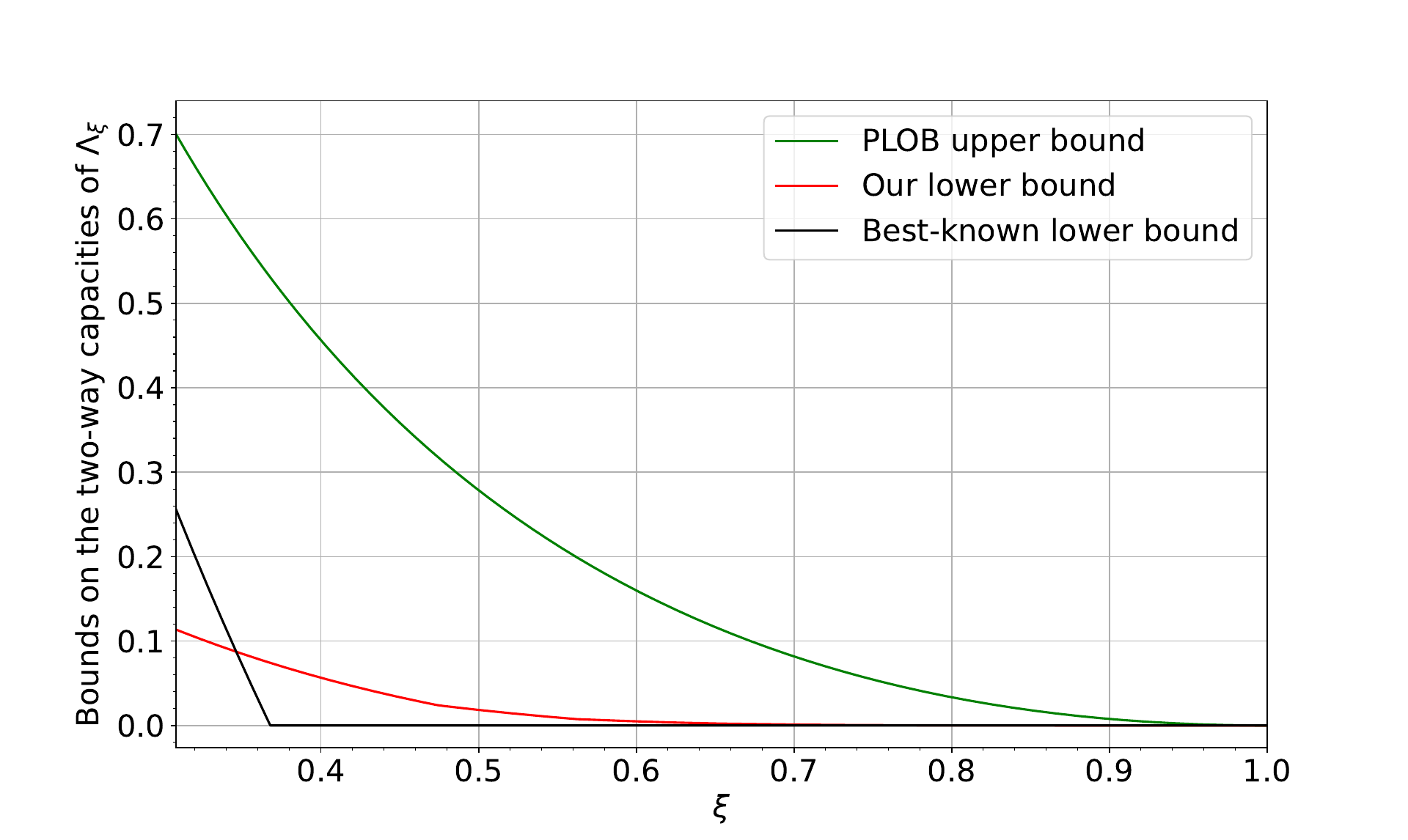}
	\caption{Bounds on the two-way quantum capacity $Q_2(\Lambda_\xi)$ and secret-key capacity $K(\Lambda_\xi)$ of the additive Gaussian noise plotted with respect to $\xi$. The red line is our new lower bound obtained by exploiting~\eqref{lowQ2_delta_add}. The black line is the best known lower bound on $Q_2(\Lambda_\xi)$, which is the coherent information lower bound reported in~\eqref{lowQ2_add}. The green line is the PLOB bound reported in~\eqref{PLOB_add}.}
	\label{add_vs_xi}
\end{figure}

\end{document}